\newtheorem{theorem}{Theorem}[section]
\newtheorem{lemma}[theorem]{Lemma}
\newtheorem{corollary}[theorem]{Corollary}
\newtheorem{definition}[theorem]{Definition}
\newtheorem{fact}{Fact}
\newcommand{\methsepahr}{domain separated holographic reduction }
\newcommand{\methpoly}{the polynomial argument }
\newcommand{\methps}{the partial symmetry argument }
\begin{document}

\title{{\bf Dichotomy for Holant* Problems with a Function on Domain Size 3}}

\vspace{0.3in}
\author{Jin-Yi Cai\thanks{University of Wisconsin-Madison.
 {\tt jyc@cs.wisc.edu}. Supported by NSF CCF-0914969.}
\and Pinyan Lu\thanks{Microsoft Research Asia. {\tt
pinyanl@microsoft.com}}
\and Mingji Xia\thanks{Max-Planck Institut f\"ur Informatik,
 {\tt xmjljx@gmail.com}. Supported by NSFC 61003030. }}

\date{}
\maketitle

\bibliographystyle{plain}

\begin{abstract}
Holant problems are a general framework to study
the algorithmic complexity of counting problems.
Both counting constraint satisfaction
problems and graph homomorphisms are special cases.
All previous results of Holant problems are over the Boolean domain.
In this paper, we give the first dichotomy theorem for Holant problems
for domain size $>2$.
We discover unexpected tractable families of counting problems,
by giving new polynomial time algorithms.
This paper also initiates holographic reductions in domains of
size $>2$. This is our main algorithmic technique,
and is used for both tractable families and hardness reductions.
The dichotomy theorem is the following:
For any complex-valued symmetric function ${\bf F}$ with arity 3
on domain size 3, we give an explicit criterion on ${\bf F}$,
such that if ${\bf F}$ satisfies
the criterion then the problem ${\rm Holant}^*({\bf F})$ is computable
in polynomial time, otherwise ${\rm Holant}^*({\bf F})$ is \#P-hard.
\end{abstract}


\section{Introduction}

The study of computational complexity of counting problems has been a very
active research area recently.  Three related frameworks in which
counting problems can be expressed as partition functions have
received the most attention:
Graph Homomorphisms (GH), Constraint Satisfaction Problems (CSP)
and Holant Problems.

Graph Homomorphism was first defined by Lov\'{a}sz~\cite{lovasz67}.
It captures a wide variety of graph properties. Given any fixed
$k \times k$ symmetric matrix ${\bf A}$ over $\mathbb{C}$,
the partition function
$Z_{\bf A}$ maps any input graph $G = (V, E)$
to $Z_{\bf A}(G) =  \sum_{\xi:V\rightarrow [k]}\hspace{0.07cm}
\prod_{(u,v)\in E} {\bf A}_{\xi(u),\xi(v)}$.
When  ${\bf A}$ is a 0-1 matrix, then
the product $\prod_{(u,v)\in E}$ is essentially a Boolean {\sc And} function.
The product value $\prod_{(u,v)\in E} {\bf A}_{\xi(u),\xi(v)} =0$ or 1,
and it is
1 iff every edge $(u,v)\in E$ is mapped to an edge in the graph $H$ whose
adjacency matrix is ${\bf A}$. Hence for a 0-1 matrix
${\bf A}$, $Z_{\bf A}(G)$ counts the
number of ``homomorphisms'' from $G$ to $H$. For example,
if ${\bf A} = {\scriptsize \begin{bmatrix}  1 & 1 \\ 1 & 0 \end{bmatrix}}$
then $Z_{\bf A}(G)$ counts the number of {\sc Independent Sets} in $G$.
If ${\bf A} = {\tiny \begin{bmatrix} 0 & 1 & 1 \\ 1 & 0 & 1 \\ 1 & 1 & 0
\end{bmatrix}}$ then $Z_{\bf A}(G)$ is the number of valid {\sc 3-colorings}.
When ${\bf A}$ is not 0-1, $Z_{\bf A}(G)$ is a weighted sum of
homomorphisms. Each ${\bf A}$
defines  a graph property on graphs $G$.  Clearly if $G$ and $G'$ are
isomorphic then $Z_{\bf A}(G) = Z_{\bf A}(G')$.
While individual graph properties are fascinating to study,
 Lov\'{a}sz's intent is to study a wide class of graph
 properties representable as graph homomorphisms.
The use of more general matrices ${\bf A}$ brings us into contact with
another tradition, called {\it partition functions
of spin systems}
 from statistical physics~\cite{baxter1982exactly,mccoy1973two}.
The case of a $2 \times 2$ matrix ${\bf A}
= {\scriptsize \begin{bmatrix}  \beta & 1 \\ 1 & \gamma \end{bmatrix}}$
is called a 2-spin system, and the special case $\beta = \gamma$
is the Ising model~\cite{ising1925beitrag, JS93, GJPaterson}. The Potts model with
interaction strength $\gamma$ is defined by a
$k \times k$ matrix ${\bf A}$ where all off-diagonal entries equal to
1 and all diagonal entries equal to $1 + \gamma$~\cite{inapp_GJ10}.
In classical physics, the matrix ${\bf A}$ is always
real-valued.
However, in a generic quantum system for which complex numbers
are the right language, the partition function is in general
complex-valued \cite{feynman1970}. In particular, if the physics model is
over a discrete graph and is non-orientable, then the edge weights are
  given by a symmetric complex matrix.
We will see that the use of complex numbers is not just a modeling issue,
it provides an inner unity in the algorithmic
theory of partition functions.


A more general framework than GH is called counting CSP. Let ${\cal F}$
be any finite set of (complex-valued) constraint functions
defined on some domain set $D$.
It defines a counting CSP problem ${\rm \#CSP}({\cal F})$:
An input consists of a bipartite graph $G = (X, Y, E)$,
each $x \in X$ is a variable on $D$, each $y \in Y$
is labeled by a constraint function $f \in {\cal F}$, and
the edges in $E$ indicate how each constraint function is applied.
The output is the sum of product of evaluations of the constraint functions
over all assignments for the
variables~\cite{CreignouH96,BulatovD03,weightedCSP,Bulatov08,STOC09,Dyer-Rich,ccl-csp}.
Again if all constraint functions in ${\cal F}$
are 0-1 valued then it counts the number of solutions.
In general, this {\it sum of product}  a.k.a.~{\it partition
function} is a weighted sum of solutions, and has occupied a central
position. It reaches many areas
ranging from  AI, machine learning, tensor networks,
statistical physics and coding theory.
Note that GH is the special case where ${\cal F}$ consists of
a single binary symmetric function.

The strength of these frameworks derives from the fact that
they can express many problems of interest and simultaneously
it is possible to achieve a complete
classification of its worst case complexity.
%


While GH (or spin systems) can express a great variety of natural
counting problems, Freedman, Lov\'{a}sz and Schrijver \cite{freedman-l-s}
showed that  GH cannot express the problem of counting {\sc Perfect Matchings}.
It is well known that the FKT algorithm~\cite{Kasteleyn1961,TF1961}
 can count the
number of perfect matchings in a planar graph in polynomial time.
This is one basic component of
holographic algorithms recently introduced by Valiant~\cite{HA_J,AA_FOCS}.
(The second basic component is holographic reduction.)
To capture this extended class of problems typified by
{\sc Perfect Matchings}, the framework of Holant problems
was introduced~\cite{FOCS08,STOC09,holant}.
Briefly, an input instance of a Holant problem is a graph $G = (V, E)$
where each edge represents a variable and each vertex is labeled by
a constraint function. The partition function is again the sum of
product of the constraint function evaluations,
 over all edge assignments.
 E.g., if edges are Boolean variables (i.e., domain size 2),
and the constraint function at every vertex is the {\sc Exact-One}
function which is 1 if exactly one incident edge is assigned true
and 0 otherwise, then the partition function  counts the
number of perfect matchings.  If each vertex has the {\sc At-Most-One}
function then it counts all (not necessarily perfect) matchings.
It can be shown easily that the Holant framework can simulate
spin systems but, as shown by \cite{freedman-l-s},
the converse is not true.
The Holant framework turns out to be a very natural setting
and captures many interesting problems.
E.g., it was independently discovered in coding theory, where
it is called Normal Factor Graphs or Forney Graphs~\cite{Forney01,FV11,Al-BashabshehMV11,Al-BashabshehM11}.

A complexity dichotomy theorem for counting
problems classifies every problem within a class to be
 either in P or \#P-hard. For GH,  this is proved
for $Z_{\bf A}$ for all symmetric complex matrices
 ${\bf A}$~\cite{Homomorphisms}.
This is a culmination of a long series of
results~\cite{DyerG00,BulatovG05,GGJT09}.
The proof of~\cite{Homomorphisms} is difficult,
but the tractability criterion is very
explicit: $Z_{\bf A}$ is in polynomial time if ${\bf A}$
is a suitable rank-one modification of a tensor product of Fourier
matrices, and is \#P-hard otherwise.
Explicit dichotomy theorems were also proved for counting CSP
on the Boolean domain (i.e., $|D|=2$): unweighted~\cite{CreignouH96}, non-negative weighted~\cite{weightedCSP}, real weighted~\cite{CSP-real}, and finally complex weighted~\cite{STOC09},
where holographic reductions
played an important role in the final result.
Complex numbers make their appearance naturally as eigenvalues,
and provide an internal logic to the theory,
even if one is only interested in 0-1 valued constraint functions.

When we go from the Boolean domain to domain size $> 2$,
there is a huge increase in difficulty to prove dichotomy theorems.
 This is already
seen in decision CSP, where the dichotomy (i.e., any decision CSP
is either in P or NP-complete) for the Boolean domain
is Schaefer's theorem~\cite{Schaefer},
but the dichotomy for domain size 3 is a major achievement by
Bulatov~\cite{Bulatov06}.
A long standing conjecture by Feder and Vardi~\cite{Feder-Vardi} states that
a dichotomy for decision CSP holds for all domain size,
but this is open for domain size $>3$.
The assertion that every decision CSP is either solvable
in polynomial time or NP-complete is by no means obvious,
since assuming P $\not =$ NP, Ladner showed that NP contains
problems that are neither in P nor NP-complete~\cite{Lad75}.
This is also valid for P versus \#P.

With respect to counting problems, for  any finite set of 0-1 valued  functions $\cal F$
over a general domain, Bulatov~\cite{Bulatov08}
proved a dichotomy theorem for ${\rm \#CSP}({\cal F})$, which uses
deep results from Universal Algebra.
Dyer and Richarby~\cite{Dyer-Rich,Dyer-Rich2} gave a more direct proof which has the advantage
that their tractability criterion is decidable.
Decidable dichotomy theorems are more desirable since they
tell us not only every ${\cal F}$
belongs to either one or the other class, but also how to decide
for a given ${\cal F}$ which class it belongs to.
A decidable dichotomy theorem for ${\rm \#CSP}({\cal F})$,
where all functions in $\cal F$ take non-negative values,
is given in \cite{ccl-csp}. Finally a dichotomy theorem for
all complex-valued ${\rm \#CSP}({\cal F})$ is proved in~\cite{caichen12}.
This last dichotomy is not known to be decidable.

More than giving a formal classification, the deeper meaning of
a dichotomy theorem is to provide  a  comprehensive
 structural understanding
as to what makes  a problem easy and what makes it hard.
This deeper understanding goes beyond the validity of a dichotomy,
and even  more than
decidability, which is:
Given $\cal F$, decide whether it satisfies the tractability criterion
so that  ${\rm \#CSP}({\cal F})$ is in P.
 Ideally we hope for dichotomy theorems
that are {\it explicit} in the sense that the tractability criteria provide
a mathematical characterization that can be applied
symbolically to an arbitrary $\cal F$.
An explicit dichotomy can also be readily used to prove broader
dichotomy theorems, as we will see in this paper.
The known dichotomy theorems for GH~\cite{Homomorphisms} and
for CSP on general domains have very different flavors.
 Dichotomy theorems for ${\rm \#CSP}({\cal F})$ for all domain size
$>2$ are not explicit. The tractability criterion is infinitary.
This is in marked contrast with the dichotomy theorems for GH.
%
For Holant problems all previous results are over the Boolean domain
and are mostly explicit.
In this paper, we give the first dichotomy theorem for
Holant problems for domain size $>2$, and it is explicit.

Our main theorem can be stated as follows:
For any complex-valued symmetric function ${\bf F}$ with arity 3 on domain size
3, we give an explicit criterion on ${\bf F}$, such that if ${\bf F}$ satisfies
the criterion then the problem ${\rm Holant}^*({\bf F})$ is computable
in polynomial time, otherwise ${\rm Holant}^*({\bf F})$ is \#P-hard.
(Formal definitions will be given in Section~\ref{sec:notation}.)
It is known that in the Holant framework any set of binary functions
is tractable.
A ternary function is the basic setting in the Holant framework
where both tractable and intractable cases occur.
A single ternary function in the Holant framework is
the analog of GH as the basic setting in the CSP framework
with a single binary function.
 Therefore this
case is interesting in its own right.
Furthermore, as demonstrated many times in the Boolean domain~\cite{STOC09,holant,CHL09,parity,holant-real},
a dichotomy for a single ternary function serves as the starting
point for more general dichotomies in the Holant framework.

In order to prove this dichotomy theorem, we have to discover
new tractable classes of Holant problems, and design new polynomial
time algorithms. Many intricacies of the interplay between
tractability and intractability do not occur in the Boolean
domain.  However these new algorithms actually provide fresh insight
to our previous dichotomy theorems for the Boolean domain.
They offer a deeper and more complete understanding
of what makes  a problem easy and what makes it hard.

Our main algorithmic innovation is to initiate
the theory of holographic reductions in domains of
size $>2$.
It is a recurring theme in our proof techniques here.
This is a new development; all previous work
on holographic algorithms and reductions have been
on the Boolean domain.
Holographic transformation offers a perspective on internal
connections and equivalences between different looking problems, that is
unavailable by any other means. In particular since it naturally
uses eigenvalues and eigenvectors, the  field of
 complex numbers $\mathbb{C}$
is the natural setting to formulate the class of problems, even if one is only
interested in 0-1 valued or non-negative valued constraint functions.
Using complex-valued constraints in defining Holant problems
we can see the internal logical connections between various problems.
Completely different looking problems can be seen as one and  the same
problem under holographic transformations.
The  proof of our dichotomy theorem would be impossible
without working over $\mathbb{C}$. Even the
dichotomy criterion would be impossible to state without it.
To quote Jacques Hadamard:
{\it ``The shortest path between two truths on the real line
passes through the complex plane."}

Suppose our domain set is $\{B, G, R\}$, named for the three colors
Blue, Green and Red.
We isolate several classes of tractable cases of ${\bf F}$.
One of them is a generalization of Fibonacci signatures from the Boolean
domain, under an orthogonal transformation. Another involves
a concept called isotropic vectors, which self-annihilates under
dot product. The third type involves a more intricate interplay
between an isotropic vector in some dimension and another function
primarily ``living'' in
the other dimensions. This last type was only discovered after
we failed to push through certain  hardness proofs.

For hardness proofs,
the first main idea  is to construct a binary
function which acts as an {\sc Equality} function when
restricted to $\{G, R\}$, and
is zero elsewhere.  This construction allows us to restrict
a function on  $\{B, G, R\}$ to a domain of size 2, and  employ
the known (and explicit) dichotomy theorems for the Boolean domain.
The plan is to use it to restrict  ${\bf F}$ to  $\{G, R\}$
and, assuming it is non-degenerate, to {\it anchor} the entire
hardness proof on that. Here it is crucial that
the known Boolean domain dichotomy is explicit. This part of the proof is quite
demanding and heavily depends on holographic reductions.
A central motif  is to show that after a holographic
reduction, ${\bf F}$ must possess {\it fantastic} regularity to escape
\#P-hardness.

What perhaps took us by surprise is that when ${\bf F}$ restricted
to  $\{G, R\}$
is degenerate, there is still considerable technical difficulty
remaining.  These are eventually overcome by using
unsymmetric functions.

This work has been a marathon for us.
During the process, repeatedly,
we failed to clinch the hardness proof for some subclasses
of functions
and then new tractable cases were found. So we had to reformulate
 the final dichotomy  several times. The discovery process
 is mutually
reinforcing between new algorithms and hardness proofs.  On many occasions
we believed that we had overcome one last hurdle, only to be
stymied by yet another. However the struggle has also
paid handsome dividends. For example, our SODA paper two years
ago~\cite{SODA11} was obtained as part of the program to achieve
this dichotomy. We realized we needed a dichotomy for unsymmetric
functions over the Boolean domain, and indeed that is used
to overcome a major difficulty in the proof here.

\section{Preliminary}\label{sec:notation}

\subsection{Definitions}

Definitions of Holant problem and gadget are introduced in this subsection. The readers who are familiar with the definitions in \cite{holant, SODA11} may skip.

Let $D$ be a finite domain set, and $\cal F$ be a finite
set of constraint functions called signatures.
Each $\mathbf{F} \in {\cal F}$ is a mapping from $D^k \to \mathbb{C}$
for some arity $k$. We assume signatures take complex algebraic numbers.

A \emph{signature grid} $\Omega = (G, {\cal F}, \pi)$ consists of
a graph $G = (V,E)$ where each vertex $v \in V$ is labeled by a
function $\mathbf{F}_v \in \mathbb{C}$, and $\pi$ is the labeling.
The Holant problem on instance $\Omega$ is to evaluate
\begin{equation} \label{equ:Holant-def}
{\rm Holant}_\Omega = \sum_{\sigma} \prod_{v \in V}
\mathbf{F}_v(\sigma \mid_{E(v)}),
\end{equation}
 a sum over all edge assignments $\sigma: E \rightarrow D$,
where $E(v)$ denotes the incident edges at $v$.

A function $\mathbf{F}_v$ is listed by its values lexicographically
 as a truth table, or as a tensor in $(\mathbb{C}^{|D|})^{\otimes \deg(v)}$.
We can identify a unary function $\mathbf{F}(x) : D \rightarrow \mathbb{C}$
with  a  vector  ${\bf u} \in \mathbb{C}^{|D|}$.
Given two vectors ${\bf u}$ and ${\bf v}$ of dimension $|D|$,
the tensor product  ${\bf u} \otimes {\bf v}$ is a vector in
 $\mathbb{C}^{|D|^2}$, with entries $u_i v_j$ ($1 \le i, j \le |D|$).
For matrices $A = (a_{i,j})$ and $B=(b_{k, l})$,
the tensor product (or Kronecker product) $A \otimes B$ is defined similarly;
it has  entries $a_{i,j} b_{k, l}$ indexed
by  $((i, k), (j, l))$ lexicographically.
We write ${\bf u}^{\otimes k}$ for ${\bf u}
\otimes \ldots \otimes {\bf u}$ with $k$ copies of ${\bf u}$.
$A^{\otimes k}$ is similarly defined.
We have
$(A \otimes B) (A' \otimes B') = (AA' \otimes BB')$ whenever
the matrix products are defined.
In particular,
$A^{\otimes k} ({\bf u}_1 \otimes \ldots \otimes {\bf u}_k)
= A{\bf u}_1 \otimes \ldots \otimes A{\bf u}_k$
when the matrix-vector products $A{\bf u}_i$ are defined.

A signature $\mathbf{F}$ of arity $k$ is  \emph{degenerate}
if $\mathbf{F} = {\bf u}_1 \otimes {\bf u}_2 \otimes \ldots \otimes {\bf u}_k$
for some vectors ${\bf u}_i$. Equivalently there are unary
functions $\mathbf{F}_i$ such that $\mathbf{F}(x_1, \ldots, x_k)
= \mathbf{F}_1(x_1) \cdots \mathbf{F}_k(x_k)$.
Such a signature is very weak; there is no interaction between
the variables.  If every function in ${\cal F}$
is degenerate, then ${\rm Holant}_\Omega$ for
any $\Omega = (G, {\cal F}, \pi)$  is computable in polynomial time
in a trivial way: Simply split every vertex $v$ into $\deg(v)$ many
vertices each assigned a unary $\mathbf{F}_i$ and connected to the incident
edge. Then ${\rm Holant}_\Omega$ becomes a product over each
component of a single edge.  Thus degenerate signatures
are weak and should be properly understood as made up by
unary signatures.  To concentrate on the essential features that
differentiate tractability from intractability, we introduced
Holant$^*$ problems~\cite{STOC09,holant}. These  are Holant problems
where unary signatures are assumed to be present.


We consider a type of graphs $G = (V,I,E)$ with
 two kinds of edges. Edges in $I$ are ordinary
 internal edges with two endpoints. Edges in $E$ are external edges (also called dangling edges) which have only one endpoint in $V$. Such a graph
can be made into a part of a larger graph as follows.
Given a graph  $G'$,
we can replace a vertex $v$ of $G'$ by a graph $G$ with
 external edges, merging the external edges with the incident
edges of  $v$.
Reversely, when some edges are cut from a graph, the cut edges become external edges on both sides. When two external edges are connected, they merge to
become one edge.

A \emph{gadget} consists of a graph $G = (V,I,E)$
and a labeling $\pi$, where each vertex $v \in V$ is labeled by a
function $\mathbf{F}_v \in \mathbb{C}$. A gadget can be a part of a signature grid. For example, in a signature grid, a single vertex of degree $d$  constitutes a gadget. It has
the single vertex,  together with its function, an empty $I$ set,
and its $d$ incident edges  as external edges. It can be replaced by a gadget $G$ with $|E|=d$ and vice versa. In a signature grid, when we want to replace a gadget $G$ with $|E|=d$  by a vertex $v$ of degree $d$, what is the right function $\mathbf{F}_v$ that keeps the value of the signature grid unchanged?
 The function of a gadget is defined to have this property, and it is also a natural generalization of Holant.

On an assignment $\tau: E \rightarrow D$, the function $\mathbf{F}_G$ of a gadget $G$ has value
\[
\mathbf{F}_G (\tau) = \sum_{\sigma } \prod_{v \in V}
\mathbf{F}_v(\tau \sigma \mid_{E(v)}),
\]
 a sum over all edge assignments $\sigma: I \rightarrow D$,
and $\tau \sigma$ is the combined assignment on $E \cup I$.


Suppose one gadget is the disjoint union of two parts, each has
two external edges. Suppose
the binary functions (on $x_1,x_2$ and $x_3,x_4$ respectively)
 in matrix form are $\mathbf{A}$
and $\mathbf{B}$.
Then the function of this gadget is $\mathbf{F}_{x_1x_3,x_2x_4}=\mathbf{A}_{x_1,x_2} \otimes \mathbf{B}_{x_3,x_4}$,  where $\mathbf{F}_{x_1x_3,x_2x_4}$ denotes the matrix with two indices $x_1x_3$ and $x_2x_4$, and the value of this entry is just $\mathbf{F}(x_1,x_2,x_3,x_4)$.

Another example is the following.
There are two binary functions $\mathbf{A}$ and
$\mathbf{B}$. They share an internal edge $x_2$. Other two edges $x_1,x_3$ are external. The function of this gadget is $\mathbf{F}(x_1,x_3)=\sum_{x_2} \mathbf{A}(x_1,x_2) \mathbf{A}(x_2,x_3)$, that is, $\mathbf{F}_{x_1, x_3}=\mathbf{A}_{x_1,x_2} \mathbf{A}_{x_2,x_3}$,
the matrix product.

\subsection{Holographic Reduction}

To introduce the idea of holographic reductions, it is convenient to consider bipartite graphs.
For a general graph, we can always transform it into a bipartite graph
preserving the Holant value, as follows:
For each edge in the graph, we replace it by a path of length $2$, and assign to
 the new vertex the binary \textsc{Equality} function $(=_2)$.

We use ${\rm Holant}(\mathcal{R} \mid \mathcal{G})$ to denote the Holant problem on bipartite graphs $H = (U,V,E)$, where each signature for a vertex in $U$ or $V$ is from $\mathcal{R}$ or $\mathcal{G}$, respectively.
An input instance for the bipartite Holant problem is a bipartite signature grid and is denoted as $\Omega = (H;\ \mathcal{R} \mid \mathcal{G};\ \pi)$.
Signatures in $\mathcal{R}$ are considered as row vectors (or covariant tensors); signatures in $\mathcal{G}$ are considered as column vectors (or contravariant tensors)~\cite{DP91}.

For a $|D| \times |D|$ matrix $T$ and a signature set $\mathcal{F}$, define $T \mathcal{F} = \{\mathbf{G} \mid \exists \mathbf{F} \in \mathcal{F}$ of arity $n,~\mathbf{G} = T^{\otimes n} \mathbf{F}\}$, similarly for $\mathcal{F} T$.
Whenever we write $T^{\otimes n} \mathbf{F}$ or $T \mathcal{F}$, we view the signatures as column vectors; similarly for $\mathbf{F} T^{\otimes n} $ or $\mathcal{F} T$ as row vectors.
A holographic transformation by $T$ is the following operation: given a signature grid $\Omega = (H;\  \mathcal{R} \mid \mathcal{G};\ \pi)$, for the same graph $H$, we get a new grid $\Omega' = (H;\  \mathcal{R} T \mid T^{-1} \mathcal{G};\  \pi)$ by replacing each signature in $\mathcal{R}$ or $\mathcal{G}$ with the corresponding signature in $\mathcal{R} T$ or $T^{-1} \mathcal{G}$.

\begin{theorem}[Valiant's Holant Theorem~\cite{HA_J}]
 If there is a holographic transformation mapping signature grid $\Omega$ to $\Omega'$, then ${\rm Holant}_\Omega = {\rm Holant}_{\Omega'}$.
\end{theorem}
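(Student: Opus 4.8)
The plan is to read ${\rm Holant}_\Omega$ as a complete tensor contraction and to observe that a holographic transformation by $T$ merely inserts the identity $T^{\otimes 1}(T^{-1})^{\otimes 1} = I$ along each edge, so it cannot change the value. Since holographic transformations are defined on bipartite signature grids, I take $\Omega = (H;\ \mathcal{R} \mid \mathcal{G};\ \pi)$ with $H = (U,V,E)$, and write $\mathbf{R}_u$ for the row-vector (covariant) signature at $u \in U$ and $\mathbf{G}_v$ for the column-vector (contravariant) signature at $v \in V$. Expanding the definition,
\[
{\rm Holant}_\Omega \;=\; \sum_{\sigma : E \to D}\ \prod_{u \in U} \mathbf{R}_u(\sigma \mid_{E(u)})\ \prod_{v \in V} \mathbf{G}_v(\sigma \mid_{E(v)}).
\]
Because $H$ is bipartite, each edge $e$ has exactly one endpoint in $U$ and one in $V$, so the value $\sigma(e)$ fills exactly one argument slot of one $\mathbf{R}_u$ and exactly one argument slot of one $\mathbf{G}_v$. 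Hence the right-hand side is literally the full contraction of the tensors $\{\mathbf{R}_u\}$ against the tensors $\{\mathbf{G}_v\}$, with one contracted index per edge.

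Next I would insert an identity matrix on every edge. For an edge $e = (u,v)$, let $a_e$ be the slot-index that $\sigma(e)$ occupies in $\mathbf{R}_u$ and $c_e$ the slot-index it occupies in $\mathbf{G}_v$; since $\sum_{b_e} T_{a_e, b_e}(T^{-1})_{b_e, c_e} = \delta_{a_e, c_e}$, I may replace the single sum over $\sigma(e)$ by an independent sum over a fresh index $b_e$, carrying a factor $T$ on the $\mathbf{R}_u$ side and a factor $T^{-1}$ on the $\mathbf{G}_v$ side. Performing this on all edges simultaneously does not change the value of the sum. Now I regroup factors vertex by vertex: the $\deg(u)$ copies of $T$, one attached to each argument slot of a fixed $u$, combine into $\mathbf{R}_u T^{\otimes \deg(u)}$ read as a row vector, and the copies of $T^{-1}$, one on each argument slot of a fixed $v$, combine into $(T^{-1})^{\otimes \deg(v)} \mathbf{G}_v = (T^{\otimes \deg(v)})^{-1}\mathbf{G}_v$ read as a column vector. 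These are exactly the signatures of the transformed grid $\Omega' = (H;\ \mathcal{R}T \mid T^{-1}\mathcal{G};\ \pi)$, and the expression that remains --- a sum over the new indices $\{b_e\}$ with one contracted index per edge --- is by definition ${\rm Holant}_{\Omega'}$. Therefore ${\rm Holant}_\Omega = {\rm Holant}_{\Omega'}$.

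I do not expect a genuine obstacle here: the whole content of the theorem is the elementary identity $TT^{-1} = I$ seen through tensor contraction, and the only care required is the index bookkeeping --- verifying that each edge supplies precisely one factor $T$ and one matching factor $T^{-1}$, and that the factors collected at a degree-$d$ vertex assemble into $T^{\otimes d}$ (respectively $(T^{-1})^{\otimes d}$) with the argument ordering prescribed by $\pi$ respected. (When $\Omega$ is presented as a general, not necessarily bipartite, signature grid, one first passes to the bipartite setting by the edge-subdivision with the binary \textsc{Equality} function $(=_2)$ recalled above, so nothing new is needed.)
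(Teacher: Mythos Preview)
Your argument is correct and is exactly the standard proof of Valiant's Holant Theorem: write ${\rm Holant}_\Omega$ as a full tensor contraction, insert $T\cdot T^{-1}=I$ on every edge, and regroup the $T$ factors with the $U$-side signatures and the $T^{-1}$ factors with the $V$-side signatures to obtain ${\rm Holant}_{\Omega'}$. The paper does not give its own proof of this theorem at all; it merely states it with a citation to Valiant's original paper~\cite{HA_J}, so there is nothing to compare against beyond noting that your proof is the canonical one.
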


Therefore, an invertible holographic transformation does not change the complexity of the Holant problem in the bipartite setting.
We illustrate the power of holographic transformation by an example.
Let $\mathbf{F} = [\frac{3}{2}, 0, \frac{1}{2}, 0, \frac{3}{2}]$.
Consider  ${\rm Holant}(\mathbf{F})$ on the Boolean domain.
For a 4-regular graph $G$,
${\rm Holant}(\mathbf{F})$
 is a sum over all 0-1 edge assignments of products of local evaluations.
Each vertex contributes a factor $\frac{3}{2}$
 if all incident edges are assigned the same truth value,
 a factor $\frac{1}{2}$ if exactly half are assigned 1 and the other
half 0.
Before anyone consigns this problem to be artificial and unnatural,
consider a
holographic transformation by
$Z = \frac{1}{\sqrt{2}} {\scriptsize
\begin{bmatrix} 1 & 1 \\ i & -i \end{bmatrix}}$.
Then ${\rm Holant}(\mathbf{F}) = {\rm Holant}( =_2 \mid \mathbf{F}) =
{\rm Holant}( (=_2) Z^{\otimes 2} \mid (Z^{-1})^{\otimes 4} \mathbf{F})$.
Let $\hat{\mathbf{F}} = [0,0,1,0,0]$, and writing it as a symmetrized sum of
tensor products, then
\begin{align*}
 Z^{\otimes 4} \hat{\mathbf{F}}
 &=
 Z^{\otimes 4}
 {\scriptsize \left\{
 \begin{bmatrix} 1 \\ 0 \end{bmatrix}
 \otimes \begin{bmatrix} 1 \\ 0 \end{bmatrix}
 \otimes \begin{bmatrix} 0 \\ 1 \end{bmatrix}
 \otimes \begin{bmatrix} 0 \\ 1 \end{bmatrix}
 +
 \begin{bmatrix} 1 \\ 0 \end{bmatrix}
 \otimes \begin{bmatrix} 0 \\ 1 \end{bmatrix}
 \otimes \begin{bmatrix} 1 \\ 0 \end{bmatrix}
 \otimes \begin{bmatrix} 0 \\ 1 \end{bmatrix}
 +
 \dotsb
 +
 \begin{bmatrix} 0 \\ 1 \end{bmatrix}
 \otimes \begin{bmatrix} 0 \\ 1 \end{bmatrix}
 \otimes \begin{bmatrix} 1 \\ 0 \end{bmatrix}
 \otimes \begin{bmatrix} 1 \\ 0 \end{bmatrix}
 \right\} } \\
 &=
 \tfrac{1}{4}
 {\scriptsize \left\{
 \begin{bmatrix} 1 \\ i \end{bmatrix}
 \otimes \begin{bmatrix} 1 \\ i \end{bmatrix}
 \otimes \begin{bmatrix} 1 \\ -i \end{bmatrix}
 \otimes \begin{bmatrix} 1 \\ -i \end{bmatrix}
 +
 \begin{bmatrix} 1 \\ i \end{bmatrix}
 \otimes \begin{bmatrix} 1 \\ -i \end{bmatrix}
 \otimes \begin{bmatrix} 1 \\ i \end{bmatrix}
 \otimes \begin{bmatrix} 1 \\ -i \end{bmatrix}
 +
 \dotsb
 +
 \begin{bmatrix} 1 \\ -i \end{bmatrix}
 \otimes \begin{bmatrix} 1 \\ -i \end{bmatrix}
 \otimes \begin{bmatrix} 1 \\ i \end{bmatrix}
 \otimes \begin{bmatrix} 1 \\ i \end{bmatrix}
 \right\} } \\
 &= \tfrac{1}{2} [3, 0, 1, 0, 3] = \mathbf{F};
\end{align*}
Hence the contravariant transformation
$(Z^{-1})^{\otimes 4} \mathbf{F} = \hat{\mathbf{F}}$.
Meanwhile, a covariant transformation by $Z$ transforms $(=_2)$ to the binary \textsc{Disequality} function $(\neq_2)$
\[
 (=_2) Z^{\otimes 2}
= {\scriptsize \begin{pmatrix} 1 & 0 & 0 & 1 \end{pmatrix}
 } Z^{\otimes 2}
 =              {\scriptsize \left\{ \begin{pmatrix} 1 & 0 \end{pmatrix}^{\otimes 2} + \begin{pmatrix} 0 &  1 \end{pmatrix}^{\otimes 2} \right\} } Z^{\otimes 2}
 = \tfrac{1}{2} {\scriptsize \left\{ \begin{pmatrix} 1 & 1 \end{pmatrix}^{\otimes 2} + \begin{pmatrix} i & -i \end{pmatrix}^{\otimes 2} \right\} }
 = [0, 1, 0]
 = (\neq_2).
\]
So
 ${\rm Holant}(\mathbf{F}) = {\rm Holant}((\neq_2) \mid [0,0,1,0,0])$; they are
really one and the same problem.
A moment's reflection shows that this latter formulation
is counting the number of
Eulerian orientations on 4-regular graphs, an eminently natural problem!

Furthermore,
holographic transformation by an orthogonal matrix $T$
 preserves the binary equality
and thus can be used freely in the standard setting.

\begin{theorem} \label{thm:orthogoal}
 Suppose $T$ is an orthogonal matrix $(T T^{\tt T} = I)$ and let $\Omega = (G, \mathcal{F}, \pi)$ be a signature grid.
 Under a holographic transformation by $T$, we get a new grid $\Omega' = (G, T \mathcal F, \pi)$ and ${\rm Holant}_\Omega = {\rm Holant}_{\Omega'}$.
\end{theorem}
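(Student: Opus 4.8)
\medskip

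\noindent\textbf{Proof plan.}
The plan is to reduce to the bipartite setting and then invoke Valiant's Holant Theorem, the one new ingredient being that the binary \textsc{Equality} signature is a fixed point of any orthogonal holographic transformation. First I would bipartize $\Omega = (G,\mathcal F,\pi)$: subdivide every edge of $G$ by a new degree-$2$ vertex labeled $(=_2)$, obtaining a bipartite signature grid $\Omega_1 = (H;\ (=_2)\mid\mathcal F;\ \pi_1)$ with the inserted vertices on the row side and the original vertices on the column side. Unwinding the definition \eqref{equ:Holant-def}, summing out an $(=_2)$ vertex only forces the two half-edges meeting it to agree, so $\mathrm{Holant}_\Omega = \mathrm{Holant}_{\Omega_1}$.

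Next I would apply a holographic transformation to $\Omega_1$ by the matrix $M = T^{-1}$. Since $T$ is orthogonal, $M = T^{\tt T}$ is invertible (indeed orthogonal), so Valiant's Holant Theorem applies and gives $\mathrm{Holant}_{\Omega_1} = \mathrm{Holant}_{\Omega_2}$ for $\Omega_2 = (H;\ (=_2)M^{\otimes 2}\mid M^{-1}\mathcal F;\ \pi_1)$, where $M^{-1}\mathcal F = T\mathcal F$. The heart of the argument is the identity $(=_2)M^{\otimes 2} = (=_2)$: in matrix form $(=_2)$ is the identity $I$, and the covariant action of $M^{\otimes 2}$ sends it to $M^{\tt T} I M = M^{\tt T}M = I$; concretely, the $(j,k)$ entry of $(=_2)M^{\otimes 2}$ is $\sum_i M_{i,j}M_{i,k} = (M^{\tt T}M)_{j,k} = \delta_{j,k}$. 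Hence $\Omega_2 = (H;\ (=_2)\mid T\mathcal F;\ \pi_1)$.

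Finally I would undo the bipartization: contracting each $(=_2)$ vertex of $\Omega_2$ recovers the graph $G$ with each original vertex now carrying the corresponding signature of $T\mathcal F$, i.e.\ the grid $\Omega' = (G, T\mathcal F, \pi)$, and once more $\mathrm{Holant}_{\Omega_2} = \mathrm{Holant}_{\Omega'}$. Chaining the three equalities yields $\mathrm{Holant}_\Omega = \mathrm{Holant}_{\Omega'}$. I do not expect a real obstacle here: the only place orthogonality enters is the fixed-point identity $(=_2)M^{\otimes 2} = (=_2)$ (which is exactly what fails for a general invertible $T$, since then $(=_2)$ is sent to the ``inner product'' signature with matrix $T^{\tt T}T \neq I$), and everything else is routine bipartization bookkeeping. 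As an alternative to the bipartite route, one can argue directly: expand each transformed signature $T^{\otimes\deg(v)}\mathbf F_v$ using a hidden per-half-edge assignment, sum out the original edge variable on each edge $e=(u,v)$ to produce a factor $(T^{\tt T}T)_{\tau(u,e),\tau(v,e)} = \delta_{\tau(u,e),\tau(v,e)}$, and observe that the surviving terms reassemble exactly into $\mathrm{Holant}_\Omega$.
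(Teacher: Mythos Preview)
Your proposal is correct and is exactly the argument the paper has in mind: the paper does not spell out a proof, but the sentence immediately preceding the theorem (``holographic transformation by an orthogonal matrix $T$ preserves the binary equality and thus can be used freely in the standard setting'') is precisely your key identity $(=_2)M^{\otimes 2}=(=_2)$, and the rest is the standard bipartize--apply Valiant's Holant Theorem--contract routine you wrote out.
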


%


 When $T$ has a special $\{B\}$ and $\{G,R\}$ domain separated form, we observe that
each $\{G,R\}$-line in the table for $T^{\otimes 3} \mathbf{F}$
and $\mathbf{F}$,
which correspond to a fixed number of $B$ assigned,
are closely related by the $\{G,R\}$-block of $T$, as stated in
the following Fact. We call this a {\it \methsepahr}.

\begin{fact} \label{fact-domain-sepa-HR}
Suppose
$T$ is in the $\{B\}$ and $\{G,R\}$ domain separated form,
$\left ( \begin{matrix} e & 0 & 0 \\
   0 & a & b \\
   0 & c & d
\end{matrix} \right )$.
Let $M = \left ( \begin{matrix}  a & b \\   c & d
\end{matrix} \right )$.  We have,
\begin{eqnarray*}
 (T^{\otimes 3} \mathbf{F})^{*\rightarrow \{G,R\}}
&=&M^{\otimes 3} (\mathbf{F}^{*\rightarrow \{G,R\}}), \\
 (T^{\otimes 3} \mathbf{F})^{1=B,2, 3 \rightarrow \{G,R\}}
&=&e M^{\otimes 2} (\mathbf{F}^{1=B,2,3\rightarrow \{G,R\}}), \\
(T^{\otimes 3} \mathbf{F})^{1=B, 2=B, 3 \rightarrow \{G,R\}}
&=&e^2 M (\mathbf{F}^{1=B, 2=B,3\rightarrow \{G,R\}}).
\end{eqnarray*}
\end{fact}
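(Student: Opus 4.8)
The plan is to unfold the tensor action $T^{\otimes 3}\mathbf{F}$ coordinate-wise and use the block-diagonal shape of $T$ to collapse the sum. Index coordinates by the domain $\{B, G, R\}$; since signatures acted on by $T^{\otimes n}$ are column vectors, for every $(x_1, x_2, x_3) \in \{B,G,R\}^3$ we have
\[
(T^{\otimes 3}\mathbf{F})(x_1, x_2, x_3) = \sum_{y_1, y_2, y_3} T_{x_1, y_1}\, T_{x_2, y_2}\, T_{x_3, y_3}\, \mathbf{F}(y_1, y_2, y_3).
\]
The structural facts about the domain-separated $T$ that drive everything are: $T_{B, y} = 0$ unless $y = B$, with $T_{B,B} = e$; and for $x \in \{G, R\}$, $T_{x, y} = 0$ unless $y \in \{G, R\}$, in which case $T_{x,y} = M_{x,y}$.

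For the first identity I would take all $x_i \in \{G, R\}$: each factor $T_{x_i, y_i}$ kills the terms with $y_i \notin \{G,R\}$, so the sum runs over $(y_1, y_2, y_3) \in \{G,R\}^3$ and on that range each $T_{x_i, y_i}$ may be replaced by $M_{x_i, y_i}$; the result is exactly $M^{\otimes 3}$ applied to $\mathbf{F}^{*\rightarrow \{G,R\}}$. For the second identity, pin $x_1 = B$ and take $x_2, x_3 \in \{G,R\}$: the first factor forces $y_1 = B$ and pulls out the scalar $e$, while the remaining two factors restrict $y_2, y_3$ to $\{G,R\}$ and become entries of $M$, giving $e\, M^{\otimes 2}$ applied to $\mathbf{F}^{1=B,2,3\rightarrow \{G,R\}}$. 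The third identity is the same computation with $x_1 = x_2 = B$, which pulls out $e^2$ and leaves a single copy of $M$ acting on $\mathbf{F}^{1=B,2=B,3\rightarrow \{G,R\}}$.

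There is no real obstacle here beyond notational bookkeeping: the whole argument is the one-line collapse of the displayed sum, carried out three times. The only things that need care are the convention that $T^{\otimes n}\mathbf{F}$ treats $\mathbf{F}$ as a column vector, and the precise reading of the pinning/restriction superscripts $\mathbf{F}^{*\rightarrow \{G,R\}}$, $\mathbf{F}^{1=B,2,3\rightarrow \{G,R\}}$, $\mathbf{F}^{1=B,2=B,3\rightarrow \{G,R\}}$. I would also note that since $T^{\otimes 3}$ commutes with permutations of the three tensor factors, $T^{\otimes 3}\mathbf{F}$ inherits any symmetry of $\mathbf{F}$; hence these three position-specific identities immediately yield the corresponding relation for the slice determined purely by the number of coordinates set to $B$, which is the form in which the Fact is used later.
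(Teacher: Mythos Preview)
Your proof is correct and is essentially the same argument as the paper's: both exploit the block-diagonal shape of $T$ to collapse the sum defining $T^{\otimes 3}\mathbf{F}$. The only difference is presentational --- the paper phrases the collapse in the gadget language (connecting the unary $(1,0,0)$, using associativity to absorb it into one copy of $T$, then observing that the domain-separated $T$ forces internal edges into $\{G,R\}$), whereas you carry out the same computation by direct index expansion.
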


\begin{proof}
We prove the second formula as an example. Other formulae can be proved similarly.

$(T^{\otimes 3} \mathbf{F})^{1=B,2,3 \rightarrow \{G,R\}}$ is the line $[(T^{\otimes 3} \mathbf{F})_{BGG},(T^{\otimes 3} \mathbf{F})_{BGR},(T^{\otimes 3} \mathbf{F})_{BRR}]$ in the triangular table form of $T^{\otimes 3} \mathbf{F}$, and $\mathbf{F}^{1=B,2,3\rightarrow \{G,R\}}$ is the corresponding
 line of $\mathbf{F}$.

Because one input of $T^{\otimes 3} \mathbf{F}$ is fixed to $B$, it is
equivalent to connecting one unary function $(1,0,0)$
to $T^{\otimes 3} \mathbf{F}$.
By associativity this unary can be combined with a copy of
$T$ in the gadget  $T^{\otimes 3} \mathbf{F}$. This combination
results in a unary function $\langle (1,0,0), T \rangle= (T_{B,B}, 0, 0)
= (e,0,0)$,
which is
then  connected to $\mathbf{F}$. This creates a binary function
$e \mathbf{F}^{1=B}$.  Now, we get $(T^{\otimes 2}(e \mathbf{F}^{1=B}))^{*\rightarrow \{G,R\}}$. The two external edges of the
gadget $T^{\otimes 2}(e \mathbf{F}^{1=B})$ are restricted to $\{G,R\}$. Because the domain of $T$ is separated into $\{B\}$ and $\{G,R\}$, they force the two internal edges to take values in $\{G,R\}$. Since all 4 edges take values in $\{G,R\}$, this turns $T^{\otimes 2}(e\mathbf{F}^{1=B})$ into $e M^{\otimes 2} \mathbf{F}^{1=B,2,3\rightarrow \{G,R\}}$.
\end{proof}

\subsection{Notations}

A signature $\mathbf{F}$ on $r$ variables is symmetric
if $\mathbf{F}(x_1, \ldots, x_r) = \mathbf{F}(x_{\sigma (1)},  \ldots, x_{\sigma (r)})$
for all $\sigma \in \mathfrak{S}_k$, the \emph{symmetric group}.
It can be shown easily that a symmetric signature $\mathbf{F}$  is degenerate iff
$\mathbf{F} = {\bf u}^{\otimes r}$ for some unary ${\bf u}$.

We use $\rm{Sym}(\mathbf{F})$ to denote the symmetrization of $\mathbf{F}$ as follows:
%
%
For $i_1,i_2,\ldots, i_r \in \{B, G, R\}$,
\[{(\rm{Sym}(\mathbf{F}))}_{(i_1i_2\ldots i_r)}
= \sum_{\sigma\in \mathfrak{S}_r}
F_{i_{\sigma 1}i_{\sigma 2} \cdots i_{\sigma r}},\]
where the summation is over the \emph{symmetric group}
$\mathfrak{S}_r$ on $r$ symbols. \footnote{Usually, there is a normalization factor $\frac{1}{r!}$  in front of the
summation, however a global factor does not change the complexity and we ignore this factor for notational simplicity.}

If $\mathbf{F}$ is degenerate, given as a simple tensor product
\[\mathbf{F}=v_1\otimes v_2\otimes\cdots \otimes v_r, \]
then the  symmetrization of $\mathbf{F}$ is the symmetric product of the factors:

\[{\rm{Sym}(\mathbf{F})}= \sum_{\sigma\in\mathfrak{S}_r}
v_{{\sigma 1}}\otimes v_{{\sigma 2}}\otimes\cdots\otimes v_{{\sigma r}}.\]

We consider a function $\mathbf{F}$ and its nonzero multiple $c\mathbf{F}$
  as the same function,
as $c\mathbf{F}$ only introduces a easily computable global factor.

A symmetric signature $\mathbf{F}$ on $r$ Boolean variables
can be expressed as $[f_0,f_1,\dotsc,f_k]$, where $f_j$ is the value of $\mathbf{F}$ on inputs of Hamming weight $j$.
In the following, we focus on symmetric signatures over domain $[3]$.
We use three symbols $\{B,G,R\}$ to denote the domain elements.

Let $\mathbf{F}$ be a symmetric signatures of arity $3$ over domain $\{B,G,R\}$. We use the following notation for $\mathbf{F}$.
\[ \mathbf{F}=[F_{BBB}; F_{BBG}, F_{BBR}; F_{BGG}, F_{BGR}, F_{BRR}; F_{GGG}, F_{GGR}, F_{GRR}, F_{RRR}].\]

Alternatively we also use the following notation:

\begin{equation}\label{domain-3-sig}
\begin{tabular}{*{11}{c}c}
{ }     & { }   & { }   &$F_{BBB}$& { }     & { }   & { }  \\
{ }     & { }   &$F_{BBG}$& { } &$F_{BBR}$& { }     & { }  \\
{ }     &$F_{BGG}$& { } &  $F_{BGR}$& { }   &$F_{BRR}$& { }  \\
 $F_{GGG}$& { } &$F_{GGR}$& { } & $F_{GRR}$& { } &   $F_{RRR}$  \\
\end{tabular}
\end{equation}


This notation can be extended to other arities. For a signature with arity two, we also use a symmetric $q \times q$ matrix to represent it.
\begin{eqnarray*}
 \mathbf{F} &=& [F_{BB}; F_{BG}, F_{BR}; F_{GG}, F_{GR}, F_{RR}]\\
   &=& \begin{bmatrix} F_{BB} & F_{BG} & F_{BR} \\
   F_{BG} & F_{GG} & F_{GR} \\
   F_{BR} & F_{GR} & F_{RR}
   \end{bmatrix}.
\end{eqnarray*}

For a binary signature, the rank of the signature is
the rank of its $q \times q$ matrix.

A unary function can be represented as $[F_B; F_G, F_R]$ in symmetric notation, or simply $(F_B, F_G, F_R)$ in full version.

We use $\mathbf{F}^{i= A}$, where $i\in [r]$ and $A \in \{B,G,R\}$, to denote a signature of arity $r-1$ by fixing the $i$-th input of $\mathbf{F}$ to $A$.
For example for the $\mathbf{F}$ in (\ref{domain-3-sig})
\begin{equation}\label{domain-3-sig-1-is-set-to-B}
\mathbf{F}^{1 = B} =   \begin{bmatrix} F_{BBB} & F_{BBG} & F_{BBR} \\
   F_{BBG} & F_{BGG} & F_{BGR} \\
   F_{BBR} & F_{BGR} & F_{BRR}
   \end{bmatrix}.
\end{equation}

Sometimes, we also restrict the
$i$-th input of  $\mathbf{F}$ to $S$, a subset of $\{B,G,R\}$, and we use
$\mathbf{F}^{i
\rightarrow S}$ (for example $\mathbf{F}^{2 \rightarrow \{B,R\}}$) to denote it.
We use $\mathbf{F}^{* \rightarrow {S}}$ to denote the case
when we restrict  all inputs of  $\mathbf{F}$ to $S$. For example
\[ \mathbf{F}^{* \rightarrow \{G,R\}}= [ F_{GGG}, F_{GGR}, F_{GRR}, F_{RRR}].\]

The above notation can be combined, for example
\[\mathbf{F}^{1 = B; 2,3  \rightarrow \{G,R\}}= [ F_{BGG}, F_{BGR}, F_{BRR}]. \]

We also use $F_{a,b,c}$, ($a,b,c \in \mathbf{N}, a+b+c=r$) to denote the
value of $\mathbf{F}$ when the numbers of $B$'s, $G$'s and $R$'s
among the inputs are respectively $a$, $b$ and $c$.
For example, $F_{1,2,0} =F_{BGG}$.

\begin{definition} \label{def-domain-separated}
A symmetric function $F$ of arity $r \geq 2$, gives a $r$-uniform hyper graph $G$ whose vertex set is the domain of variables. We say two disjoint subsets of domain are separated, if they are contained in different connected components of $G$.
\end{definition}

For example, if a ternary function has the form
\begin{center}
\begin{tabular}{*{11}{c}c}
{ }     & { }   & { }   &$F_{BBB}$& { }     & { }   & { }  \\
{ }     & { }   &$0$& { } &$0$& { }     & { }  \\
{ }     &$0$& { } &  $0$& { }   &$0$& { }  \\
 $F_{GGG}$& { } &$F_{GGR}$& { } & $F_{GRR}$& { } &   $F_{RRR}$  \\
\end{tabular}
\end{center}
We say that $B$ is separated from $\{G,R\}$.

\subsection{A Calculus with Symmetric Signatures}\label{section:calculus}


In order to follow the proofs in
this paper, it would be helpful to familiarize oneself
with a certain calculus that lets us reason about these symmetric signatures
on domain size 3.  We will mainly illustrate it with signatures of arity 2 or 3.
It is easy to generalize it to any higher arities.

For any symmetric signature
$\mathbf{F}$ of arity 2 on domain $\{B,G,R\}$,
we make the following identification of
the notation
\begin{center}
\begin{tabular}{*{11}{c}c}
{ }     & { }   & { }   &$F_{BB}$& { }     & { }   & { }  \\
{ }     & { }   &$F_{BG}$& { } &$F_{BR}$& { }     & { }  \\
{ }     &$F_{GG}$& { } &  $F_{GR}$& { }   &$F_{RR}$& { }  \\
\end{tabular}
\end{center}
with its matrix form
\begin{eqnarray*}
   \begin{bmatrix} F_{BB} & F_{BG} & F_{BR} \\
   F_{BG} & F_{GG} & F_{GR} \\
   F_{BR} & F_{GR} & F_{RR}
   \end{bmatrix}.
\end{eqnarray*}
We note that the three corners in counterclock-wise order $B, G, R$
are listed on the main diagonal in  the matrix. Then the off-diagonal
entries are filled by the corresponding color pairs, e.g., the entry
$F_{BG}$ between $B$ and $G$ are filled at the $(B,G)$ and $(G,B)$ entry
of the matrix.

Let $\mathbf{F}$ be a ternary symmetric signature,
and let $\mathbf{u} = (\alpha, \beta, \gamma)$ be a unary signature, both on
domain $\{B,G,R\}$, we can form a binary symmetric signature
by connecting one input of  $\mathbf{F}$ with $\mathbf{u}$.
Since $\mathbf{F}$ is symmetric, connecting to any one of the
input wires defines the same symmetric signature on the
other input wires. We denote this signature by
$\langle \mathbf{u}, \mathbf{F} \rangle$.
By symmetry, for $\mathbf{F}$ of arity at least 2,
 $\langle \mathbf{v}, \langle \mathbf{u}, \mathbf{F} \rangle \rangle
= \langle \mathbf{u}, \langle \mathbf{v}, \mathbf{F} \rangle \rangle$.

Suppose $\mathbf{F}$ is  given in (\ref{domain-3-sig}).
Then $\langle \mathbf{u}, \mathbf{F} \rangle$ is the following
\begin{center}
\begin{tabular}{*{11}{c}c}
{ }     & { }   & { }   &$F'_{BB}$& { }     & { }   & { }  \\
{ }     & { }   &$F'_{BG}$& { } &$F'_{BR}$& { }     & { }  \\
{ }     &$F'_{GG}$& { } &  $F'_{GR}$& { }   &$F'_{RR}$& { }  \\
\end{tabular}
\end{center}
where each entry $F'_{XY}$ is obtained by a linear combination
$\alpha F_{XYB} +  \beta F_{XYG} + \gamma F_{XYR}$;
i.e., we start at any entry on the first three rows in
the triangular table  for $\mathbf{F}$, and then form a
linear combination with coefficients $\alpha, \beta, \gamma$ in
 a counterclock-wise order involving the three entries forming
a {\it small triangle}. E.g., start with entry
$F_{BBG}$, we get $F'_{BG} = \alpha F_{BBG} + \beta F_{BGG} + \gamma F_{BGR}$.

Suppose $\mathbf{F}$ is a symmetric  ternary signature,
and $\mathbf{u} = ( 1, i, 0)$. Then we see immediately
that $\langle \mathbf{u}, \mathbf{F} \rangle = \mathbf{0}$
(the zero binary function) iff  $\mathbf{F}$ has the following form
\begin{equation}\label{annihilated-by-1-i}
\begin{tabular}{*{11}{c}c}
{ }     & { }   & { }   &$x$& { }     & { }   & { }  \\
{ }     & { }   &$x i$& { } &$y$& { }     & { }  \\
{ }     &$- x$& { } &  $y i $& { }   &$z$& { }  \\
 $- x i $& { } &$- y$& { } & $z i$& { } &   $w$  \\
\end{tabular}
\end{equation}




Fixing some variables to $R$ and restrict others to $\{B,G\}$, we get
$\mathbf{F}^{1,2,3  \rightarrow \{B,G\}}=[x,xi,-x,-xi]$, $\mathbf{F}^{1 = R; 2,3  \rightarrow \{B,G\}}=[y,yi,-y]$ and
$\mathbf{F}^{1 = R, 2=R; 3  \rightarrow \{B,G\}}=[z,zi]$. They all become the zero function, after connecting with the unary function $(1,i)$.

Suppose $\mathbf{F}$  has the property that when we fix the number of $R$'s,
the restricted signatures on domain $\{B, G\}$ all satisfy
a single linear recurrence,
then, viewed in terms of those {\it small triangles},
it follows that the $\{B, G\}$-restricted signatures
of $\langle \mathbf{u}, \mathbf{F} \rangle$ also
satisfy the same linear recurrence.

Let us suppose we are given a
symmetric  ternary signature  $\mathbf{F}$
with $F_{GGR} = F_{GRR}=0$, thus
\begin{center}
\begin{tabular}{*{11}{c}c}
{ }     & { }   & { }   &$F_{BBB}$& { }     & { }   & { }  \\
{ }     & { }   &$F_{BBG}$& { } &$F_{BBR}$& { }     & { }  \\
{ }     &$F_{BGG}$& { } &  $F_{BGR}$& { }   &$F_{BRR}$& { }  \\
 $F_{GGG}$& { } &$0$& { } & $0$& { } &   $F_{RRR}$  \\
\end{tabular}
\end{center}
By connecting a unary function $(1,t,0)$ to $\mathbf{F}$
we will obtain a binary function whose triangular table
has the third row being $[F_{BGG}+ t F_{GGG}, F_{BGR}, F_{BRR}]$.
If we further
connect both dangling edges of this binary function with
$(=_{G,R})
   =
\begin{bmatrix}
   0 & 0 & 0 \\
   0 & 1 & 0 \\
   0 & 0 & 1
   \end{bmatrix}
 $,
we get a symmetric binary signature whose restriction on $\{G, R\}$
is $[F_{BGG}+ t F_{GGG}, F_{BGR}, F_{BRR}]$, and zero elsewhere.

Now suppose further that  the ternary function $\mathbf{F}$
satisfies $F_{BGR}= F_{GGR} = F_{GRR}=0$, i.e., it has the form
\begin{center}
\begin{tabular}{*{11}{c}c}
{ }     & { }   & { }   &$g$& { }     & { }   & { }  \\
{ }     & { }   &$y$& { } &$w$& { }     & { }  \\
{ }     &$x$& { } &  $0$& { }   &$z$& { }  \\
 $a$& { } &$0$& { } & $0$& { } &   $b$  \\
\end{tabular}
\end{center}
Let us consider the gadget as depicted in the following Figure
to construct another binary function, where both vertices of
degree 3 are given the function $\mathbf{F}$.
\begin{figure}[hbtp]
   \begin{center}
                \includegraphics[width=3 in]{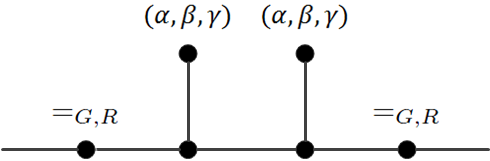}
\caption{A binary gadget.}        \label{calculus1}
   \end{center}
\end{figure}
We calculate its signature $S$ as follows:
It will be the matrix product of 4 matrices.
The first matrix is
$(=_{G,R})
   =
\begin{bmatrix}
   0 & 0 & 0 \\
   0 & 1 & 0 \\
   0 & 0 & 1
   \end{bmatrix} $.
The second is the  matrix  form $M$ of
$\langle \mathbf{u}, \mathbf{F} \rangle$,
where $\mathbf{u} = (\alpha, \beta, \gamma)$,
and we get
\begin{center}
\begin{tabular}{*{11}{c}c}
{ }     & { }   & { }   &$\alpha g + \beta y + \gamma w$& { }     & { }   & { }  \\
{ }     & { }   &$\alpha  y + \beta  x$& { } &$\alpha  w + \gamma z$& { }     & { }  \\
{ }     &$\alpha x + \beta  a$& { } &  $0$& { }   &$\alpha z + \gamma  b$& { }  \\
\end{tabular}
\end{center}
Thus the matrix form is
\begin{eqnarray*}
M=    \begin{bmatrix} \alpha g + \beta y + \gamma w & \alpha  y + \beta  x &
\alpha  w + \gamma z \\
   \alpha  y + \beta  x & \alpha x + \beta  a & 0 \\
   \alpha  w + \gamma z & 0 & \alpha z + \gamma  b
   \end{bmatrix}.
\end{eqnarray*}
The third matrix will be $M$ as well, and we note that
$M$ is symmetric, $M^{\tt T} = M$.
The fourth will be $=_{G,R}$ again, which is also symmetric.
We can calculate the $2 \times 2$ matrix for the signature
$S^{* \rightarrow \{G, R\}}$
as a function on the restricted domain $\{G, R\}$ to be
$\begin{bmatrix}
   0 & 1 & 0 \\
   0 & 0 & 1
   \end{bmatrix} M M^{\tt T}
\begin{bmatrix}
    0 & 0 \\
    1 & 0 \\
    0 & 1
   \end{bmatrix} $.
Thus the  signature  $S^{* \rightarrow \{G, R\}}$
can be computed  as follows, picking only the second and third rows of $M$:
\begin{eqnarray*}
 \begin{bmatrix}
   \alpha  y + \beta  x & \alpha x + \beta  a & 0 \\
   \alpha  w + \gamma z & 0 & \alpha z + \gamma  b
   \end{bmatrix}
 \begin{bmatrix}
   \alpha  y + \beta  x & \alpha  w + \gamma z \\
   \alpha x + \beta  a & 0 \\
   0 &  \alpha z + \gamma  b
   \end{bmatrix}.
\end{eqnarray*}
Written in the symmetric signature notation on domain size 2
we have
\begin{equation}\label{sig-2-weed-gadget}
[(\alpha y + \beta x)^2 + (\alpha x +\beta a)^2,
(\alpha y + \beta x)(\alpha w + \gamma z),
(\alpha w + \gamma z)^2 + (\alpha z + \gamma b)^2 ].
\end{equation}

\subsection{Symmetry and Decomposition}

We say a function is decomposable, if it has arity at least $2$, and is a product of two functions (of arity at least $1$) applied to its two disjoint variable subsets respectively.

\begin{fact} \label{fact-cut1}
If a symmetric function  $\mathbf{F}=\mathbf{A}(x_1)\mathbf{B}(x_2, \cdots, x_r)$,
then there  is a constant $c$, such that
$\mathbf{F} = c \prod_{i=1}^r \mathbf{A}(x_i)$.

\end{fact}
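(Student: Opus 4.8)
\textbf{Proof proposal for Fact~\ref{fact-cut1}.}

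The plan is to exploit symmetry of $\mathbf{F}$ against the decomposition $\mathbf{F}(x_1,\dots,x_r)=\mathbf{A}(x_1)\mathbf{B}(x_2,\dots,x_r)$. First I would dispose of the trivial case where $\mathbf{A}\equiv 0$: then $\mathbf{F}\equiv 0$ and we may take $c=0$ (or note the statement is vacuous up to the convention of ignoring the zero function). So assume $\mathbf{A}$ is not identically zero, and fix a domain element $a_0\in\{B,G,R\}$ with $\mathbf{A}(a_0)\neq 0$.

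The key step is to show that $\mathbf{B}$ is itself of the product form $\mathbf{B}(x_2,\dots,x_r)=c'\prod_{i=2}^r\mathbf{A}(x_i)$ for a suitable constant $c'$. I would argue as follows. For any index $j\in\{2,\dots,r\}$, symmetry of $\mathbf{F}$ under the transposition swapping $x_1$ and $x_j$ gives
\[
\mathbf{A}(x_1)\,\mathbf{B}(x_2,\dots,x_j,\dots,x_r)=\mathbf{A}(x_j)\,\mathbf{B}(x_2,\dots,x_1,\dots,x_r),
\]
where on the right $x_1$ sits in the $j$-th slot of $\mathbf{B}$. Now substitute $x_1=a_0$ and divide by $\mathbf{A}(a_0)\neq 0$: this expresses $\mathbf{B}(x_2,\dots,x_r)$ (evaluated with its $j$-th argument being the original $x_j$) as $\mathbf{A}(x_j)$ times a function that does not depend on $x_j$ (it depends only on the other $x_i$'s and on the fixed constant $a_0$ sitting in slot $j$). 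Doing this successively for $j=2,\dots,r$ — and using symmetry of $\mathbf{B}$ itself, which follows from symmetry of $\mathbf{F}$ restricted to the last $r-1$ coordinates together with $\mathbf{A}$ being a scalar on each coordinate — peels off one factor $\mathbf{A}(x_j)$ at a time, leaving after $r-1$ steps a constant. This yields $\mathbf{B}=c'\prod_{i=2}^r\mathbf{A}(x_i)$, hence $\mathbf{F}=c'\prod_{i=1}^r\mathbf{A}(x_i)$, which is the claim with $c=c'$.

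The main obstacle, and the point requiring care, is the bookkeeping in the inductive peeling: one must check that after substituting $a_0$ into a slot the residual function genuinely loses dependence on that variable, and that the constants accumulated at each stage are consistent (this is where symmetry of $\mathbf{B}$, not just of $\mathbf{F}$, is used, to identify $\mathbf{B}(\dots,a_0,\dots,x_j,\dots)$ with $\mathbf{B}(\dots,x_j,\dots,a_0,\dots)$). A clean way to package this is induction on $r$: the base case $r=2$ is immediate from the displayed identity with $j=2$, and for $r>2$ one first shows $\mathbf{B}(x_2,\dots,x_r)=\mathbf{A}(x_2)\mathbf{B}'(x_3,\dots,x_r)$ for some symmetric $\mathbf{B}'$ (by the $r=2$-style argument applied to the pair $(x_1,x_2)$ with the remaining variables frozen to generic values), then observes that $\mathbf{F}(x_1,\dots,x_r)/\bigl(\text{value at a generic point}\bigr)$ again decomposes as a unary times a symmetric $(r-1)$-ary function, so the induction hypothesis applies to $\mathbf{A}(x_2)\cdots$. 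I expect no genuine difficulty beyond this organizational point; everything is linear-algebraic and the field $\mathbb{C}$ plays no special role.
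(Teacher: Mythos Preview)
Your proposal is correct and follows essentially the same approach as the paper: handle the case $\mathbf{A}\equiv 0$ trivially, fix $a_0$ with $\mathbf{A}(a_0)\neq 0$, use the symmetry swap $\mathbf{A}(x_1)\mathbf{B}(\dots)=\mathbf{A}(x_j)\mathbf{B}(\dots)$ with the substitution $x_1=a_0$ to peel off a factor $\mathbf{A}(x_j)$ from $\mathbf{B}$, and finish by induction on $r$. The paper's write-up is slightly cleaner in that it applies the induction hypothesis directly to the symmetric function $\mathbf{B}$ (once $\mathbf{B}=\mathbf{A}(x_2)\cdot\mathbf{B}'(x_3,\dots,x_r)$ is established), whereas your final paragraph phrases the inductive step a bit loosely; but the content is the same.
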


\begin{proof}
If $\mathbf{A} \equiv 0$ then it is trivial. We can assume $\mathbf{A}(a) \neq 0$.

If $r=2$, then  by $\mathbf{F}=\mathbf{A}\mathbf{B}$, $\mathbf{A}(a)\mathbf{B}(x_2)=\mathbf{A}(x_2)\mathbf{B}(a)$. Hence, $\mathbf{A}(a)\mathbf{F}=\mathbf{A}(a)\mathbf{A}(x_1)\mathbf{B}(x_2)=\mathbf{B}(a)\mathbf{A}(x_1)\mathbf{A}(x_2)$, and we can set $c = \mathbf{A}(a)^{-1} \mathbf{B}(a)$.

If $r>2$, restricting to $x_1 =a$, we see that
$\mathbf{B}$ is symmetric.
From $\mathbf{F}=\mathbf{A}(x_1)\mathbf{B}(x_2, \ldots, x_r)$,
we get $\mathbf{A}(a) \mathbf{B}(x_2,\ldots,x_r)=\mathbf{A}(x_2)\mathbf{B}(a,x_3, \ldots, x_r)$. This means  that
$\mathbf{B}(x_2,\ldots,x_r)$
 is a product of $\mathbf{A}(x_2)$ and a function  on $(x_3, \ldots,x_r)$.
 By induction hypothesis, the conclusion holds.
\end{proof}

For the general case, the idea is similar. If
a symmetric $F$ is decomposed into  $\mathbf{A}$ and $\mathbf{B}$, we utilize this to cut $\mathbf{A}$ and $\mathbf{B}$ into smaller pieces.

\begin{fact} \label{fact-cut2}
If a symmetric function  $\mathbf{F}=\mathbf{A}(x_1 \cdots, x_r )\mathbf{B}(x_{r+1}, \cdots, x_{r+s})$,
that is, it is decomposable,  then  for some constant $c$,
 and unary function $C$, $\mathbf{F}= c \prod_{i=1}^{r+s} \mathbf{C}(x_i)$.
\end{fact}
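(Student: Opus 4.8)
The plan is to reduce Fact~\ref{fact-cut2} to the already-proven Fact~\ref{fact-cut1} by first showing that the symmetry of $\mathbf{F}$ forces each of $\mathbf{A}$ and $\mathbf{B}$ to be a power of a single unary function, and then that these two unary functions must coincide up to a scalar. First I would dispose of the degenerate case: if $\mathbf{A} \equiv 0$ or $\mathbf{B} \equiv 0$ then $\mathbf{F} \equiv 0$ and the statement holds trivially with $c = 0$; so assume both are not identically zero, and pick $a, b$ in the domain with $\mathbf{A}(a,\ldots,a) \neq 0$ and $\mathbf{B}(b,\ldots,b) \neq 0$ (one can in fact take all arguments equal after using the argument below, but an arbitrary nonzero value suffices to start).

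The key step is the following. Since $\mathbf{F}$ is symmetric in all $r+s$ variables, in particular it is symmetric under any permutation that mixes the first block $x_1,\ldots,x_r$ with the second block $x_{r+1},\ldots,x_{r+s}$. Fix all of the last $s$ variables to the value $b$. Then $\mathbf{F}(x_1,\ldots,x_r,b,\ldots,b) = \mathbf{B}(b,\ldots,b)\,\mathbf{A}(x_1,\ldots,x_r)$, so up to the nonzero constant $\mathbf{B}(b,\ldots,b)$ this equals $\mathbf{A}(x_1,\ldots,x_r)$; since $\mathbf{F}$ restricted this way is still symmetric in $x_1,\ldots,x_r$, we conclude $\mathbf{A}$ is symmetric. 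Symmetrically $\mathbf{B}$ is symmetric. Now I swap one variable across the blocks: for any value $\xi$ in the domain,
\begin{equation*}
\mathbf{A}(\xi, b, \ldots, b)\,\mathbf{B}(b,\ldots,b) = \mathbf{F}(\xi,b,\ldots,b,b,\ldots,b) = \mathbf{A}(b,\ldots,b)\,\mathbf{B}(\xi,b,\ldots,b),
\end{equation*}
using symmetry of $\mathbf{F}$ to move $\xi$ into the second block. Hence the unary functions $\xi \mapsto \mathbf{A}(\xi,b,\ldots,b)$ and $\xi \mapsto \mathbf{B}(\xi,b,\ldots,b)$ are scalar multiples of a common unary function $\mathbf{C}$. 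Next, applying Fact~\ref{fact-cut1} to the symmetric function $\mathbf{A}$ — which is decomposable as $\mathbf{A}(x_1)' \cdot (\text{rest})$ once we know a single variable splits off, which follows by fixing $x_2,\ldots,x_r$ to $b$ in the relation $\mathbf{F} = \mathbf{A}\mathbf{B}$ and a cross-block swap argument exactly as above — gives $\mathbf{A} = c_A \prod_{i=1}^r \mathbf{A}(x_i,b,\ldots,b) = c_A' \prod_{i=1}^r \mathbf{C}(x_i)$, and likewise $\mathbf{B} = c_B' \prod_{i=r+1}^{r+s} \mathbf{C}(x_i)$. Multiplying, $\mathbf{F} = c \prod_{i=1}^{r+s} \mathbf{C}(x_i)$ with $c = c_A' c_B'$, as desired.

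I expect the main obstacle to be the bookkeeping needed to legitimately invoke Fact~\ref{fact-cut1}: that fact requires the symmetric function in question to be \emph{decomposed} as (unary)$\times$(rest), so I must first argue that each of $\mathbf{A}$ and $\mathbf{B}$ individually admits such a splitting, and the cleanest route is to get this splitting directly from the global symmetry of $\mathbf{F}$ rather than from $\mathbf{A}$ alone — a non-symmetric $\mathbf{A}$ need not split on its own, so it is genuinely the ambient symmetry of $\mathbf{F}$, transported onto $\mathbf{A}$ by fixing the $\mathbf{B}$-variables, that does the work. A secondary subtlety is ensuring the various scalars are nonzero so that "$\mathbf{F}$ and $c\mathbf{F}$ are the same function" can be applied; this is handled by the choices of $a, b$ at nonzero values and noting that if any relevant restriction vanished identically we would again be in the trivial $\mathbf{F}\equiv 0$ case. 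Everything else is a routine induction on $r+s$ mirroring the proof of Fact~\ref{fact-cut1}.
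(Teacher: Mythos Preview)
Your proposal is correct and rests on the same core idea as the paper: use the global symmetry of $\mathbf{F}$ to swap a variable across the two blocks, thereby exhibiting a unary-times-rest splitting to which Fact~\ref{fact-cut1} applies. The organization differs, and the paper's is tighter. You decompose $\mathbf{A}$ and $\mathbf{B}$ separately via Fact~\ref{fact-cut1}, and then need an extra step to argue that the two resulting unary functions agree up to scalar. The paper instead does a single cross-block swap,
\[
\mathbf{A}(a_1,\ldots,a_r)\,\mathbf{B}(y_1,\ldots,y_s)=\mathbf{A}(y_1,a_2,\ldots,a_r)\,\mathbf{B}(a_1,y_2,\ldots,y_s),
\]
to split $\mathbf{B}$, applies Fact~\ref{fact-cut1} to $\mathbf{B}$ to get $\mathbf{B}=c'\prod_j \mathbf{C}(y_j)$, and then observes that $\mathbf{F}$ itself now visibly factors as $\mathbf{C}(y_1)$ times a function of the remaining variables --- so Fact~\ref{fact-cut1} applied directly to $\mathbf{F}$ finishes in one stroke, with the \emph{same} $\mathbf{C}$ automatically appearing throughout. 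This bypasses your matching step entirely.

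One small correction to your write-up: the parenthetical ``fixing $x_2,\ldots,x_r$ to $b$'' is not what you want when exhibiting the unary-times-rest splitting of $\mathbf{A}$; you should fix the $y$-variables (to a point where $\mathbf{B}$ is nonzero) and swap one $x_i$ with one $y_j$, exactly as in the displayed swap above. With that fix your argument goes through, though you should also note that if $\mathbf{A}(b,\ldots,b)=0$ one needs a slightly different anchor point --- the paper avoids this nuisance by anchoring at an arbitrary nonzero value $(a_1,\ldots,a_r)$ of $\mathbf{A}$ rather than at the diagonal.
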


\begin{proof}
For convenience we write $(y_1, \cdots, y_s) =
(x_{r+1}, \cdots, x_{r+s})$.
If $r =1$ or $s=1$, we are done by Fact \ref{fact-cut1}. Let $r >1$ and $s >1$.
If $\mathbf{A} \equiv 0$ it is done. We can assume $\mathbf{A}(a_1,\cdots, a_r) \neq 0$.

By symmetry, we get
$\mathbf{A}(a_1,\cdots, a_r) \mathbf{B}(y_1, \cdots, y_s)
= \mathbf{A}(y_1,a_2,\cdots, a_r) \mathbf{B}(a_1, y_2 \cdots, y_s)$.
Thus $\mathbf{B}$ satisfies the assumption of Fact \ref{fact-cut1}.
So $\mathbf{B}(y_1, \cdots, y_s)$ has the form $c' \prod_{j=1}^{s} \mathbf{C}(y_j)$.
Then
$\mathbf{F}$ has the form $\mathbf{C}(y_1) (c' \mathbf{A}(x_1 \cdots, x_r ) \prod_{j=2}^{s} \mathbf{C}(y_j))$.
By Fact \ref{fact-cut1} again, we get the conclusion.

\end{proof}

By Fact \ref{fact-cut2}, if a symmetric function is decomposable, then it
is a tensor power of a unary function.
It is in $\langle \mathcal{U} \rangle - \mathcal{U}$, and degenerate.

We have seen symmetry can help to decompose a decomposable function into smaller parts. Next fact shows that some ``partial symmetry" property also helps.

\begin{fact} \label{fact-half-sym-cut1}
Suppose $\mathbf{F}$ satisfies $\mathbf{F}(x_1,x_2,y_1,y_2)=\mathbf{F}(x_2,x_1,y_1,y_2)=\mathbf{F}(x_1,x_2,y_2,y_1)$.
If $\mathbf{F}= \mathbf{A}(x_1) \mathbf{B}(x_2,y_1,y_2)$, then there are binary functions $C,D$,
such that $\mathbf{F}=\mathbf{C}(x_1,x_2)\mathbf{D}(y_1,y_2)$.
\end{fact}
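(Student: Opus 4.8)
The plan is to exploit the partial symmetry of $\mathbf{F}$ in the two variable groups $\{x_1,x_2\}$ and $\{y_1,y_2\}$ together with the hypothesis $\mathbf{F}=\mathbf{A}(x_1)\mathbf{B}(x_2,y_1,y_2)$ to show that the $\{y_1,y_2\}$-dependence factors out cleanly. First I would dispose of the trivial case $\mathbf{A}\equiv 0$, where $\mathbf{F}\equiv 0$ and any product representation works. So assume there is some $a$ with $\mathbf{A}(a)\neq 0$.

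Next I would use the symmetry in the first pair of variables, $\mathbf{F}(x_1,x_2,y_1,y_2)=\mathbf{F}(x_2,x_1,y_1,y_2)$. Writing both sides via the decomposition gives $\mathbf{A}(x_1)\mathbf{B}(x_2,y_1,y_2)=\mathbf{A}(x_2)\mathbf{B}(x_1,y_1,y_2)$. Specializing $x_2=a$ yields $\mathbf{A}(x_1)\mathbf{B}(a,y_1,y_2)=\mathbf{A}(a)\mathbf{B}(x_1,y_1,y_2)$, i.e.
\[
\mathbf{B}(x_1,y_1,y_2)=\mathbf{A}(a)^{-1}\,\mathbf{A}(x_1)\,\mathbf{B}(a,y_1,y_2).
\]
Thus $\mathbf{B}$ itself is decomposable as $\mathbf{A}(x_1)$ times a binary function $\mathbf{E}(y_1,y_2):=\mathbf{A}(a)^{-1}\mathbf{B}(a,y_1,y_2)$ of the remaining two variables. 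Substituting back, $\mathbf{F}(x_1,x_2,y_1,y_2)=\mathbf{A}(x_1)\mathbf{A}(x_2)\,\mathbf{E}(y_1,y_2)$. Now I can set $\mathbf{C}(x_1,x_2):=\mathbf{A}(x_1)\mathbf{A}(x_2)$ and $\mathbf{D}(y_1,y_2):=\mathbf{E}(y_1,y_2)$, and we are done: $\mathbf{C}$ is automatically symmetric in $x_1,x_2$, and no use of the $y$-symmetry was even needed for this particular conclusion.

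The only subtlety — and the step I would be most careful about — is making sure the factorization $\mathbf{B}(x_2,y_1,y_2)=\mathbf{A}(a)^{-1}\mathbf{A}(x_2)\mathbf{B}(a,y_1,y_2)$ is genuinely an identity of functions, i.e. that it holds for every value of $x_2$, not just on a subset; this follows because the symmetry relation $\mathbf{A}(x_1)\mathbf{B}(x_2,\cdot,\cdot)=\mathbf{A}(x_2)\mathbf{B}(x_1,\cdot,\cdot)$ holds identically in $x_1,x_2$, and we are free to fix $x_1=a$. One should also note that this mirrors exactly the argument of Fact~\ref{fact-cut1} (the $r=2$ case), applied here with the pair $(y_1,y_2)$ carried along as a passive parameter; in effect the partial symmetry in $x_1,x_2$ is all that is used, and the partial symmetry in $y_1,y_2$ merely guarantees $\mathbf{D}=\mathbf{E}$ is symmetric, which is consistent with (though not needed for) the stated conclusion. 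Hence $\mathbf{F}=\mathbf{C}(x_1,x_2)\mathbf{D}(y_1,y_2)$ as required.
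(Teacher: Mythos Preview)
Your proof is correct and takes essentially the same approach the paper intends: the paper simply says ``the proof is similar to Facts~\ref{fact-cut1} and~\ref{fact-cut2},'' and what you have written is precisely the $r=2$ case of Fact~\ref{fact-cut1} with $(y_1,y_2)$ carried along as a passive parameter. Your observation that only the $x_1\leftrightarrow x_2$ symmetry is needed (the $y$-symmetry being irrelevant to this particular conclusion) is accurate and a nice clarification beyond the paper's terse pointer.
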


The proof is similar to Facts \ref{fact-cut1} and \ref{fact-cut2}.

\begin{fact} \label{fact-half-sym-cut2}
Suppose $\mathbf{F}$ satisfies $\mathbf{F}(x_1,x_2,y_1,y_2)=\mathbf{F}(x_2,x_1,y_1,y_2)=\mathbf{F}(x_1,x_2,y_2,y_1)$.
If $\mathbf{F}$ is decomposed into two binary functions, then there are binary functions $\mathbf{A},\mathbf{B}$, either $\mathbf{F}=\mathbf{A}(x_1,x_2)\mathbf{B}(y_1,y_2)$  or $\mathbf{F}=\mathbf{A}(x_1,y_1)\mathbf{B}(x_2,y_2)$.
\end{fact}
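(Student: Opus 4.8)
The plan is to exploit the partial symmetry hypothesis together with the given decomposition $\mathbf{F}=\mathbf{P}(u_1,u_2)\mathbf{Q}(u_3,u_4)$, where $\{u_1,u_2,u_3,u_4\}$ is some partition of the four variables $\{x_1,x_2,y_1,y_2\}$ into two pairs. There are three ways to split four variables into two unordered pairs: $\{x_1,x_2\}\,|\,\{y_1,y_2\}$, $\{x_1,y_1\}\,|\,\{x_2,y_2\}$, and $\{x_1,y_2\}\,|\,\{x_2,y_1\}$. In the first case we are immediately done with $\mathbf{A}=\mathbf{P}$, $\mathbf{B}=\mathbf{Q}$. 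The swap symmetry $y_1\leftrightarrow y_2$ sends the second splitting to the third, so by relabeling it suffices to treat the case $\mathbf{F}=\mathbf{P}(x_1,y_1)\mathbf{Q}(x_2,y_2)$ and show that, after using the symmetries, this can be massaged into one of the two advertised forms. So the only real content is: starting from $\mathbf{F}=\mathbf{P}(x_1,y_1)\mathbf{Q}(x_2,y_2)$, either $\mathbf{F}$ already factors as a product across the $\{x\}$--$\{y\}$ cut, or it has the claimed cross form $\mathbf{A}(x_1,y_1)\mathbf{B}(x_2,y_2)$ — which it visibly does, being literally $\mathbf{P}\cdot\mathbf{Q}$. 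Thus the assertion reduces to checking that the splitting of the variables into two pairs that witnesses decomposability is, up to the $x_1\leftrightarrow x_2$ and $y_1\leftrightarrow y_2$ symmetries, one of the two listed ones; and that is immediate from the enumeration above.

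Let me be a bit more careful, since the subtlety is that decomposability is a statement about \emph{some} partition into two subsets, each of arity at least $1$, not necessarily $2+2$. First I would record that since $\mathbf{F}$ has arity $4$ and is decomposed into ``two binary functions'' (as the hypothesis states), the underlying partition is $2+2$; the three unordered $2+2$ partitions are exactly those listed. If the witnessing partition is $\{x_1,x_2\}\,|\,\{y_1,y_2\}$ we take the first conclusion. If it is $\{x_1,y_1\}\,|\,\{x_2,y_2\}$ we take the second. If it is $\{x_1,y_2\}\,|\,\{x_2,y_1\}$, apply the hypothesized symmetry $\mathbf{F}(x_1,x_2,y_1,y_2)=\mathbf{F}(x_1,x_2,y_2,y_1)$: relabeling $y_1$ and $y_2$ turns this partition into $\{x_1,y_1\}\,|\,\{x_2,y_2\}$, i.e. $\mathbf{F}(x_1,x_2,y_1,y_2)=\mathbf{A}(x_1,y_2)\mathbf{B}(x_2,y_1)$ can be rewritten, via the $y$-swap invariance, as $\mathbf{A}(x_1,y_1)\mathbf{B}(x_2,y_2)$ after renaming. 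This yields the second conclusion. (The $x_1\leftrightarrow x_2$ symmetry is not even needed for the bookkeeping, though it guarantees the two factors play symmetric roles.)

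The one genuine point to verify carefully — and the step I expect to be the main obstacle — is that the ``two binary functions'' in the hypothesis really do correspond to a $2+2$ split and that a symmetric relabeling of a tensor-product decomposition is again a tensor-product decomposition of $\mathbf{F}$ (not of some permuted function). This is where one uses that $\mathbf{F}$ is invariant under the relabeling being applied: if $\mathbf{F}(x_1,x_2,y_1,y_2)=\mathbf{F}(x_1,x_2,y_2,y_1)=G(x_1,x_2,y_1,y_2)$ and $G=\mathbf{A}(x_1,y_2)\mathbf{B}(x_2,y_1)$, then substituting $y_1\leftrightarrow y_2$ into the identity $\mathbf{F}=G$ gives $\mathbf{F}(x_1,x_2,y_1,y_2)=\mathbf{A}(x_1,y_1)\mathbf{B}(x_2,y_2)$ directly. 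This is the same maneuver used in the proofs of Facts~\ref{fact-cut1}, \ref{fact-cut2}, and \ref{fact-half-sym-cut1}: partial symmetry lets one push a relabeling through a factorization. With that observation in place the proof is a short case analysis over the three $2+2$ partitions and requires no computation.
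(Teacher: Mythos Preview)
Your proof is correct and matches the paper's approach exactly: the paper's proof consists of the single sentence ``If $\mathbf{F}$ is decomposed into two binary functions of other forms, just utilize the `partial symmetry' property to rotate it into one of the two forms,'' and your enumeration of the three $2+2$ partitions together with the $y_1\leftrightarrow y_2$ swap is precisely the spelled-out version of that sentence. Your careful verification that the symmetry lets you substitute $y_1\leftrightarrow y_2$ inside the factorization is the only real content, and it is correct.
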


The proof is straightforward. If $\mathbf{F}$ is decomposed into two binary functions of other forms, just utilize the  ``partial symmetry" property to rotate it into one of the two forms.

In our hardness proofs, we will
 need to use some gadget with this  ``partial symmetry" property
to realize a function $\mathbf{F}$ of arity 4 that can not be decomposed into two binary functions ($\mathbf{F} \not \in \langle \mathcal{T} \rangle$). By Fact \ref{fact-half-sym-cut2}, we only need to show
that  it cannot be decomposed into these two forms. We will
call this  \methps.

\subsection{Known Dichotomy Theorems}

We say a function set $\mathcal{F}$ is closed under tensor product,
if for any  $\mathbf{A}, \mathbf{B} \in \mathcal{F}$, $\mathbf{A }\otimes \mathbf{B} \in
\mathcal{F}$. Tensor closure $\langle \mathcal{F} \rangle$ of a set
$\mathcal{F}$ is the minimum set containing $\mathcal{F}$, closed
under tensor product.
This closure clearly exists, being  the set of all functions
obtained by performing a finite sequence of tensor products from $\mathcal{F}$.

We use $\mathcal{U}$ to denote the set of all unary functions.
$\mathcal{E}$ is the set  of all functions $F$ such that $F$
is zero except on two inputs $(a_1,\ldots,a_n)$ and
$(\bar{a}_1,\ldots, \bar{a}_n)=(1-a_1,\ldots,1-a_n)$.
In other words, $F \in \mathcal{E}$ iff its support is
contained in a pair of complementary points.
We think of $\mathcal{E}$ as a generalized form of {\sc Equality} function.
Equivalently, these are obtained by connecting some subset of
variables of {\sc Equality} with binary {\sc Disequality} $\not =_2$.
We use $\mathcal{M}$ to denote
 the set of all functions $F$ such that $F$ is zero
except on $n+1$ inputs whose Hamming weight is at most $1$, where $n$ is the
arity of $F$. The name $\mathcal{M}$ is given for {\it matching}.
Finally
$\mathcal{T}$ is the set of all functions of arity at most $2$.
Note that $\mathcal{U}$ is a subset of $\mathcal{E}$, $\mathcal{M}$ and
$\mathcal{T}$.


Suppose $\mathcal{F}$ is a function set and $M$ is a $2 \times 2$
matrix. We use $M \mathcal{F}$  to denote the set
$\{M^{\otimes r_F}F | F \in
\mathcal{F}, r_F={\rm arity}(F)\}$, the set consisting  of all
functions in $\mathcal{F}$ transformed by a matrix $M$.
$Z_1=\left (
\begin{array}{cc} 1 & 1 \\ i & -i \end{array} \right )$
and $Z_2=\left (
\begin{array}{cc} 1 & 1 \\ -i & i \end{array} \right )$.
Note that $Z_1 \mathcal{E} = Z_2 \mathcal{E}$.

\begin{theorem} \label{thm:dich-asym-Boolean} \cite{SODA11}
Let  $\mathcal{F}$ be  any set
 of complex valued  functions in Boolean variables.
The problem Holant$^*(\mathcal{F})$ is
polynomial time computable, if
\begin{enumerate}
\item
$\mathcal{F}  \subseteq \langle \mathcal{T} \rangle$, or
\item
for some orthogonal matrix $H$,  $\mathcal{F} \subseteq \langle H \mathcal{E} \rangle$, or
\item
$\mathcal{F} \subseteq \langle Z_1 \mathcal{E} \rangle$, or
\item
for some $Z \in \{Z_1, Z_2\}$,  $\mathcal{F} \subseteq \langle Z \mathcal{M} \rangle$.
\end{enumerate}
In all other cases,
Holant$^*(\mathcal{F})$ is \#P-hard.
\end{theorem}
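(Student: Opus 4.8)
The plan is to prove the two directions separately. Tractability is comparatively routine; the hardness direction carries essentially all the difficulty, and must invoke the explicit Boolean dichotomies for \#CSP and for symmetric ${\rm Holant}^*$.

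For \textbf{tractability} I would dispatch the four cases in turn. If $\mathcal{F}\subseteq\langle\mathcal{T}\rangle$, then in any signature grid every vertex splits into degenerate unary pieces and genuine binary pieces, so the underlying graph becomes a disjoint union of paths and cycles, and the Holant value is the product over these components of bilinear forms and matrix traces, computable in polynomial time. If $\mathcal{F}\subseteq\langle H\mathcal{E}\rangle$ for an orthogonal $H$, then by Theorem~\ref{thm:orthogoal} the holographic transformation by $H$ is free in the standard setting and reduces the problem to ${\rm Holant}^*(\langle\mathcal{E}\rangle)$; an $\mathcal{E}$-signature forces its arguments onto a coset of an $\mathbb{F}_2$-subspace with a multiplicative two-valued weight, so a signature grid over $\mathcal{E}$ is an affine-weighted count, solved by Gaussian elimination. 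For $\mathcal{F}\subseteq\langle Z_1\mathcal{E}\rangle$ I would pass to the bipartite form ${\rm Holant}((=_2)\mid\mathcal{F})$ and push $Z_1$ across: $(=_2)Z_1^{\otimes 2}$ is a scalar multiple of $(\neq_2)$, and attaching $(\neq_2)$ to an $\mathcal{E}$-signature keeps it in $\mathcal{E}$, so this again reduces to an affine count. Finally, if $\mathcal{F}\subseteq\langle Z\mathcal{M}\rangle$, the same push yields ${\rm Holant}((\neq_2)\mid\langle\mathcal{M}\rangle)$; the $(\neq_2)$ on every edge forces exactly one of that edge's two half-edges to be $1$, i.e.\ forces an orientation of the graph, and an $\mathcal{M}$-signature demands in-degree at most one at every vertex, so the value is a weighted sum over orientations with all in-degrees $\le1$ --- nonzero only when every component is a tree or unicyclic --- evaluated component by component (summing over the choice of source vertex, resp.\ over the two cyclic orientations).

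For \textbf{hardness} I would first normalize. One may assume $\mathcal{F}\not\subseteq\langle\mathcal{T}\rangle$ (else Case~1), and since ${\rm Holant}^*$ supplies $(=_2)$ (via the bipartite reformulation) together with all unary signatures, closing $\mathcal{F}$ under gadget construction and under connecting unaries only helps, so it suffices to derive \#P-hardness from some convenient realizable signature. The central device is interpolation: if a gadget built from $\mathcal{F}$ realizes a rank-$2$ binary signature $\mathbf{M}$ whose two eigenvalues have a ratio that is not a root of unity, then chaining copies of $\mathbf{M}$ and solving a Vandermonde system lets one simulate the family $\{\mathbf{M}^t\}_{t\ge0}$ and thereby reduce ${\rm Holant}^*(\mathcal{F})$, up to a fixed basis change, to a problem in which the equality signatures $(=_k)$ are available --- that is, to a complex-weighted Boolean \#CSP, whose dichotomy~\cite{STOC09} is explicit --- or else to a ${\rm Holant}^*$ problem containing a genuine symmetric ternary signature, to which the explicit symmetric Boolean ${\rm Holant}^*$ dichotomy~\cite{holant} applies. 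In either sub-case the conclusion is \#P-hardness unless the signatures involved are forced into an $\mathcal{E}$-type (affine) or an $\mathcal{M}$-type shape.

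The \textbf{main obstacle}, and the bulk of the case analysis, is the situation where \emph{no} such interpolation is available: every binary signature realizable from $\mathcal{F}$ has rank $\le1$, or has a repeated eigenvalue, or has an eigenvalue ratio that is a root of unity (as for $(\neq_2)$, whose eigenvalues are $\pm1$). The goal is to show that this rigidity, pushed through every ternary and higher gadget one can build, forces all of $\mathcal{F}$ to respect one global structure --- either it preserves a fixed real orthogonal frame, placing it in $\langle H\mathcal{E}\rangle$; or it is carried into $\mathcal{E}$-form by $Z_1$, placing it in $\langle Z_1\mathcal{E}\rangle$; or it is of $\mathcal{M}$-type after a $Z$-transformation, placing it in $\langle Z\mathcal{M}\rangle$ --- and that if none of these holds, one of the ostensibly blocked interpolations is in fact available, a contradiction. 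Carrying this out requires a careful enumeration of the normal forms of an arity-$3$ signature under the available unary connections and holographic transformations, together with the consistency conditions forced when two such signatures must coexist in $\mathcal{F}$; I expect this interlocking case analysis, rather than any single lemma, to be by far the most demanding part of the proof.
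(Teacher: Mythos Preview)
This theorem is not proved in the present paper. It is stated in Section~2.6 as a known dichotomy, with the citation \cite{SODA11}, and is then \emph{used} as a tool throughout the hardness arguments (for instance in Sections~\ref{bottom1000} and~\ref{Degenerate-Rank1-Isotropic}). There is therefore no proof here to compare your proposal against.

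That said, a brief remark on your sketch: the tractability arguments are essentially in the right spirit, though your treatment of $\langle Z\mathcal{M}\rangle$ as ``orientations with in-degree $\le 1$'' is slightly off --- after the $Z$-transformation one is in the bipartite setting ${\rm Holant}((\neq_2)\mid\mathcal{M})$, and the $\mathcal{M}$-signatures (at-most-one) sit on one side only, so the combinatorial picture is matchings rather than orientations. For the hardness direction, your high-level plan (produce a usable binary via interpolation, then bootstrap to a symmetric ternary and invoke the symmetric dichotomy, and otherwise argue that the obstruction forces one of the four tractable forms) is the right shape, but the substance lies entirely in the ``main obstacle'' paragraph you flag at the end. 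In the actual proof in \cite{SODA11} that case analysis is long and delicate, turning on a structural characterization of when an unsymmetric arity-$3$ or arity-$4$ signature fails to generate anything outside $\langle\mathcal{T}\rangle$; your proposal correctly identifies where the work is but does not yet supply any of it.
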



This theorem is a generalization to not necessarily symmetric
function sets from the following
theorem which only applies to symmetric function sets.
It is also very conceptual;
however the following theorem is very easy to apply.

\begin{theorem}\label{thm:dich-sym-Boolean} \cite{STOC09}
 Let ${\cal F}$ be any set of non-degenerate, symmetric, complex-valued
signatures in Boolean variables.
If ${\cal F}$ is of one of the following types, then
${\rm Holant}^*({\cal F})$ is in P, otherwise it is \#P-hard.
 \begin{enumerate}  \item Any signature in ${\cal F}$ is of arity at most 2;
   \item There exist two constants $a$ and $b$
($b\neq \pm 2 i a$, depending only on ${\cal F}$), such that
  for all signatures $[f_0,f_1,\ldots,f_n]$
in ${\cal F}$ one of the two conditions is
  satisfied:  (1) for every $k=0,1,\ldots,n-2$, we have $a f_k + b f_{k+1} -a
  f_{k+2}=0$; (2) $n=2$ and the signature $[f_0, f_1, f_2]$ is of the form
  $[2a \lambda, b \lambda, -2a \lambda] $.
  \item For every signature $[f_0,f_1,\ldots,f_n]\in {\cal F}$ one of the two conditions is
  satisfied: (1) For every $k=0,1,\ldots,n-2$, we have $ f_k +
  f_{k+2}=0$; (2) $n=2$ and the signature $[f_0, f_1, f_2]$ is of the form
  $[ \lambda, 0,  \lambda] $.
  \item There exists $\alpha \in \{2 i, -2 i\}$,
such that for any signature $f \in \mathcal{F}$
 of arity $n$, for $0 \le k \le n-2$, we have $f_{k+2} = \alpha f_
{k+1} + f_k$. \label{case:holant_star:exceptional}
 \end{enumerate}\end{theorem}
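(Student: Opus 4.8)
The plan is to prove the tractability and hardness directions separately, in each case using the free unary signatures available in the Holant$^*$ setting to reduce an arbitrary signature to a small list of canonical forms, and to exploit holographic transformations (which in the Holant$^*$ world can be taken from a larger group than just the orthogonal one, since all unaries — and hence all rank-one binary signatures — are present).

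\textbf{Tractability.} For each of the four families I would exhibit a polynomial-time algorithm after normalizing the signatures by an appropriate holographic transformation. Type 1 (arity at most $2$) is handled by direct linear algebra: a Holant$^*$ instance built only from unary and binary signatures decomposes along the connected components of the grid into products and traces of $2\times 2$ matrices, all computable in polynomial time. Types 2 and 4 are the two Fibonacci families; after an orthogonal (respectively, a $Z_1$ or $Z_2$) transformation each such signature satisfies a homogeneous second-order recurrence $f_{k+2} = c f_{k+1} + f_k$, i.e.\ becomes a Fibonacci gate, and Holant$^*$ over Fibonacci gates is in P by the compression/recurrence algorithm for such gates. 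Type 3 is the $\langle Z_1\mathcal{E}\rangle$ family; after the $Z_1$ transformation every signature is supported on complementary pairs of Boolean points, so the Holant sum reduces to counting solutions of an affine system and is evaluated by Gaussian elimination. The remaining work is bookkeeping: one checks that the recurrence/parameter conditions in the statement (including the exceptional arity-$2$ shapes $[2a\lambda, b\lambda, -2a\lambda]$ and $[\lambda,0,\lambda]$) are exactly the preimages of these canonical classes under the corresponding transformations, so that the stated conditions precisely capture the algorithmically tractable signatures.

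\textbf{Hardness.} Suppose $\mathcal{F}$ is non-degenerate and symmetric but satisfies none of the four conditions; the goal is \#P-hardness of Holant$^*(\mathcal{F})$. Since arbitrary unary signatures are free, the strategy is to build gadgets (connecting unaries, forming self-loops) that extract from $\mathcal{F}$ a non-degenerate ternary signature — or a small witnessing pair of low-arity signatures — that still escapes every tractable family; tracking the effect of these operations uses the calculus of Section~\ref{section:calculus}, and it is a finite case analysis on the possible ways the recurrence conditions can fail along a signature $[f_0,\dots,f_n]$. The core of the proof is then a dichotomy for a single non-degenerate symmetric ternary signature $g=[g_0,g_1,g_2,g_3]$: up to a holographic transformation $g$ is either Fibonacci, or in $\mathcal{E}$-form, or in $\mathcal{M}$-form, or Holant$^*(g)$ is \#P-hard. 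I would prove the hard direction by first normalizing $g$ via a holographic transformation into a convenient shape, then constructing a binary gadget realizing a family of binary signatures with a controllable pair of eigenvalues, then applying polynomial interpolation along gadget chains to realize essentially any binary signature in that family; this makes it possible to interpolate the binary signature needed to reduce from a problem already known to be \#P-hard (for example counting Eulerian orientations, a hard Boolean \#CSP, or — via the graph-homomorphism dichotomy — counting weighted homomorphisms). Finally one lifts the hardness of the ternary (or binary-pair) instance back to Holant$^*(\mathcal{F})$.

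\textbf{Main obstacle.} The delicate part is entirely on the hardness side: one must verify that the \emph{simultaneous} failure of all four tractability conditions is robust under the gadget reductions, i.e.\ that the extracted ternary witness is genuinely outside every tractable class rather than accidentally landing in one after a transformation. The hardest subcases are the near-degenerate ones, where a recurrence parameter sits at an excluded value ($b = \pm 2ia$, $\alpha = \pm 2i$) so that the signature almost — but not quite — satisfies a tractable recurrence; these require their own tailored gadgets and interpolation arguments, since the generic construction degenerates precisely there.
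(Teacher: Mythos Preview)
This theorem is not proved in the paper; it is quoted verbatim from \cite{STOC09} as a known result (it appears in Section~2.6, ``Known Dichotomy Theorems'') and is used throughout as a black box. So there is no proof in the paper to compare your proposal against.

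That said, your outline is broadly the shape of the argument in \cite{STOC09}: tractability via holographic normalization to Fibonacci/generalized equality forms, and hardness by reducing to a single non-degenerate ternary, then constructing binary gadgets and interpolating. Your identification of the boundary cases $b=\pm 2ia$ as the delicate spots is also correct. If you were to flesh this out, the main thing missing from your sketch is the precise mechanism for ``extracting a non-degenerate ternary that still escapes every tractable family'': in the actual proof this is not a single generic step but a structured case split on the arity and the roots of the characteristic polynomial of the recurrence, and the near-degenerate cases (double root at $\pm i$) require specific ad hoc gadgets rather than the generic interpolation you describe.
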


In Holant$^*$ problems, unary functions are  freely available.
There is no difference between Holant$^*(\mathcal{F}- \langle \mathcal{U} \rangle)$ and Holant$^*(\mathcal{F} \cup \langle \mathcal{U} \rangle)$. Theorem \ref{thm:dich-sym-Boolean}
is stated for $\mathcal{F}- \langle \mathcal{U} \rangle$.

We give the correspondence between Theorem \ref{thm:dich-asym-Boolean} and  \ref{thm:dich-sym-Boolean}.
Consider the symmetric subset of the
first tractable class $\langle \mathcal{T} \rangle$ in Theorem \ref{thm:dich-asym-Boolean}.  If a symmetric function in $\langle \mathcal{T} \rangle$ has arity larger than 2, it is decomposable and degenerate.

The function sets in Theorem \ref{thm:dich-sym-Boolean}
in forms 2 to 4 can be described by,
\begin{equation}\label{Fibonacci-P-ab-and-P}
\mathcal{P}_{a,b}=\{[f_0,f_1,\cdots, f_n ]| n \in \mathbb{N}, a x_k + b x_{k+1} -a
  x_{k+2}=0\} \cup
\{\lambda [2a,b,-2a] \mid \lambda \in \mathbb{C}\},
\end{equation}
\begin{equation*}
\mathcal{P}=\{[f_0,f_1,\cdots, f_n]| n \in \mathbb{N}, f_k+f_{k+2}=0\}\cup \{\lambda [1,0,1]
\mid \lambda  \in \mathbb{C}\}.
\end{equation*}
Form 2 and 4 are described by $\mathcal{P}_{a,b}$ with
$(a, b)$ not both zero, with Form 4 corresponding to
$\mathcal{P}_{1, \pm 2i}$. Form 3 is described by $\mathcal{P}$.
Note that for $\alpha = \pm 2 i$, a binary $f$ with
$f_2 = \alpha f_1 + f_0$ is degenerate.
In $\mathcal{P}_{a,b}$, we always require $(a,b) \neq (0,0)$,
and $(a, b)$ is equivalent to any non-zero multiple of it.
When we say all $\mathcal{P}_{a,b}$,
we let $(a,b)$ range over all $\mathbb{C}^2-\{(0,0)\}$ (equivalently
 the projective
line $\mathbb{P}_{\mathbb{C}}^1$).

By Fact \ref{fact-cut2} a non-degenerate symmetric function  must not be decomposable.
It is in a set of tractable case $j$ in Theorem \ref{thm:dich-asym-Boolean}, iff it is in the corresponding set of tractable case $j$ in Theorem \ref{thm:dich-sym-Boolean}.
For example, suppose $H= \left (  \begin{array}{cc} u & v \\ s & t \end{array} \right )$ is an orthogonal matrix. $ H \mathcal{E}$  corresponds to the set $\mathcal{P}_{a,b}$, where the corresponding relation is that 3 vectors $(u^2, us, s^2), (v^2, vt, t^2), (a, b, -a)$ form an orthogonal independent vector set.
One $\mathcal{P}_{a,b}$ corresponds to two $(H\mathcal{E} )^{S}$, given by $H$ and $H \tau$, where $\tau = \left (  \begin{array}{cc} 0 & 1 \\ 1 & 0 \end{array} \right )$ exchanges the two columns of $H$.

\subsection{Polynomial Argument }


\begin{fact}\label{fact-nonzeropoly}
The product of two non-zero polynomials is a non-zero polynomial.
\end{fact}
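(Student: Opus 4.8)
The statement to prove is Fact~\ref{fact-nonzeropoly}: the product of two non-zero polynomials is a non-zero polynomial. This is the classical fact that a polynomial ring over a field (here $\mathbb{C}$, or an algebraically closed field containing the algebraic numbers used for signatures) is an integral domain.

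The plan is to reduce to the univariate case and then argue by leading coefficients. First I would handle $\mathbb{C}[x]$: given non-zero $p(x) = \sum_{i=0}^m a_i x^i$ with $a_m \neq 0$ and $q(x) = \sum_{j=0}^n b_j x^j$ with $b_n \neq 0$, the coefficient of $x^{m+n}$ in $p(x)q(x)$ is exactly $a_m b_n$, since every other way of writing $m+n$ as a sum $i + j$ with $i \le m$, $j \le n$ forces $i = m$ and $j = n$. Because $\mathbb{C}$ is a field (hence has no zero divisors), $a_m b_n \neq 0$, so $p(x)q(x) \neq 0$. For the multivariate case $\mathbb{C}[x_1, \ldots, x_k]$, I would proceed by induction on $k$, using the identification $\mathbb{C}[x_1, \ldots, x_k] \cong (\mathbb{C}[x_1, \ldots, x_{k-1}])[x_k]$: a non-zero polynomial in $k$ variables, viewed as a polynomial in $x_k$ with coefficients in $R = \mathbb{C}[x_1, \ldots, x_{k-1}]$, has a non-zero leading coefficient in $R$, and by the induction hypothesis $R$ has no zero divisors, so the leading-coefficient argument goes through verbatim with $\mathbb{C}$ replaced by $R$.

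The only genuine content is the base field having no zero divisors, which for $\mathbb{C}$ is immediate; everything else is the standard bookkeeping with leading terms. I do not expect any real obstacle here — this is a textbook fact included only because it is invoked repeatedly in the polynomial-argument parts of the hardness proofs (cf.\ \methpoly). In the write-up I would keep it to a couple of sentences: state the leading-coefficient observation, invoke that $\mathbb{C}$ is an integral domain, and note the inductive extension to several variables.
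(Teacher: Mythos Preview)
Your proposal is correct and in fact more detailed than what the paper does: the paper simply remarks that a polynomial ring in any number of indeterminates over a field is an integral domain and hence has no zero divisors, without spelling out the leading-coefficient argument or the induction on the number of variables. Your write-up would be a strict expansion of the paper's one-sentence treatment.
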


It is a simple fact that a polynomial ring (in any number of
indeterminants and over any field)
 is an integral domain, and thus has
no zero divisor. The way we will use this fact is as follows.
When we design some gadget, usually there are some unary functions
 $(\alpha, \beta, \gamma)$ in this gadget, which work as parameters
in order  for the signature realized by the gadget
to satisfy some conditions (for example, it should have full rank).
Usually a condition can be described by a polynomial
$P(\alpha, \beta, \gamma)$ in these parameters,
such that when  $P(\alpha_0, \beta_0, \gamma_0) \neq 0$,
the signature realized by the gadget using the unary
function $(\alpha_0, \beta_0, \gamma_0)$ satisfies this condition.

By Fact \ref{fact-nonzeropoly}, when there are several such conditions to satisfy, we only need to show each polynomial $P_i$ is not zero, usually by finding
 some point $(\alpha_i, \beta_i, \gamma_i)$ for each $P_i$.
This guarantees the existence of some common parameter value
$(\alpha^*, \beta^*, \gamma^*)$ such that
$\prod_i P_i(\alpha^*, \beta^*, \gamma^*) \not =0$.
The value $(\alpha^*, \beta^*, \gamma^*)$ is
 implicit and not important; it has no direct connection to
the choice of each  $(\alpha_i, \beta_i, \gamma_i)$. This method is already used in \cite{SODA11}. In proof, we quote it as \methpoly.

\section{Statement of the Dichotomy Theorem}

\begin{theorem}\label{thm:ternary}
Let $\mathbf{F}$ be a symmetric ternary function over domain $\{B,G,R\}$.
Then Holant$^*(\mathbf{F})$ is \#P-hard unless $\mathbf{F}$ is of one of the following three forms,
in which case the problem is in polynomial time.
\begin{enumerate}
  \item There exist three vectors
 $\mbox{\boldmath $\alpha$, $\beta$, and $\gamma$}$
of dimension 3 such that
  they are mutually orthogonal to each other,
i.e. $\langle \mbox{\boldmath $\alpha$, $\beta$} \rangle=0$,
     $\langle \mbox{\boldmath $\alpha$, $\gamma$} \rangle=0$
 and $\langle \mbox{\boldmath $\beta$,  $\gamma$} \rangle=0$,  and
  \[ \mathbf{F}=\mbox{\boldmath $\alpha$}^{\otimes 3} +
                \mbox{\boldmath $\beta$}^{\otimes 3} +
                \mbox{\boldmath $\gamma$}^{\otimes 3};\]
  \item There exist three vectors
$\mbox{\boldmath $\alpha$, $\beta_1$, and $\beta_2$}$
of dimension 3 such that
  $\langle \mbox{\boldmath $\alpha$, $\beta_1$} \rangle=0$,
  $\langle \mbox{\boldmath $\alpha$, $\beta_2$} \rangle=0$,
  $\langle \mbox{\boldmath $\beta_1$, $\beta_1$} \rangle=0$,
  $\langle \mbox{\boldmath $\beta_2$, $\beta_2$} \rangle=0$
and
  \[ \mathbf{F}=\mbox{\boldmath $\alpha$}^{\otimes 3} +
                 \mbox{\boldmath $\beta_1$}^{\otimes 3} +
                \mbox{\boldmath $\beta_2$}^{\otimes 3};\]
    \item There exist two vectors $\mbox{\boldmath $\beta$ and $\gamma$}$
of dimension 3 and a function
${\mathbf{F}}_{\beta}$
 of arity three,
such that $\mbox{\boldmath $\beta$} \neq \mathbf{0}$,
$\langle \mbox{\boldmath $\beta$, $\beta$} \rangle=0$,
$\langle \mathbf{F}_{\beta}, \mbox{\boldmath $\beta$}
 \rangle = \mathbf{0}$ and
  \[ \mathbf{F}= \mathbf{F}_{\bf \beta}
+ \mbox{\boldmath $\beta$}^{\otimes 2} \otimes \mbox{\boldmath $\gamma$}
+ \mbox{\boldmath $\beta$} \otimes \mbox{\boldmath $\gamma$} \otimes
\mbox{\boldmath $\beta$}
+ \mbox{\boldmath $\gamma$}  \otimes  \mbox{\boldmath $\beta$}^{\otimes 2}.\]
\end{enumerate}
\end{theorem}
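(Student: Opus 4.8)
The plan is to prove the two directions separately: \emph{(tractability)} every $\mathbf{F}$ of one of the three stated forms gives a polynomial-time ${\rm Holant}^*(\mathbf{F})$, and \emph{(hardness)} every other $\mathbf{F}$ makes ${\rm Holant}^*(\mathbf{F})$ \#P-hard. For tractability we first dispose of the degenerate case $\mathbf{F}=\mathbf{u}^{\otimes 3}$: it lies in Form~1 (take $\mbox{\boldmath $\beta$}=\mbox{\boldmath $\gamma$}=\mathbf{0}$, so all orthogonality constraints hold), and ${\rm Holant}^*$ is trivially computable by splitting vertices. When $\mbox{\boldmath $\alpha$},\mbox{\boldmath $\beta$},\mbox{\boldmath $\gamma$}$ are nonzero and none is isotropic, they are automatically linearly independent, and after rescaling to unit norm they form an orthonormal basis; the corresponding orthogonal holographic transformation (Theorem~\ref{thm:orthogoal}) carries $\mathbf{F}$ to a weighted arity-$3$ \textsc{Equality} on $\{B,G,R\}$, whose ${\rm Holant}^*$ is computed by contracting each connected component (forced monochromatic) and absorbing the free unaries. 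A short argument shows every remaining subcase of Form~1 (where some vector is isotropic, self-annihilating under dot product) is either degenerate or already falls into Form~2. Forms~2 and~3 are the domain-size-$3$ analogues of the ``isotropic'' tractable families $Z_1\mathcal{E}$ and $Z\mathcal{M}$ over the Boolean domain (Theorem~\ref{thm:dich-asym-Boolean}); for them we give direct algorithms exploiting that an isotropic $\mbox{\boldmath $\beta$}$ collapses the Holant contribution along its direction, the side condition $\langle\mathbf{F}_\beta,\mbox{\boldmath $\beta$}\rangle=\mathbf{0}$ in Form~3 being exactly what keeps this collapse consistent with the rest of $\mathbf{F}$.

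For hardness, assume $\mathbf{F}$ is non-degenerate and in none of the three forms. We use orthogonal holographic transformations freely (Theorem~\ref{thm:orthogoal}) to put $\mathbf{F}$ in a normal form in which enough entries vanish, and then, using the calculus of Section~\ref{section:calculus}, build from copies of $\mathbf{F}$ and unary signatures a binary gadget that behaves as $(=_2)$ on a two-element subdomain --- say $\{G,R\}$ --- and is zero elsewhere. Pinning the inputs of $\mathbf{F}$ with this $(=_{G,R})$ produces the Boolean symmetric ternary signature $\mathbf{F}^{*\to\{G,R\}}$ together with its derived lower-arity restrictions; since every such gadget vanishes unless all incident edges lie in $\{G,R\}$, placing an $(=_{G,R})$ on each edge embeds a Boolean ${\rm Holant}^*$ instance (with all Boolean unaries, which are restrictions of domain-$3$ unaries) into ${\rm Holant}^*(\mathbf{F})$. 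Hence, whenever the derived Boolean signatures escape all tractable classes of Theorems~\ref{thm:dich-sym-Boolean} and~\ref{thm:dich-asym-Boolean}, we conclude \#P-hardness; it is crucial here that the Boolean dichotomy is explicit, since we must decide algebraically whether $\mathbf{F}^{*\to\{G,R\}}$ avoids every $\mathcal{P}_{a,b}$, $\mathcal{P}$, and $Z\mathcal{M}$ class.

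The core of the proof, and the main obstacle, is the case in which $\mathbf{F}^{*\to\{G,R\}}$, and likewise the restriction to every other two-element subdomain reachable by an orthogonal transformation, lands in a tractable Boolean class or is degenerate. One shows that requiring $\mathbf{F}$ to be simultaneously compatible with tractable Boolean forms on all these subdomains forces ``fantastic'' regularity on $\mathbf{F}$ --- enough to pin it down to Form~1, 2, or~3. When it is not of one of these forms, a more elaborate gadget built via the \methsepahr(Fact~\ref{fact-domain-sepa-HR}), with \methpoly(Fact~\ref{fact-nonzeropoly}) guaranteeing that the constructed signatures can be taken full-rank / non-degenerate, yields a signature violating every tractable Boolean criterion, hence \#P-hardness. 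The subcase where $\mathbf{F}^{*\to\{G,R\}}$ is degenerate for every reachable subdomain is the most delicate: it is pushed through using \emph{unsymmetric} auxiliary signatures together with the unsymmetric Boolean dichotomy (Theorem~\ref{thm:dich-asym-Boolean}), where \methps(Facts~\ref{fact-half-sym-cut1} and~\ref{fact-half-sym-cut2}) is used to certify that a constructed arity-$4$ gadget signature is not decomposable into two binary functions, so that it truly escapes the tractable class $\langle\mathcal{T}\rangle$.

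Throughout, the organizing principle is that every failure of a hardness argument must be structurally forced to land inside Form~1, 2, or~3; this is exactly the mutually reinforcing interplay between new algorithms and hardness reductions described in the introduction, and the reason Form~3 had to be isolated in the first place. The heaviest technical lifting is thus not any single reduction but the bookkeeping of the long case analysis certifying that simultaneous Boolean tractability on all subdomains implies membership in one of the three forms.
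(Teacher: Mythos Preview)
Your outline matches the paper's architecture at the highest level --- tractability by orthogonal normalization and direct algorithm, hardness by manufacturing $(=_{G,R})$ and then invoking the Boolean ${\rm Holant}^*$ dichotomy --- but two of your load-bearing steps are not yet proofs, and one of them is framed in a way that does not match how the argument actually goes.

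First, you assert that one can ``build from copies of $\mathbf{F}$ and unary signatures a binary gadget that behaves as $(=_2)$ on a two-element subdomain.'' This is the hardest single ingredient. The paper spends two subsections on it: one must first realize a rank-$2$ symmetric binary function $\langle\mathbf{u},\mathbf{F}\rangle$ (already nontrivial; a rank analysis on the three matrices $\mathbf{F}^{1=B},\mathbf{F}^{1=G},\mathbf{F}^{1=R}$ is needed), then deal with the obstruction that the null eigenvector of this rank-$2$ matrix may be \emph{isotropic}, in which case no orthogonal transformation puts it into the $\{B\}$-separated block form $\left(\begin{smallmatrix}0&0&0\\0&*&*\\0&*&*\end{smallmatrix}\right)$. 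The analysis of when this obstruction is unavoidable is exactly where Form~3 is discovered as a genuine tractable class, not an artifact. Only after that does one interpolate $(=_{G,R})$ from a chain of copies, via a Jordan-form argument that you do not mention. Without these steps, the reduction to the Boolean domain does not get off the ground.

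Second, your hardness strategy for the case where $\mathbf{F}^{*\to\{G,R\}}$ is Boolean-tractable is phrased as showing that ``simultaneous compatibility with tractable Boolean forms on all subdomains reachable by an orthogonal transformation'' forces membership in Forms~1--3. This is not how it works, and I do not see how to make such a simultaneity argument rigorous. The paper fixes the single subdomain $\{G,R\}$ (the one for which it already holds $(=_{G,R})$), splits into the finitely many tractable shapes $\mathbf{F}^{*\to\{G,R\}}$ can take, and for each shape \emph{constructs} an additional binary gadget on $\{G,R\}$ (using one or two copies of $\mathbf{F}$, a unary with free parameters, and $(=_{G,R})$) that, together with $\mathbf{F}^{*\to\{G,R\}}$, escapes every Boolean tractable class. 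The polynomial argument is applied to the parameters of that gadget, not across subdomains. When $\mathbf{F}^{*\to\{G,R\}}$ is degenerate the paper switches to building an arity-$4$ unsymmetric gadget and uses the partial-symmetry argument, as you say, but even there several sub-subcases (notably when the degenerate direction is isotropic, or when $\mathbf{F}^{*\to\{G,R\}}$ is identically zero) require bespoke constructions, including a reduction from counting perfect matchings on $3$-regular graphs that your outline does not anticipate. Your description of the Form~3 algorithm (``an isotropic $\mbox{\boldmath $\beta$}$ collapses the Holant contribution along its direction'') is also too vague to stand as a proof; the actual algorithm decomposes $\mathbf{F}$ as $T_0+T_1+T_2+T_3$, argues that any assignment $\tau$ mixing $T_0$ with the $T_{j\ge 1}$ contributes zero, and then handles the two pure cases separately by a sign-pairing cancellation and a direct evaluation.
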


\noindent {\bf Remarks:} 1. In the forms above,
the vectors $\mbox{\boldmath $\alpha$, $\beta$, $\gamma$,
$\beta_1$, $\beta_2$}$ can be the zero vector (except
$\mbox{\boldmath $\beta$}$ in form {\it 3}.)\\
2. In form {\it 3}, $\mathbf{F}$ is
the sum of $\mathbf{F}_{\beta}$ with
($1/2$ of) the symmetrization of
$\mbox{\boldmath $\beta$}^{\otimes 2} \otimes \mbox{\boldmath $\gamma$}$.
The constant factor $1/2$ doesn't matter, and
  can be absorbed in $\mbox{\boldmath $\gamma$}$.\\
3. Let $T$ be an orthogonal $3 \times 3$ matrix, then
$\mathbf{F}$ is of one of the three forms above
iff $T^{\otimes 3} \mathbf{F}$ is.

\subsection{Canonical Forms for Tractable Cases}

Theorem~\ref{thm:ternary} gives a complete list of tractable cases
for Holant$^*(\mathbf{F})$. Before we give the proof of
tractability we need to discuss these tractable forms in some detail,
and give various canonical forms
of these tractable cases, under an orthogonal transformation $T$.
We note that for an orthogonal $T$, the arity 2 {\sc Equality} gate $(=_2)$
(on any domain  size) is invariant, the unary signatures are
transformed to unary signatures, and the formal description of
the three forms of
$\mathbf{F}$ is also invariant,
i.e., $\mathbf{F}$ is of one of the  three forms iff
$T^{\otimes 3} \mathbf{F}$ is.

In terms of the canonical forms, Theorem~\ref{thm:ternary}
can be restated as follows. We will write $T \mathbf{F}$
for $T^{\otimes 3} \mathbf{F}$ for simplicity.

\begin{theorem}\label{thm:normal-forms}
Let $\mathbf{F}$ be a symmetric ternary function over domain $\{B,G,R\}$.
Then Holant$^*(\mathbf{F})$ is \#P-hard unless under an orthogonal
transformation $T$, the function
$T \mathbf{F}$ is of one of the following forms,
in which case the problem is in P.
\begin{enumerate}
  \item For some $a, b, c \in \mathbb{C}$,
\[ T \mathbf{F} = a \mbox{\boldmath $e_1$}^{\otimes 3} + b \mbox{\boldmath $e_2$}^{\otimes 3} + c \mbox{\boldmath $e_3$}^{\otimes 3}.\]
  \item For some $c \not =0$ and $\lambda \in  \mathbb{C}$,
\[ c  T \mathbf{F} = \mbox{\boldmath $\beta_0$}^{\otimes 3} +
\overline{\mbox{\boldmath $\beta_0$}}^{\otimes 3} +
 \lambda  \mbox{\boldmath $e_3$}^{\otimes 3},\]
where $\mbox{\boldmath $\beta_0$} = \frac{1}{\sqrt{2}}(1, i, 0)^{\tt T}$,
and $\overline{\mbox{\boldmath $\beta_0$}}$ is its
conjugate $\frac{1}{\sqrt{2}}(1, -i, 0)^{\tt T}$.
\item For $\epsilon \in \{0, 1\}$,
\[T \mathbf{F} = \mathbf{F}_0 + \epsilon \rm{Sym}(
\mbox{\boldmath $\beta_0$} \otimes \mbox{\boldmath $\beta_0$}  \otimes
\overline{\mbox{\boldmath $\beta_0$}}), \]
where $\mathbf{F}_0$
satisfies the annihilation condition
\(\langle \mathbf{F}_0,
\mbox{\boldmath $\beta_0$} \rangle = \mathbf{0}.\)
\end{enumerate}\end{theorem}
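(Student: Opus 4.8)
The plan is to obtain Theorem~\ref{thm:normal-forms} directly from Theorem~\ref{thm:ternary} by showing that the two statements describe the same class of signatures. Two facts make this reduction immediate once the linear algebra is in place: by Theorem~\ref{thm:orthogoal} an orthogonal holographic transformation preserves the value, hence the complexity, of a Holant$^*$ problem; and by Remark~3 following Theorem~\ref{thm:ternary} the three forms there are invariant under orthogonal transformations. So it suffices to prove the purely linear-algebraic equivalence: a symmetric ternary $\mathbf{F}$ satisfies one of the three conditions of Theorem~\ref{thm:ternary} if and only if there is an orthogonal $T$ with $T\mathbf{F}$ in one of the three canonical shapes of Theorem~\ref{thm:normal-forms}. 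The ``in P'' assertions for the canonical forms are then exactly the tractability half of Theorem~\ref{thm:ternary}; those algorithms are given in the following sections and are not part of this argument.

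The easy direction is ``canonical $\Rightarrow$ general''. Here one observes that each canonical form is a special instance of the correspondingly numbered form of Theorem~\ref{thm:ternary}: $e_1,e_2,e_3$ are pairwise orthogonal anisotropic vectors (form 1); $\beta_0=\frac{1}{\sqrt{2}}(1,i,0)^{\tt T}$ and $\overline{\beta_0}$ are isotropic, $\langle\beta_0,\overline{\beta_0}\rangle=1$, and both are orthogonal to $e_3$ (form 2); and in form 3 one takes $\beta=\beta_0$ and $\gamma$ a scalar multiple of $\overline{\beta_0}$ so that the symmetrized term $\beta^{\otimes2}\otimes\gamma+\beta\otimes\gamma\otimes\beta+\gamma\otimes\beta^{\otimes2}$ becomes $\epsilon\,\mathrm{Sym}(\beta_0\otimes\beta_0\otimes\overline{\beta_0})$ up to the global factor, with $\mathbf{F}_0$ in the role of $\mathbf{F}_\beta$ and $\langle\mathbf{F}_0,\beta_0\rangle=\mathbf{0}$ being exactly the condition $\langle\mathbf{F}_\beta,\beta\rangle=\mathbf{0}$. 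Applying the orthogonal matrix $T^{-1}$ and Remark~3 then shows $\mathbf{F}$ itself is of the corresponding general form.

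The substance is in ``general $\Rightarrow$ canonical''. First I would record the $O_3(\mathbb{C})$-normal forms on $(\mathbb{C}^3,\langle\cdot,\cdot\rangle)$: a nonzero vector is either anisotropic, in which case an orthogonal map sends it to $\lambda e_1$, or isotropic, in which case an orthogonal map sends it exactly to $\lambda\beta_0$ (completing it to a hyperbolic pair together with a unit normal and comparing Gram matrices with $\{\beta_0,\overline{\beta_0},e_3\}$ shows $O_3(\mathbb{C})$ is transitive on the punctured isotropic cone); a totally isotropic subspace has dimension at most $1$; for anisotropic $v$ the space $v^\perp$ is a hyperbolic plane with exactly two isotropic lines, which become $\langle\beta_0\rangle$ and $\langle\overline{\beta_0}\rangle$ once $v$ is rotated to $\lambda e_3$; and the complex rotations of the $e_1e_2$-plane act on the pair $(\beta_0,\overline{\beta_0})$ by $(\mu\beta_0,\mu^{-1}\overline{\beta_0})$ with $\mu$ ranging over $\mathbb{C}^*$. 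With these tools one treats each general form by cases according to which defining vectors vanish, coincide, or are (an)isotropic. When all relevant vectors are anisotropic, orthonormalization yields canonical form 1. When two genuinely distinct isotropic directions occur (this is form 2, but it also arises from degenerate instances of form 1), they lie on the two isotropic lines of the hyperbolic plane $\alpha^\perp$, or of the plane they span when $\alpha=0$, and a single $\mu$-rescaling equalizes the two cubed-term coefficients, giving canonical form 2. The remaining cases, in which at most one isotropic direction is present, collapse to canonical form 3, using the absorption identities $\langle\mathrm{Sym}(\beta_0^{\otimes3}),\beta_0\rangle=\mathbf{0}$ and $\langle\mathrm{Sym}(\beta_0\otimes\beta_0\otimes e_3),\beta_0\rangle=\mathbf{0}$ --- whereas $\langle\mathrm{Sym}(\beta_0\otimes\beta_0\otimes\overline{\beta_0}),\beta_0\rangle=\beta_0^{\otimes2}\neq\mathbf{0}$ --- to fold the $\beta_0$- and $e_3$-components of $\gamma$ (in form 3) into $\mathbf{F}_0$, after which a last $\mu$-rescaling normalizes the surviving coefficient to $\epsilon\in\{0,1\}$.

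I expect the main obstacle to be the form-2 analysis together with the exhaustive case bookkeeping. Because $\beta_1$ and $\beta_2$ need not be mutually orthogonal, one cannot diagonalize outright: one must first pin down their relative position inside $\alpha^\perp$ and then check that the simultaneous complex rotation equalizing the two coefficients disturbs nothing else. One also has to verify that every degenerate configuration --- vanishing or coinciding $\alpha,\beta,\gamma,\beta_1,\beta_2$, and the split between isotropic and anisotropic $\alpha$ --- is captured by exactly one canonical form. The absorption computations for form 3 are routine once written down, but they are precisely what makes the ``$\epsilon\in\{0,1\}$'' normalization in canonical form 3 correct.
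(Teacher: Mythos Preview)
Your proposal is correct and follows essentially the same route as the paper. The paper develops exactly the linear-algebra toolkit you describe---Lemmas~\ref{non-isotropic-orth-transf} through \ref{orth-one-beta0-another-e3} give the $O_3(\mathbb{C})$-normal forms for single vectors and for isotropic/anisotropic pairs, and the explicit parametrization~(\ref{T-form-orth-fixing-beta}) of the stabilizer of $\boldsymbol{\beta}_0$ supplies the $(\mu\boldsymbol{\beta}_0,\mu^{-1}\overline{\boldsymbol{\beta}_0})$ rescaling you invoke---then runs the same case analysis by form and by rank. Your treatment of form~3 via decomposing $\boldsymbol{\gamma}$ in the basis $\{\boldsymbol{\beta}_0,\overline{\boldsymbol{\beta}_0},\boldsymbol{e}_3\}$ and absorbing two of the three components into $\mathbf{F}_0$ is a slightly cleaner packaging of what the paper does in three subcases (splitting on whether $\langle\boldsymbol{\beta},\boldsymbol{\gamma}\rangle=0$ and whether $\boldsymbol{\gamma}$ is isotropic), but the content is identical.
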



%
%

We start by defining the complex version of rotations.
For any $z \in \mathbb{C}$, let $c = \cos z = \frac{e^{iz} + e^{-iz}}{2}$
and $s = \sin z = \frac{e^{iz} - e^{-iz}}{2i}$, and
$T_2 = \begin{bmatrix} c & s \\
			-s & c
\end{bmatrix}$. Then $c^2 + s^2 = 1$ and $T_2$ is a $2 \times 2$
orthogonal matrix.
If $\begin{bmatrix} a \\ b \end{bmatrix} \in \mathbb{C}^2$ is not isotropic,
then $T_2 \begin{bmatrix} a \\ b \end{bmatrix}
= \begin{bmatrix} ca + sb \\ -sa + cb \end{bmatrix}$
is also not isotropic $(ca + sb)^2 + (-sa + cb)^2 = a^2 + b^2 \not = 0$.
Let $\eta = \cot z = i \frac{e^{iz} + e^{-iz}}{e^{iz} - e^{-iz}}
= i \frac{e^{2iz} +1 }{e^{2iz} -1}$,
we want a suitable $z \in \mathbb{C}$, such that $-sa + cb =0$.
The M\"{o}bius map $\xi \mapsto i \frac{\xi+1}{\xi -1}$
is a one-to-one onto
 map on the extended Riemann  complex plane $\mathbb{C} \cup
\{\infty\}$.  As $z \mapsto e^{2z}$ maps $\mathbb{C}$ onto
$\mathbb{C} - \{0\}$, the mapping $z \mapsto \eta = \cot z$
from $\mathbb{C}$
has image
$\mathbb{C} \cup \{\infty\} - \{i, -i\}$.
This proves that we can find an orthogonal $T_2$
such that $T_2 \begin{bmatrix} a \\ b \end{bmatrix}
= \begin{bmatrix} a' \\ 0 \end{bmatrix}$, where $a'^2 = a^2 + b^2$,
for any non-isotropic $\begin{bmatrix} a \\ b \end{bmatrix}$.

Suppose $v = (a_1, a_2, \ldots, a_d)^{\tt T}
\in \mathbb{C}^d$ is non-isotropic, $d \ge 2$.
Suppose $d'$ is the number of non-zero entries $a_i$. Then $d' \ge 1$.
By a permutation matrix (which is orthogonal) we may assume
they are $a_1, \ldots, a_{d'}$.
Suppose $d' \ge 2$.
There exist $1 \le i < j \le d'$, such that
$(a_i, a_j)^{\tt T}$ is non-isotropic.
Otherwise, summing $a_i^2 + a_j^2$ over all distinct pairs $(i,j)$ among the
non-zero entries $1 \le i < j \le d'$
we get $(d'-1) \sum_{i=1}^{d'} a_i^2 = 0$ and $v$ is
isotropic.
Hence,
we can use a permutation matrix (which is orthogonal)
to map $v$ such that $a_1^2 + a_2^2 \not = 0$.
By a rotation described above, we may use an orthogonal
matrix of the form ${\rm diag}( T_2, I_{d-2} )$
to transform $v$, such that it has one fewer non-zero entries
but with the same value $\langle  v, v \rangle
= \sum_{i=1}^d a_i^2$. By induction, we have proved

\begin{lemma}\label{non-isotropic-orth-transf}
For any non-isotropic $v = (a_1, a_2, \ldots, a_d)^{\tt T}
\in \mathbb{C}^d$, $d \ge 1$, there exists an orthogonal matrix $T$
such that $T v = (\pm \sqrt{\langle  v, v \rangle}, 0, \ldots, 0)^{\tt T}$.
(Both $\pm$ are feasible.)
\end{lemma}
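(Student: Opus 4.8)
The statement is a Witt-type normal form: under the complex orthogonal group, any non-isotropic vector can be rotated onto a multiple of $e_1$, with the multiple determined up to sign by $\langle v,v\rangle$. The plan is induction on $d'$, the number of non-zero coordinates of $v$, using the complex $2\times 2$ rotations $T_2$ constructed above as the only non-trivial building block, and using that permutation matrices (and sign-flip matrices $\operatorname{diag}(\pm1,\dots,\pm1)$) are orthogonal and preserve $\langle\cdot,\cdot\rangle$. First I would permute coordinates so the non-zero entries are $a_1,\dots,a_{d'}$; here $d'\ge 1$ since $v$ is non-isotropic, hence non-zero.

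For the base case $d'=1$ we have $\langle v,v\rangle=a_1^2$, so $a_1=\pm\sqrt{\langle v,v\rangle}$ and $v$ is already of the required shape; multiplying by $\operatorname{diag}(-1,1,\dots,1)$ flips the sign, so both signs are feasible. For $d'\ge 2$ I would argue that some pair $1\le i<j\le d'$ is non-isotropic, i.e.\ $a_i^2+a_j^2\ne 0$: otherwise, summing the $\binom{d'}{2}$ identities $a_i^2+a_j^2=0$ gives $(d'-1)\sum_{k=1}^{d'}a_k^2=0$, forcing $\langle v,v\rangle=0$, a contradiction. After a further permutation assume $a_1^2+a_2^2\ne 0$, and apply the orthogonal matrix $\operatorname{diag}(T_2,I_{d-2})$ with $T_2$ chosen (as produced above from a non-isotropic $2$-vector) to send $(a_1,a_2)^{\tt T}\mapsto(a_1',0)^{\tt T}$ where $a_1'^2=a_1^2+a_2^2\ne0$. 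The image vector has the same value of $\langle\cdot,\cdot\rangle$ but strictly fewer non-zero coordinates, so induction applies; composing the orthogonal matrices used gives the desired $T$, and the sign freedom propagates from the base case.

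The one substantive ingredient — and where I expect the actual content to lie — is the existence of a $2\times 2$ orthogonal $T_2$ killing one coordinate of a non-isotropic $2$-vector $(a,b)^{\tt T}$. Writing $c=\cos z$, $s=\sin z$, the condition $-sa+cb=0$ is $\cot z=a/b$ (with $b=0$ corresponding to $\cot z=\infty$). So one needs $a/b$ to lie in the image of $z\mapsto\cot z$; factoring $\cot z=i\frac{e^{2iz}+1}{e^{2iz}-1}$ through the surjection $z\mapsto e^{2iz}$ onto $\mathbb{C}\setminus\{0\}$ and the M\"obius bijection $\xi\mapsto i\frac{\xi+1}{\xi-1}$ of $\mathbb{C}\cup\{\infty\}$ shows that image is exactly $\mathbb{C}\cup\{\infty\}\setminus\{i,-i\}$. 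The excluded ratios $a/b=\pm i$ are precisely those with $a^2+b^2=0$, i.e.\ $(a,b)^{\tt T}$ isotropic, so non-isotropy is exactly the hypothesis needed. Everything else in the proof is bookkeeping with permutation and block-diagonal orthogonal matrices.
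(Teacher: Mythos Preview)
Your proposal is correct and follows essentially the same argument as the paper: the same induction on the number $d'$ of non-zero coordinates, the same summation trick $(d'-1)\sum a_k^2=0$ to locate a non-isotropic pair, the same block-diagonal application of a $2\times2$ complex rotation, and the same $\cot z$/M\"obius analysis to show the $2\times2$ step goes through exactly when $(a,b)$ is non-isotropic. Your explicit mention of $\operatorname{diag}(-1,1,\dots,1)$ to realize both signs is a small clarification the paper leaves implicit.
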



Now suppose
$v \in \mathbb{C}^d$ is a non-zero isotropic vector.
Certainly $d \ge 2$.
We want to show that there is an orthogonal matrix $T$
transforming $v$ to $\mbox{\boldmath $\beta_0$} =
\frac{1}{\sqrt{2}}(1, i, 0, \ldots, 0)^{\tt T}$.
First suppose $d=2$. Then $v = (a, b)^{\tt T}$
and $b = \pm a i$, and $v =  a \begin{bmatrix} 1 \\ \pm i \end{bmatrix}$.
As $v \not = 0$, we have $a \not = 0$.
We may use $\begin{bmatrix} 1 & 0 \\ 0 & -1 \end{bmatrix}$,
to get $v = a \begin{bmatrix} 1 \\ i \end{bmatrix}$.
  Use a complex rotation $T_2$ defined
above we get
$T_2 v = a \begin{bmatrix} c + si \\ -s + ci \end{bmatrix}
= a (c+si)
\begin{bmatrix} 1 \\  i \end{bmatrix}$.
As $c+si = e^{iz}$ can be an arbitrary nonzero
complex number, we may choose $z$ such that $e^{iz} = \frac{1}{\sqrt{2} a}$.
This gives us $T_2 v = \frac{1}{\sqrt{2}} \begin{bmatrix} 1 \\ i \end{bmatrix}$.
It is clear that we could also go to
any non-zero multiple of $\begin{bmatrix} 1 \\ i \end{bmatrix}$,
as well as $\begin{bmatrix} 1 \\ -i \end{bmatrix}$.

Now suppose $d >2$. Let $v = (a_1, a_2, \ldots, a_d)^{\tt T}
\not = 0$ be isotropic.
If $a_1 =0$, then $(a_2, \ldots, a_d)^{\tt T} \not = 0$ is isotropic.
By induction there exists an order  $d-1$
orthogonal matrix $T'$ such that
${\rm diag}(1, T') v
= \frac{1}{\sqrt{2}} (0, 1, i, 0, \ldots, 0)^{\tt T}$. Then
we complete the induction by a permutation matrix, obtaining
an order  $d$ orthogonal matrix $T$
such that $T v = \mbox{\boldmath $\beta_0$}$.
Next we assume $a_1 \not =0$.
Then $v' = (a_2, \ldots, a_d)^{\tt T}$ is {\it not} isotropic and non-zero.
By Lemma~\ref{non-isotropic-orth-transf},
there exists an order  $d-1$
orthogonal matrix $T'$ such that
${\rm diag}(1, T') v
= (a_1, \sqrt{\langle v', v'\rangle}, 0, \ldots, 0)^{\tt T}$.
Since $v$ is isotropic, we have
$\sqrt{\langle v', v'\rangle} = \pm a_1 i$.
So we have ${\rm diag}(1, T') v
= a_1 (1, \pm i, 0, \ldots, 0)^{\tt T}$.
And by the above discussion  we get an orthogonal
$T$ such that $T v = \mbox{\boldmath $\beta_0$}$.
We have proved

\begin{lemma}\label{isotropic-orth-transf}
For any non-zero isotropic $v = (a_1, a_2, \ldots, a_d)^{\tt T}
\in \mathbb{C}^d$, $d \ge 2$, there exists an orthogonal matrix $T$
such that $T v = \mbox{\boldmath $\beta_0$}
= \frac{1}{\sqrt{2}}(1, i, 0, \ldots, 0)^{\tt T}$.
(Both $(1, i, 0, \ldots, 0)^{\tt T}$ and
$(1, -i, 0, \ldots, 0)^{\tt T}$, and all non-zero
multiples of them are feasible.)
\end{lemma}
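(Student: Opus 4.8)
The plan is to induct on the dimension $d$, at each step applying an orthogonal matrix that is a permutation matrix, a sign-flip matrix $\mathrm{diag}(1,-1,1,\ldots,1)$, or a block-diagonal matrix $\mathrm{diag}(T_2, I_{d-2})$ whose only non-trivial block is the complex rotation $T_2$ introduced just before Lemma~\ref{non-isotropic-orth-transf}; since a product of orthogonal matrices is again orthogonal, the composite transformation carrying $v$ to $\mbox{\boldmath $\beta_0$}$ will be orthogonal, as required.

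First I would settle the base case $d=2$. Here isotropy of $v=(a,b)^{\tt T}$ means $a^2+b^2=0$, hence $b=\pm i a$, and $a\neq 0$ since $v\neq 0$; applying $\mathrm{diag}(1,-1)$ if necessary we may assume $v=a(1,i)^{\tt T}$. The complex rotation then acts by $T_2 v = a(c+si)(1,i)^{\tt T} = a\,e^{iz}(1,i)^{\tt T}$, and since $z\mapsto e^{iz}$ maps $\mathbb{C}$ onto $\mathbb{C}\setminus\{0\}$ we may choose $z$ with $e^{iz}=\tfrac{1}{\sqrt{2}\,a}$ to obtain $T_2 v=\mbox{\boldmath $\beta_0$}$. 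The same computation shows any non-zero multiple of $(1,i)^{\tt T}$ or of $(1,-i)^{\tt T}$ is equally reachable, which yields the parenthetical claim.

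For the inductive step $d>2$, write $v=(a_1,\ldots,a_d)^{\tt T}$ and split on $a_1$. If $a_1=0$ then $v'=(a_2,\ldots,a_d)^{\tt T}$ is a non-zero isotropic vector in $\mathbb{C}^{d-1}$; by the induction hypothesis an orthogonal $T'$ sends $v'$ to the $(d-1)$-dimensional $\mbox{\boldmath $\beta_0$}$, so $\mathrm{diag}(1,T')$ sends $v$ to $\tfrac{1}{\sqrt{2}}(0,1,i,0,\ldots,0)^{\tt T}$, and a coordinate permutation finishes the argument. If $a_1\neq 0$ then $v'$ satisfies $\langle v',v'\rangle=-a_1^2\neq 0$, so $v'$ is non-isotropic and non-zero; Lemma~\ref{non-isotropic-orth-transf} gives an orthogonal $T'$ with $\mathrm{diag}(1,T')v=(a_1,\sqrt{\langle v',v'\rangle},0,\ldots,0)^{\tt T}=a_1(1,\pm i,0,\ldots,0)^{\tt T}$, which reduces to the base case acting on the first two coordinates (a sign flip followed by a suitable $\mathrm{diag}(T_2,I_{d-2})$).

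The ingredients needing care were already established in the discussion preceding Lemma~\ref{non-isotropic-orth-transf}: that $T_2$ is genuinely orthogonal (via $\cos^2 z+\sin^2 z=1$ for the complex trigonometric functions) and that $c+si=e^{iz}$ ranges over all of $\mathbb{C}\setminus\{0\}$. I expect the main, though still modest, subtlety to be the case split on $a_1$: the key algebraic observation that makes the recursion close is that truncating an isotropic vector along a non-zero coordinate leaves a non-isotropic vector whose self-inner-product is exactly $-a_1^2$, so its orthogonal normal form has a purely imaginary leading entry, which is precisely the shape $(1,\pm i,0,\ldots,0)^{\tt T}$ that the base case knows how to normalize.
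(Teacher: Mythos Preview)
Your proposal is correct and follows essentially the same approach as the paper: the base case $d=2$ is handled by a sign flip followed by a complex rotation $T_2$ with $e^{iz}=\tfrac{1}{\sqrt{2}a}$, and the inductive step splits on whether $a_1=0$ (invoke the induction hypothesis on the tail, then permute) or $a_1\neq 0$ (the tail is non-isotropic with $\langle v',v'\rangle=-a_1^2$, so Lemma~\ref{non-isotropic-orth-transf} reduces to $a_1(1,\pm i,0,\ldots,0)^{\tt T}$ and the base case finishes). The paper's argument is structurally identical.
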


Now set $d=3$.
Our next task is to describe the set of all
order 3 orthogonal matrices $T$ which fixes
$\mbox{\boldmath $\beta_0$}$.

Let the first two columns of $T$ be denoted by
$u = (a_1, a_2, a_3)^{\tt T}$ and
$v = (b_1, b_2, b_3)^{\tt T}$.
We can derive $a_1 = 1 - a_2 i$, $b_1 = a_2$,  $b_2 = 1 + a_2 i$,
and $a_3 = - b_3 i$.
It follows that the first two columns are of the form
$\begin{bmatrix} 1 - ix & x \\
	x & 1+ix \\
		iy & -y
 \end{bmatrix}$.
Moreover, the columns are unit vectors, and so
$x = i y^2 /2$.
If we form the cross-product of these two vectors,
we obtain
$(-iy, y, 1)^{\tt T}$. This and its negation $(iy, -y, -1)^{\tt T}$ can be
the third column vector of $T$.
Thus the orthogonal matrix $T$ has the form
\begin{eqnarray}\label{T-form-orth-fixing-beta}
T =
\begin{bmatrix} 1 + y^2/2 & i y^2/2 & iy \\
                i y^2/2  & 1 - y^2/2 & -y \\
                iy & -y & -1
 \end{bmatrix},
\end{eqnarray}
or changing the last column to its negative.
This is a complete description of the set of $3 \times 3$
orthogonal matrices $T$ such that $T
\mbox{\boldmath $\beta_0$} = \mbox{\boldmath $\beta_0$}$.

Our next task is to determine what canonical form
a vector $v$ can take, under the mapping of such an orthogonal
matrix $T$ which fixes $\mbox{\boldmath $\beta_0$}$.
First we prove a simple lemma.

\begin{lemma}\label{orth-one-beta0-another-beta}
If  $\mathbf{\beta}_1, \mathbf{\beta}_2 \in \mathbb{C}^3$
are isotropic, and linearly independent.
Then $\langle \mathbf{\beta}_1, \mathbf{\beta}_2 \rangle \not =0$,
and
there exists an orthogonal matrix $T$
such that
$T \mathbf{\beta}_1 = \mbox{\boldmath $\beta_0$}$
and $T \mathbf{\beta}_2 =
\langle \mathbf{\beta}_1, \mathbf{\beta}_2 \rangle
\overline{\mbox{\boldmath $\beta_0$}}$.
Let $\lambda = 1/\sqrt{\langle \mathbf{\beta}_1, \mathbf{\beta}_2 \rangle}$,
there exists an orthogonal matrix $T$
such that
$\lambda T \mathbf{\beta}_1 = \mbox{\boldmath $\beta_0$}$
and $\lambda T \mathbf{\beta}_2 =
\overline{\mbox{\boldmath $\beta_0$}}$.
\end{lemma}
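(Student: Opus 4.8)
My plan is to deduce the lemma from a short coordinate-free argument about the standard nondegenerate symmetric bilinear form $\langle x,y\rangle=x^{\tt T}y$ on $\mathbb{C}^3$, using the fact that a matrix $T$ is orthogonal exactly when it preserves this form.

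\textbf{Step 1: $\langle\beta_1,\beta_2\rangle\neq 0$.} If $\langle\beta_1,\beta_2\rangle=0$, then, since $\beta_1,\beta_2$ are themselves isotropic, $\mathrm{span}(\beta_1,\beta_2)$ would be a two-dimensional totally isotropic subspace of $(\mathbb{C}^3,\langle\cdot,\cdot\rangle)$; but over $\mathbb{C}$ a nondegenerate symmetric form in dimension $3$ has Witt index $\lfloor 3/2\rfloor=1$, so it has no totally isotropic subspace of dimension exceeding $1$ — a contradiction. (Concretely one can also argue: by Lemma~\ref{isotropic-orth-transf} pick an orthogonal $T_0$ with $T_0\beta_1=\beta_0$ and write $T_0\beta_2=(v_1,v_2,v_3)^{\tt T}$, which is isotropic and, by linear independence, not a scalar multiple of $\beta_0$. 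If $\langle\beta_0,T_0\beta_2\rangle=\tfrac{1}{\sqrt2}(v_1+iv_2)=0$ then $v_1=-iv_2$, and $v_1^2+v_2^2+v_3^2=0$ forces $v_3=0$, so $T_0\beta_2=v_2(-i,1,0)^{\tt T}$ is a multiple of $\beta_0$, a contradiction. Since orthogonal maps preserve $\langle\cdot,\cdot\rangle$, the claim follows.)

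\textbf{Step 2: constructing $T$.} Set $c=\langle\beta_1,\beta_2\rangle\neq 0$ and $W=\mathrm{span}(\beta_1,\beta_2)$, a two-dimensional subspace. In the basis $(\beta_1,\beta_2)$ the restriction of $\langle\cdot,\cdot\rangle$ to $W$ has Gram matrix $\begin{pmatrix}0 & c\\ c & 0\end{pmatrix}$, which is nonsingular; hence $\mathbb{C}^3=W\oplus W^{\perp}$ with $\dim W^\perp=1$ and $\langle\cdot,\cdot\rangle$ nondegenerate on $W^\perp$, so there is $w\in W^\perp$ with $\langle w,w\rangle=1$. Now define the linear map $T$ by
\[ T\beta_1=\beta_0,\qquad T\beta_2=c\,\overline{\beta_0},\qquad Tw=(0,0,1)^{\tt T}; \]
this is well defined because $\{\beta_1,\beta_2,w\}$ is a basis. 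Using $\langle\beta_0,\beta_0\rangle=\langle\overline{\beta_0},\overline{\beta_0}\rangle=0$, $\langle\beta_0,\overline{\beta_0}\rangle=1$, and that $(0,0,1)^{\tt T}$ is a unit vector orthogonal to both $\beta_0$ and $\overline{\beta_0}$, together with $\langle\beta_i,w\rangle=0$ (as $w\in W^\perp$), one checks that $T$ matches $\langle\cdot,\cdot\rangle$ on every pair of basis vectors. By bilinearity $T$ then preserves $\langle\cdot,\cdot\rangle$ everywhere; and since the common Gram matrix is nonsingular, $\{T\beta_1,T\beta_2,Tw\}$ is again a basis, so $T$ is invertible — that is, $T$ is an orthogonal matrix. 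By construction $T\beta_1=\beta_0$ and $T\beta_2=c\,\overline{\beta_0}=\langle\beta_1,\beta_2\rangle\,\overline{\beta_0}$, which is the first assertion.

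\textbf{Step 3: the $\lambda$-normalized form, and the hard part.} For the second assertion, apply Steps 1--2 to $\beta_1':=\lambda\beta_1$ and $\beta_2':=\lambda\beta_2$: these are isotropic, linearly independent, and $\langle\beta_1',\beta_2'\rangle=\lambda^2\langle\beta_1,\beta_2\rangle=1$ by the choice $\lambda=1/\sqrt{\langle\beta_1,\beta_2\rangle}$, so the resulting orthogonal $T$ satisfies $T\beta_1'=\beta_0$ and $T\beta_2'=1\cdot\overline{\beta_0}$, i.e. $\lambda T\beta_1=\beta_0$ and $\lambda T\beta_2=\overline{\beta_0}$. The one genuinely delicate point is Step 1: the decomposition $\mathbb{C}^3=W\oplus W^\perp$, and hence the entire construction in Step 2, relies on $\langle\cdot,\cdot\rangle$ being nondegenerate on $W$, i.e. on $c\neq0$, so the nonvanishing of the pairing must be settled first. (Alternatively one could build $T$ explicitly by composing the orthogonal matrix of Lemma~\ref{isotropic-orth-transf} that sends $\beta_1\mapsto\beta_0$ with a member of the family \eqref{T-form-orth-fixing-beta} of orthogonal matrices fixing $\beta_0$, choosing its free parameter $y$ to kill the third coordinate of the image of $\beta_2$; an isotropic vector with third coordinate $0$ and prescribed $\langle\beta_0,\cdot\rangle$-value is forced to be the required multiple of $\overline{\beta_0}$. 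The coordinate-free version above just avoids this bookkeeping.)
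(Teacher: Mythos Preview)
Your proof is correct and takes a genuinely different route from the paper's.

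The paper argues computationally: it first applies Lemma~\ref{isotropic-orth-transf} to send $\beta_1$ to $\beta_0$, then invokes the explicit one-parameter family \eqref{T-form-orth-fixing-beta} of orthogonal matrices fixing $\beta_0$, and solves for the parameter $y$ so that $T\overline{\beta_0}$ equals the (normalized) image of $\beta_2$. Your Step~2 instead builds $T$ directly by prescribing its values on the basis $\{\beta_1,\beta_2,w\}$ and checking that the Gram matrix is preserved; this is essentially a bare-hands instance of Witt's extension theorem for the symmetric bilinear form on $\mathbb{C}^3$. Your argument is cleaner and avoids all coordinate bookkeeping, while the paper's approach yields an explicit formula for $T$ (which, as it happens, is not used later). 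Your Step~1 contains both the Witt-index remark and the concrete computation; the latter is exactly the paper's argument for $\langle\beta_1,\beta_2\rangle\neq0$.

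One tiny redundancy: once you have verified that the Gram matrices agree on a basis, bilinearity already gives $T^{\tt T}T=I$, hence invertibility of $T$ is automatic; your separate check that $\{T\beta_1,T\beta_2,Tw\}$ is a basis is correct but unnecessary.
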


\proof
By Lemma~\ref{isotropic-orth-transf}, we have
an orthogonal $T_1$, such that
  $T_1 \mathbf{\beta}_1 =\mbox{\boldmath $\beta_0$}$.
Let $\mathbf{\gamma} = T_1 \mathbf{\beta}_2$.
Write $\mathbf{\gamma} =
\begin{bmatrix}
a \\
b \\
c
\end{bmatrix}$.
If $\langle \mathbf{\beta}_1, \mathbf{\beta}_2 \rangle=0$,
then, since $T_1$ preserves inner product,
 $a+bi =0$ and $c^2 = -(a^2 + b^2) = 0$. Hence,
$\mathbf{\gamma}$ is linearly dependent on $\mbox{\boldmath $\beta_0$}$, and thus
$\mathbf{\beta}_2$ is linearly dependent on $\mathbf{\beta}_1$,
a contradiction. Hence
$\langle \mathbf{\beta}_1, \mathbf{\beta}_2 \rangle \not =0$.

Now we may as well assume the given vectors are
$\mbox{\boldmath $\beta_0$}$ and $\mathbf{\gamma}$.
Consider those orthogonal matrices $T$ in (\ref{T-form-orth-fixing-beta})
fixing $\mbox{\boldmath $\beta_0$}$.
Let ${u} = \mathbf{\gamma} /
\langle {\mathbf{\gamma}}, {\mbox{\boldmath $\beta_0$}} \rangle$.
Then $\langle u, {\mbox{\boldmath $\beta_0$}} \rangle = 1$.
We want a $T$ such that $T \overline{\mbox{\boldmath $\beta_0$}} =
{u}$.
Write
${v} = \frac{1}{\sqrt{2}} {u}
=\begin{bmatrix}
a\\
b\\
c
\end{bmatrix}$,
then
$\langle {v}, (1, i, 0)^{\tt T} \rangle =
\langle {u}, \mbox{\boldmath $\beta_0$} \rangle = 1$,
and it follows that
$a+bi =1$ and so $-c^2 = a^2 + b^2 = 1 - 2bi$.
Hence ${v} = (\frac{1-c^2}{2}, \frac{1+c^2}{2i}, c)^{\tt T}$.
On the other hand, from (\ref{T-form-orth-fixing-beta}),
 $T (1, -i, 0)^{\tt T}
= (1+y^2, (1-y^2)/i, 2yi)^{\tt T}$.
Then by setting $y=c/i$ we get
$T (1, -i, 0)^{\tt T} = 2 {v}$.
Hence
$T \overline{\mbox{\boldmath $\beta_0$}}
= \frac{1}{\sqrt{2}} T (1, -i, 0)^{\tt T}
= \sqrt{2} {v} ={u}$.

The last conclusion  of Lemma~\ref{orth-one-beta0-another-beta}
 follows from what has been
proved applied to the pair
$\lambda  \mathbf{\beta}_1$
and $\lambda  \mathbf{\beta}_2$.
\qed

\begin{lemma}\label{orth-one-beta0-another-e3}
Suppose  $\mathbf{\beta} \in \mathbb{C}^3$
is isotropic, $\mathbf{\gamma} \in \mathbb{C}^3$
is not isotropic, $\{\mathbf{\beta}, \mathbf{\gamma}\}$
are linearly independent, and
$\langle \mathbf{\beta}, \mathbf{\gamma} \rangle =0$.
Then
there exists an orthogonal matrix $T$
such that
$T \mathbf{\beta} = \mbox{\boldmath $\beta_0$}$
and $T \mathbf{\gamma} =
\sqrt{\langle \mathbf{\gamma}, \mathbf{\gamma} \rangle} \mbox{\boldmath $e_3$}$.
For $\lambda = 1/\sqrt{\langle \mathbf{\gamma}, \mathbf{\gamma} \rangle}$,
there exists an orthogonal matrix $T$
such that
$\lambda T \mathbf{\beta} = \mbox{\boldmath $\beta_0$}$
and $\lambda T \mathbf{\gamma} = \mbox{\boldmath $e_3$}$.
\end{lemma}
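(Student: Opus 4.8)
The plan is to mimic the proof of Lemma~\ref{orth-one-beta0-another-beta}: first straighten $\mathbf{\beta}$ to $\mbox{\boldmath $\beta_0$}$ by an arbitrary orthogonal transformation, and then use the explicit description (\ref{T-form-orth-fixing-beta}) of the orthogonal matrices fixing $\mbox{\boldmath $\beta_0$}$ to rotate $\mathbf{\gamma}$ into a multiple of $\mbox{\boldmath $e_3$}$.

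First I would invoke Lemma~\ref{isotropic-orth-transf} to get an orthogonal $T_1$ with $T_1 \mathbf{\beta} = \mbox{\boldmath $\beta_0$}$, and put $\mathbf{\gamma}' = T_1 \mathbf{\gamma} = (a,b,c)^{\tt T}$. Since $T_1$ preserves the bilinear form, $\langle \mbox{\boldmath $\beta_0$}, \mathbf{\gamma}' \rangle = \langle \mathbf{\beta}, \mathbf{\gamma} \rangle = 0$, $\langle \mathbf{\gamma}', \mathbf{\gamma}' \rangle = \langle \mathbf{\gamma}, \mathbf{\gamma} \rangle \neq 0$, and $\{\mbox{\boldmath $\beta_0$}, \mathbf{\gamma}'\}$ stays independent. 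The condition $\langle \mbox{\boldmath $\beta_0$}, \mathbf{\gamma}' \rangle = 0$ forces $a + ib = 0$, i.e. $b = ai$, whence $a^2 + b^2 = 0$ and $\langle \mathbf{\gamma}', \mathbf{\gamma}' \rangle = c^2$. Therefore $c \neq 0$ and $\mathbf{\gamma}' = \sqrt{2}\,a\,\mbox{\boldmath $\beta_0$} + c\,\mbox{\boldmath $e_3$}$.

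Next I would take an orthogonal $T_0$ from the family (\ref{T-form-orth-fixing-beta}) fixing $\mbox{\boldmath $\beta_0$}$, in the variant whose third column is $(-iy, y, 1)^{\tt T} = \mbox{\boldmath $e_3$} - \sqrt{2}\,iy\,\mbox{\boldmath $\beta_0$}$. Because $T_0 \mbox{\boldmath $\beta_0$} = \mbox{\boldmath $\beta_0$}$ and $T_0\mbox{\boldmath $e_3$}$ is exactly that third column, we get $T_0 \mathbf{\gamma}' = \sqrt{2}\,a\,\mbox{\boldmath $\beta_0$} + c\,T_0\mbox{\boldmath $e_3$} = c\,\mbox{\boldmath $e_3$} + \sqrt{2}(a - icy)\,\mbox{\boldmath $\beta_0$}$, and since any $y \in \mathbb{C}$ is admissible in (\ref{T-form-orth-fixing-beta}), choosing $y = -ia/c$ annihilates the $\mbox{\boldmath $\beta_0$}$-component and leaves $T_0 \mathbf{\gamma}' = c\,\mbox{\boldmath $e_3$}$. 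Then $T' := T_0 T_1$ is orthogonal with $T'\mathbf{\beta} = \mbox{\boldmath $\beta_0$}$ and $T'\mathbf{\gamma} = c\,\mbox{\boldmath $e_3$}$, where $c^2 = \langle \mathbf{\gamma}, \mathbf{\gamma} \rangle$. If the sign is wrong, I would postcompose with $\mathrm{diag}(1,1,-1)$, which is orthogonal, fixes $\mbox{\boldmath $\beta_0$}$ (its third coordinate is $0$) and negates $\mbox{\boldmath $e_3$}$; this yields $T$ with $T\mathbf{\beta} = \mbox{\boldmath $\beta_0$}$ and $T\mathbf{\gamma} = \sqrt{\langle \mathbf{\gamma}, \mathbf{\gamma} \rangle}\,\mbox{\boldmath $e_3$}$. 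Finally, for $\lambda = 1/\sqrt{\langle \mathbf{\gamma}, \mathbf{\gamma} \rangle}$, I would apply the statement just proved to the pair $\lambda\mathbf{\beta}$ (isotropic, nonzero) and $\lambda\mathbf{\gamma}$ (non-isotropic, linearly independent from $\lambda\mathbf{\beta}$, orthogonal to it, and with $\langle \lambda\mathbf{\gamma}, \lambda\mathbf{\gamma} \rangle = 1$): the resulting $T$ satisfies $T(\lambda\mathbf{\beta}) = \mbox{\boldmath $\beta_0$}$ and $T(\lambda\mathbf{\gamma}) = \mbox{\boldmath $e_3$}$, i.e. $\lambda T\mathbf{\beta} = \mbox{\boldmath $\beta_0$}$ and $\lambda T\mathbf{\gamma} = \mbox{\boldmath $e_3$}$, exactly as in the last step of Lemma~\ref{orth-one-beta0-another-beta}.

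There is no real obstacle here: the proof is essentially the ``$\mbox{\boldmath $\beta$}$ isotropic, $\mbox{\boldmath $\gamma$}$ non-isotropic'' analogue of Lemma~\ref{orth-one-beta0-another-beta}, and the one thing requiring care is the bookkeeping of the $\mbox{\boldmath $\beta_0$}$-component of $T_0\mathbf{\gamma}'$ together with the correct choice of the variant of (\ref{T-form-orth-fixing-beta}) and of the square-root sign, all handled as above.
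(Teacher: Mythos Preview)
Your proof is correct and follows essentially the same approach as the paper: first invoke Lemma~\ref{isotropic-orth-transf} to send $\mathbf{\beta}$ to $\mbox{\boldmath $\beta_0$}$, then use the explicit one-parameter family (\ref{T-form-orth-fixing-beta}) of orthogonal matrices fixing $\mbox{\boldmath $\beta_0$}$ to rotate the image of $\mathbf{\gamma}$ into a multiple of $\mbox{\boldmath $e_3$}$. The only cosmetic difference is that the paper pre-scales by $\lambda$ so that $c=\pm 1$ and selects between the two variants of the third column according to the sign of $c$, whereas you work with general $c$, stick to one variant, and correct the sign afterwards with $\mathrm{diag}(1,1,-1)$ (which is simply the $y=0$ instance of (\ref{T-form-orth-fixing-beta})).
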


\proof
By Lemma~\ref{isotropic-orth-transf}, we may assume
  $\mathbf{\beta} =\mbox{\boldmath $\beta_0$}$.
Write $ \lambda
\mathbf{\gamma} =
\begin{bmatrix}
a \\
b \\
c
\end{bmatrix}$.
Then $a+bi =0$ and $c^2 = a^2 + b^2 + c^2 = 1$.
Depending on whether $c=\pm 1$, we use one of the two forms
of $T$ in (\ref{T-form-orth-fixing-beta})
fixing $\mbox{\boldmath $\beta_0$}$.
If $c=-1$, we set $y=-b$ in (\ref{T-form-orth-fixing-beta}).
If $c=+1$, we set $y=-b$ in the form of $T$ with the negated
third column from (\ref{T-form-orth-fixing-beta}).

The last conclusion follows from what has been
proved  applied to the pair
$\lambda  \mathbf{\beta}$
and $\lambda  \mathbf{\gamma}$.
\qed

\vspace{.2in}

We are now ready to address in what canonical form
each of the three cases in Theorem~\ref{thm:ternary} can take.

We consider each case in turn:

\noindent
$\bullet$
 There exist three vectors $\mbox{\boldmath $\alpha$,
$\beta$, and $\gamma$}$
of dimension 3 such that
  they are mutually orthogonal to each other,
 i.e.
$\langle \mbox{\boldmath $\alpha$, $\beta$} \rangle=0$,
     $\langle \mbox{\boldmath $\alpha$, $\gamma$} \rangle=0$,
 $\langle \mbox{\boldmath $\beta$,  $\gamma$} \rangle=0$,  and
 \[ \mathbf{F}=\mbox{\boldmath $\alpha$}^{\otimes 3} +
                \mbox{\boldmath $\beta$}^{\otimes 3} +
                \mbox{\boldmath $\gamma$}^{\otimes 3}.\]

Let $r = {\rm rank} \{\mbox{\boldmath $\alpha$, $\beta$,  $\gamma$} \}$.
If $r=0$, then $\mathbf{F}= \mathbf{0}$ is the identically zero function.

If $r=1$, and suppose $\mbox{\boldmath $\alpha$} \not =0$ and
$\mbox{\boldmath $\beta$}$ and $\mbox{\boldmath $\gamma$}$ are linear multiples of
$\mbox{\boldmath $\alpha$}$. Then $\mathbf{F}= \mbox{\boldmath $\alpha'$}^{\otimes 3}$
for some $\mbox{\boldmath $\alpha'$}$.
Depending on whether $\mbox{\boldmath $\alpha'$}$ is isotropic, under an
orthogonal transformation, $T \mathbf{F}$ takes the form
\begin{equation}\label{rank=1-case-1}
T \mathbf{F} = \mbox{\boldmath $\beta_0$}^{\otimes 3},  ~~~~\mbox{or}~~~~
\lambda \mbox{\boldmath $e_3$}^{\otimes 3}.
\end{equation}

Let $r=2$ and suppose $\mbox{\boldmath $\alpha$}$ and $\mbox{\boldmath $\beta$}$
are linearly independent. We show that without loss of generality
 we may assume $\mbox{\boldmath $\gamma$} = \mathbf{0}$.
Let $\mbox{\boldmath $\gamma$} = a \mbox{\boldmath $\alpha$} + b
\mbox{\boldmath $\beta$}$.
Then $\langle \mbox{\boldmath $\gamma$}, \mbox{\boldmath $\gamma$} \rangle=0$.
If either $a =0$ or $b=0$, we can combine the term
$\mbox{\boldmath $\gamma$}^{\otimes 3}$ with either
$\mbox{\boldmath $\beta$}^{\otimes 3}$ or $\mbox{\boldmath $\alpha$}^{\otimes 3}$
respectively, and the term $\mbox{\boldmath $\gamma$}^{\otimes 3}$ disappears.
If both $a, b \not = 0$. By
$\langle \mbox{\boldmath $\alpha$, $\beta$} \rangle=0$,
     $\langle \mbox{\boldmath $\alpha$, $\gamma$} \rangle=0$,
 $\langle \mbox{\boldmath $\beta$,  $\gamma$} \rangle=0$,
we get $a \langle  \mbox{\boldmath $\alpha$, $\alpha$} \rangle
= b \langle \mbox{\boldmath $\beta$, $\beta$} \rangle =0$. Hence
$\langle  \mbox{\boldmath $\alpha$, $\alpha$} \rangle
= \langle \mbox{\boldmath $\beta$, $\beta$} \rangle =0$.
This contradicts Lemma~\ref{orth-one-beta0-another-beta},
by linear independence.
Therefore in case $r=2$ we only need to consider
$ \mathbf{F}=\mbox{\boldmath $\alpha$}^{\otimes 3}
+ \mbox{\boldmath $\beta$}^{\otimes 3}$,
and  $\mbox{\boldmath $\alpha$}$ and $\mbox{\boldmath $\beta$}$
are linearly independent.

By Lemma~\ref{orth-one-beta0-another-beta}
 $\mbox{\boldmath $\alpha$}$ and $\mbox{\boldmath $\beta$}$
can not be both  isotropic.
Suppose one of them is isotropic.
By Lemma~\ref{orth-one-beta0-another-e3},
$\mathbf{F}$ takes the form
\begin{equation}\label{rank=2-case-1-iso}
\mbox{\boldmath $\beta_0$}^{\otimes 3}
+  \lambda  \mbox{\boldmath $e_3$}^{\otimes 3}
\end{equation}
under an orthogonal transformation.

If $r=2$ and both $\mbox{\boldmath $\alpha$}$ and $\mbox{\boldmath $\beta$}$ are
not isotropic, then there exists an orthogonal matrix $T$
such that
$T \mbox{\boldmath $\alpha$} = \lambda \mbox{\boldmath $e_1$}$ and
$T \mbox{\boldmath $\beta$} = \mu \mbox{\boldmath $e_2$}$, thus
$\mathbf{F}$ takes the form
\begin{equation}\label{rank=2-case-1-not-iso}
\lambda \mbox{\boldmath $e_1$}^{\otimes 3} +  \mu \mbox{\boldmath $e_2$}^{\otimes 3}
\end{equation}
under an orthogonal transformation.

Now suppose $r=3$. We claim none of
$\mbox{\boldmath $\alpha$,
$\beta$, and $\gamma$}$
can be isotropic.
Otherwise, say $\mbox{\boldmath $\alpha$}$ is isotropic,
then the linearly independent set
$\{\mbox{\boldmath $\alpha$,
$\beta$,  $\gamma$}\}$ spans the conjugate vector
 $\overline{\mbox{\boldmath $\alpha$}}$. Then it follows that
$\langle \mbox{\boldmath $\alpha$}, \overline{\mbox{\boldmath $\alpha$}} \rangle =0$
and $\mbox{\boldmath $\alpha$} = \mathbf{0}$, a contradiction.
Hence,
under an orthogonal transformation $\mathbf{F}$ takes the form
\begin{equation}\label{rank=3-case-1}
\lambda \mbox{\boldmath $e_1$}^{\otimes 3}
+  \mu \mbox{\boldmath $e_2$}^{\otimes 3}
+  \nu \mbox{\boldmath $e_3$}^{\otimes 3}
\end{equation}
%


\noindent
$\bullet$
There exist three vectors
$\mbox{\boldmath $\alpha$, $\beta_1$, and $\beta_2$}$
of dimension 3 such that
  $\langle \mbox{\boldmath $\alpha$, $\beta_1$} \rangle=0$,
  $\langle \mbox{\boldmath $\alpha$, $\beta_2$} \rangle=0$,
  $\langle \mbox{\boldmath $\beta_1$, $\beta_1$} \rangle=0$,
  $\langle \mbox{\boldmath $\beta_2$, $\beta_2$} \rangle=0$
and
  \[ \mathbf{F}=\mbox{\boldmath $\alpha$}^{\otimes 3} +
                 \mbox{\boldmath $\beta_1$}^{\otimes 3} +
                \mbox{\boldmath $\beta_2$}^{\otimes 3}.\]

Let $r = {\rm rank} \{\mbox{\boldmath $\beta_1$}, \mbox{\boldmath $\beta_2$}\}$.
If $r=0$, then $\mathbf{F}=\mbox{\boldmath $\alpha$}^{\otimes 3}$.
%
%
If $r=1$, we can combine the terms
$\mbox{\boldmath $\beta_1$}^{\otimes 3}$
and $\mbox{\boldmath $\beta_2$}^{\otimes 3}$,
and $\mathbf{F}$ takes the form
$\mbox{\boldmath $\alpha$}^{\otimes 3}
+ \mbox{\boldmath $\beta'$}^{\otimes 3}$,
with $\langle \mbox{\boldmath $\alpha$},
\mbox{\boldmath $\beta'$}  \rangle = 0$.
These cases have already been classified in the first form.
$\mathbf{F}$ takes the forms in (\ref{rank=1-case-1}), (\ref{rank=2-case-1-iso})
or (\ref{rank=2-case-1-not-iso}).

 Suppose $r=2$.
By Lemma~\ref{orth-one-beta0-another-beta},
 for a suitable non-zero constant
$\lambda = 1/\sqrt{\langle \mbox{\boldmath $\beta_1$},
\mbox{\boldmath $\beta_2$}  \rangle}$,
there exists an orthogonal matrix $T$
such that
$\lambda T \mbox{\boldmath $\beta_1$} = \mbox{\boldmath $\beta_0$}$
and $\lambda T \mbox{\boldmath $\beta_2$}  =
\overline{\mbox{\boldmath $\beta_0$}}$.
Under this transformation $\lambda T$,
$\mbox{\boldmath $\alpha$}$ is orthogonal to $\mbox{\boldmath $e_1$}$ and
$\mbox{\boldmath $e_2$}$
which are in the linear span of $\mbox{\boldmath $\beta_0$}$
and $\overline{\mbox{\boldmath $\beta_0$}}$.
Hence $\mbox{\boldmath $\alpha$}$ takes the form $c \mbox{\boldmath $e_3$}$.

We have proved that in this case, for some non-zero constant
$\lambda$ and orthogonal matrix $T$,
\begin{equation}\label{beta-beta-bar+e3}
\lambda T \mathbf{F}= \mbox{\boldmath $\beta_0$}^{\otimes 3} +
\overline{\mbox{\boldmath $\beta_0$}}^{\otimes 3} + c
\mbox{\boldmath $e_3$}^{\otimes 3}.
\end{equation}

\noindent
$\bullet$
There exist two vectors $\mbox{\boldmath $\beta$ and $\gamma$}$
of dimension 3 and a (symmetric) function
${\mathbf{F}}_{\beta}$
 of arity three,
such that $\mbox{\boldmath $\beta$} \neq \mathbf{0}$,
$\langle \mbox{\boldmath $\beta$, $\beta$} \rangle=0$,
$\langle \mathbf{F}_{\beta}, \mbox{\boldmath $\beta$}
 \rangle = \mathbf{0}$ and
  \[ \mathbf{F}= \mathbf{F}_{\beta}
+ \mbox{\boldmath $\beta$}^{\otimes 2} \otimes \mbox{\boldmath $\gamma$}
+ \mbox{\boldmath $\beta$} \otimes \mbox{\boldmath $\gamma$} \otimes
\mbox{\boldmath $\beta$}
+ \mbox{\boldmath $\gamma$}  \otimes  \mbox{\boldmath $\beta$}^{\otimes 2}.\]

First we note that $\mbox{\boldmath $\beta$}^{\otimes 3}$ also satisfies
the annihilation condition,
$\langle \mathbf{F}_{\beta}, \mbox{\boldmath $\beta$} \rangle = \mathbf{0}$,
and can be combined to $\mathbf{F}_{\beta}$.
Hence we can replace $\mbox{\boldmath $\gamma$}$ by any
$\mbox{\boldmath $\gamma$} + \lambda \mbox{\boldmath $\beta$}$.

There are  the following cases, depending on whether
$\langle \mbox{\boldmath $\beta$},
\mbox{\boldmath $\gamma$} \rangle =0$
and whether  $\mbox{\boldmath $\gamma$}$
is isotropic.

Suppose
$\langle \mbox{\boldmath $\beta$},
\mbox{\boldmath $\gamma$} \rangle =0$.
Then we can eliminate the terms
$\mbox{\boldmath $\beta$}^{\otimes 2} \otimes
\mbox{\boldmath $\gamma$}
+ \mbox{\boldmath $\beta$} \otimes \mbox{\boldmath $\gamma$} \otimes
\mbox{\boldmath $\beta$}
+ \mbox{\boldmath $\gamma$} \otimes \mbox{\boldmath $\beta$}^{\otimes 2}$
by combining
it  to  $\mathbf{F}_{\beta}$.
We can transform $\mbox{\boldmath $\beta$}$ to
$\mbox{\boldmath $\beta_0$}$.
In this case, $\mathbf{F}$ takes the form
\begin{equation}\label{singleF0}
T \mathbf{F} = \mathbf{F}_{{\beta}_0}
\end{equation}
where $\langle \mathbf{F}_{{\beta}_0}, \mbox{\boldmath $\beta_0$} \rangle
= \mathbf{0}$.

Suppose $\mbox{\boldmath $\gamma$}$ is isotropic and
$\langle \mbox{\boldmath $\beta$},
\mbox{\boldmath $\gamma$} \rangle \not = 0$.
Then $\mbox{\boldmath $\beta$}$ and $\mbox{\boldmath $\gamma$}$
are linearly independent.
By
Lemma~\ref{orth-one-beta0-another-beta}
there exists
an orthogonal matrix $T$
such that
\[T \mathbf{F} = \mathbf{F}_{\mathbf{\beta}_0} +
\lambda (\mbox{\boldmath $\beta_0$}^{\otimes 2} \otimes
\overline{\mbox{\boldmath $\beta_0$}}
+ \mbox{\boldmath $\beta_0$} \otimes  \overline{\mbox{\boldmath $\beta_0$}}
 \otimes  \mbox{\boldmath $\beta_0$}
+  \overline{\mbox{\boldmath $\beta_0$}} \otimes
\mbox{\boldmath $\beta_0$}^{\otimes 2}),\]
where $\lambda = \langle \mbox{\boldmath $\beta$},
\mbox{\boldmath $\gamma$} \rangle \not = 0$, and
 $\langle \mathbf{F}_{\beta_0},  \mbox{\boldmath $\beta_0$}
\rangle  =0$.
Let $T_2 = \begin{bmatrix} c & s \\
                        -s & c
\end{bmatrix}$, where $c = \cos z$ and $s = \sin z$. Then
$T_2$ maps
$\begin{bmatrix} 1 \\ i \end{bmatrix}$
to $(c + si)
\begin{bmatrix} 1 \\ i \end{bmatrix}$
and maps
$\begin{bmatrix} 1 \\ -i \end{bmatrix}$
to $(c -si)
\begin{bmatrix} 1 \\ -i \end{bmatrix}$.
To each term in
\[\mbox{\boldmath $\beta_0$}^{\otimes 2} \otimes
\overline{\mbox{\boldmath $\beta_0$}}
+ \mbox{\boldmath $\beta_0$} \otimes  \overline{\mbox{\boldmath $\beta_0$}}
 \otimes  \mbox{\boldmath $\beta_0$}
+  \overline{\mbox{\boldmath $\beta_0$}} \otimes
\mbox{\boldmath $\beta_0$}^{\otimes 2},\]
${\rm diag}( T_2, 1 )^{\otimes 3}$
contributes a factor $(c + si)^2 (c -si) = c + si = e^z$,
which can be an arbitrarily chosen non-zero complex number.
In particular we can set it to $1/\lambda$.
Also note that ${\rm diag}( T_2, 1 )^{\otimes 3}$
transforms $\mathbf{F}_{\beta_0}$ to another
such function satisfying the annihilation condition
$\langle \mathbf{F}_{\beta_0}, \mbox{\boldmath $\beta_0$}
\rangle = \mathbf{0}$.
Thus we obtain the form of $\mathbf{F}$ under an orthogonal transformation
\begin{equation}\label{F0plus}
\mathbf{F}_{\beta_0} +
\mbox{\boldmath $\beta_0$}^{\otimes 2} \otimes
\overline{\mbox{\boldmath $\beta_0$}}
+ \mbox{\boldmath $\beta_0$} \otimes
\overline{\mbox{\boldmath $\beta_0$}}
 \otimes  \mbox{\boldmath $\beta_0$}
+  \overline{\mbox{\boldmath $\beta_0$}}
\otimes  \mbox{\boldmath $\beta_0$}^{\otimes 2}.
\end{equation}


Suppose $\mbox{\boldmath $\gamma$}$ is not isotropic and
$\langle \mbox{\boldmath $\beta$},
\mbox{\boldmath $\gamma$} \rangle \not = 0$.
Then we replace $\mbox{\boldmath $\gamma$}$
by
$\mbox{\boldmath $\gamma$} - c \mbox{\boldmath $\beta$}$,
where $c = \langle \mbox{\boldmath $\gamma$}, \mbox{\boldmath $\gamma$}
\rangle/
 (2 \langle \mbox{\boldmath $\beta$}, \mbox{\boldmath $\gamma$} \rangle)$.
Then $\mbox{\boldmath $\gamma$} - c \mbox{\boldmath $\beta$}$ is isotropic
and we have reduced to the previous case.

Summarizing, we note that  (\ref{rank=1-case-1})
and (\ref{rank=2-case-1-iso}) are special cases of
(\ref{singleF0}).
(\ref{rank=2-case-1-not-iso}) is a special case of
(\ref{rank=3-case-1}).
Then it is clear that Theorem~\ref{thm:normal-forms}
is equivalent to Theorem~\ref{thm:ternary}.

%

\section{Tractability}\label{section:tractability}
Suppose $\mathbf{F} = [3; 1, 1; 5, 1, 3; 7, 5, 1, 1]$.
Is Holant$^*(\mathbf{F})$ computable in polynomial time?
It turns out that there are three pairwise orthogonal
vectors
$(1, -1, 1)^{\tt T}, (1, 0, -1)^{\tt T}$  and $(1, 2, 1)^{\tt T}$
such that
\( \mathbf{F}
= {\scriptsize \begin{bmatrix} 1 \\ -1 \\ 1 \end{bmatrix}^{\otimes 3} +
   \begin{bmatrix} 1 \\ 0 \\ -1 \end{bmatrix}^{\otimes 3} +
    \begin{bmatrix} 1 \\ 2 \\ 1 \end{bmatrix}^{\otimes 3}}.\)
By Theorem~\ref{thm:ternary}, Holant$^*(\mathbf{F})$ is tractable.
If we take $T = \frac{1}{\sqrt{6}} {\scriptsize
\begin{bmatrix} \sqrt{2} & \sqrt{3} & 1 \\
                - \sqrt{2} & 0      & 2\\
                \sqrt{2} & -\sqrt{3} & 1
\end{bmatrix}}$, then $T$ is orthogonal, and
$\mathbf{F} = T^{\otimes 3} \mathbf{F}'$,
where $\mathbf{F}' = \sqrt{27} \mbox{\boldmath $e_1$}^{\otimes 3} +
               \sqrt{8} \mbox{\boldmath $e_2$}^{\otimes 3} +
               \sqrt{216} \mbox{\boldmath $e_3$}^{\otimes 3}$.
Hence we can perform an orthogonal transformation by $T$,
then the  problem Holant$^*(\mathbf{F})$
is transformed to Holant$^*(\mathbf{F}')$.
For $\mathbf{F}'$ the polynomial time algorithm on any input graph $\Gamma$ is
simple:  In each connected component of $\Gamma$, any color from $\{B, G, R\}$
at a vertex $v$ uniquely determines the same color at all its neighbors,
and the vertex contributes a factor $\sqrt{27}$ or $\sqrt{8}$ or $\sqrt{216}$
respectively.  These values are multiplied over the connected component.
Thus, if $G$ has connected components $C_1, C_2, \ldots, C_k$,
and $C_j$ has $n_j$ vertices, then the Holant values is
$\prod_{1 \le j \le k} (\sqrt{27}^{n_j} + \sqrt{8}^{n_j} + \sqrt{216}^{n_j})$.

We believe for countless such questions, not only the problem
is very natural, but also the answer is not obvious without the underlying theory.
Note that even though the function $\mathbf{F}$
above takes only positive values,
the vectors can have negative entries.
Armed with the dichotomy theorem, any interested reader can find
many more examples.

In this section we prove that
Holant$^*(\mathbf{F})$ is computable in polynomial time,
for any symmetric ternary function
 $\mathbf{F}$ given in the three forms of Theorem~\ref{thm:ternary},
or equivalently Theorem~\ref{thm:normal-forms}.

For any $3 \times  3$ orthogonal matrix $T$, it keeps the binary equality
$(=_2)$ over $\{B, G, R\}$ unchanged,
namely $T^{\tt T} I_3 T = I_3$ in matrix notation.
Hence Holant$^*(\mathbf{F})$ is tractable iff
Holant$^*(T^{\otimes 3} \mathbf{F})$ is
 tractable.

The above argument
proves that Holant$^*(\mathbf{F})$ is computable in polynomial time
if $\mathbf{F}$
 has form {\it 1.}
\[ a \mbox{\boldmath $e_1$}^{\otimes 3} +
               b \mbox{\boldmath $e_2$}^{\otimes 3} +
               c \mbox{\boldmath $e_3$}^{\otimes 3}.\]

In form {\it 2.}, let $\mathbf{F}$ be
\[\mbox{\boldmath $\beta_0$}^{\otimes 3} +
\overline{\mbox{\boldmath $\beta_0$}}^{\otimes 3} +
 \lambda  \mbox{\boldmath $e_3$}^{\otimes 3}.\]

Under the
matrix
$M = {\scriptsize \begin{bmatrix} Z^{-1}  & 0 \\
                 0  & 1 \end{bmatrix}}$, where
${\scriptsize Z=\frac{1}{\sqrt{2}}\begin{bmatrix} 1 & 1 \\ i & -i \end{bmatrix}}$,
${\scriptsize Z^{-1} = \frac{1}{\sqrt{2}}\begin{bmatrix} 1 & -i \\ 1 & i \end{bmatrix}}$,
the function $\mathbf{F}$ is transformed to
\[ M^{\otimes 3} \mathbf{F}
= \mbox{\boldmath $e_1$}^{\otimes 3} +
                \mbox{\boldmath $e_2$}^{\otimes 3} +
                 \lambda \mbox{\boldmath $e_3$}^{\otimes 3}.\]
Meanwhile the covariant transformation on the binary equality is
$ (=_2) (M^{-1})^{\otimes 2}$, which has the matrix form
$(M^{-1})^{\tt T} I M^{-1} = {\tiny \begin{bmatrix}  0 & 1 & 0 \\
                  1 & 0 & 0 \\
                  0 & 0 & 1\end{bmatrix}}$.
This  can be viewed as  a  Disequality on $\{B, G\}$ and
Equality on $\{R\}$, with a separated domain.
%
Now it is clear that Holant$^*(\mathbf{F})$ is computable in polynomial time
by a connectivity argument. Within each connected component,
any assignment of $R$ will be uniquely propagated as $R$; any
assignment of $B$ or $G$  will be exchanged to  $G$ or $B$
along every edge.

The proof of tractability for form {\it 3.} is more involved.
We refer to the more generic expression of form 3 in
Theorem~\ref{thm:ternary}.
First, under an  orthogonal transformation we may assume
$\mbox{\boldmath $\beta$} = \begin{bmatrix} 1 & i & 0 \end{bmatrix}^{\tt T}$.
The function $\mathbf{F}$ is expressed as a sum
$S +
\mbox{\boldmath $\beta$}^{\otimes 2} \otimes \mbox{\boldmath $\gamma$}
+ \mbox{\boldmath $\beta$} \otimes \mbox{\boldmath $\gamma$} \otimes
\mbox{\boldmath $\beta$}
+ \mbox{\boldmath $\gamma$}  \otimes  \mbox{\boldmath $\beta$}^{\otimes 2}$,
where $\langle S, \mbox{\boldmath $\beta$}
 \rangle = \mathbf{0}$.
We denote by $T_0 = S$, and $T_j$ for the remaining
three terms respectively, $1 \le j \le 3$.
The value Holant$^*(\mathbf{F})$ is
the sum over all $\{B,G,R\}$ edge assignments,
$\sum_\sigma \prod_v  f_v (\sigma \mid_{E(v)})$,
where $E(v)$ are the edges incident to $v$,  and  all $f_v$ are the
function $\mathbf{F}$, or some unary function.

Without loss of generality, we can assume the input graph is
connected. In the first step, we handle all vertices of degree one.
Such a vertex $v$ is connected to another vertex $p$ of degree $d$.
We can calculate a function of arity $d-1$ by combining
the unary function at  $v$ with the function at $p$.
This is a symmetric function and we can replace
the vertex $p$ together with $v$
 by a vertex $q$ of degree $d-1$ and given this function.
If $d=1$, since the graph is connected, there is no
vertex left and we have computed the value of the problem.
If $d=2$,
the new function at $q$  is a unary function.
If $d=3$, then $f_p$ is $\mathbf{F}$.
We may repeat this process until all vertices are of degree 2 or 3
and given either $\mathbf{F}$ or
$\langle \mathbf{u}, \mathbf{F} \rangle
= \sum_{j=0}^3 T'_j$ for some unary $\mathbf{u}$, where
$T'_j =  \langle \mathbf{u}, T_j \rangle$.

For every vertex $v$ of degree 2 or 3, we can express the function  $f_v$
as $\sum_{j=0}^3 T'_j$ or $\sum_{j=0}^{3} T_j$ with
the incident edges assigned as
(ordered) input variables to each $T'_j$ or $T_j$.
(Note that $T'_j$  and $T_j$ are in general
not symmetric, for $1 \le j \le 3$.)
Then Holant$^*(\mathbf{F})
 =\sum_\tau \sum_\sigma \prod_{v}
f_{v,\tau(v)} (\sigma \mid_{E(v)})$,
where the first summation
is over all assignments $\tau$ from all vertices $v \in V$
to some  $j = \tau(v) \in \{0,1,2,3\}$ which assigns a copy of $T'_j$ or $T_j$
as $f_{v,\tau(v)}$ at $v$.

We are given that
$\langle \mbox{\boldmath $\beta$}, T_0
 \rangle = \mathbf{0}$, then
$\langle \mbox{\boldmath $\beta$}, T'_0 \rangle = \mathbf{0}$ as well.
Meanwhile
$T'_1 = c_1 \mbox{\boldmath $\beta$}^{\otimes 2}$,
$T'_2 = c_2 \mbox{\boldmath $\beta$} \otimes \mbox{\boldmath $\gamma$}$,
and
$T'_3 = c_3 \mbox{\boldmath $\gamma$} \otimes \mbox{\boldmath $\beta$}$,
where the constants
$c_1 = \langle \mathbf{u},  \mbox{\boldmath $\gamma$} \rangle$,
and $c_2 = c_3 = \langle \mathbf{u},  \mbox{\boldmath $\beta$} \rangle$.
Note that $T'_j$ and $T_j$, for $1 \le j \le 3$,
are all degenerate functions, and can be decomposed as unary functions.
We also note that  they all have at least as many copies of
$\mbox{\boldmath $\beta$}$ as $\mbox{\boldmath $\gamma$}$.

Fix any $\tau$, let $\mathcal{S}$ (resp. $\mathcal{T}$)
denote the set of vertices which are assigned  the function $T_0$ or $T'_0$
(resp. $T_j$ or $T'_j$, with $1 \le j \le 3$) by $\tau$.
Suppose neither $\mathcal{S}$ nor $\mathcal{T}$ is empty.
Then by connectedness,
there are edges between  $\mathcal{S}$ and $\mathcal{T}$.
All functions in $\mathcal{T}$ are decomposed into
unary functions. There are at least as many copies of
$\mbox{\boldmath $\beta$}$ as $\mbox{\boldmath $\gamma$}$.
Some of these functions may be paired up by edges inside
$\mathcal{T}$. If any two copies of $\mbox{\boldmath $\beta$}$
are paired up, the product is zero.
If every copy of  $\mbox{\boldmath $\beta$}$
is paired up with some $\mbox{\boldmath $\gamma$}$ within $\mathcal{T}$,
then at least
one copy of $\mbox{\boldmath $\beta$}$
is connected to some vertex in $\mathcal{S}$.
But every function in $\mathcal{S}$ is annihilated by
$\mbox{\boldmath $\beta$}$.  Hence the total contribution for such $\tau$
to Holant$^*(\mathbf{F})$ is zero  when
$\mathcal{S}$ and  $\mathcal{T}$ are both non-empty.

Now consider
$\sum_\sigma \prod_v f_{v,\tau(v)}(\sigma \mid_{E(v)})$
for those $\tau$ such that either $\mathcal{S}$ or $\mathcal{T}$ is
empty. Suppose $\mathcal{S} = \emptyset$.
Again we decompose every function in $\mathcal{T}$
into unary functions. Then in order to be non-zero,
the number of $\mbox{\boldmath $\beta$}$ and
$\mbox{\boldmath $\gamma$}$ must be exactly equal.
Hence if there is any vertex of degree 3, the contribution
is 0. We only need to
consider a connected graph such that all vertices have degree 2, which is a
cycle. Because each $\mbox{\boldmath $\beta$}$  must be
paired up exactly with $\mbox{\boldmath $\gamma$}$,
We only
need to calculate the  sum
$\sum_\sigma \prod_v f_{v,\tau(v)}(\sigma \mid_{E(v)})$ for two $\tau$,
which is tractable,
since the graph is just a cycle.

Finally suppose $\mathcal{T} = \emptyset$.
Then there is only one assignment $\tau$ which assigns
$T_0$ and $T'_0$ to every vertex of degree 3 and 2 respectively.
Consider all edge assignments
$\sigma$. Suppose $E=\{e_1,e_2, \ldots, e_m\}$ is the
edge set, and
$e_1=(p,q)$. All assignments $\sigma$ are divided into 3 sets $\Sigma_B$,
$\Sigma_G$ or $\Sigma_R$, according to the value $\sigma(e_1) = B$,  $G$
or $R$, respectively.
There is a natural one-to-one mapping $\phi$ from $\Sigma_B$ to
$\Sigma_G$, such that $(\phi(\sigma))(e_j)= \sigma (e_j)$ for
$j=2,\ldots,m$. Let $\theta (\sigma)$ denote $\prod_v
f_{v,\tau(v)}(\sigma \mid_{E(v)})$,
where $E(v)$ are the edges incident to $v$.
Notice that at all $v \not = p, q$,
the value of $f_{v,\tau(v)}$ is the same for $\sigma$ and $\phi(\sigma)$.
But at $v = p, q$,
$f_{v,\tau(v)}(\phi(\sigma) \mid_{E(v)})
= i f_{v,\tau(v)}(\sigma \mid_{E(v)})$.
This can be directly verified.
%
Hence $\theta (\phi(\sigma))=-\theta (\sigma)$.
Therefore we only need to
calculate $\theta (\sigma)$ for $\sigma$ in $\Sigma_R$. We can use
$\sigma(e_2)$ to divide $\Sigma_R$ into 3 sets, to repeat this
process. At last, we only need to calculate $\theta (\sigma)$ for
the single $\sigma$ mapping every edge to $R$. This
concludes the proof of tractability.

\section{\#P-hardness}\label{section:hardness}

The starting point of our hardness proof is the dichotomy for
 Holant$^*(\mathbf{F})$ problems on the Boolean domain.
A natural hope is that Holant$^*(\mathbf{F})$ is \#P-hard if the Boolean
domain  Holant$^*$ problem for the function $\mathbf{F}^{*\rightarrow \{G,R\}}$, which is the restriction of the function $\mathbf{F}$ to the two-element subdomain $\{G,R\}$,   is already \#P-hard.
But this statement is false when stated in such full generality,
 as we can easily construct an $\mathbf{F}$ such that Holant$^*(\mathbf{F})$ is tractable while Holant$^*(\mathbf{F}^{*\rightarrow \{G,R\}})$ is \#P-hard
(e.g., the first example in Section~\ref{section:tractability}).
However, this would be true if we have another
special binary function $(=_{G,R})
   =
{\tiny \begin{bmatrix}
   0 & 0 & 0 \\
   0 & 1 & 0 \\
   0 & 0 & 1
   \end{bmatrix}}
 $.
 The reduction is straightforward: Given an instance $G$ of Holant$^*(\mathbf{F}^{*\rightarrow \{G,R\}})$, we construct an instance of Holant$^*(\mathbf{F})$ by inserting a vertex into each edge of $G$ and assigning the binary function $=_{G,R}$ to these vertices.  The binary function $=_{G,R}$ in each edge acts as an equality function in the Boolean subdomain  $\{G,R\}$
while any assignment of $B$ anywhere produces a zero.

Therefore, our first main step (from Section \ref{sec:rank2} to \ref{Interpolation-subsection}) is to construct the function  $=_{G,R}$.  If we can construct a non-degenerate binary function with the form ${\tiny
\begin{bmatrix}
   0 & 0 & 0 \\
   0 & * & * \\
   0 & * & *
   \end{bmatrix}
}$, we can use interpolation to interpolate $=_{G,R}$ by a chain of copies
of the above binary function as showed in Section \ref{Interpolation-subsection}. The remaining task is to realize such a binary function.

However we find that it is difficult or impossible
 to realize it directly by gadget construction in most cases. Here we use the idea of holographic reduction. As shown
 in the tractability part, holographic reduction plays an essential role  there
in developing polynomial algorithms. It also plays an
important role in the hardness proof part as a method to normalize functions. We can always apply an orthogonal holographic transformation to
a signature function without changing its complexity as shown in Theorem \ref{thm:orthogoal}.
 If we can realize a binary function with rank 2, which can be constructed
 directly with the help of unary functions (see Lemma \ref{lemma:just-rank2}), then we can hope
to use a holographic reduction to transform the binary function to the
above form. This fits well with the idea of holographic reduction. A binary function with rank 2 shows that there is a hidden structure with a domain of size 2. The holographic reduction mixes the domain elements in a suitable way
so that this hidden  Boolean subdomain becomes explicit.

There are certain rank 2 matrices such as
  ${\tiny \begin{bmatrix} 0 & 0 & 1 \\
   0 & 0 & i \\
   1 & i & 0
   \end{bmatrix}}$, for which an orthogonal holographic transformation does not exist. The reason is that the eigenvector of this matrix corresponding to the eigenvalue 0 is isotropic.
   We shall handle such cases in Lemma \ref{non-iso-rank2}.
   This is the first place where isotropic vectors present some obstacle to our proof. There are several places throughout the entire proof, where we have to deal with  isotropic vectors separately. There are two reasons: (1) For
an isotropic vector, we cannot normalize it to a unit vector by an orthogonal
 transformation; (2) There are indeed additional tractable functions which are related to isotropic vectors.
Consequently we have to circumvent this obstacle presented by the
isotropic eigenvectors.

Additionally,
there are some exceptional cases where the above process cannot go through.
For these cases, we either prove the hardness result directly or
show that it belongs to one of the three forms in Theorem \ref{thm:ternary}.
In the second main step (from Section \ref{sec:domain2} to \ref{sec:0000}), we assume that we are already given $=_{G,R}$ and we
further prove that Holant$^*(\mathbf{F})$ is \#P-hard
if $\mathbf{F}$ is not of one of the three forms in Theorem \ref{thm:ternary}.

Given $=_{G,R}$, Holant$^*(\mathbf{F})$ is \#P-hard if Holant$^*(\mathbf{F}^{*\rightarrow \{G,R\}})$ is \#P-hard, which we use our previous dichotomy
for Boolean Holant$^*$ to determine. Hence we may assume that $\mathbf{F}^{*\rightarrow \{G,R\}}$ takes a tractable form. At this point, we employ holographic reduction to normalize our function further. But we should be careful here since we do not want the transformation to destroy $=_{G,R}$. We introduce the
idea of a domain separated holographic reduction. A basis for a domain separated holographic transformation is of the form
${\tiny \begin{bmatrix} * & 0 & 0 \\
   0 & * & * \\
   0 & * & *
   \end{bmatrix}}$, which mixes up the subdomain $\{G,R\}$ while keeping $B$ separate. In particular,
such orthogonal holographic transformations preserve $=_{G,R}$.

For example, when $\mathbf{F}^{*\rightarrow \{G,R\}}$ is a non-degenerate Fibonacci signature with two distinct roots (Case 1 in Section~\ref{sec:domain2}), we can apply an orthogonal holographic transformation of this form so that $\mathbf{F}$ is transformed to
\begin{center}
{\scriptsize
\begin{tabular}{*{11}{c}c}
{ }     & { }   & { }   &$F_{BBB}$& { }     & { }   & { }  \\
{ }     & { }   &$F_{BBG}$& { } &$F_{BBR}$& { }     & { }  \\
{ }     &$F_{BGG}$& { } &  $F_{BGR}$& { }   &$F_{BRR}$& { }  \\
 $a$& { } &$0$& { } & $0$& { } &   $b$  \\
\end{tabular}
}
\end{center}
According to the Holant$^*$ dichotomy on domain size 2,
when putting this $\mathbf{F}^{*\rightarrow \{G,R\}}=[a, 0, 0, b]$ and a binary
function together, the problem is \#P-hard unless the binary
function is of the form $[*, 0, *]$, $[0,*,0]$, or degenerate.
We shall prove that we can always construct a binary function which is not of these forms unless the function $\mathbf{F}$ has an {\it uncanny} regularity such that it is one of the forms in Theorem \ref{thm:ternary}.

One idea greatly simplifies our argument in this part. By gadget construction, we can realize some binary functions with some parameters, which we can
set freely to any complex number.
Then we want to prove that we can set these parameters suitably so that the signature escapes from all the known
tractable forms. This is quite difficult since different values may make the signature  belong to different tractable forms. A nice observation here is that the condition that a binary signature belongs a particular form say  $[*, 0, *]$ can be described by the zero set of a polynomial.
Thus these values form an algebraic set.
To escape from a finite union of such sets,
it is sufficient to prove that for every form, we can set these parameters
to escape from this particular form. We call this the {\it polynomial
argument}.

The spirit of the proof for all the other tractable non-degenerate ternary
forms for $\mathbf{F}^{*\rightarrow \{G,R\}}$ is similar although the details are very different (there are three cases in Section \ref{sec:domain2}). In particular, we need to employ
a non-orthogonal holographic transformation ${\scriptsize
\begin{bmatrix} 1 & \mathbf{0} \\ \mathbf{0} & Z \end{bmatrix}}$ where
$Z={\scriptsize \frac{1}{\sqrt{2}}\begin{bmatrix} 1 & 1 \\ i & -i \end{bmatrix}}$. This transformation does not preserve $=_{G,R}$, rather it
transforms $=_{G,R}$ to $(\neq_{G,R})={\tiny \begin{bmatrix}
   0 & 0 & 0 \\
   0 & 0 & 1 \\
   0 & 1 & 0
   \end{bmatrix}}$.

When the ternary signature   $\mathbf{F}^{*\rightarrow \{G,R\}}$ is degenerate, the proof structure is quite different (from Section \ref{bottom1000} to \ref{sec:0000}). The reason is that  any set of binary functions are tractable in the Holant framework. So we have to construct a non-degenerate signature with arity at least three. It is quite difficult to construct a totally symmetric function with high arity except with some simple gadgets such as a star or a triangle. These gadgets work for some signatures but fail
for others. Due to this difficulty, we employ unsymmetric gadgets too. Fortunately, we also have a dichotomy for unsymmetric Holant$^*$ problems in
the Boolean domain~\cite{SODA11}. Since the dichotomy for this
more general Boolean Holant$^*$ is more complicated, we use a different proof strategy here. We only show the existence of  a non-degenerate signature with arity at least three, but do not analyze all possible forms case-by-case. We instead prove that we can always construct some binary signature in
addition to the higher arity one, which makes the problem hard no matter what the high arity signature is, provided that $\mathbf{F}$ is not one of the
tractable cases.

For a particular family of signatures which can be normalized to
the following form:
\begin{center}
{\scriptsize \begin{tabular}{*{11}{c}c}
{ }     & { }   & { }   &$0$& { }     & { }   & { }  \\
{ }     & { }   &$ix$& { } &$x$& { }     & { }  \\
{ }     &$0$& { } &  $0$& { }   &$0$& { }  \\
 $1$& { } &$i$& { } & $-1$& { } &   $-i$.  \\
\end{tabular}
}
\end{center}
where two isotropic vectors $(1, i)$ and $(1, -i)$ interact in an
unfavorable way,
we have to use a different argument (See the last case in Section \ref{Degenerate-Rank1-Isotropic}).
Due to its special structure, we have to use a different hard problem
to reduce from, namely the problem of counting perfect matchings on
3-regular graphs. This problem   is \#P-hard.
(This problem is tractable over
planar graphs by the FKT algorithm, the underlying algorithm for
matchgate based holographic algorithms~\cite{HA_J,AA_FOCS}.
This also indicates that the holographic reduction theory
developed here is distinct from that theory.)
Counting perfect matchings on
3-regular graphs as  a \#P-hard problem is also used
in Section~\ref{sec:0000}
when $\mathbf{F}^{*\rightarrow\{G,R\}} = [0,0,0,0]$ is identically 0.

\subsection{Realize a Rank 2 Binary Function}\label{sec:rank2}
\begin{theorem}\label{theorem6.1-get=_GR}
Let $\mathbf{F}$ be a symmetric ternary function over domain $\{B,G,R\}$.
 Then one of the following is true:
 \begin{enumerate}
   \item $\mathbf{F}$ is of one of the forms in Theorem \ref{thm:ternary}, and Holant$^*(\mathbf{F})$ is in P;
   \item Holant$^*(\mathbf{F})$ is \#P-hard;
   \item There exists an orthogonal $3 \times 3$ matrix $T$ such that Holant$^*(\mathbf{F})$
is polynomial time
 equivalent to
\\
 Holant$^*(\{ T^{\otimes 3}\mathbf{F}, =_{G,R} \})$.
 \end{enumerate}
 \end{theorem}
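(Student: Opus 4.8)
The plan is to establish Theorem~\ref{theorem6.1-get=_GR} as the ``get a useful binary gadget'' step: either $\mathbf{F}$ is already tractable (case 1), or we can prove $\#$P-hardness outright (case 2), or we can construct, up to an orthogonal holographic transformation, a non-degenerate rank~2 binary function of the separated shape ${\tiny \begin{bmatrix} 0 & 0 & 0 \\ 0 & * & * \\ 0 & * & * \end{bmatrix}}$, from which (by the interpolation argument promised in Section~\ref{Interpolation-subsection}) we can realize $=_{G,R}$; this gives case 3. So the real content is: \emph{assuming $\mathbf{F}$ is non-degenerate and not yet classified as hard, produce some non-degenerate binary function}, and then \emph{massage it into the separated form by an orthogonal transformation}.

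First I would use the calculus of Section~\ref{section:calculus} to manufacture binary functions from $\mathbf{F}$: connecting a free unary $\mathbf{u}=(\alpha,\beta,\gamma)$ to one leg of $\mathbf{F}$ gives $\langle\mathbf{u},\mathbf{F}\rangle$, whose matrix entries are linear in $(\alpha,\beta,\gamma)$; more elaborate gadgets (e.g. the ``weed gadget'' of Figure~\ref{calculus1}, two copies of $\mathbf{F}$ joined through $=_{G,R}$-like connectors, triangles, paths) yield binary functions whose entries are polynomials in the parameters. By Fact~\ref{fact-nonzeropoly} (\methpoly), to show some gadget yields a \emph{non-degenerate} binary function it suffices to exhibit one parameter choice making the relevant $3\times3$ determinant (or a $2\times2$ minor) nonzero. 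The key structural dichotomy to prove here is: either every binary function realizable from $\mathbf{F}$ this way is degenerate --- which is an extremely rigid condition forcing $\mathbf{F}$ itself into a very special shape (a tensor power, or a degenerate-plus-low-rank structure), and these shapes I would check by hand fall into one of the three forms of Theorem~\ref{thm:ternary} (case 1) --- or we obtain a non-degenerate binary $\mathbf{B}$. Once we have a non-degenerate $\mathbf{B}$, if $\mathrm{rank}(\mathbf{B})=3$ we should, with a little more work (another gadget, or composing $\mathbf{B}$ with itself and subtracting via unary corrections), be able to also get a rank~2 non-degenerate binary function, or else directly reduce from a Boolean-domain hard problem.

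Next, given a non-degenerate rank~2 symmetric binary matrix $\mathbf{B}$, let $\mathbf{v}$ be a spanning vector of its one-dimensional kernel. If $\mathbf{v}$ is \emph{not} isotropic, then by Lemma~\ref{non-isotropic-orth-transf} there is an orthogonal $T$ with $T\mathbf{v}$ a nonzero multiple of $\mathbf{e_1}$; under the holographic transformation by $T$ (which, being orthogonal, preserves $=_2$ and does not change the complexity, by Theorem~\ref{thm:orthogoal}), $\mathbf{B}$ becomes a matrix annihilating $\mathbf{e_1}$, i.e. of the form ${\tiny \begin{bmatrix} 0 & 0 & 0 \\ 0 & * & * \\ 0 & * & * \end{bmatrix}}$ on the $\{G,R\}$ block, exactly what we want; together with $T^{\otimes3}\mathbf{F}$ this gives case~3. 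If instead $\mathbf{v}$ is isotropic, then $T\mathbf{v}$ can at best be brought to $\mbox{\boldmath $\beta_0$}=\frac1{\sqrt2}(1,i,0)^{\tt T}$ by Lemma~\ref{isotropic-orth-transf}, and the resulting $\mathbf{B}$ is of the ``bad'' type ${\tiny \begin{bmatrix} 0 & 0 & 1 \\ 0 & 0 & i \\ 1 & i & 0 \end{bmatrix}}$-flavor from which no orthogonal move produces the separated form. This isotropic-kernel case is the main obstacle; I would handle it by showing that in this situation we can combine $\mathbf{B}$ (or modified versions of it) with further gadgets built from $\mathbf{F}$ to \emph{either} produce a second rank-$2$ binary function with a non-isotropic kernel (then apply the previous paragraph), \emph{or} extract enough structure to force $\mathbf{F}$ into form~3 of Theorem~\ref{thm:ternary} (whose defining condition $\langle\mathbf{F}_\beta,\boldsymbol\beta\rangle=\mathbf{0}$ with $\langle\boldsymbol\beta,\boldsymbol\beta\rangle=0$ is precisely the fingerprint of an isotropic direction surviving all manipulations), \emph{or} directly reduce from a known $\#$P-hard Boolean Holant$^*$ problem. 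This is exactly the delicate interplay between isotropic vectors and the exceptional tractable family flagged in the introduction, and I expect the bulk of the case analysis --- and the need for \methpoly\ to dodge the finitely many tractable forms simultaneously --- to concentrate here.
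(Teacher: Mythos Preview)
Your high-level plan matches the paper's proof exactly: Lemma~\ref{lemma:just-rank2} produces a rank~2 binary $\langle\mathbf{u},\mathbf{F}\rangle$, Lemma~\ref{non-iso-rank2} upgrades it to one with non-isotropic kernel, and then an orthogonal $T$ sending that kernel vector to $\mathbf{e}_1$ followed by the interpolation of Section~\ref{Interpolation-subsection} gives $=_{G,R}$.

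However, two of your ``with a little more work'' steps are precisely where the paper's technical content lives, and your proposed arguments there are not quite right. First, your dichotomy ``either every $\langle\mathbf{u},\mathbf{F}\rangle$ is degenerate, which forces $\mathbf{F}$ tractable'' is not how the paper proceeds and would be awkward to prove directly. The paper's cleaner move: set $X=\mathbf{F}^{1=B}$, $Y=\mathbf{F}^{1=G}$, $Z=\mathbf{F}^{1=R}$ and test whether $\{X,Y,Z\}$ is linearly dependent. If $xX+yY+zZ=0$ for some nonzero $(x,y,z)$, then either $(x,y,z)$ is isotropic (so $\mathbf{F}$ is in form~3), or an orthogonal transform with first column $(x,y,z)/\|(x,y,z)\|$ kills one slice entirely, reducing to a Boolean Holant$^*$ problem where Theorem~\ref{thm:dich-sym-Boolean} finishes. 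Second, your ``rank~3 $\Rightarrow$ rank~2 with a little more work'' is the real substance of Lemma~\ref{lemma:just-rank2} in the linearly independent case: the paper does a careful case analysis on the ranks of $X,Y,Z$, using roots of $\det(xX+Z)=0$ and the observation that the sum of two linearly independent rank-1 symmetric matrices has rank exactly~2. Finally, for the isotropic-kernel case you correctly flag as the main obstacle, the paper's specific construction is worth knowing: after normalizing so the bad binary is $A={\tiny\begin{bmatrix}0&0&1\\0&0&i\\1&i&0\end{bmatrix}}$, one forms $C=xA+\langle\mathbf{v},\mathbf{F}\rangle$ and considers $CAC$, whose kernel is $C^{-1}(1,i,0)^{\tt T}$; a polynomial argument in $(x,\mathbf{v})$ shows this kernel can be made non-isotropic unless the entries of $\langle\mathbf{v},\mathbf{F}\rangle$ satisfy identities (for all $\mathbf{v}$) that force $\mathbf{F}$ into form~3 of Theorem~\ref{thm:ternary}.
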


The proof of  Theorem~\ref{theorem6.1-get=_GR}
is completed in Sections~\ref{sec:rank2} and \ref{Interpolation-subsection}.
In Section~\ref{sec:rank2} we prove that either one of
 the first two alternatives
in Theorem~\ref{theorem6.1-get=_GR}
holds, or we can construct a rank 2 binary symmetric function
$f$ in Holant$^*(\mathbf{F})$, such that
 the matrix form of $f$ has
a non-isotropic  eigenvector corresponding to the eigenvalue $0$.
(The eigenspace has dimension 1, so the eigenvector is essentially
unique.)
In Section~\ref{Interpolation-subsection} we use $f$ to get $=_{G,R}$
by holographic reduction and interpolation.

In Lemma~\ref{lemma:just-rank2} we first get
a rank 2 binary symmetric function
$f$ in Holant$^*(\mathbf{F})$.

%
%
%
%

\begin{lemma}\label{lemma:just-rank2}
If $\mathbf{F}$ does not take one of the three forms in Theorem
\ref{thm:ternary}, then we can either prove that Holant$^*(\mathbf{F})$
is \#P-hard or construct a binary symmetric  function $f$
from $\mathbf{F}$ by connecting
a unary
function to it, such that (the matrix form of) $f$ has  rank 2.
\end{lemma}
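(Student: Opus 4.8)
The plan is to analyze the linear map $\mathbf{u} \mapsto \langle \mathbf{u}, \mathbf{F}\rangle$ from unary signatures (vectors in $\mathbb{C}^3$) to binary symmetric signatures (symmetric $3\times 3$ matrices), and to show that unless its image is confined to a degenerate locus, some $\mathbf{u}$ produces a rank-2 matrix. The key observation is that a symmetric $3\times 3$ matrix has rank $\le 1$ iff all its $2\times 2$ minors vanish, and rank $\le 2$ iff its determinant vanishes. So the set of ``bad'' unary functions $\mathbf{u}$ -- those for which $\langle\mathbf{u},\mathbf{F}\rangle$ has rank $0$, $1$, or $3$ -- is governed by polynomial conditions in the entries of $\mathbf{u}$: rank $\le 1$ is the common zero set of the nine minor polynomials (each quadratic in $\mathbf{u}$), and rank $3$ corresponds to $\det\langle\mathbf{u},\mathbf{F}\rangle \neq 0$, i.e.\ the complement of a cubic hypersurface. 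Thus if $\det\langle\mathbf{u},\mathbf{F}\rangle$ is \emph{not} identically zero as a polynomial in $\mathbf{u}$, then a generic $\mathbf{u}$ gives rank exactly $3$; and if it \emph{is} identically zero but not all $2\times 2$ minors vanish identically, then a generic $\mathbf{u}$ gives rank exactly $2$, which is what we want (here we invoke Fact~\ref{fact-nonzeropoly}/\methpoly to find a single $\mathbf{u}$ simultaneously avoiding the rank-$\le 1$ locus).

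So the proof splits into two cases. \emph{Case A:} $\det\langle\mathbf{u},\mathbf{F}\rangle$ is identically zero, but not all $2\times 2$ minors vanish identically. Then we are done, producing a rank-2 binary function as above. \emph{Case B (the degenerate extreme):} all $2\times 2$ minors of $\langle\mathbf{u},\mathbf{F}\rangle$ vanish identically in $\mathbf{u}$, i.e.\ $\langle\mathbf{u},\mathbf{F}\rangle$ has rank $\le 1$ for every $\mathbf{u}$. In this case $\langle\mathbf{u},\mathbf{F}\rangle = \mathbf{v}(\mathbf{u})\otimes\mathbf{v}(\mathbf{u})$ up to scalar for each $\mathbf{u}$; one then argues that $\mathbf{F}$ itself must be a very structured object. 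If $\langle\mathbf{u},\mathbf{F}\rangle$ always has rank $\le 1$, then (after handling the possibility that it is sometimes the zero matrix) $\mathbf{F}$ is forced to be degenerate, $\mathbf{F} = \mathbf{w}^{\otimes 3}$, or more generally to decompose as a small sum of tensor powers -- precisely the shapes appearing in Theorem~\ref{thm:ternary} (forms 1 and 2 with low rank, and form 3 with $\mathbf{F}_0$ degenerate). I would make this rigorous by choosing a basis $\mathbf{u}_1,\mathbf{u}_2,\mathbf{u}_3$ and writing $\mathbf{F}$ as a symmetric tensor whose ``slices'' $\langle\mathbf{u}_k,\mathbf{F}\rangle$ are all rank $\le 1$; a short symmetric-tensor argument (essentially: a symmetric ternary tensor all of whose contractions are rank $\le 1$ has border rank $\le 2$, or it lies in one of the tractable forms) then pins down $\mathbf{F}$.

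\emph{Case C (the other extreme, folded into Case A):} $\det\langle\mathbf{u},\mathbf{F}\rangle \not\equiv 0$, so the generic binary function $f = \langle\mathbf{u},\mathbf{F}\rangle$ has full rank $3$. Here a rank-2 function is obtained by a second contraction or by perturbation: the vanishing locus $\det\langle\mathbf{u},\mathbf{F}\rangle = 0$ is a nonempty cubic hypersurface (it always contains, e.g., degenerate directions or is nonempty by dimension count in $\mathbb{P}^2$), and on it -- avoiding the lower-dimensional rank-$\le 1$ locus, again by \methpoly{} -- we land on rank exactly $2$; alternatively, if the whole cubic lies inside the rank-$\le 1$ locus we are back in Case B. The main obstacle I anticipate is Case B: showing that rank-$\le 1$ behavior of \emph{all} contractions really does force $\mathbf{F}$ into one of the three tractable forms of Theorem~\ref{thm:ternary}, rather than just controlling it up to some looser structure. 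This requires a careful classification of symmetric ternary tensors on $\mathbb{C}^3$ by the ranks of their slices, and is where the isotropic-vector subtleties (a rank-1 symmetric matrix $\mathbf{v}\otimes\mathbf{v}$ with $\langle\mathbf{v},\mathbf{v}\rangle = 0$) will need separate attention, matching the paper's remark that isotropic vectors are exactly where extra tractable families hide. Everything else is routine polynomial-nonvanishing bookkeeping.
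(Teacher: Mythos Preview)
There is a genuine gap in your Case~C. Your fallback ``if the whole cubic lies inside the rank-$\le 1$ locus we are back in Case~B'' is wrong: Case~B was defined as all $2\times 2$ minors vanishing identically on $\mathbb{C}^3$, which forces $\det\equiv 0$ and contradicts the Case-C hypothesis. So the sub-case $\{\text{rank}\le 1\}=\{\det=0\}\subsetneq\mathbb{C}^3$ is simply unhandled. The paper organizes the argument differently and this is exactly where the substance lies: it first asks whether the three slices $X=\mathbf{F}^{1=B}$, $Y=\mathbf{F}^{1=G}$, $Z=\mathbf{F}^{1=R}$ are linearly dependent. If so, some nonzero $\mathbf{u}$ has $\langle\mathbf{u},\mathbf{F}\rangle=\mathbf{0}$; when $\mathbf{u}$ is isotropic this is precisely form~3 of Theorem~\ref{thm:ternary}, and when $\mathbf{u}$ is non-isotropic an orthogonal change of basis sends $\mathbf{u}$ to $e_1$, collapsing the problem to domain size~2, where the Boolean Holant$^*$ dichotomy yields either a tractable form \emph{or \#P-hardness}. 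Your outline never invokes hardness directly, so it cannot account for this branch, yet the lemma's disjunction explicitly allows it.

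Even when the slices are linearly independent, your appeal to \methpoly{} is not valid as stated: Fact~\ref{fact-nonzeropoly} finds a point in $\mathbb{C}^3$ avoiding finitely many hypersurfaces, but here you need a point \emph{on} the cubic $\{\det=0\}$ that avoids the rank-$\le 1$ subvariety, and nothing in the polynomial argument lets you restrict to a hypersurface. The paper instead gives a short constructive case analysis on the ranks of $X,Y,Z$: if one already has rank~2 we are done; if two have rank~1 then (using linear independence of the slices) their sum has rank exactly~2; if two have rank~3 one solves the one-variable equation $\det(xX+Z)=0$ to drop rank, and if the drop overshoots to rank~1 a further combination recovers rank~2. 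Incidentally, the step ``two linearly independent rank-1 symmetric matrices sum to rank~2'' is also the missing ingredient in your Case~B: it forces the slice span to dimension~$\le 1$, and then symmetry of $\mathbf{F}$ pins it down to $w^{\otimes 3}$, so your Case~B conclusion is correct but its proof is not independent of the paper's rank lemma.
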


\begin{proof}
By connecting $\mathbf{F}$ to a unary $\mathbf{u} = (x,y,z)$, we can
realize  $x \mathbf{F}^{1=B} + y \mathbf{F}^{1=G} +z \mathbf{F}^{1=R}$.
For notational simplicity, we denote the $3 \times 3$ matrices
$X=\mathbf{F}^{1=B}$, $Y=\mathbf{F}^{1=G}$ and $Z=\mathbf{F}^{1=R}$.
First suppose there exists a non-zero unary $\mathbf{u}$
such that $x X + y Y +z Z =0$.
If
$\mathbf{u}$ is isotropic, then $\mathbf{F}$ is in the third form of Theorem \ref{thm:ternary}.
Suppose $\mathbf{u}$ is  not isotropic, we may assume
$\mathbf{u}^{\tt T} \mathbf{u} = 1$. Then
 we can apply an orthogonal transformation by a matrix whose
first vector is $\mathbf{u}$, to reduce the
problem to an equivalent problem in domain size 2.
The dichotomy theorem for Holant$^*$ problems
over domain size 2 completes the proof.
The conclusion is that if $\mathbf{F}$ is not of the three
forms, then Holant$^*(\mathbf{F})$ is \#P-hard.
In the following, we assume that
$X$, $Y$ and $Z$ are linearly independent as  complex matrices.

Now we prove the lemma by analyzing the ranks of $X,Y,Z$.
By linear independence, $X,Y,Z$ all have rank $\ge 1$.
\begin{itemize}
  \item If at least one of $X,Y,Z$ has rank 2, then we are done by choosing the corresponding coefficient to be $1$ and the other two to be $0$.
  \item If there are at least two of them (we assume they are $X$ and $Y$)
 have rank 1, we shall prove that $X+Y$ has rank exactly 2.
Firstly, the rank of $X+Y$ is at most $2$ since  both $X$ and $Y$
have rank $1$. For symmetric matrices of rank 1,
we can write $X=u u^{\tt T}$ and $Y=v v^{\tt T}$.
       We know that $u$ and $v$ are linearly independent,
since   $X$ and $Y$ are linearly independent.
  If $X+Y$ has rank at most $1$, then there exists some $w$
such that
 $u u^{\tt T}+ v v^{\tt T}= w w^{\tt T}$.
There exists a vector $u'$ which is orthogonal to $u$ but not to $v$.
This can be seen by considering the dimensions of the null spaces
of  $u$ and $v$. Then $\langle u', v\rangle v
= \langle u', w\rangle w$. This implies that $v$ is a linear multiple of $w$
since $\langle u', v\rangle \neq 0$. Similarly, $u$ is also a
linear multiple of $w$. This contradicts the linear independence
of $u$ and $v$.
  \item In the remaining case, there are at least two of them (we assume they are $X$ and $Y$)
 have rank 3.  Then $\det (xX+Z)=0$ is not a trivial
equation since the coefficient of $x^3$ is $\det(X)\neq 0$.
Let $x_0$ be a root for the equation. Then the rank of $x_0 X+ Z$ is less than 3.
If the rank is 2, then we are done. Otherwise, the rank is exactly 1;
it cannot be zero since $Z$ is not a linear multiple of $X$.  Similarly,
 there exists a $y_0$ such that the rank of
the non-zero matrix $y_0 Y+ Z$ is less than 3.
Again, if the rank is 2, then we are done. Now we assume that both
 $x_0 X+ Z$ and $y_0 Y+ Z$ have rank 1.
If $x_0 X+ Z$ and $y_0 Y+ Z$ are linearly independent, then
$x_0 X+y_0 Y+ 2Z$ has rank exactly 2, by the proof above, and we are done.
If  $x_0 X+ Z$ and $y_0 Y+ Z$ are linearly dependent,
then a non-trivial combination is the zero matrix
$\lambda (x_0 X+ Z) + \mu (y_0 Y+ Z) = 0$.  Since they are both nonzero
matrices, both $\lambda, \mu \not = 0$.
Since $X, Y, Z$ are linearly independent,
we must  have $x_0 = y_0 =0$, and $Z$ has rank 1.
In this case, we consider $z X + Y$.
Again we have some $z_0$ such that $z_0 X + Y$ has rank at most 2.
If it is 2, we are done. It can't be 0, as $X, Y$ are linearly independent.
So $z_0 X + Y$ has rank exactly 1.
Then $z_0 X + Y + Z$ has rank exactly  2.
\end{itemize}

\end{proof}

\begin{lemma}\label{non-iso-rank2}
If we can realize a rank 2 binary symmetric
function in Holant$^*(\mathbf{F})$,
then we can either prove that
 $\mathbf{F}$ takes one of the forms in Theorem \ref{thm:ternary} and
Holant$^*(\mathbf{F})$ is in P, or realize a rank 2 binary
symmetric function such that its matrix form has a non-isotropic
eigenvector corresponding to the eigenvalue $0$.
\end{lemma}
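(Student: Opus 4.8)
The plan is to start from the given rank‑$2$ binary symmetric function, call it $M$, and let $v$ span the one‑dimensional null space of its matrix. If $\langle v,v\rangle\ne0$ we are already done, so assume $v$ is isotropic; by Lemma~\ref{isotropic-orth-transf} an orthogonal transformation sends $v$ to $\boldsymbol{\beta_0}=\tfrac1{\sqrt2}(1,i,0)^{\tt T}$, and since an orthogonal holographic transformation changes neither the complexity of Holant$^*$ nor the membership of $\mathbf F$ in the three forms (the remarks after Theorem~\ref{thm:ternary}), I may assume $v=\boldsymbol{\beta_0}$. Write $X=\mathbf F^{1=B}$, $Y=\mathbf F^{1=G}$, $Z=\mathbf F^{1=R}$ and, for a unary $\mathbf u=(u_1,u_2,u_3)$, $N_{\mathbf u}=\langle\mathbf u,\mathbf F\rangle=u_1X+u_2Y+u_3Z$, a symmetric matrix realizable in Holant$^*(\mathbf F)$ by attaching a unary. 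First I would dispose of the case that $X,Y,Z$ are linearly dependent exactly as in the proof of Lemma~\ref{lemma:just-rank2}: a nonzero $\mathbf u$ with $\langle\mathbf u,\mathbf F\rangle=\mathbf 0$ either is isotropic, putting $\mathbf F$ in form~$3$ with $\boldsymbol{\gamma}=\mathbf 0$, or, after normalizing it to $\boldsymbol{e_1}$, makes the transformed $\mathbf F$ supported on $\{G,R\}$; in the latter case either the $\{G,R\}$‑restriction is degenerate (so $\mathbf F$ is form~$1$), or some $N_{\mathbf u}$ has its $\{G,R\}$‑block of rank $2$ and hence null vector $\boldsymbol{e_1}$, which is non‑isotropic and we are done, or the $\{G,R\}$‑restriction is a non‑degenerate Boolean‑tractable ternary signature, and a short inspection shows every such signature is one of forms~$1$--$3$ on domain $3$ (e.g.\ Fibonacci signatures have root product $-1$, so their two defining vectors are orthogonal). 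From now on assume $X,Y,Z$ are linearly independent, i.e.\ $\mathbf u\mapsto N_{\mathbf u}$ is injective.

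The two gadget families I would use are the $N_{\mathbf u}$ themselves and $R_{\mathbf u}:=N_{\mathbf u}\,M\,N_{\mathbf u}$, the latter built by joining two copies of $\mathbf F$, each carrying the unary $\mathbf u$ on one leg, through a copy of $M$; $R_{\mathbf u}$ is symmetric because $M$ and $N_{\mathbf u}$ are. The organizing observation is elementary: for a singular symmetric $3\times3$ matrix $A$ one has $\operatorname{adj}(A)=0$ if $\operatorname{rank}A\le1$, and $\operatorname{adj}(A)=c\,ww^{\tt T}$ with $c\ne0$, $w$ spanning $\ker A$, if $\operatorname{rank}A=2$; hence $\operatorname{tr}\operatorname{adj}(A)=c\langle w,w\rangle$, so $A$ has rank exactly $2$ with a non‑isotropic null vector iff $\det A=0$ and $\operatorname{tr}\operatorname{adj}(A)\ne0$. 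Setting $D(\mathbf u)=\det N_{\mathbf u}$ and $E(\mathbf u)=\operatorname{tr}\operatorname{adj}(N_{\mathbf u})$ (a cubic and a quadratic form in $\mathbf u$), any $\mathbf u^*$ with $D(\mathbf u^*)=0$, $E(\mathbf u^*)\ne0$ yields the desired $N_{\mathbf u^*}$. Likewise, whenever $D(\mathbf u)\ne0$, $R_{\mathbf u}$ has rank exactly $2$ with null space spanned by $N_{\mathbf u}^{-1}\boldsymbol{\beta_0}$, which is non‑isotropic iff $P(\mathbf u):=\boldsymbol{\beta_0}^{\tt T}\operatorname{adj}(N_{\mathbf u})^2\boldsymbol{\beta_0}\ne0$. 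By the polynomial argument, if $E\not\equiv0$ on $V(D)$ we win via $N$, and if $D\not\equiv0$ and $P\not\equiv0$ we win via $R$ (choose $\mathbf u^*$ with $D(\mathbf u^*)P(\mathbf u^*)\ne0$). So the only way not to produce a non‑isotropic rank‑$2$ function is that $E$ vanishes identically on the hypersurface $V(D)$ and, in addition, $D\equiv0$ or $P\equiv0$.

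What remains — and this is where I expect essentially all the work to lie — is to show that these algebraic degeneracies force $\mathbf F$ into form~$1$, $2$, or $3$. Since $\mathbb C[u_1,u_2,u_3]$ is a UFD and $\deg E=2<3=\deg D$, "$E$ vanishes on $V(D)$" forces (Nullstellensatz) every irreducible factor of $D$ to divide $E$, which is impossible unless $E\equiv0$ or $D$ has a repeated linear factor whose reduced factors multiply to $E$. One then has to translate back: $D=0$ at $\mathbf u$ says $\langle\mathbf u,\mathbf F\rangle$ is singular; $E=0$ there says its null direction is isotropic (or the rank is $\le1$); $P\equiv0$ says the conjugate direction $N_{\mathbf u}^{-1}\boldsymbol{\beta_0}$ is always isotropic. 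Feeding these into the small‑triangle calculus of Section~\ref{section:calculus} applied to $G:=\langle\boldsymbol{\beta_0},\mathbf F\rangle$ (if $G=0$ then $\mathbf F$ is already annihilated by $\boldsymbol{\beta_0}$, i.e.\ form~$3$), the surviving possibilities should be that $\mathbf F$ is orthogonally equivalent to $\boldsymbol{\beta_0}^{\otimes3}+\overline{\boldsymbol{\beta_0}}^{\otimes3}+\lambda\boldsymbol{e_3}^{\otimes3}$ (form~$2$) or to an $\mathbf F_\beta+\operatorname{Sym}(\boldsymbol{\beta_0}^{\otimes2}\otimes\boldsymbol{\gamma})$ with $\langle\mathbf F_\beta,\boldsymbol{\beta_0}\rangle=\mathbf 0$ (form~$3$). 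The residual sub‑case $D\equiv0$ (a singular net of symmetric $3\times3$ matrices) I would handle by the classical dichotomy for such nets: either the net has a common null vector, reducing to the linearly‑dependent analysis above, or it has one of a few special shapes, each again matched to form~$1$--$3$ or shown to carry a rank‑$2$ gadget with an $\boldsymbol{e_i}$‑type (non‑isotropic) null vector. The hard part is precisely this last step — wringing the explicit shape of $\mathbf F$ out of the vanishing pattern of $(D,E,P)$ — not the gadget constructions or the polynomial argument, which are routine.
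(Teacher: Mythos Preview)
Your gadget choices are exactly right: the paper also builds $R_{\mathbf u}=N_{\mathbf u}\,M\,N_{\mathbf u}$ (written there as $CAC$) and, in a sub-case, uses a single $N_{\mathbf v}$ directly. The genuine gap is the step you yourself flag as ``where essentially all the work lies.'' Your plan to control the failure locus via the cubic $D$, the quadratic $E$, and the quartic $P$ in three variables, together with the Nullstellensatz, is not wrong in principle, but it does not by itself produce the \emph{linear} relations among the ten entries of $\mathbf F$ that one needs to recognize form~3. The algebraic bookkeeping you sketch (irreducible factors of $D$ dividing $E$, ``classical dichotomy for singular nets of symmetric matrices'') is several nontrivial case analyses away from that conclusion, and none of it is carried out.

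The paper closes this gap with one extra normalization you did not make. Instead of merely rotating the null vector to $\boldsymbol{\beta_0}$, it invokes a canonical form for complex symmetric matrices with isotropic kernel to put the \emph{entire} rank-$2$ matrix $M$ into the single shape
\[
A=\begin{pmatrix}0&0&1\\0&0&i\\1&i&0\end{pmatrix}.
\]
Now write $C=xA+\widetilde A$ with $\widetilde A=\langle\mathbf v,\mathbf F\rangle=\bigl(\begin{smallmatrix}a&b&c\\b&d&e\\c&e&f\end{smallmatrix}\bigr)$ and treat $x$ as a single scalar variable, with $\mathbf v$ still free. The determinant $P(x)=\det C$ and the isotropy polynomial $Q(x)$ for $C^{-1}(1,i,0)^{\tt T}$ are then degree-$\le 2$ polynomials in $x$ whose \emph{leading coefficients} are the simple expressions $a+2bi-d$ and $(a+2bi-d)^2$. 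If this quantity is nonzero for some $\mathbf v$, the one-variable polynomial argument finishes immediately. If it vanishes for every $\mathbf v$, then---because $a,b,d$ are themselves linear in $\mathbf v$ with coefficients taken from $\mathbf F$---you obtain a linear recurrence on the entries of $\mathbf F$ for free. One more pass (the next coefficient in $x$ is governed by $e-ci$) either finishes via $\widetilde A$ directly or yields a second such recurrence. Reading those two recurrences off at $\mathbf v\in\{e_B,e_G,e_R\}$ says precisely that $\mathbf F$ is annihilated by $(1,i,0)$ up to a $\mathrm{Sym}(\boldsymbol{\beta_0}^{\otimes 2}\otimes\boldsymbol\gamma)$ correction, i.e.\ $\mathbf F$ is in form~3. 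The point is that pinning $A$ to a single explicit matrix converts your three-variable degree-$(3,2,4)$ system into a one-variable degree-$2$ system whose leading coefficients are \emph{linear} in the entries of $\mathbf F$; that is what makes the degenerate case explicitly computable rather than merely describable.
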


\begin{proof}

We only need to handle the case that the matrix form
of the constructed rank 2 function
 has an isotropic eigenvector corresponding to $0$.

Suppose $A$ is the $3 \times 3$ matrix representing the
binary function $\langle  \mathbf{u}, \mathbf{F} \rangle$
for some unary function $\mathbf{u}$.
By the canonical form in~\cite{Scott},
there exists an
orthogonal matrix $T$, such that
\[
TA T^{\tt T}=\left [ \begin{matrix} 0 & 0 & 1 \\
   0 & 0 & i \\
   1 & i & 0
   \end{matrix} \right ] .
\]

We may
consider $T^{\otimes 3} \mathbf{F}$ instead of $\mathbf{F}$. Because
$TAT^{\tt T}$ is the matrix form for
$\langle T \mathbf{u}, T^{\otimes 3} \mathbf{F} \rangle$,
to reuse the notation, we can assume there exists a $\mathbf{u}$, such that
$\langle  \mathbf{u}, \mathbf{F} \rangle$
has the matrix form $\left [ \begin{matrix} 0 & 0 & 1 \\
   0 & 0 & i \\
   1 & i & 0
   \end{matrix} \right ]$.
We will rename this matrix $A$.

Given any unary function $\mathbf{v}$ and a complex number $x$, we can
realize the binary function
$\langle  x \mathbf{u} + \mathbf{v}, \mathbf{F} \rangle$
which has the matrix form $C = x A + \widetilde{A}$,
where $\widetilde{A}$ is the matrix form of
$\langle  \mathbf{v}, \mathbf{F} \rangle$.
If there exist some unary function $\mathbf{v}$
and  a complex number
$x$,
such that $C$ is nonsingular, and $\gamma= C^{-1} \left [ \begin{matrix} 1 \\
  i \\   0    \end{matrix} \right ]$ is not isotropic, then we can
realize the binary symmetric function $CAC$ of rank 2 as a chain  of
three binary symmetric functions, whose
eigenvector corresponding to $0$ is $\gamma$, and the conclusion
holds.

Now, we prove that if there does not exist such $\mathbf{v}$ and $x$,
then either Holant$^*({\mathbf{F}})$ is in P, or we can realize a required
binary function directly. We calculate the two conditions,
$C$ is singular and $\gamma= C^{-1} \left [ \begin{matrix} 1 \\
  i \\   0    \end{matrix} \right ]$ is isotropic, individually.

Suppose $\widetilde{A}=\left [ \begin{matrix} a & b & c \\
   b & d & e \\
   c & e & f
   \end{matrix} \right ]$.
Then $C = x A + \widetilde{A} =
\left [ \begin{matrix} a & b & c +x \\
   b & d & e +xi \\
   c+x  & e +xi & f
   \end{matrix} \right ]$.
Let $P(x) = \det(C)$.
As a polynomial in $x$, $P(x)$ has degree  at most
2, and the
coefficient of $x^2$ is $a + 2bi-d$. If $a + 2bi-d \neq 0$, then
for all complex $x$ except at most two values,
$C$ is nonsingular.

Because $C \gamma=(1, i, 0)^{\tt T}$, $\gamma$ is orthogonal to
$\mu=(c+x,e+xi,f)$ and $\nu=(b-ai, d-bi, e-ci)$.
Consider the
 cross-product vector $\theta=\left( \left | \begin{matrix}  e+xi & f \\
d-bi & e-ci \end{matrix} \right | ,
\left | \begin{matrix}   f & c+x \\
e-ci & b-ai \end{matrix} \right |, \left | \begin{matrix} c+x & e+xi
\\ b-ai & d-bi
\end{matrix}\right |  \right)^{\tt T}$, which is orthogonal to $\mu$ and
$\nu$. Calculation shows that
the inner product $\theta^{\tt T}  \theta$ is a
polynomial $Q(x)$ of degree at most 2,
and the coefficient of $x^2$ is $(a + 2bi-d)^2$.

Assume $a + 2bi-d \neq 0$. Then, neither $P(x)$ nor $Q(x)$ is the zero
polynomial. There exists an $x$ such that $C$ is nonsingular,
which implies $\gamma \not = {\bf 0}$ in particular, and
$\theta^{\tt T} \theta \neq 0$. If $\mu$ and $\nu$ were linearly dependent,
then $\theta=\mathbf{0}$ by the definition  of
$\theta$, and $\theta^{\tt T} \theta =0$,
a contradiction. Hence,   $\mu$
and $\nu$ are linearly independent. So $\gamma$ is a nonzero linear multiple of
$\theta$, since they both belong to the 1-dimensional
subspace orthogonal to $\mu$
and $\nu$. Then
$\gamma^{\tt T} \gamma$ is a nonzero multiple of
$\theta^{\tt T} \theta \neq 0$, i.e.,  $\gamma$ is not isotropic.
Then $CAC$ is the required function.

Now we assume that for any $\mathbf{v}$, $\widetilde{A}=
\langle \mathbf{v}, \mathbf{F} \rangle$ satisfies $a + 2bi-d =
0$.

Substitute  $d$ by $a + 2bi$, we get
$P(x)=2(b-ai)(e-ci)x-a(e-ci)^2-f(b-ai)^2 + 2c(b-ai)(e -ci)$,
and the coefficient of $x$ in
$Q(x)$ is $2i(e-ci)^3$.

For any fixed $\widetilde{A}$,  either $e-ci =0$, or $e-ci \neq 0$. If $e-ci
\neq 0$, $Q(x)$ is not the zero polynomial. If $P(x)$ is not the zero
polynomial as well, then by the same argument as above, we get a required
function. Hence we assume $P(x)$ is the zero polynomial.
Then by the expression for $P(x)$, it follows that  $b-ai =0$, and
$a=0$. Because we also have  $a+ 2bi-d=0$, we get $a=b=d=0$.

In this case $\widetilde{A}$ has the form
$\widetilde{A}=\left [ \begin{matrix} 0 & 0 & c \\
   0 & 0 & e \\
   c & e & f
   \end{matrix} \right ]$.
It has rank $\le 2$. If it has rank $\le 1$, then $c=e=0$. This is
a contradiction to $e-ci \neq 0$.
Hence it has rank 2.
It is easy to check that the eigenvector
corresponding to the eigenvalue 0 is a multiple of
$(-e, c, 0)^{\tt T}$. If $c^2 + e^2 \not = 0$, then this
eigenvector is
non-isotropic and we are done.
Since $e-ci \neq 0$, the only possibility of $c^2 + e^2 = 0$
is $e = -ci \not = 0$. In this case
it is easy to check that $cA +\widetilde{A}$
has the form
$\left [ \begin{matrix} 0 & 0 & 2c \\
   0 & 0 & 0 \\
   2c & 0 & f
   \end{matrix} \right ]$.
It has rank 2, and a non-isotropic  eigenvector
$(0, 1, 0)^{\tt T}$
corresponding to the eigenvalue 0.

Finally we have for any
$\widetilde{A}$,  $e-ci =0$, in addition to $d = a + 2b i$.

Consider the possible choices of $\bf v$ in
$\widetilde{A}= \langle \mathbf{v}, \mathbf{F} \rangle$. We can set it
to be $\mathbf{F}^{1=B}$,
$\mathbf{F}^{1=G}$ or $\mathbf{F}^{1=R}$.
Considering what entries $a, b, c, d, e$ correspond to
in the table (\ref{domain-3-sig}) for these three cases of $\widetilde{A}$,
we get the following: If $w \neq 0$, then
$\mathbf{F}_{u,v,w}=i \mathbf{F}_{u+1,v-1,w}$ for $v \ge 1$ and
$u+v+w = 3$. If $w=0$, then $\mathbf{F}_{u,v,w} = \mathbf{F}_{u,v,0}
=s i^v + t v
i^{v-1}$ for some coefficients $s$ and $t$,
where $u, v \ge 0$ and $u+v = 3$. This follows from
$e = ci$ and
$d = a +2bi$ for $\widetilde{A}$.  E.g., $e = ci$ in
(\ref{domain-3-sig-1-is-set-to-B}) gives a linear recurrence
$F_{BGR} = i F_{BBR}$, and $d = a +2bi$ in
(\ref{domain-3-sig-1-is-set-to-B}) gives a linear recurrence
$F_{BGG} = 2i F_{BBG} + F_{BBB}$.
Hence, $\mathbf{F}=S+T$ is the summation of two functions $S$ and
$T$, where $S_{u,v,w}=i S_{u+1,v-1,w}$, and $T(u,v,w)=0$, if $w \neq
0$, and $T(u,v,0)=t v i^{v-1}$, where $u+v+w = 3$.
This $T$ can be expressed as the symmetrization
of simple tensor products,
\begin{eqnarray*}
T&=& T_1+T_2+T_3\\
&=& t\left [ \begin{matrix} 0 \\
  1 \\ 0 \end{matrix} \right ] \otimes \left [ \begin{matrix} 1 \\
   i \\ 0 \end{matrix} \right ] \otimes \left [ \begin{matrix} 1 \\
   i \\ 0 \end{matrix} \right ]+t \left [ \begin{matrix} 1 \\
   i \\ 0 \end{matrix} \right ] \otimes \left [ \begin{matrix} 0 \\
   1 \\ 0 \end{matrix} \right ] \otimes \left [ \begin{matrix} 1 \\
   i \\ 0 \end{matrix} \right ]+t \left [ \begin{matrix} 1 \\
   i \\ 0 \end{matrix} \right ] \otimes \left [ \begin{matrix} 1 \\
   i \\ 0 \end{matrix} \right ] \otimes \left [ \begin{matrix} 0 \\
   1 \\ 0 \end{matrix} \right ]\\
&=&
   \frac {t}{2} {\rm Sym} (\left [ \begin{matrix} 1 \\
   i \\ 0 \end{matrix} \right ] \otimes \left [ \begin{matrix} 1 \\
   i \\ 0 \end{matrix} \right ] \otimes \left [ \begin{matrix} 0 \\
   1 \\ 0 \end{matrix} \right ] ).
\end{eqnarray*}

This is in form 3 given in Theorem~\ref{thm:ternary}
and we have shown that in this case Holant$^*(\mathbf{F})$
is tractable in Section~\ref{section:tractability}.

\end{proof}

We summarize Lemma~\ref{lemma:just-rank2} and \ref{non-iso-rank2} as follows:
\begin{corollary}
If $\mathbf{F}$ does not take one of the three forms in Theorem
\ref{thm:ternary}, then we can either prove that Holant$^*(\mathbf{F})$
is \#P-hard or construct a rank 2
 binary symmetric function $f$ from $\mathbf{F}$ by connecting
a unary
function to it, such that its
eigenvector corresponding to the eigenvalue $0$ is not isotropic.
\end{corollary}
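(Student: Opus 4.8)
The plan is to obtain the corollary by chaining the two preceding lemmas, with a small amount of bookkeeping about orthogonal transformations. First I would apply Lemma~\ref{lemma:just-rank2} under the standing hypothesis that $\mathbf{F}$ is not of any of the three forms in Theorem~\ref{thm:ternary}. This produces a dichotomy at once: either Holant$^*(\mathbf{F})$ is shown to be \#P-hard, in which case the first disjunct of the corollary already holds, or there is a unary signature $\mathbf{u}$ such that $f_0 := \langle \mathbf{u}, \mathbf{F}\rangle$ is a binary symmetric function of matrix rank exactly $2$. Because unary signatures are freely available in the Holant$^*$ setting, $f_0$ is genuinely realizable in Holant$^*(\mathbf{F})$, so the hypothesis of Lemma~\ref{non-iso-rank2} is satisfied.

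Next I would feed $f_0$ into Lemma~\ref{non-iso-rank2}. Its conclusion is again a disjunction: either $\mathbf{F}$ lies in one of the three tractable forms of Theorem~\ref{thm:ternary} (and Holant$^*(\mathbf{F})$ is in P), or a rank-$2$ binary symmetric function realizable in Holant$^*(\mathbf{F})$ can be constructed whose matrix has a non-isotropic eigenvector for the eigenvalue $0$. But the first case is excluded outright by the corollary's standing hypothesis that $\mathbf{F}$ is not of any of those forms; hence the second case holds, which is precisely the second disjunct of the corollary. Composing the two applications gives exactly the claimed statement. (The realizing object is the short chain of three binary gadgets built inside the proof of Lemma~\ref{non-iso-rank2}; the only feature the later sections use is its realizability in Holant$^*(\mathbf{F})$, and it is still assembled from copies of $\mathbf{F}$ with unary signatures adjoined.)

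The single point needing care --- and it is already dealt with inside Lemma~\ref{non-iso-rank2} --- is that that lemma normalizes the rank-$2$ matrix to the fixed canonical form $\left[\begin{smallmatrix} 0 & 0 & 1 \\ 0 & 0 & i \\ 1 & i & 0 \end{smallmatrix}\right]$ by an orthogonal matrix $T$ and then argues about $T^{\otimes 3}\mathbf{F}$. I would simply remark that by Theorem~\ref{thm:orthogoal} such a transformation preserves the complexity of the problem, and by the third Remark following Theorem~\ref{thm:ternary} it preserves membership in the three tractable forms, so both the \#P-hardness conclusion and the tractable-form conclusion transfer back to $\mathbf{F}$ itself. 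With this remark in place the corollary is immediate; I do not anticipate any genuine obstacle, since all the substantive content has been carried by Lemmas~\ref{lemma:just-rank2} and~\ref{non-iso-rank2}.
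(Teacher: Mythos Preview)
Your proposal is correct and matches the paper's approach exactly: the paper simply states that the corollary ``summarize[s] Lemma~\ref{lemma:just-rank2} and \ref{non-iso-rank2}'' without further proof, and your chaining of the two lemmas is precisely what is intended. Your extra bookkeeping about the orthogonal transformation is already handled inside the proof of Lemma~\ref{non-iso-rank2} itself, so it is not strictly needed at the level of the corollary, but it does no harm.
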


\subsection{An Interpolation Lemma}\label{Interpolation-subsection}
Finally we use a holographic transformation and interpolation
to get $=_{G,R}$ from the binary function obtained in Lemma~\ref{non-iso-rank2}.
This will complete the proof of Theorem~\ref{theorem6.1-get=_GR}.

Let $\mathbf{v}$ be a non-isotropic eigenvector corresponding to the
eigenvalue $0$ of the binary function $A$ constructed from $\mathbf{F}$.
We may assume $\langle \mathbf{v}, \mathbf{v} \rangle =1$.
We can extend $\mathbf{v}$ to an orthogonal matrix $T$, such that $\mathbf{v}$ is
the first column vector of $T$.
Then the matrix form of the binary function
 after the holographic
transformation by $T^{-1} = T^{\tt T}$  takes the form
\begin{equation}\label{rank-2-form}
T^{\tt T}A T=\left [ \begin{matrix} 0 & 0 & 0 \\
   0 & a & b \\
   0 & b & c
   \end{matrix} \right ]
\end{equation}
with rank 2.

The next lemma shows that given this, we can interpolate
$=_{G,R}$.

%
\begin{lemma}\label{lem:interpolatoin}
Let $H:\{B, G, R\}^2\rightarrow \mathbb{C}$ be a rank 2 binary
function of the form (\ref{rank-2-form}).
Then for any ${\cal F}$ containing $H$,
we have
\[{\mbox {\rm Holant}}({\cal F}\cup \{=_{G,R}\})\leq_T
{\mbox {\rm Holant}}({\cal F}) .\]
\end{lemma}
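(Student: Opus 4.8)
Write $M=\begin{bmatrix} a & b \\ b & c\end{bmatrix}$ for the restriction of $H$ to $\{G,R\}$. Since the $3\times 3$ matrix (\ref{rank-2-form}) has a zero first row and column, ${\rm rank}\,H={\rm rank}\,M$, so the hypothesis ${\rm rank}\,H=2$ gives $\det M\neq 0$; moreover $H$ acts as $M$ on $\{G,R\}$ and sends any assignment of $B$ to $0$. For $k\ge 1$ let $G_k$ be the gadget consisting of a chain of $k$ copies of $H$ joined in series by internal edges; $G_k$ is a binary gadget (two dangling edges) whose $3\times 3$ matrix is the matrix power $H^{k}$, which again acts as $M^{k}$ on $\{G,R\}$ and kills $B$ --- so $G_k$ equals $=_{G,R}$ exactly when $M^{k}=I_2$. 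Since ${\rm Holant}_\Omega=\sum_\sigma\prod_v F_v(\sigma\mid_{E(v)})$ is linear in the signature assigned to each individual vertex, the plan is a polynomial interpolation: in an instance $\Omega$ of ${\rm Holant}(\mathcal F\cup\{=_{G,R}\})$ we replace the vertices labelled $=_{G,R}$ by chains $G_k$ of suitable lengths (legitimate, as each $G_k$ has exactly two dangling edges and is built from $H\in\mathcal F$), and recover ${\rm Holant}_\Omega$ from the resulting values, which are oracle calls to ${\rm Holant}(\mathcal F)$.

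I would split on the eigenvalues $\lambda_1,\lambda_2$ of $M$ (both nonzero, since $\det M\neq0$). \emph{(i)} If $M=\lambda I_2$ is scalar, then $G_1=H$ equals $\lambda\,(=_{G,R})$, so replacing the $n$ copies of $=_{G,R}$ in $\Omega$ by $H$ multiplies ${\rm Holant}_\Omega$ by the known constant $\lambda^{n}$: one oracle call suffices. \emph{(ii)} If $\lambda_1\neq\lambda_2$ and $\lambda_1/\lambda_2$ is a root of unity, say of order $d\ge 2$, then $\lambda_1^{d}=\lambda_2^{d}$, and as $M$ has distinct eigenvalues it is diagonalizable, so $M^{d}=\lambda_1^{d}I_2$ is scalar; now $G_d$ plays the role of $H$ in case (i), and again one oracle call suffices. \emph{(iii)} In the remaining case --- $\lambda_1\neq\lambda_2$ with $\lambda_1/\lambda_2$ not a root of unity, or $M$ a single Jordan block --- the entries of $M^{k}$, viewed as functions of $k$, lie in the linear span of two functions: $\{\lambda_1^{k},\lambda_2^{k}\}$ in the first subcase (from $M^{k}=\lambda_1^{k}\Pi_1+\lambda_2^{k}\Pi_2$ with $\Pi_i$ the spectral projectors), and $\{\lambda^{k},k\lambda^{k}\}$ in the second (from $M^{k}=\lambda^{k}I_2+k\lambda^{k-1}(M-\lambda I_2)$).

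In case (iii) I would run the interpolation. Let $\Omega$ have $n$ vertices labelled $=_{G,R}$; for $1\le k\le n+1$ let $\Omega_k$ be the instance of ${\rm Holant}(\mathcal F)$ obtained by replacing each of them by $G_k$, and get ${\rm Holant}_{\Omega_k}$ by an oracle call. Writing the signature $M^{k}$ at each of the $n$ special vertices as a combination of the two fixed matrices $\{P_1,P_2\}$ ($\{\Pi_1,\Pi_2\}$ resp. $\{I_2,\,M-\lambda I_2\}$) with coefficients the two basis functions of $k$, and expanding by multilinearity at the $n$ vertices, we get ${\rm Holant}_{\Omega_k}=\sum_{m=0}^{n}E_m\,\phi_m(k)$, where $E_m$ is the sum of the Holant values of $\Omega$ with exactly $m$ special vertices assigned $P_1$ and the rest $P_2$, and (up to fixed nonzero constants) $\phi_m(k)=(\lambda_1^{m}\lambda_2^{\,n-m})^{k}$ in the first subcase and $\phi_m(k)=\lambda^{nk}k^{m}$ in the second. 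Applying the same expansion to $=_{G,R}$ itself (whose matrix is $M^{0}=I_2$) shows that ${\rm Holant}_\Omega$ is an explicit linear combination of the $E_m$ (namely $\sum_m E_m$ in the first subcase, $E_0$ in the second). The $\phi_0,\dots,\phi_n$ are linearly independent --- pairwise distinct exponentials in the first subcase, since $\lambda_1/\lambda_2$ is not a root of unity; $\lambda^{nk}$ times a Vandermonde system in $k$ in the second --- so the $(n+1)\times(n+1)$ system obtained at $k=1,\dots,n+1$ is invertible. Solving it recovers all $E_m$, hence ${\rm Holant}_\Omega$, in polynomial time, giving the desired Turing reduction.

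The only genuinely delicate point is case (ii): a naive Vandermonde interpolation using the nodes $\lambda_1^{m}\lambda_2^{\,n-m}$ would have repeated values and collapse, so one must instead notice that a fixed power of $M$ becomes scalar and use that to realize $=_{G,R}$ directly. Everything else is routine: the number of oracle calls is $1$ or $n+1$, the chains used have length $O(n)$ or the constant $d$, and the Vandermonde-type matrices are nonsingular by the independence noted above.
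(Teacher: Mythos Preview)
Your proof is correct and follows essentially the same approach as the paper: both build chains of $k$ copies of $H$, split on the Jordan type of the nonsingular $2\times 2$ block, dispose of the root-of-unity case by noting a fixed power becomes scalar, and otherwise recover the target value by Vandermonde interpolation over $k=1,\dots,n+1$. There is a harmless indexing slip in your Jordan-block subcase (with $P_1=I_2$ and $m$ copies of $P_1$, the coefficient is proportional to $\lambda^{nk}k^{\,n-m}$ rather than $k^{m}$, and correspondingly ${\rm Holant}_\Omega=E_n$ rather than $E_0$), but this does not affect the argument.
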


\begin{proof}
Consider the Jordan normal form of $H$.
There are two cases:
  there exist a non-singular $M = {\rm diag}(1, M_2)$, and either
 $\Lambda={\scriptsize \left [
\begin{array}{ccc}
0 & 0 & 0\\
0 & \lambda & 0\\
0 & 0 & \mu
\end{array}\right ]}$,
or
$\Lambda' ={\scriptsize \left [
\begin{array}{ccc}
0 & 0 & 0\\
0 & \lambda & 1\\
0 & 0 & \lambda
\end{array}\right ]}$,
such that $H=M\Lambda M^{-1}$,
or $H=M\Lambda' M^{-1}$.

For the first case $H=M\Lambda M^{-1}$,
consider an instance $I$ of
Holant$(\mathcal{F}\cup \{=_{G,R}\})$. Suppose the function $=_{G,R}$ appears
$m$ times. Replace each occurrence of $=_{G,R}$ by a chain of $M$, $=_{G,R}$,
$M^{-1}$. More precisely,
we replace any occurrence of $=_{G,R}(x,y)$ by
$M(x,z) \cdot (=_{G,R})(z, w) \cdot M^{-1}(w,y)$,
where $z, w$ are new variables.
This defines a new instance $I'$.
Since $M {\rm diag}(0, I_2) M^{-1}={\rm diag}(0, I_2)$, where $I_2$ denotes the
$2 \times 2$ identity matrix,  the Holant value of
the instance $I$ and $I'$ are the same.
To have a non-zero contribution to the Holant sum,
the assignments given to any occurrence of the new {\sc Equality} constraints
of the form $(=_{G,R})(z, w)$ must be $(G,G)$ or $(R,R)$.
 We can stratify the Holant sum
defining the value on $I'$ according to how many $(G,G)$
and $(R,R)$  assignments are given to these
occurrences of $(=_{G,R})(z, w)$.
Let $\rho_j$ denote the sum,
over all assignments with $j$ many times  $(G,G)$ and
$m-j$ many times  $(R,R)$, of the evaluation on $I'$,
including those of $M(x,z)$ and $M^{-1}(w,y)$.
Then the Holant value on the instance $I'$
 can be written as $\sum_{j=0}^m \rho_j$.

Now we construct from $I$ a sequence of instances $I'_k$ indexed by $k$:
Replace each occurrence of $(=_{G,R})(x, y)$  by
a chain of $k$ copies of the function $H$ to get an
instance  $I'_k$ of Holant$(\mathcal{F})$.
More precisely, each occurrence of
$(=_{G,R})(x,y)$ is replaced by $H(x,x_1)H(x_1,x_2)\ldots H(x_{k-1},y)$,
where $x_1, x_2, \ldots, x_{k-1}$ are new variables specific for  this
occurrence of $(=_{G,R})(x,y)$.
The function of this chain is $H^k=M\Lambda^k M^{-1}$.
A moment of reflection shows that the value of
the instance $I'_k$ is
\[\sum_{j=0}^m \rho_j \lambda^{kj}
\mu^{k(m-j)}=\mu^{mk} \sum_{j=0}^m \rho_j (\lambda/
\mu)^{kj}.\]

If $\lambda/\mu$ is a root of unity, then take a $k$ such that
$(\lambda/\mu)^k=1$. (Input size is measured by
the number of variables and constraints. The functions in $\mathcal{F}$
are considered constants. Thus this $k$ is a constant.) We have
the value
$\sum_{j=0}^m \rho_j \lambda^{kj} \mu^{k(m-j)}
=\mu^{mk}\sum_{j=0}^m \rho_j$.
As $H$ has rank 2, $\mu \not = 0$,  we can compute the
value of $I$ from the value of  $I'_k$.

If $\lambda/\mu$ is not a root of unity,
$(\lambda/\mu)^j$ are all distinct for $j \ge 1$. We can take
$k=1,\ldots,m+1$ and get a system of linear equations about $\rho_j$.
Because the coefficient matrix is Vandermonde in
$(\lambda/\mu)^j, j=0,1,\ldots m$, we can solve
$\rho_j$ and get the value of $I$.

For the second case $H=M\Lambda' M^{-1}$,
the construction is the same, so we only show
the difference with the proof in the first case. Again we can
stratify the Holant sum for $I'$ according to how many different
types of assignments are given to the
$m$ occurrences of the new {\sc Equality} constraints of the form
$(=_{G,R})(z, w)$.
Any assignment other than assigning only $(G,G)$ or $(R,R)$
will produce a 0 contribution for $I'$.
However,
this time we cluster all assignments according to exactly $j$ many
times $(G,G)$ {\it or} $(R,R)$, and the rest $m-j$  are $(G,R)$'s,
on all $m$ occurrences of these $(=_{G,R})(z, w)$.
Note that any assignment with a non-zero number of $(R,G)$'s
in the corresponding $m$ signatures in $I'_k$, after the
substitution of each $(=_{G,R})(x,y)$  in $I$ by $H(x,x_1)H(x_1,x_2)\ldots
H(x_{k-1},y)$, will
produce a 0 contribution in the Holant value for $I'_k$.
 This is because, by this substitution, effectively
each $(=_{G,R})(z, w)$ in $I'$ is replaced by
 $\Lambda^k= {\scriptsize \left [
\begin{array}{ccc}
0 & 0 & 0 \\
0 & \lambda^k & k \lambda^{k-1}\\
0 & 0 & \lambda^k
\end{array}\right ]}$.
Let $\rho_j$ be  the sum over all
assignments with $j$ many
$(G,G)$ {\it or} $(R,R)$, and $m-j$ many $(G,R)$
 of the evaluation (including those of $M(x,z)$ and $M^{-1}(w,y)$)
on $I'$. Then the Holant value on the instance $I'$  (and on $I$)
 is just  $\rho_m$.


The value of $I'_k$  is
\[\sum_{j=0}^m \rho_j \lambda^{kj}
(k\lambda^{k-1})^{m-j}=\lambda^{(k-1)m} \sum_{j=0}^m (\lambda^j \rho_j)
k^{m-j}.\]

We can take $k=1,\ldots,m+1$ and get a system of linear equations
on $\lambda^j \rho_j$. Because the coefficient matrix is a Vandermonde
matrix, we can solve $\lambda^j \rho_j$ and (since $\lambda \not = 0$
as $H$ has rank 2) we can get the value of
$\rho_m$, which is the value of $I$.
\end{proof}

\subsection{Reductions From Domain Size 2}\label{sec:domain2}

\begin{lemma}\label{lemma:separate}
If a ternary function $\mathbf{F}$ has a separated domain then
Holant$^*( \mathbf{F})$ is either \#P-hard or is in one of
the tractable forms of Theorem \ref{thm:ternary},
and it is determined by the Holant$^*$ problem defined
by the restriction of $\mathbf{F}$
to the separated subdomain of size two.
\end{lemma}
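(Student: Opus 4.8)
The plan is to use the separated-domain structure of $\mathbf{F}$ to split the Holant computation along connected components of the associated hypergraph, and then reduce the ``hard'' part to a Holant$^*$ problem purely on the two-element subdomain, where we can invoke the known dichotomy (Theorem~\ref{thm:dich-sym-Boolean}). Say WLOG (after an orthogonal transformation, which changes nothing by Theorem~\ref{thm:orthogoal}) that $B$ is separated from $\{G,R\}$, so $\mathbf{F}$ has the form shown just before Definition~\ref{def-domain-separated}: $F_{BBB}, F_{GGG}, F_{GGR}, F_{GRR}, F_{RRR}$ are the only possibly-nonzero entries. The key structural observation is that in any signature grid $\Omega$ using copies of $\mathbf{F}$ together with arbitrary unary functions, the domain separation forces every connected component (in the sense of the ternary hypergraph on $\{B,G,R\}$) of every nonzero-contributing edge assignment to be monochromatically ``$B$'' or monochromatically ``in $\{G,R\}$'': more precisely, if one edge incident to a vertex $v$ is assigned $B$ then all edges at $v$ must be assigned $B$ (else $\mathbf{F}_v = 0$), and this propagates. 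Since unary functions are freely available and do not couple edges, the Holant value factors across the graph.

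First I would formalize this propagation: fix the input graph $G$ and consider the bipartite ``type'' assignment where each vertex $v$ of degree $3$ is declared $B$-type or $\{G,R\}$-type, and each degree-$\le 2$ vertex (a unary or binary gadget) is handled separately; any edge joining a $B$-type vertex to a $\{G,R\}$-type vertex contributes $0$. Hence the sum $\mathrm{Holant}_\Omega$ decomposes as a sum over such consistent type-assignments, and for a fixed type-assignment the $B$-part contributes $\prod F_{BBB}$ over $B$-type vertices times unary values, while the $\{G,R\}$-part is exactly a Holant$^*$ computation for the binary-and-ternary restriction $\mathbf{F}^{*\to\{G,R\}} = [F_{GGG},F_{GGR},F_{GRR},F_{RRR}]$ on the two-element domain $\{G,R\}$. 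The upshot is a polynomial-time Turing reduction in both directions between $\mathrm{Holant}^*(\mathbf{F})$ and $\mathrm{Holant}^*(\mathbf{F}^{*\to\{G,R\}})$: one direction is the simple component-by-component evaluation just described (plus bookkeeping of the $B$-contributions, which is a closed-form product over components), and the other direction is the trivial embedding of any $\{G,R\}$-instance into a $\{B,G,R\}$-instance together with $=_{G,R}$, which is itself realizable since $F^{1=B,2,3\to\{G,R\}}$-type restrictions and unaries suffice — or, more simply, because the $\{G,R\}$ problem is literally the restriction and we just feed it zero $B$-mass.

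Next I would invoke the Boolean Holant$^*$ dichotomy on the two-element domain $\{G,R\}$ applied to the single symmetric ternary signature $[F_{GGG},F_{GGR},F_{GRR},F_{RRR}]$: either it falls into one of the tractable classes of Theorem~\ref{thm:dich-sym-Boolean}, in which case $\mathrm{Holant}^*(\mathbf{F}^{*\to\{G,R\}})$ is in $\mathrm{P}$ and hence so is $\mathrm{Holant}^*(\mathbf{F})$; or it is \#P-hard, in which case $\mathrm{Holant}^*(\mathbf{F})$ is \#P-hard too. To complete the lemma I must match the tractable case with the three forms of Theorem~\ref{thm:ternary}: when $\mathbf{F}^{*\to\{G,R\}}$ is degenerate or has arity-$\le 2$ structure it is a sum of at most two tensor cubes of vectors lying in the $\{G,R\}$-plane (orthogonal complement of $e_B$), so after embedding back we can write $\mathbf{F} = (F_{BBB}) e_B^{\otimes 3} + (\text{a sum of cubes in the } \{G,R\}\text{-plane})$, which is exactly form $1$ or form $2$ (the latter when the $\{G,R\}$-signature involves an isotropic direction, i.e. the ``Fibonacci $\mathcal{P}_{1,\pm 2i}$'' case), and the annihilation-condition form $3$ absorbs the remaining exceptional $\mathcal{P}$-type family.

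The main obstacle I expect is the last bookkeeping step: verifying that \emph{every} tractable two-element ternary signature, once padded by the isolated $F_{BBB} e_B^{\otimes 3}$ term, genuinely lands in one of the three forms of Theorem~\ref{thm:ternary} — in particular handling the Fibonacci families $\mathcal{P}_{a,b}$ (which after the orthogonal normalization in Section~\ref{subsection-canonical} become sums of two vector-cubes, possibly with an isotropic summand, giving form $2$) and the $\mathcal{P}$ / exceptional cases $\mathcal{P}_{1,\pm 2i}$ which must be checked to satisfy the annihilation condition $\langle \mathbf{F}_\beta,\boldsymbol\beta\rangle = \mathbf{0}$ of form $3$ with $\boldsymbol\beta$ the isotropic $\{G,R\}$-vector $(0,1,\pm i)$. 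The component-splitting and the two-way reduction themselves are routine once the propagation claim is stated carefully; the genuine content is this classification cross-check, which I would carry out by going through the four tractable types of Theorem~\ref{thm:dich-sym-Boolean} one at a time, in each case exhibiting the required vectors $\boldsymbol\alpha,\boldsymbol\beta,\boldsymbol\gamma$ (or $\mathbf{F}_\beta$) explicitly inside $\mathbb{C}^3$ with $e_B$ playing the role of the extra orthogonal direction.
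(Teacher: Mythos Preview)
Your overall approach is correct and matches the paper's: the propagation argument (an assignment of $B$ to any edge at an $\mathbf{F}$-vertex forces all incident edges to $B$, hence spreads through each connected component), the resulting two-way equivalence between $\mathrm{Holant}^*(\mathbf{F})$ and $\mathrm{Holant}^*(\mathbf{F}^{*\to\{G,R\}})$, and the invocation of the Boolean dichotomy are exactly what the paper does. Your instinct that the only real work is the final cross-check against the three forms of Theorem~\ref{thm:ternary} is also right.

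However, the tentative correspondence you sketch between the Boolean tractable classes and the three forms is garbled, and if you followed your outline literally you would mis-file several cases. The correct mapping (which is what the paper records) is:
\begin{itemize}
\item A degenerate $\mathbf{F}^{*\to\{G,R\}}$, or a generalized Fibonacci signature in $\mathcal{P}_{a,b}$ with $b\neq\pm 2ia$, decomposes as a sum of at most two tensor cubes of \emph{mutually orthogonal, non-isotropic} vectors in the $\{G,R\}$-plane. Adding $F_{BBB}\,\mathbf{e}_B^{\otimes 3}$ gives form~\textit{1}, not form~\textit{2}.
\item The class $\mathcal{P}$ (the $f_k+f_{k+2}=0$ family, i.e.\ $[x,y,-x,-y]$) decomposes on $\{G,R\}$ as a combination of $(1,i)^{\otimes 3}$ and $(1,-i)^{\otimes 3}$; embedding these as isotropic vectors in $\mathbb{C}^3$ and adding $F_{BBB}\,\mathbf{e}_B^{\otimes 3}$ (orthogonal to both) yields form~\textit{2}, not form~\textit{3}.
\item Only the exceptional Fibonacci class $\mathcal{P}_{1,\pm 2i}$ (double root $\pm i$) lands in form~\textit{3}: here one takes $\boldsymbol{\beta}=(0,1,\pm i)$ and $\mathbf{F}_\beta=F_{BBB}\,\mathbf{e}_B^{\otimes 3}$, which satisfies $\langle\mathbf{F}_\beta,\boldsymbol{\beta}\rangle=\mathbf{0}$, while the $\{G,R\}$-part has the required $\mathrm{Sym}(\boldsymbol{\beta}^{\otimes 2}\otimes\boldsymbol{\gamma})$ shape coming from the repeated root.
\end{itemize}
Two smaller points: you do not need an orthogonal transformation to set up the WLOG---domain separation is already a literal structural property of $\mathbf{F}$; and your aside about realizing $=_{G,R}$ is unnecessary (and not available at this stage of the paper), since the reduction from the Boolean problem is simply to extend each unary by $u(B)=0$, exactly your ``zero $B$-mass'' remark.
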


\begin{proof}
Suppose $B$ is separated from $G$-$R$ in $\mathbf{F}$.
Given any connected signature grid for
Holant$^*(\mathbf{F})$, any assignment of $B$ will be
uniquely propagated as $B$. Hence the tractability or \#P-hardness
of the problem is
determined by the Holant$^*$ problem defined
by $\mathbf{F}$ restricted to $\{G, R\}$. Then
the dichotomy Theorem \ref{thm:dich-sym-Boolean}
shows that Holant$^*( \mathbf{F})$ is either \#P-hard or is in one of
the tractable forms of Theorem \ref{thm:ternary}.
More specifically,
a degenerate signature or a  generalized
Fibonacci gate ($a x_{k+2} - b x_{k+1} - a x_k = 0$)
 on $\{G, R\}$  with
$b \not = \pm 2i a$ lead to form {\it 1}. A Fibonacci gate with
$b = \pm 2i a$ leads to form {\it 3}, where we take $\mathbf{F}_{\beta}
= F_{BBB} \mathbf{e_1}^{\otimes 3}$. Finally the tractable
form $[x, y, -x, -y]$
for $\mathbf{F}^{*\rightarrow\{G,R\}}$
leads to form {\it 2}.
\end{proof}

\begin{theorem}\label{thm:F-with-=_GR}
Let $\mathbf{F}$ be a symmetric ternary function over domain $\{B,G,R\}$, which is not of one of the forms in Theorem \ref{thm:ternary}.
Then Holant$^*(\{ \mathbf{F}, =_{G,R} \})$ is \#P-hard.
 \end{theorem}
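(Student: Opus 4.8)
The plan is to use the auxiliary constraint $=_{G,R}$ to anchor the argument on the known Boolean Holant$^*$ dichotomy, and then to squeeze the structure of $\mathbf{F}$ under domain-separated holographic reductions until it is forced into one of the three tractable forms of Theorem~\ref{thm:ternary} (which contradicts the hypothesis). First I would recall that inserting a vertex labelled $=_{G,R}$ into every edge of an instance of ${\rm Holant}^*(\mathbf{F}^{*\rightarrow\{G,R\}})$ gives a reduction ${\rm Holant}^*(\mathbf{F}^{*\rightarrow\{G,R\}})\leq_T {\rm Holant}^*(\{\mathbf{F},=_{G,R}\})$, since $B$ is annihilated by $=_{G,R}$. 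Hence by Theorem~\ref{thm:dich-sym-Boolean} we may assume $\mathbf{F}^{*\rightarrow\{G,R\}}$ is a tractable Boolean signature.

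Next I would normalize. Every orthogonal $3\times3$ matrix of the domain-separated shape $\left(\begin{smallmatrix} e & 0 & 0\\ 0 & a & b \\ 0 & c & d\end{smallmatrix}\right)$ with $\left(\begin{smallmatrix} a & b\\ c & d\end{smallmatrix}\right)$ orthogonal preserves $=_{G,R}$, preserves the three forms of Theorem~\ref{thm:ternary} (Remark~3 there), and by Fact~\ref{fact-domain-sepa-HR} acts on $\mathbf{F}^{*\rightarrow\{G,R\}}$ exactly as $M^{\otimes 3}$; so I may first put $\mathbf{F}^{*\rightarrow\{G,R\}}$ in canonical form. This splits the proof into the cases (Section~\ref{sec:domain2}): (a) $\mathbf{F}^{*\rightarrow\{G,R\}}$ non-degenerate Fibonacci with two distinct roots, normalized to $[a,0,0,b]$ with $ab\ne 0$; (b) non-degenerate with a double root $b=\pm 2ia$; (c) the exceptional form $[x,y,-x,-y]$; (d) $\mathbf{F}^{*\rightarrow\{G,R\}}$ degenerate, including the identically-zero case. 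In cases (a)--(c) the engine is uniform: using $=_{G,R}$ I restrict some inputs of $\mathbf{F}$ to $\{G,R\}$ and contract the rest against free unary functions $(\alpha,\beta,\gamma)$; the small-triangle and matrix-product gadgets of Section~\ref{section:calculus} — in particular the binary gadget of Figure~\ref{calculus1} realizing the signature (\ref{sig-2-weed-gadget}) — produce binary signatures on $\{G,R\}$ whose three entries are polynomials in $(\alpha,\beta,\gamma)$. By the Boolean Holant$^*$ dichotomy, pairing $\mathbf{F}^{*\rightarrow\{G,R\}}$ (in the normalized form, or its image under the non-orthogonal ${\rm diag}(1,Z)$ with $Z=\frac{1}{\sqrt2}\left(\begin{smallmatrix}1&1\\ i&-i\end{smallmatrix}\right)$, which turns $=_{G,R}$ into $\neq_{G,R}$) with any binary signature that is not $[*,0,*]$, not $[0,*,0]$, and not degenerate makes the problem \#P-hard. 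Each of those three exceptional shapes is the vanishing locus of a fixed polynomial in the parameters, so by \methpoly it suffices to exhibit, for each shape, one parameter choice escaping it; the construction can fail only if all these polynomials vanish identically, and unwinding those identities as linear recurrences in the triangular table of $\mathbf{F}$ (in the spirit of the derivations after (\ref{domain-3-sig-1-is-set-to-B})) pins down enough of $\mathbf{F}$ to place it in form {\it 1}, {\it 2} or {\it 3} of Theorem~\ref{thm:ternary}.

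Case (d) is where I expect the main difficulty. Since any family of binary signatures is tractable in the Holant framework, one cannot finish with a binary gadget; instead I would build a \emph{non-degenerate} signature of arity at least three — generally an \emph{unsymmetric} one — certifying non-degeneracy through the partial symmetry argument, i.e. Facts~\ref{fact-half-sym-cut1}--\ref{fact-half-sym-cut2}, and simultaneously a binary gadget that makes Holant$^*$ \#P-hard regardless of the high-arity companion, invoking the unsymmetric Boolean Holant$^*$ dichotomy Theorem~\ref{thm:dich-asym-Boolean} rather than classifying forms one-by-one. The recalcitrant sub-family, normalized to the ternary signature with triangular rows $0;\;ix,x;\;0,0,0;\;1,i,-1,-i$, in which the isotropic directions $(1,i)$ and $(1,-i)$ interact unfavourably and no such construction survives, I would treat by a direct reduction from counting perfect matchings in $3$-regular graphs, a \#P-hard problem; the same hard problem handles $\mathbf{F}^{*\rightarrow\{G,R\}}=[0,0,0,0]$ (Section~\ref{sec:0000}). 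In every remaining sub-case of (d) the construction succeeds unless $\mathbf{F}$ collapses again into one of the three forms of Theorem~\ref{thm:ternary}. Assembling all cases yields the claimed \#P-hardness, the hardest part being the careful bookkeeping of the many degenerate and isotropic sub-cases in (d) and the verification that the "fantastic regularity" which lets a construction fail is captured exactly by forms {\it 1}--{\it 3}.
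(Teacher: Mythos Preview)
Your plan is correct and tracks the paper's proof essentially step for step: the anchoring reduction via $=_{G,R}$, the domain-separated orthogonal normalization of $\mathbf{F}^{*\rightarrow\{G,R\}}$ into the non-degenerate cases (handled by parameterized binary gadgets and \methpoly, with the $\tilde Z={\rm diag}(1,Z)$ move turning $=_{G,R}$ into $\neq_{G,R}$) and the degenerate case (handled via arity-$4$ gadgets, the partial-symmetry argument, the unsymmetric Boolean dichotomy, and the perfect-matching reduction for the isotropic residue). The only small imprecision is that the ``forbidden binary shapes'' you list ($[*,0,*]$, $[0,*,0]$, degenerate) are the ones for the $H[a,0,0,b]$ case; in the $Z[a,0,0,b]$ and $Z[a,b,0,0]$ cases the target shapes differ (e.g.\ $[0,*,*]$ in the latter), but this does not affect the structure of the argument.
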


Theorem~\ref{theorem6.1-get=_GR} and \ref{thm:F-with-=_GR} imply our main
Theorem~\ref{thm:ternary}. The rest of this paper is devoted to
the proof of Theorem~\ref{thm:F-with-=_GR}.

Using $=_{G,R}$
we can realize signatures over domain $\{G,R\}$ from $\mathbf{F}$
such as $\mathbf{F}^{*\rightarrow\{G,R\}}$.
If Holant$^*(\mathbf{F}^{*\rightarrow\{G,R\}})$ is
already \#P-hard as a problem over size 2 domain $\{G,R\}$,
then Holant$^*(\{ \mathbf{F}, =_{G,R} \})$ is \#P-hard and we are done.
Therefore, we only need
 to deal with the cases when Holant$^*(\mathbf{F}^{*\rightarrow\{G,R\}})$
is tractable. They are listed as follows.

 \begin{enumerate}
   \item  $\mathbf{F}^{*\rightarrow\{G,R\}}=H [a, 0, 0, b]^{\tt T}$, where $H$ is a $2 \times 2$ orthogonal matrix, $a b \neq 0$.
   \item  $\mathbf{F}^{*\rightarrow\{G,R\}}=Z [a, 0, 0, b]^{\tt T}$,
where $Z=\frac{1}{\sqrt{2}}\begin{bmatrix} 1 & 1 \\ i & -i \end{bmatrix}$, $a b \neq 0$.
   \item $\mathbf{F}^{*\rightarrow\{G,R\}}=Z [a, b, 0, 0]^{\tt T}$,
where $Z=\frac{1}{\sqrt{2}}\begin{bmatrix} 1 & 1 \\ i & -i \end{bmatrix}$ or  $Z=\frac{1}{\sqrt{2}}\begin{bmatrix} 1 & 1 \\ -i & i \end{bmatrix}$, $b\neq 0$.
   \item $\mathbf{F}^{*\rightarrow\{G,R\}}$ is degenerate.
 \end{enumerate}

 We will prove Theorem~\ref{thm:F-with-=_GR}
by considering  these four cases one by one.
The overall proof approach for the first three cases
 is to construct a binary function over the domain $\{G,R\}$
such that, together with $\mathbf{F}^{*\rightarrow\{G,R\}}$
it is already \#P-hard according to the dichotomy theorem
 for Holant$^*$ over domain size 2, Theorem \ref{thm:dich-sym-Boolean}.
For some functions $\mathbf{F}$, we fail to do this;
 and  whenever this happens,
we show that $\mathbf{F}$ is indeed among
the tractable cases in Theorem \ref{thm:ternary}.
For the fourth case, where $\mathbf{F}^{*\rightarrow\{G,R\}}$
is degenerate  on $\{G,R\}$, our approach is different,
where we need to construct gadgets with a larger arity,
and will be dealt with in later subsections.

\vspace{.1 in}

\noindent {\bf Case 1: $\mathbf{F}^{*\rightarrow\{G,R\}}=
H [a, 0, 0, b]^{\tt T}, ~~~ab \not =0$.}

After a \methsepahr under the
orthogonal matrix $\begin{bmatrix} 1 & \mathbf{0} \\ \mathbf{0} & H \end{bmatrix}$, we can assume that
$\mathbf{F}^{*\rightarrow\{G,R\}}=[a, 0, 0, b]$,
where we are given $ab \not =0$. We note that this transformation does not change $=_{G,R}$.
According to Theorem \ref{thm:dich-sym-Boolean},
when putting this $[a, 0, 0, b]$ and a binary
function together, the problem is \#P-hard unless the binary
function is of the form $[*, 0, *]$, $[0,*,0]$
or degenerate.
Now $\mathbf{F}$ has the form
\begin{center}
\begin{tabular}{*{11}{c}c}
{ }     & { }   & { }   &$F_{BBB}$& { }     & { }   & { }  \\
{ }     & { }   &$F_{BBG}$& { } &$F_{BBR}$& { }     & { }  \\
{ }     &$F_{BGG}$& { } &  $F_{BGR}$& { }   &$F_{BRR}$& { }  \\
 $a$& { } &$0$& { } & $0$& { } &   $b$  \\
\end{tabular}
\end{center}

Suppose $F_{BGR}\neq 0$.
We can realized a binary function $[F_{BGG}+a t, F_{BGR}, F_{BRR}]$ over domain $\{G,R\}$ by connecting this ternary
function to
a unary function $(1,t,0)$,
namely $\langle (1,t,0), \mathbf{F} \rangle$,
 and then putting $=_{G,R}$ on the other two dangling edges.
Since $a\neq 0$ and we can choose any $t$, we can make the first entry of
$[F_{BGG}+a t, F_{BGR}, F_{BRR}]$ arbitrary  and
 the function is out of all three tractable binary forms. Therefore the problem is \#P-hard.

Now we can assume
that $F_{BGR}=0$. To simplify notations,
we use variables to denote the function entries as follows
\begin{equation}\label{variable-form-for-F-center-all-0}
\begin{tabular}{*{11}{c}c}
{ }     & { }   & { }   &$g$& { }     & { }   & { }  \\
{ }     & { }   &$y$& { } &$w$& { }     & { }  \\
{ }     &$x$& { } &  $0$& { }   &$z$& { }  \\
 $a$& { } &$0$& { } & $0$& { } &   $b$  \\
\end{tabular}
\end{equation}

Then we use the gadget as depicted in Figure \ref{binary-1-2}   to construct another binary function.
\begin{figure}[hbtp]
	\begin{center}
		\includegraphics[width=3 in]{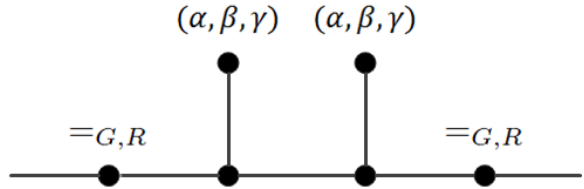}
	\caption{A binary gadget.}
	\label{binary-1-2}
	\end{center}
\end{figure}
The signature of this binary function has been calculated in
Section~\ref{section:calculus}
(see (\ref{sig-2-weed-gadget})), and is
\[[f_0,f_1,f_2]=[(\alpha x +\beta a)^2+(\alpha y + \beta x)^2, (\alpha y + \beta x)(\alpha w + \gamma z), (\alpha z + \gamma b)^2+(\alpha w + \gamma z)^2].\]
If there exists
some $(\alpha, \beta, \gamma)$ such that this  $[f_0,f_1,f_2]$ is not of
the form $[*, 0, *]$, $[0,*,0]$, or degenerate, then the problem is \#P-hard
and we are done.

All conditions are polynomials (1) $f_0=f_2=0$, or (2) $f_1=0$, or (3) $f_1^2=f_0 f_2$.
By \methpoly, we only need to deal with cases that one of them is the zero
polynomial.





If statement (1) $f_0=f_2=0$ holds for all
$(\alpha, \beta, \gamma)$, we have
\[ (x^2+y^2) \alpha^2 + 2(a x+ x y) \alpha \beta + (a^2+x^2) \beta^2 =  (z^2+w^2) \alpha^2 + 2(b z+ z w) \alpha \gamma + (b^2+z^2) \gamma^2  =0,\]
as identically zero polynomials in $(\alpha, \beta, \gamma)$.
Therefore we have
\[ x^2+y^2 =a x+ x y=a^2+x^2=z^2+w^2=b z+ z w=b^2+z^2=0.\]
Since $a  \neq 0$, we have $x  \neq 0$ from $a^2+x^2=0$. Similarly, we have $z \neq 0$. Then the conclusion is $x=\epsilon_1 a, ~~ y=-a,~~ z=\epsilon_2 b,~~ w=-b$,
where $\epsilon_1, \epsilon_2 \in \{i, -i\}$.  Then we rewrite our function as follows
\begin{center}
\begin{tabular}{*{11}{c}c}
{ }     & { }   & { }   &$g$& { }     & { }   & { }  \\
{ }     & { }   &$-a$& { } &$-b$& { }     & { }  \\
{ }     &$\epsilon_1 a$& { } &  $0$& { }   &$\epsilon_2 b$& { }  \\
 $a$& { } &$0$& { } & $0$& { } &   $b$  \\
\end{tabular}
\end{center}

\begin{figure}[hbtp]
	\begin{center}
		\includegraphics[width=3 in]{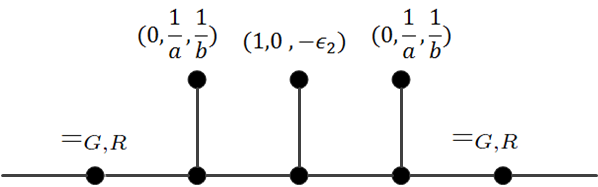}
	\caption{A binary gadget.}
	\label{fig:binary-2}
	\end{center}
\end{figure}
Next
 we use the gadget depicted
 in Figure \ref{fig:binary-2} to construct another binary function over
domain $\{G,R\}$, whose signature is calculated with the
techniques of Section~\ref{section:calculus}
\[\begin{bmatrix} \epsilon_1 & 1 & 0 \\ \epsilon_2 & 0 & 1 \end{bmatrix}
\begin{bmatrix} g+ \epsilon_2 b & -a & 0 \\ -a & \epsilon_1 a & 0 \\ 0 & 0 & 0 \end{bmatrix}
\begin{bmatrix} \epsilon_1 & \epsilon_2  \\ 1 & 0 \\ 0 & 1 \end{bmatrix}
=\begin{bmatrix} - g -\epsilon_1 a -\epsilon_2 b  &
 \epsilon_1 \epsilon_2 (g + \epsilon_1 a + \epsilon_2 b)  \\
\epsilon_1 \epsilon_2 (g + \epsilon_1 a + \epsilon_2 b)
& -g-\epsilon_2 b  \end{bmatrix}.\]
If $g  +\epsilon_1 a +\epsilon_2 b \neq 0$, this symmetric binary signature can not be of the
form $[*, 0, *]$ or $[0,*,0]$, and it is not degenerate
as its determinant is nonzero.
Therefore the problem
is \#P-hard.

If  $g + \epsilon_1 a + \epsilon_2 b = 0$,
we show that this is indeed a tractable case in Theorem \ref{thm:ternary}.
It is of the second form in Theorem \ref{thm:ternary}
where $\mbox{\boldmath $\alpha$} =(0,0,0)^{\tt T},
\mbox{\boldmath $\beta_1$}=\sqrt[3]{a}(\epsilon_1, 1, 0)^{\tt T}$
and $\mbox{\boldmath $\beta_2$}=\sqrt[3]{b}(\epsilon_2, 0, 1)^{\tt T}$.

If statement (2) $f_1=0$ holds for all $(\alpha, \beta, \gamma)$, we have
 $x=y=0$ or $z=w=0$. If $x=y=0$, the ternary
function (\ref{variable-form-for-F-center-all-0}) is as follows
\begin{center}
\begin{tabular}{*{11}{c}c}
{ }     & { }   & { }   &$g$& { }     & { }   & { }  \\
{ }     & { }   &$0$& { } &$w$& { }     & { }  \\
{ }     &$0$& { } &  $0$& { }   &$z$& { }  \\
 $a$& { } &$0$& { } & $0$& { } &   $b$  \\
\end{tabular}
\end{center}
Then $G$ is separated from $B$-$R$,
and by Lemma \ref{lemma:separate}, we are done.
The case $z=w=0$ is similar.

If statement (3) $f_1^2=f_0 f_2$ holds for all $(\alpha, \beta, \gamma)$,
we have
\begin{equation}\label{eqn:f1^2=f0f2}
(\alpha x +\beta a)^2 (\alpha z + \gamma b)^2 + (\alpha x +\beta a)^2 (\alpha w + \gamma z)^2 +(\alpha y + \beta x)^2 (\alpha z + \gamma b)^2  = 0.
\end{equation}
Let $\alpha=a$ and $\beta=-x$, we have $(a y - x^2)^2 (a z + \gamma b)^2=0$ holds for all $\gamma$. Since $b\neq 0$, we can choose $\gamma$ such that $a z + \gamma b \neq 0$
and conclude that $a y - x^2=0$. Similarly, let  $\alpha=b$ and $\gamma=-z$, we can get $bw -z^2=0$. Then let $\beta=\gamma=1$ and $\alpha=0$ in (\ref{eqn:f1^2=f0f2}), we have
\[ a^2b^2 +a^2z^2+b^2x^2=0.\]
Denote by $p=\frac{x}{a}$ and $q=\frac{z}{b}$, we have $p^2+q^2+1=0$ and the ternary signature in (\ref{variable-form-for-F-center-all-0}) has the following
form
\begin{center}
\begin{tabular}{*{11}{c}c}
{ }     & { }   & { }   &$g$& { }     & { }   & { }  \\
{ }     & { }   &$a p^2$& { } &$b q^2$& { }     & { }  \\
{ }     &$a p$& { } &  $0$& { }   &$ b q$& { }  \\
 $a$& { } &$0$& { } & $0$& { } &   $b$  \\
\end{tabular}
\end{center}
If $p=0$ or $q=0$, then the function is separable and we are done by Lemma \ref{lemma:separate}. In the
following, we assume that $pq\neq 0$.

\begin{figure}[hbtp]
	\begin{center}
		\includegraphics[width=3 in]{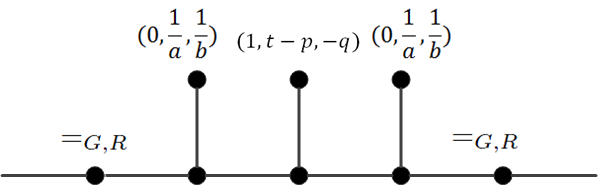}
	\caption{A binary gadget.}
	\label{fig:binary-3}
	\end{center}
\end{figure}
Then we use the gadget  in Figure \ref{fig:binary-3} to construct another binary function over domain $\{G,R\}$, whose signature is
\[\begin{bmatrix} p & 1 & 0 \\ q & 0 & 1 \end{bmatrix}
\begin{bmatrix} g-  b q^3-a p^3 + a p^2 t & a p t & 0 \\ a p t &  a t & 0 \\ 0 & 0 & 0 \end{bmatrix}
\begin{bmatrix} p & q  \\ 1 & 0 \\ 0 & 1 \end{bmatrix}=\begin{bmatrix}p^2 \delta + at (p^2 +1)^2	& 	pq \delta + a pqt (p^2 +1) \\
pq \delta + a pqt (p^2 +1)	&	q^2 \delta + a p^2 q^2 t
    \end{bmatrix} ,\]
where $\delta=g-a p^3 - b q^3$. We denote this symmetric binary function as $[g_0, g_1, g_2]$.

If $\delta=0$, one
can verify  that this is indeed a tractable case of Theorem \ref{thm:ternary}.
 This is of the
third form of Theorem \ref{thm:ternary}, where
$\mbox{\boldmath $\beta$}=(1,-p,-q)^{\tt T}, \mbox{\boldmath $\gamma$}
=(0,0,0)^{\tt T}$,
and $\mathbf{F}_\beta$ is the given function $\mathbf{F}$.

Now we assume that $\delta \neq 0$. If there exists
 some $t$ such that this binary function is not of
the form $[*, 0, *]$, $[0,*,0]$, or degenerate, then the problem is \#P-hard and we are done.
Otherwise, by the same argument as above,  at least one of the three
statements (i) $g_0=g_2=0$, (ii) $g_1=0$, or (iii) $g_1^2=g_0 g_2$
holds for all $t$. Choose $t=0$, we have all three
$g_0, g_1, g_2 \neq 0$. Therefore, the
only possibility is  that $g_1^2=g_0 g_2$ holds for all $t$. However, this is also impossible which can be seen by choosing $t=\frac{1}{a}$.
One can calculate the determinant
$\det \begin{bmatrix} g_0 & g_1 \\ g_1 & g_2  \end{bmatrix} =
\delta q^2 \not =0$.
This completes the proof for the case  $\mathbf{F}^{*{}=\{G,R\}}=H [a, 0, 0, b]^{\tt T}$.

\vspace{.1 in}

\noindent {\bf Case 2: $\mathbf{F}^{*\rightarrow\{G,R\}}
=Z [a, 0, 0, b]^{\tt T},
~~~ab \not = 0$.}

The  problem
Holant$^*(\{\mathbf{F},=_{G,R}\})$
 can be written as Holant$^*(=_2 | \{\mathbf{F},=_{G,R}\})$, where $*$ means that both sides
can use all unary functions.
After  a holographic transformation under the matrix $\tilde{Z}=\begin{bmatrix} 1 & \mathbf{0} \\ \mathbf{0} & Z \end{bmatrix}$, we can get an
equivalent problem  Holant$^*(\neq_{B;G,R} | \{\tilde{Z}^{-1}\mathbf{F},\neq_{G,R}\})$, where the two binary
functions are, respectively,
\begin{equation}\label{neq-BGR-def}
(\neq_{B;G,R}) = \tilde{Z}^{\tt T} (=_2)
\tilde{Z} =
 \begin{bmatrix} 1 & 0 & 0 \\ 0 & 0 & 1 \\ 0 & 1 & 0 \end{bmatrix},~~~
\mbox{and}~~~  (\neq_{G,R}) =
\tilde{Z}^{-1} (=_{G,R}) (\tilde{Z}^{-1})^{\tt T}
= \begin{bmatrix} 0 & 0 & 0 \\ 0 & 0 & 1 \\ 0 & 1 & 0 \end{bmatrix}.
\end{equation}
We use $\mathbf{\tilde{F}}$ to denote the ternary function
$\tilde{Z}^{-1}\mathbf{F}$ after
the transformation.
Then we have  $\mathbf{\tilde{F}}^{*\rightarrow\{G,R\}}= [a, 0, 0, b]$.
By connecting $\neq_{B;G,R}$ to both sides of $\neq_{G,R}$, we can get
the function  $\neq_{G,R}$ on the LHS.
For a bipartite holant problem
Holant$^*([f_0,f_1, f_2]| [a, 0, 0, b])$
 over domain size 2,
the problem is \#P-hard unless the binary function $[f_0,f_1, f_2]$ is of
the form $[*, 0, *]$, $[0,*,0]$, or degenerate~\cite{holant}.
Therefore, we will try to construct binary functions
 in the LHS of Holant$^*(\{\neq_{B;G,R}, \neq_{G,R}\}
| \{\mathbf{\tilde{F}}, \neq_{G,R}\})$
over domain $\{G,R\}$.

Our ternary function $\mathbf{\tilde{F}}$ is as follows
\begin{center}
\begin{tabular}{*{11}{c}c}
{ }     & { }   & { }   &$\tilde{F}_{BBB}$& { }     & { }   & { }  \\
{ }     & { }   &$\tilde{F}_{BBG}$& { } &$\tilde{F}_{BBR}$& { }     & { }  \\
{ }     &$\tilde{F}_{BGG}$& { } &  $\tilde{F}_{BGR}$& { }   &$\tilde{F}_{BRR}$& { }  \\
 $a$& { } &$0$& { } & $0$& { } &   $b$  \\
\end{tabular}
\end{center}

If $\tilde{F}_{BGR}\neq 0$, we can realized a binary function $[\tilde{F}_{BRR}, , \tilde{F}_{BGR}, \tilde{F}_{BGG}+a t]$ over domain $\{G,R\}$ by connecting this ternary function to
a unary function $(1,t,0)$ and putting $\neq_{G,R}$ on
the other two dangling edges. Since $a\neq 0$ and we can choose any $t$, we can make the third entry of
$[\tilde{F}_{BRR}, , \tilde{F}_{BGR}, \tilde{F}_{BGG}+a t]$ arbitrary  and
 the function is not in all three tractable binary forms.
Therefore the problem is \#P-hard. Now we can assume
that $\tilde{F}_{BGR}=0$. To simplify notations,
we use variables to denote the function entries as follows
\begin{center}
\begin{tabular}{*{11}{c}c}
{ }     & { }   & { }   &$g$& { }     & { }   & { }  \\
{ }     & { }   &$y$& { } &$w$& { }     & { }  \\
{ }     &$x$& { } &  $0$& { }   &$z$& { }  \\
 $a$& { } &$0$& { } & $0$& { } &   $b$  \\
\end{tabular}
\end{center}

Then we use the gadget depicted in Figure \ref{fig:binary-1-z}   to construct another binary function in the LHS.
\begin{figure}[hbtp]
	\begin{center}
		\includegraphics[width=3 in]{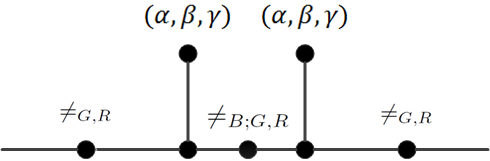}
	\caption{A binary gadget.}
	\label{fig:binary-1-z}
	\end{center}
\end{figure}
The signature of this binary function is
\[[f_0,f_1,f_2]=[(\alpha w + \gamma z)^2, (\alpha y + \beta x)(\alpha w + \gamma z) + (\alpha x +\beta a)(\alpha z + \gamma b) , (\alpha y + \beta x)^2].\]
If there exists some $(\alpha, \beta, \gamma)$ such that this  $[f_0,f_1,f_2]$ is not of
the form $[*, 0, *]$, $[0,*,0]$, or degenerate, then the problem is \#P-hard
and we are done. Otherwise, for all $(\alpha, \beta, \gamma)$, we have (1)
$f_0=f_2=0$, (2) $f_1=0$, or (3) $f_1^2=f_0 f_2$. Since all the conditions are
polynomials of  $(\alpha, \beta, \gamma)$, we can conclude that at least one of the three conditions (1), (2), or (3) holds for all $(\alpha, \beta, \gamma)$.

If condition (1) $f_0=f_2=0$ holds for all $(\alpha, \beta, \gamma)$, we have $x=y=z=w=0$ and the problem is separable and therefore tractable. And it can be
easily verified that this is the second form of Theorem \ref{thm:ternary}, where $\mbox{\boldmath $\alpha$}=\sqrt[3]{g}(1,0,0)^{\tt T}$,
 $\mbox{\boldmath $\beta_1$}=\frac{\sqrt[3]{a}}{\sqrt{2}}(0,1,i)^{\tt T}$ and
$\mbox{\boldmath $\beta_2$}=\frac{\sqrt[3]{b}}{\sqrt{2}}(0,1,-i)^{\tt T}$.

If condition (2) $f_1=0$ holds for all $(\alpha, \beta, \gamma)$, we have that
\[xz+yw=xb+yz=az + x w= ab +xz=0.\]
Since $xz=-ab\neq 0$, we can conclude from above that
\[\frac{x}{a}=\frac{y}{x}=p,~~~\frac{z}{b}=\frac{w}{z}=q,~~~\mbox{and}~~~pq=-1.\]
The ternary signature has the form
\begin{center}
\begin{tabular}{*{11}{c}c}
{ }     & { }   & { }   &$g$& { }     & { }   & { }  \\
{ }     & { }   &$a p^2$& { } &$b q^2$& { }     & { }  \\
{ }     &$a p$& { } &  $0$& { }   &$ b q$& { }  \\
 $a$& { } &$0$& { } & $0$& { } &   $b$  \\
\end{tabular}
\end{center}

\begin{figure}[hbtp]
	\begin{center}
		\includegraphics[width=3 in]{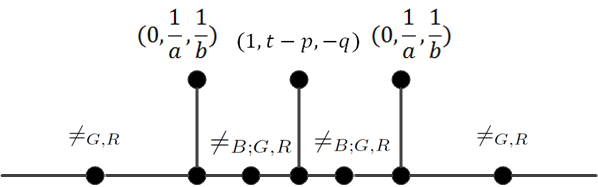}
	\caption{A binary gadget.}
	\label{fig:binary-3-z1}
	\end{center}
\end{figure}
Then we use the gadget in Figure \ref{fig:binary-3-z1} to construct another binary function over domain $\{G,R\}$, whose signature is
\[\begin{bmatrix} q & 0 & 1 \\ p & 1 & 0   \end{bmatrix}
\begin{bmatrix} g-  b q^3-a p^3 + a p^2 t & 0 & a p t  \\ 0 & 0 & 0 \\ a p t & 0 & a t   \end{bmatrix}
\begin{bmatrix} q & p \\ 0 & 1 \\ 1 & 0 \end{bmatrix}=\begin{bmatrix}q^2 \delta + at (pq +1)^2	& 	pq \delta + a p^2 t (pq +1) \\
pq \delta + a p^2 t (pq +1)	&	p^2 \delta + a p^4 t
    \end{bmatrix} ,\]
where $\delta=g-a p^3 - b q^3$. We denote this symmetric binary function as $[g_0, g_1, g_2]$.

If $\delta = 0$, i.e.,  $g=a p^3 + b q^3 $, we show that this is indeed a tractable case of Theorem \ref{thm:ternary} as follows.
The ternary function $\mathbf{\tilde{F}}$ can be written as
\[\mathbf{\tilde{F}}=a \begin{bmatrix} p \\ 1 \\ 0 \end{bmatrix}^{\otimes 3} + b \begin{bmatrix} q \\ 0 \\ 1 \end{bmatrix}^{\otimes 3}.\]
And
\[\mathbf{F}=\tilde{Z}\mathbf{\tilde{F}}
=a \begin{bmatrix} p \\ \frac{1}{\sqrt{2}} \\ \frac{i}{\sqrt{2}} \end{bmatrix}^{\otimes 3} + b \begin{bmatrix} q \\ \frac{1}{\sqrt{2}} \\ -\frac{i}{\sqrt{2}} \end{bmatrix}^{\otimes 3}.\]
This is of the first form of tractable cases in  Theorem \ref{thm:ternary},
where $\mbox{\boldmath $\alpha$}
=\sqrt[3]{a}(p,\frac{1}{\sqrt{2}} , \frac{i}{\sqrt{2}})^{\tt T}$,
$\mbox{\boldmath $\beta$}
=\sqrt[3]{b}(q,\frac{1}{\sqrt{2}} , -\frac{i}{\sqrt{2}})^{\tt T}$ and
$\mbox{\boldmath $\gamma$}=(0,0,0)^{\tt T}$.
We note that the condition
$\langle \mbox{\boldmath $\alpha$},\mbox{\boldmath $\beta$} \rangle=0$
is guaranteed by $pq=-1$.

Now we assume that $\delta \neq 0$. If there exists
 some $t$ such that the binary function $[g_0, g_1, g_2]$ is not of
the form $[*, 0, *]$, $[0,*,0]$, or degenerate, then the problem is \#P-hard and we are done.
Otherwise, by the same argument as above,  at least one of the three
statements  holds for all $t$:
(i) $g_0=g_2=0$, (ii) $g_1=0$, or (iii) $g_1^2=g_0 g_2$. Choose $t=0$, we have $g_0 g_1 g_2 \neq 0$. Therefore, the only possibility is  that $g_1^2=g_0 g_2$ holds for all $t$. However, this is also a contradiction which can be seen by choosing $t=\frac{1}{a}$.
One can calculate the determinant
$\det \begin{bmatrix} g_0 & g_1 \\ g_1 & g_2  \end{bmatrix} =
\delta p^2 \not =0$.

If condition (3)
$f_1^2=f_0 f_2$ holds for all $(\alpha, \beta, \gamma)$, we have
\begin{equation}\label{eqn:f1^2=f0f2-za00b}
0=f_1^2 - f_0 f_2 = 2(\alpha y + \beta x)(\alpha w + \gamma z)(\alpha x +\beta a)(\alpha z + \gamma b) + (\alpha x +\beta a)^2(\alpha z + \gamma b)^2 .
\end{equation}
Let $\alpha=x$ and $\beta=-y$, we have $( x^2-ay)^2 (x z + \gamma b)^2=0$ holds for all $\gamma$. Since $b\neq 0$,
we conclude that $a y - x^2=0$. Similarly, let  $\alpha=z$ and $\gamma=-w$, we can get $bw -z^2=0$. Then let $\beta=\gamma=1$ and $\alpha=0$ in (\ref{eqn:f1^2=f0f2-za00b}), we have
\[ ab + 2 x z=0.\]
Denote by $p=\frac{x}{a}$ and $q=\frac{z}{b}$, we have $pq=-\frac{1}{2}$ and the ternary signature has the form
\begin{center}
\begin{tabular}{*{11}{c}c}
{ }     & { }   & { }   &$g$& { }     & { }   & { }  \\
{ }     & { }   &$a p^2$& { } &$b q^2$& { }     & { }  \\
{ }     &$a p$& { } &  $0$& { }   &$ b q$& { }  \\
 $a$& { } &$0$& { } & $0$& { } &   $b$  \\
\end{tabular}
\end{center}

\begin{figure}[hbtp]
	\begin{center}
		\includegraphics[width=3 in]{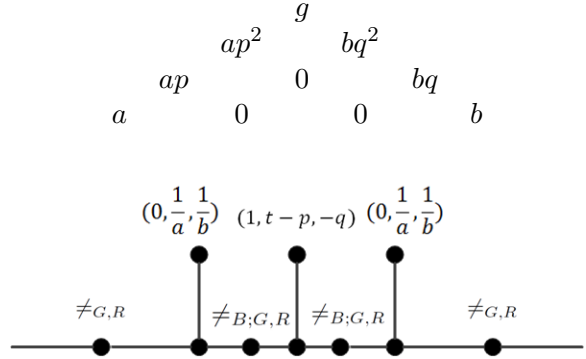}
	\caption{A binary gadget.}
	\label{fig:binary-3-z1-copy}
	\end{center}
\end{figure}
Then we use the gadget in Figure \ref{fig:binary-3-z1-copy} to construct another binary function over domain $\{G,R\}$, whose signature is
\[\begin{bmatrix} q & 0 & 1 \\ p & 1 & 0   \end{bmatrix}
\begin{bmatrix} g-  b q^3-a p^3 + a p^2 t & 0 & a p t  \\ 0 & 0 & 0 \\ a p t & 0 & a t   \end{bmatrix}
\begin{bmatrix} q & p \\ 0 & 1 \\ 1 & 0 \end{bmatrix}=\begin{bmatrix}q^2 \delta + at (pq +1)^2	& 	pq \delta + a p^2 t (pq +1) \\
pq \delta + a p^2 t (pq +1)	&	p^2 \delta + a p^4 t
    \end{bmatrix} ,\]
where $\delta=g-a p^3 - b q^3$. We denote this symmetric binary function as $[g_0, g_1, g_2]$.
(This is the same construction as in Figure~\ref{fig:binary-3-z1},
but $p$ and $q$ have a different meaning.)

If $\delta=0$, we show that this is indeed a tractable case of Theorem \ref{thm:ternary} as follows.
The ternary function $\mathbf{\tilde{F}}$ can be written as
\[\mathbf{\tilde{F}}=a \begin{bmatrix} p \\ 1 \\ 0 \end{bmatrix}^{\otimes 3} + b \begin{bmatrix} q \\ 0 \\ 1 \end{bmatrix}^{\otimes 3}.\]
And
\[\mathbf{F}=\tilde{Z}\mathbf{\tilde{F}}=a \begin{bmatrix} p \\ \frac{1}{\sqrt{2}} \\ \frac{i}{\sqrt{2}} \end{bmatrix}^{\otimes 3} + b \begin{bmatrix} q \\ \frac{1}{\sqrt{2}} \\ -\frac{i}{\sqrt{2}} \end{bmatrix}^{\otimes 3}.\]
This is of the third form of tractable case in  Theorem \ref{thm:ternary},
where $\mathbf{F}_\beta$ is the given function
$\mathbf{F}$ and
$\mbox{\boldmath $\beta$}=(-\sqrt{2}, p+q, -p i + q i)^{\tt T},
\mbox{\boldmath $\gamma$}=(0,0,0)^{\tt T}$.
We note that $pq=-\frac{1}{2}$ implies that
$\langle \mbox{\boldmath $\beta$},\mbox{\boldmath $\beta$} \rangle=0$.

Now we assume that $\delta \neq 0$. If there exists some $t$ such that this binary function is not of the form $[*, 0, *]$, $[0,*,0]$, or degenerate, then the problem is \#P-hard and we are done.
Otherwise, by the same argument as above,  at least one of the three
(i) $g_0=g_2=0$, (ii) $g_1=0$, or (iii) $g_1^2=g_0 g_2$ holds for all $t$. Choose $t=0$, we have $g_0 g_1 g_2 \neq 0$. Therefore, the only
 possibility is  that $g_1^2=g_0 g_2$ holds for all $t$. However, this is also a contradiction which can be seen by choosing $t=\frac{1}{a}$.
One can calculate the determinant
$\det \begin{bmatrix} g_0 & g_1 \\ g_1 & g_2  \end{bmatrix} =
\delta p^2 \not =0$.
This completes the proof for the case  $\mathbf{F}^{*\rightarrow\{G,R\}}=Z [a, 0, 0, b]^{\tt T}$.

\vspace{.1 in}

\noindent {\bf Case 3: $\mathbf{F}^{*\rightarrow\{G,R\}}
=Z [a, b, 0, 0]^{\tt T},
~~~b \not = 0$.}

Here we only prove for the case $Z=\frac{1}{\sqrt{2}}\begin{bmatrix} 1 & 1 \\ i & -i \end{bmatrix}$. The other case is symmetric.
The Holant problem can be written as Holant$^*(=_2 | \{\mathbf{F},=_{G,R}\})$, where $*$ means that both sides
can use unary functions.
After a holographic transformation under the matrix $\tilde{Z}=\begin{bmatrix} 1 & \mathbf{0} \\ \mathbf{0} & Z \end{bmatrix}$, we can get an equivalent problem  Holant$^*(\neq_{B;G,R} | \{\tilde{Z}^{-1} \mathbf{F},\neq_{G,R}\})$, where
the two binary functions $\neq_{B;G,R}$ and $\neq_{G,R}$
are given in (\ref{neq-BGR-def}).
We use $\mathbf{\tilde{F}}$ to denote this ternary function after
the transformation $\tilde{Z}^{-1} \mathbf{F}$.
Then we have  $\mathbf{\tilde{F}}^{*\rightarrow\{G,R\}}= [a, b, 0, 0]$. And after a scaling, we assume that
$\mathbf{\tilde{F}}^{*\rightarrow\{G,R\}}= [a, 1, 0, 0]$.
By connecting $\neq_{B;G,R}$ to both sides of $\neq_{G,R}$, we can get a $\neq_{G,R}$ on the LHS.
For a bipartite holant problem
Holant$^*([f_0,f_1, f_2]| [a, 1, 0, 0])$ over domain size 2,
the problem is \#P-hard unless the binary function $[f_0,f_1, f_2]$ is of
the form $[0, *, *]$ or degenerate.
This can be seen as follows: Clearly for such $[f_0,f_1, f_2]$
it is tractable, as $[0, *, *]$ requires the number of edges assigned 0
to be at most the number assigned 1, while $[a, 1, 0, 0]$
requires the number of edges assigned 0 to be strictly more than
the number assigned 1. Suppose $[f_0,f_1, f_2]$ is nondegenerate
and not of this
form, we may normalize it to $[1, b, c]$ where $c \not = b^2$.
Consider the holographic reduction defined by
$M = \begin{bmatrix} 1 & b \\ 0 & \sqrt{c-b^2} \end{bmatrix}$.
The matrix form for $[1,0,1] M^{\otimes 2}$ is
$M^{\tt T} I_2 M = \begin{bmatrix} 1 & b \\ b & c \end{bmatrix}$,
namely $[1, b, c]$,
while $M^{\otimes 3} [a, 1, 0, 0]$ is
\[
\begin{bmatrix} 1 & b \\ 0 & \sqrt{c-b^2} \end{bmatrix}^{\otimes 3}
\left[ a \begin{bmatrix} 1 \\ 0 \end{bmatrix}^{\otimes 3} +
\begin{bmatrix} 1 \\ 0 \end{bmatrix} \otimes
\begin{bmatrix} 1 \\ 0 \end{bmatrix} \otimes
\begin{bmatrix} 0 \\ 1 \end{bmatrix}
+
\begin{bmatrix} 1 \\ 0 \end{bmatrix} \otimes
\begin{bmatrix} 0 \\ 1 \end{bmatrix} \otimes
\begin{bmatrix} 1 \\ 0 \end{bmatrix}
+
\begin{bmatrix} 0 \\ 1 \end{bmatrix} \otimes
\begin{bmatrix} 1 \\ 0 \end{bmatrix} \otimes
\begin{bmatrix} 1 \\ 0 \end{bmatrix}
\right],\]
which is $[a+ 3b, \sqrt{c-b^2}, 0, 0]$.
By Theorem  \ref{thm:dich-sym-Boolean},
Holant$^*([a+ 3b, \sqrt{c-b^2}, 0, 0])$ is \#P-hard.
Therefore, to show \#P-hardness, we will
construct binary functions in the LHS of Holant$^*(\{\neq_{B;G,R}, \neq_{G,R}\} | \mathbf{\tilde{F}})$ over domain $\{G,R\}$.

Now we have the ternary function $\mathbf{\tilde{F}}$ as follows
\begin{center}
\begin{tabular}{*{11}{c}c}
{ }     & { }   & { }   &$\tilde{F}_{BBB}$& { }     & { }   & { }  \\
{ }     & { }   &$\tilde{F}_{BBG}$& { } &$\tilde{F}_{BBR}$& { }     & { }  \\
{ }     &$\tilde{F}_{BGG}$& { } &  $\tilde{F}_{BGR}$& { }   &$\tilde{F}_{BRR}$& { }  \\
 $a$& { } &$1$& { } & $0$& { } &   $0$  \\
\end{tabular}
\end{center}

If $\tilde{F}_{BRR}\neq 0$, we can realized a binary function $[\tilde{F}_{BRR}, \tilde{F}_{BGR}+ t, \tilde{F}_{BGG}+a t]$ in LHS over domain $\{G,R\}$ by connecting this ternary function to
a unary function $(1,t,0)$ and putting $\neq_{G,R}$ on
the other two dangling edges. It can be easily seen that we can choose some $t$ such that $[\tilde{F}_{BRR}, \tilde{F}_{BGR}+ t, \tilde{F}_{BGG}+a t]$ is not degenerate. And it is not of
the form $[0, *, *]$  since $\tilde{F}_{BRR} \neq 0$.
Therefore the problem is \#P-hard. Now we can assume
that $\tilde{F}_{BRR}=0$. To simplify notations, we use
variables to denote the function entries of $\mathbf{\tilde{F}}$ as follows
\begin{equation}\label{F-tilde-matchingonGR}
\begin{tabular}{*{11}{c}c}
{ }     & { }   & { }   &$g$& { }     & { }   & { }  \\
{ }     & { }   &$z$& { } &$w$& { }     & { }  \\
{ }     &$x$& { } &  $y$& { }   &$0$& { }  \\
 $a$& { } &$1$& { } & $0$& { } &   $0$  \\
\end{tabular}
\end{equation}

Then we use the gadget in Figure \ref{fig:binary-1-zm}   to construct another binary function.
\begin{figure}[hbtp]
	\begin{center}
		\includegraphics[width=3 in]{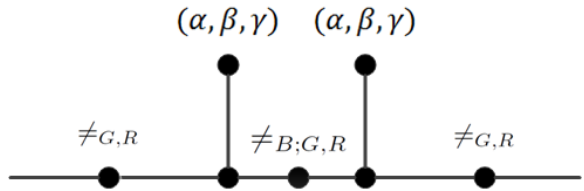}
	\caption{A binary gadget.}
	\label{fig:binary-1-zm}
	\end{center}
\end{figure}
The signature of this binary function is
\[[f_0,f_1,f_2]=[(\alpha w + \beta y)^2, (\alpha w + \beta y)(\alpha z + \beta x + \gamma y) + (\alpha y +\beta)^2 , (\alpha z + \beta x + \gamma y)^2 + 2 (\alpha y +\beta) (\alpha x + \beta a + \gamma)].\]
If there exists some $(\alpha, \beta, \gamma)$ such that this  $[f_0,f_1,f_2]$ is not of
the form $[0, *, *]$ or degenerate, then the problem is \#P-hard
and we are done. Otherwise, for all $(\alpha, \beta, \gamma)$, we have
the equalities (1) $f_0=0$ or (2) $f_1^2=f_0 f_2$.
Since these are
polynomials of  $(\alpha, \beta, \gamma)$,  at least one of  (1) $f_0=0$
 or (2) $f_1^2=f_0 f_2$ holds for all $(\alpha, \beta, \gamma)$.

If  equality (1) $f_0=0$ holds for all $(\alpha, \beta, \gamma)$,
we have $y=w=0$.
We will verify that the problem is tractable, and
 it is of the third form of Theorem \ref{thm:ternary}.
We use $\mathbf{\tilde{F}}_\beta$ to denote the following ternary function
\begin{center}
\begin{tabular}{*{11}{c}c}
{ }     & { }   & { }   &$g$& { }     & { }   & { }  \\
{ }     & { }   &$z$& { } &$0$& { }     & { }  \\
{ }     &$x$& { } &  $0$& { }   &$0$& { }  \\
 $a$& { } &$0$& { } & $0$& { } &   $0$  \\
\end{tabular}
\end{center}
Let $\tilde{\beta}=(0,1,0)^{\tt T}$ and $\tilde{\gamma}=(0,0,1)^{\tt T}$.
Compared to (\ref{F-tilde-matchingonGR}) we have
  \[ \mathbf{\tilde{F}}=\mathbf{\tilde{F}_\beta} + \mathbf{\tilde{\beta}}^{\otimes 2} \otimes \tilde{\gamma}  + \tilde{\beta} \otimes \tilde{\gamma} \otimes \tilde{\beta}+  \tilde{\gamma} \otimes \mathbf{\tilde{\beta}}^{\otimes 2}.\]
  Therefore applying the transformation $\tilde{Z}$, we have
  \[ \mathbf{F}= \tilde{Z}\mathbf{\tilde{F}}=\tilde{Z} \mathbf{\tilde{F}_\beta} + {(\tilde{Z} \tilde{\beta})}^{\otimes 2} \otimes (\tilde{Z} \tilde{\gamma})  + (\tilde{Z} \tilde{\beta}) \otimes (\tilde{Z} \tilde{\gamma}) \otimes (\tilde{Z} \tilde{\beta})+  (\tilde{Z} \tilde{\gamma}) \otimes {(\tilde{Z} \tilde{\beta}) }^{\otimes 2}.\]
  We verify that this is in
 the third form of Theorem \ref{thm:ternary}, with
 $\mathbf{F}_\beta=\tilde{Z} \mathbf{\tilde{F}_\beta}$,
$\mbox{\boldmath $\beta$}=\tilde{Z} \tilde{\beta}$ and
 $\mbox{\boldmath $\gamma$}=\tilde{Z} \tilde{\gamma}$.
  First it is easy to verify that
$\langle \mbox{\boldmath $\beta$},\mbox{\boldmath $\beta$} \rangle =0$.
  Second
$\langle (0,0,1), \mathbf{\tilde{F}_\beta} \rangle =\mathbf{0}$,
and $\tilde{\beta}^{\tt T} \tilde{Z}^T \tilde{Z} = (0,0,1)$, hence
  \[\langle \mbox{\boldmath $\beta$}, \mathbf{{F}_\beta} \rangle = \mathbf{0}.\]

If  equality (2) $f_1^2=f_0 f_2$ holds for
all $(\alpha, \beta, \gamma)$, we have
\[
0=f_0 f_2-f_1^2 = (\alpha y +\beta) (2 (\alpha w + \beta y)^2  (\alpha x + \beta a + \gamma)-2 (\alpha w + \beta y)(\alpha z + \beta x + \gamma y) (\alpha y +\beta) - (\alpha y +\beta)^3)  .
\]
As a product of two polynomials in $(\alpha, \beta, \gamma)$,
to be identically zero,
one of them must be identically zero.
Since $\alpha y +\beta$ is not identically zero, we have
\begin{equation}\label{eqn:f1^2=f0f2-zab00}
2 (\alpha w + \beta y)^2  (\alpha x + \beta a + \gamma)-2 (\alpha w + \beta y)(\alpha z + \beta x + \gamma y) (\alpha y +\beta) - (\alpha y +\beta)^3 = 0,
\end{equation}
for all $(\alpha, \beta, \gamma)$.

Let $\alpha=y$ and $\beta=-w$, we have $ w=y^2$.
Substituting  $ w=y^2$ in (\ref{eqn:f1^2=f0f2-zab00}), we have
\[2y^2 (\alpha x + \beta a + \gamma) -2 y (\alpha z + \beta x + \gamma y) -(\alpha y +\beta) = 0,\]
for all $(\alpha, \beta, \gamma)$.
And we conclude that $y (2 x y -2 z - 1) =0$ and $2 a y^2 - 2 x y -1 =0$.
The second equation implies that $y \not =0$.
So we have $2 x y -2 z - 1 =0$.

Then the ternary signature has the form
\begin{equation}\label{F-form-in-zab00}
\begin{tabular}{*{11}{c}c}
{ }     & { }   & { }   &$g$& { }     & { }   & { }  \\
{ }     & { }   &$a y^2-1$& { } &$y^2$& { }     & { }  \\
{ }     &$a y -\frac{1}{2y}$& { } &  $y$& { }   &$ 0$& { }  \\
 $a$& { } &$1$& { } & $0$& { } &   $0$  \\
\end{tabular}
\end{equation}

If $g=a y^3 -\frac{3}{2} y$, we show that this problem is indeed tractable.
We show that this $\mathbf{F}$ is of the third form in Theorem \ref{thm:ternary}, where $\mathbf{F}_\beta$ is the given function $\mathbf{F}$, and
\[
\mbox{\boldmath $\beta$}=(1, -\frac{y}{\sqrt{2}}+\frac{1}{2 \sqrt{2}y} ,
 \frac{y i}{\sqrt{2}}+\frac{i}{2 \sqrt{2}y})^{\tt T}, ~~~~
\mbox{\boldmath $\gamma$}=(0,0,0)^{\tt T}.\]
First it is easy to verify that $\langle \mbox{\boldmath $\beta$},
\mbox{\boldmath $\beta$} \rangle=0$.  We
also need to verify
$\langle \mbox{\boldmath $\beta$}, \mathbf{F} \rangle = \mathbf{0}$,
or  $\langle \mbox{\boldmath $\beta$}, \tilde{Z} \mathbf{\tilde{F}} \rangle = \mathbf{0}$,
which  is equivalent to $\langle \tilde{Z}^{\tt T}
\mbox{\boldmath $\beta$},   \mathbf{\tilde{F}}
 \rangle = \mathbf{0}$.
This $\tilde{Z}^{\tt T} \mbox{\boldmath $\beta$}$ is $(1, -y, \frac{1}{2y})^{\tt T}$,
and it is easy to see that indeed this unary
function annihilates $\mathbf{\tilde{F}}$ in
(\ref{F-form-in-zab00}), using the ``calculus''
from Section~\ref{section:calculus}.

\begin{figure}[hbtp]
	\begin{center}
		\includegraphics[width=3 in]{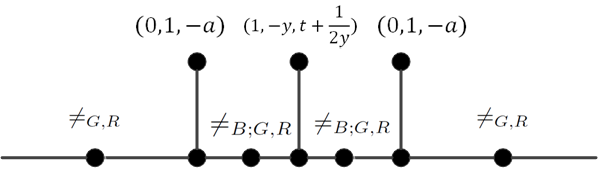}
	\caption{A binary gadget.}
	\label{fig:binary-3-zm}
	\end{center}
\end{figure}
Now we assume that $g \neq a y^3 -\frac{3}{2} y$.
We denote by $\delta=g - a y^3 + \frac{3}{2} y$.
 Then we use the gadget  in Figure \ref{fig:binary-3-zm} to construct another binary function over domain $\{G,R\}$, whose signature is
\[\begin{bmatrix} y & 1 & 0 \\ -\frac{1}{2y} & 0 & 1   \end{bmatrix}
\begin{bmatrix} \delta +y^2 t & 0  &   y t \\ 0 & 0 & 0 \\  y t  &  0 & t \end{bmatrix}
\begin{bmatrix} y & -\frac{1}{2y} \\ 1 & 0 \\ 0 & 1 \end{bmatrix}=\begin{bmatrix}y^2 \delta + y^4 t	& 	 -\frac{\delta}{2} +\frac{y^2 t}{2} \\
-\frac{\delta}{2} +\frac{y^2 t}{2} 	&	\frac{\delta}{4 y^2} +\frac{t}{4}
    \end{bmatrix}.\]
 We denote this symmetric binary function as $[g_0, g_1, g_2]$.
 If there exists some $t$ such that this binary function is not of
the form $[0, *, *]$, or degenerate, then the problem is \#P-hard and
we are done.
Otherwise, by the same argument as above,  at least one of the two
equations (i) $g_0=0$ or (ii) $g_1^2=g_0 g_2$ holds for all $t$.
Choose $t=0$, we have $g_0 = y^2 \delta \neq 0$.  Choose $t=\frac{\delta}{y^2}$, we can verify that the signature is not degenerate.
 This completes the proof for the case  $\mathbf{F}^{*\rightarrow\{G,R\}}=Z [a, b, 0, 0]^{\tt T}$.

We have completed the proof of Theorem~\ref{thm:F-with-=_GR}
when $\mathbf{F}^{*\rightarrow\{G,R\}}$ is non-degenerate.

\subsection{$\mathbf{F}^{*\rightarrow\{G,R\}}$ is $[1,0,0,0]$
After an Orthogonal Transformation}\label{bottom1000}
We have proved our dichotomy theorem Theorem~\ref{thm:ternary}
when $\mathbf{F}^{*\rightarrow\{G,R\}}$ is non-degenerate.
In the rest of this paper we deal with the case when
$\mathbf{F}^{*\rightarrow\{G,R\}}$ is degenerate.
We first suppose it has rank 1, and therefore has the form
$(a, b)^{\otimes 3}$. In this subsection
we assume $(a, b)$ is  non-isotropic.

Thus, possibly after a reversal, which is an orthogonal transformation,
 $\mathbf{F}^{*\rightarrow\{G,R\}}$
is degenerate of the form $c [1,\lambda, \lambda^2,
\lambda^3]^{\tt T}$, where $c \not = 0$
and $\lambda \not \in \{i,-i\}$. As $(1, \lambda)$
is not isotropic, we can perform an orthogonal transformation,
after which, and ignoring a
non-zero scalar multiple,
we may assume   the bottom line $\mathbf{F}^{*\rightarrow\{G,R\}}$
is $[1,0,0,0]^{\tt T}$.

Suppose $\mathbf{F}=[u; t, r; s, p, q; 1, 0, 0, 0]$, namely
\begin{equation}\label{bottom1000F}
\begin{tabular}{*{11}{c}c}
{ }     & { }   & { }   &$u$& { }     & { }   & { }  \\
{ }     & { }   &$t$& { } &$r$& { }     & { }  \\
{ }     &$s$& { } &  $p$& { }   &$q$& { }  \\
 $1$& { } &$0$& { } & $0$& { } &   $0$  \\
\end{tabular}
\end{equation}

Our general method will be to
construct some gadgets which can realize certain  functions, and
under some conditions we can use them
to show  that Holant$^*(\mathbf{F})$ is \#P-hard.
We first construct the following  function
$\mathbf{H}(x_1,x_2,x_3,x_4)=(\sum_{y \in \{B, G, R\}}
\mathbf{F}(x_1,x_2,y)\mathbf{F}(y,x_3,x_4))^{*\rightarrow\{G,R\}}$,
which can be realized by connecting two copies of $\mathbf{F}$
by one edge, and then connecting $=_{B,G}$
on all four external edges.
Denote by $\mathcal{T}$ the set of functions of arity at most 2,
and $\langle \mathcal{T} \rangle$ the tensor product closure
of $\mathcal{T}$.

\begin{lemma}
If $p \neq 0$ or $q
\neq 0$, then $\mathbf{H} \not \in \langle \mathcal{T} \rangle$.
\end{lemma}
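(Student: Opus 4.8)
The plan is to compute the $4$-variable function $\mathbf{H}$ explicitly on the subdomain $\{G,R\}$ and then show it cannot be decomposed as a tensor product of two binary functions. First I would write out $\mathbf{F}(x_1,x_2,y)$ for $x_1,x_2 \in \{G,R\}$ and $y \in \{B,G,R\}$, using the entries in $(\ref{bottom1000F})$: for $x_1=x_2=G$ the triple $\mathbf{F}(G,G,\cdot) = (s,1,0)$ (in the order $y=B,G,R$); for $x_1=G,x_2=R$ we get $(p,0,0)$; and for $x_1=x_2=R$ we get $(q,0,0)$. Then $\mathbf{H}(x_1,x_2,x_3,x_4) = \sum_{y} \mathbf{F}(x_1,x_2,y)\mathbf{F}(y,x_3,x_4)$ restricted to all eight indices in $\{G,R\}$ is a finite bilinear sum; since $\mathbf{F}(x_1,x_2,\cdot)$ for $(x_1,x_2)\in\{G,R\}^2$ only has support on $y\in\{B,G\}$ (the $R$-component is always $0$ in rows $GG$, $GR$, $RR$), the sum over $y$ reduces to $y=B$ and $y=G$. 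Writing $A = \mathbf{F}^{3=B}$ restricted to $\{G,R\}^2$ and noting $\mathbf{F}^{3=B}{}^{*\to\{G,R\}}$ has matrix $\left[\begin{smallmatrix} s & p \\ p & q\end{smallmatrix}\right]$, while $\mathbf{F}^{3=G}{}^{*\to\{G,R\}}$ has matrix $\left[\begin{smallmatrix} 1 & 0 \\ 0 & 0\end{smallmatrix}\right]$, the gadget value is $\mathbf{H}_{x_1 x_2, x_3 x_4} = A_{x_1 x_2,B} A_{B, x_3 x_4} + A_{x_1 x_2, G} A_{G, x_3 x_4}$, a sum of two rank-one $4 \times 4$ pieces (indexing $\mathbf{H}$ as a matrix with row index $(x_1,x_2)$ and column index $(x_3,x_4)$).

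Next I would observe that by Fact~\ref{fact-half-sym-cut2} (the partial symmetry argument), since $\mathbf{H}(x_1,x_2,x_3,x_4)$ is symmetric in $x_1 \leftrightarrow x_2$ and in $x_3 \leftrightarrow x_4$ (inherited from the symmetry of $\mathbf{F}$), if $\mathbf{H}$ decomposes into two binary functions at all then it decomposes either as $\mathbf{C}(x_1,x_2)\mathbf{D}(x_3,x_4)$ or as $\mathbf{C}(x_1,x_3)\mathbf{D}(x_2,x_4)$. So it suffices to rule out both these forms. For the first form, $\mathbf{H}$ as a $4\times 4$ matrix (rows $x_1 x_2$, columns $x_3 x_4$) would have rank $1$; I would exhibit a $2\times 2$ submatrix with nonzero determinant. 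For the second form, I would reorganize $\mathbf{H}$ as a matrix with row pair $(x_1,x_3)$ and column pair $(x_2,x_4)$ and again show its rank is at least $2$, or more cleanly, show that $\mathbf{H}$ in this second grouping is not of rank one either, again by picking an explicit nonvanishing $2\times2$ minor. The key computation is that these minors are (up to sign) simple monomials/polynomials in $p$ and $q$ — I expect something like $p^2$, $q^2$, or $pq$ appearing — so that $p\neq 0$ or $q\neq 0$ forces at least one minor to be nonzero.

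Concretely, for form one: the rows of $\mathbf{H}$ indexed $(G,R)$ and $(R,R)$ are $(p \cdot A_{B,\cdot} )$ and $(q \cdot A_{B,\cdot})$ respectively (since $A_{GR,G}=A_{RR,G}=0$), which are proportional; but the row $(G,G)$ is $s A_{B,\cdot} + A_{G,\cdot}$, and $A_{G,\cdot} = (1,0)$ in the $(x_3,x_4)\in\{GG, GR/RG, RR\}$ expansion is not proportional to $A_{B,\cdot}=(s,p,q)$ unless $p=q=0$. Hence whenever $p\neq0$ or $q\neq0$, rows $(G,G)$ and $(G,R)$ are linearly independent, so $\mathrm{rank}(\mathbf{H})\ge 2$ and the first form is impossible. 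For form two, I would note $\mathbf{H}(x_1,x_3,\cdot,\cdot)$ as a function of $(x_2,x_4)$, extract the corresponding regrouped matrix, and similarly locate a $2\times2$ minor equal to a nonzero multiple of $p^2$ or $q^2$; this second case is where I expect the bookkeeping to be most error-prone, since the reindexing scrambles which $\mathbf{F}$-entries pair up, so that is the step I would treat most carefully.

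If the regrouped-matrix computation for the second form turns out to be awkward, a cleaner alternative is: since $\langle\mathcal{T}\rangle$-membership of a symmetric-in-two-pairs function is, by Fact~\ref{fact-half-sym-cut2}, equivalent to rank $\le 1$ in \emph{one of} the two groupings, and since a rank-one decomposition in either grouping would in particular make $\mathbf{H}$ degenerate as a $4$-tensor, I would instead directly show $\mathbf{H}$ is non-degenerate by restricting further: e.g.\ fix $x_1=x_2=G$ to get the binary function $\mathbf{H}^{1=G,2=G}(x_3,x_4) = s A_{B,x_3x_4} + A_{G,x_3x_4}$ with matrix $\left[\begin{smallmatrix} s+1 & sp \\ sp & sq\end{smallmatrix}\right]$, and fix $x_1=x_2=R$ to get matrix $\left[\begin{smallmatrix} q s & qp \\ qp & q^2\end{smallmatrix}\right]$ (up to scalar), then argue that no single rank-one $4$-tensor can restrict to both of these unless $p=q=0$. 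Either route reduces the lemma to a short determinant check, and the only real content is that the obstruction polynomials are exactly captured by the hypothesis $p\neq0 \lor q\neq0$.
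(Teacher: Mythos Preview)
Your overall strategy matches the paper's exactly: compute $\mathbf{H}$ explicitly, invoke the partial symmetry argument (Fact~\ref{fact-half-sym-cut2}) to reduce to the two groupings, and show each matrix has rank $\ge 2$ via $2\times 2$ minors. Your form-one argument is essentially the paper's, with one slip: when $p=0$ (but $q\neq 0$) the row $(G,R)$ vanishes, so rows $(G,G)$ and $(G,R)$ are \emph{not} independent; you need row $(R,R)=q\,v_B^{\tt T}$ instead. Easy fix. (Also, your $\mathbf{H}^{1=G,2=G}$ has $s^2+1$ in the top-left, not $s+1$.)

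The genuine gap is form two. You say you would ``locate a $2\times 2$ minor equal to a nonzero multiple of $p^2$ or $q^2$,'' but in the regrouped matrix
\[
\mathbf{H}_{x_1x_3,\,x_2x_4}=\begin{pmatrix} s^2+1 & sp & sp & p^2 \\ sp & sq & p^2 & pq \\ sp & p^2 & sq & pq \\ p^2 & pq & pq & q^2 \end{pmatrix}
\]
no single minor is directly $p^2$ or $q^2$. The paper's argument is two-step: first, the minor on rows $(RG,RR)$ and columns $(GG,GR)$ (or $(RG,RR)$) forces $sq=p^2$; \emph{then}, substituting $sq=p^2$ into the $(GG,GR)\times(GG,GR)$ minor gives $p^2$ and into the $(GG,RR)\times(GG,RR)$ minor gives $q^2$, yielding $p=q=0$. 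You should carry this out rather than leave it as a plan.

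Your fallback route is based on a false premise: a decomposition $\mathbf{H}=\mathbf{C}(x_1,x_3)\mathbf{D}(x_2,x_4)$ with $\mathbf{C},\mathbf{D}$ of rank $2$ does \emph{not} make $\mathbf{H}$ degenerate as a $4$-tensor, so showing $\mathbf{H}$ is non-degenerate does not rule out form two. That alternative should be dropped.
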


\begin{proof}
By the definition of $\mathbf{H}$,
we have
$\mathbf{H}(x_1,x_2,x_3,x_4)=\mathbf{H}(x_2,x_1,x_3,x_4)=\mathbf{H}(x_1,x_2,x_4,x_3)$,
because $\mathbf{F}$ is a symmetric function. We prove the lemma by
a contradiction. Suppose $\mathbf{H} \in \langle \mathcal{T} \rangle$.
Then, either
$\mathbf{H}(x_1,x_2,x_3,x_4)=\mathbf{P}(x_1,x_2)\mathbf{Q}(x_3,x_4)$,
or
$\mathbf{H}(x_1,x_2,x_3,x_4)=\mathbf{P}(x_1,x_3)\mathbf{Q}(x_2,x_4)$
for some binary functions $\mathbf{P}$ and $\mathbf{Q}$.

The matrix form $\mathbf{H}_{x_1x_2,~x_3x_4}$ (whose rows are indexed
by $x_1x_2$ and columns are indexed by  $x_3x_4$, and are in the order
$GG,GR,RG,RR$) of the function $\mathbf{H}$ is
\[\left [
\begin{array}{cccc}  s^2+1 & sp & sp & sq \\ sp & p^2 & p^2 & pq  \\sp & p^2 & p^2 & pq   \\ sq & pq & pq & q^2
\end{array} \right ].\]
If
$\mathbf{H}(x_1,x_2, x_3,x_4)=\mathbf{P}(x_1,x_2)\mathbf{Q}(x_3,x_4)$
for some binary functions $\mathbf{P}$ and $\mathbf{Q}$,
then $\mathbf{H}_{x_1x_2,~x_3x_4}=\mathbf{P}_{4 \times 1}
(\mathbf{Q}_{4 \times 1})^{\tt T}$ has rank at
most 1, where we use the vector form $\mathbf{P}_{4 \times 1}$
and $\mathbf{Q}_{4 \times 1}$ for the functions $\mathbf{P}$ and
$\mathbf{Q}$.  This implies that $p=q=0$, by
taking some $2 \times 2$ determinantal minors.
A contradiction.

The matrix form $\mathbf{H}_{x_1x_3,~x_2x_4}$ (also in index order
$GG,GR,RG,RR$) of the function $\mathbf{H}$ is
\[\left [
\begin{array}{cccc}  s^2+1 & sp & sp & p^2 \\ sp & sq & p^2 & pq  \\sp & p^2 & sq & pq   \\ p^2 & pq & pq & q^2
\end{array} \right ].\]
If
$\mathbf{H}(x_1,x_2,x_3,x_4)=\mathbf{P}(x_1,x_3)\mathbf{Q}(x_2,x_4)$
for some binary functions $\mathbf{P}$ and $\mathbf{Q}$,
by the same argument,
the matrix $\mathbf{H}_{x_1x_3,~x_2x_4}$  has rank at
most 1. If $p \neq 0$, then because the submatrix indexed by $(RG, RR)
\times (GG, GR)$ is singular, we get $sq=p^2$. Similarly if $q \neq 0$, by the
submatrix indexed by $(RG, RR) \times
(RG, RR)$, we also get $sq=p^2$. Hence, $sq=p^2$ holds. We also have
the determinantal minor indexed  by $(GG, GR) \times (GG, GR)$, $\left |
\begin{array}{cc} s^2+1 & sp \\ sp & sq \end{array} \right |=p^2=0$
and the  minor indexed  by $(GG, RR) \times (GG, RR)$,
 $\left | \begin{array}{cc} s^2+1 & p^2 \\ p^2 & q^2 \end{array}
\right |=q^2=0$.  A contradiction.
\end{proof}

We have proved that
$\mathbf{H} \not \in \langle \mathcal{T} \rangle$
 under the condition $p \neq 0$ or $q \neq 0$.

\begin{lemma}
In Holant$^*(\mathbf{F})$, if $p \neq 0$ or $q \neq 0$, either for
each tractable class $\mathcal{P}_{a,b}$ and $\mathcal{P}$, we can
construct a symmetric non-degenerate binary function not in it, or the
domain is separated and the complexity dichotomy holds.
\end{lemma}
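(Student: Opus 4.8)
The plan is to work in the holographically transformed world Holant$^*(\{\neq_{B;G,R},\neq_{G,R}\}\mid\{\mathbf{\tilde F},\neq_{G,R}\})$, or equivalently keep $=_{G,R}$ available and realize symmetric binary functions over $\{G,R\}$ from $\mathbf{F}$. Recall the bottom row of $\mathbf{F}$ is fixed to $[1,0,0,0]$ on $\{G,R\}$, and $p\ne 0$ or $q\ne 0$. First I would apply the preceding lemma: under the hypothesis $p\ne 0$ or $q\ne 0$, the quaternary gadget $\mathbf{H}$ obtained by joining two copies of $\mathbf{F}$ along an edge and restricting the four external wires to $\{G,R\}$ satisfies $\mathbf{H}\notin\langle\mathcal{T}\rangle$. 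So we have a nondegenerate arity-4 signature on the Boolean domain $\{G,R\}$ that is not a tensor product of two binary functions. The point of this lemma is then: together with a suitable binary companion, $\mathbf{H}$ forces \#P-hardness unless $\mathbf{F}$ is in a very restricted, already-handled shape.

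The key steps, in order: (i) Use $=_{G,R}$ (available by Theorem~\ref{theorem6.1-get=_GR}, or for free in the $\neq$-transformed picture) to realize, from $\mathbf{F}$, a family of symmetric binary functions $b_{\mathbf u}$ on $\{G,R\}$ by connecting a unary $\mathbf u=(\alpha,\beta,\gamma)$ to one input of $\mathbf{F}$ and $=_{G,R}$ to the other two; explicitly this gives the $\{G,R\}$-triple obtained from the small-triangle calculus of Section~\ref{section:calculus}, with the bottom-row constraint making the computation short. (ii) Observe that membership of such a binary signature in any single tractable class $\mathcal{P}_{a,b}$ (for fixed $(a,b)$) or in $\mathcal{P}$ is the vanishing of a fixed polynomial in the parameters $(\alpha,\beta,\gamma)$: for $\mathcal{P}_{a,b}$ on a triple $[f_0,f_1,f_2]$ the condition $af_0+bf_1-af_2=0$ is linear in the entries hence polynomial in $(\alpha,\beta,\gamma)$, and the degenerate/$[*,0,*]$/$[0,*,0]$ special cases are likewise polynomial conditions. (iii) Invoke \methpoly: if for \emph{every} target class we can exhibit some choice of $(\alpha,\beta,\gamma)$ whose realized binary signature escapes that class, then by Fact~\ref{fact-nonzeropoly} there is a common $(\alpha^*,\beta^*,\gamma^*)$ whose signature simultaneously escapes all of them, and pairing that nondegenerate binary signature with $\mathbf H$ (or with $\mathbf{F}^{*\to\{G,R\}}$) yields a \#P-hard Boolean Holant$^*$ instance by the Boolean dichotomy (Theorem~\ref{thm:dich-sym-Boolean}) lifted via $=_{G,R}$. (iv) The contrapositive is exactly the ``or'' in the statement: if for some tractable class we \emph{cannot} escape — i.e.\ the relevant polynomial is identically zero in $(\alpha,\beta,\gamma)$ — then $\mathbf{F}$ satisfies a strong system of identities among $s,p,q,r,t,u$. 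Reading off these identities, I would show they force either $p=q=0$ (contradicting the hypothesis) or that a color is separated from the other two in $\mathbf{F}$, at which point Lemma~\ref{lemma:separate} applies and the complexity dichotomy holds.

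The main obstacle I anticipate is step (iv): carefully carrying out the case analysis of which polynomial can be identically zero and checking that in each such degenerate case the resulting algebraic constraints on $(u,t,r,s,p,q)$ really do imply domain-separation (rather than a genuinely new pattern). This is where the bottom-row normalization $[1,0,0,0]$ and the explicit matrix forms of $\mathbf H$ from the previous lemma do most of the work — the identities one extracts (from the $2\times 2$ minors used there, e.g.\ relations like $sq=p^2$, $p^2=0$, $q^2=0$) collapse the surviving cases quickly. A secondary technical point is making sure the binary functions we build over $\{G,R\}$ are genuinely \emph{symmetric} (so that Theorem~\ref{thm:dich-sym-Boolean} applies); the construction with $=_{G,R}$ on the two outputs guarantees this by the partial-symmetry remarks, but one must also confirm nondegeneracy via a nonvanishing $2\times 2$ determinant, again a polynomial condition covered by \methpoly. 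With these in hand the lemma follows.
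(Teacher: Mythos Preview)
Your step (iii) contains a genuine error. You cannot use the polynomial argument to find a single $(\alpha^*,\beta^*,\gamma^*)$ whose binary signature escapes \emph{every} $\mathcal{P}_{a,b}$, because the family $\{\mathcal{P}_{a,b}\}$ is infinite (parametrized by $\mathbb{P}^1_{\mathbb{C}}$), and more fundamentally because \emph{every} nondegenerate binary $[f_0,f_1,f_2]$ lies in some $\mathcal{P}_{a,b}$: the linear condition $a(f_0-f_2)+bf_1=0$ always has a nonzero solution $(a,b)$. So no universal escape point can exist. The lemma does not ask for one. It asks that for \emph{each} fixed class $\mathcal{C}$ (each $\mathcal{P}_{a,b}$ and $\mathcal{P}$) there exists some realizable nondegenerate binary not in $\mathcal{C}$. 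The hardness argument then runs: all these (infinitely many) binaries are simultaneously realizable in Holant$^*(\{\mathbf{F},=_{G,R}\})$, as is $\mathbf{H}\notin\langle\mathcal{T}\rangle$; if the problem were tractable, by Theorem~\ref{thm:dich-asym-Boolean} everything realizable would sit in one specific class, but for that class we built a binary outside it---contradiction.

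The paper's actual construction is much more concrete than your outline. It uses $\mathbf{R}_x=\langle(1,x,0),\mathbf{F}\rangle^{*\to\{G,R\}}=[s+x,p,q]$, which (since only the first entry varies) is degenerate for at most one $x$. When $p\neq 0$, this single one-parameter family already escapes every $\mathcal{P}_{a,b}$ and $\mathcal{P}$: the coefficient of $x$ in each linear membership condition is nonzero (or the fixed middle entry $p$ handles $\mathcal{P}_{0,b}$). The only genuine obstruction arises when $p=0$, $q\neq 0$: then $\mathbf{R}_x=[s+x,0,q]$ always lies in $\mathcal{P}_{0,1}$. Here the paper does \emph{not} read off identities as in your step (iv); it builds different gadgets. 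If $s\neq 0$, summing two legs of $\mathbf{H}$ gives $[s^2+1,sq,q^2]\notin\mathcal{P}_{0,1}$. If $s=0$, a three-copy chain $\mathbf{F}\text{--}\langle(\alpha,\beta,\gamma),\mathbf{F}\rangle\text{--}\mathbf{F}$ summed over two legs gives $[e,bq,aq^2]$ with $e=\beta$, $b=t\alpha$; this escapes $\mathcal{P}_{0,1}$ provided $t\neq 0$. Only when $p=s=t=0$ does every attempt fail, and then $\{G\}$ is separated from $\{B,R\}$ and Lemma~\ref{lemma:separate} applies. Your step (iv) sketch, which tries to extract identities like $sq=p^2$ from identically-zero polynomials, is aimed at the wrong mechanism; the domain-separated endpoint is reached through this specific chain of gadget failures, not through vanishing of the membership polynomials of your family $b_{\mathbf u}$.
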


\begin{proof}
Define binary functions
$\mathbf{R}_x=\langle (1,x,0), \mathbf{F} \rangle^{*\rightarrow\{G,R\}}$,
which are clearly realizable,
where $x$ is an arbitrary (algebraic) complex number.
In symmetric signature notation on the Boolean domain
$\mathbf{R}_x=[s+x,p,q]$. If $p \neq
0$ or $q \neq 0$, there is at most one value  $x$ such that $\mathbf{R}_x$
is degenerate.
Assume $p \neq 0$. For any $a \neq 0$, obviously there exists an $x$ such that
a non-degenerate $\mathbf{R}_x \not \in \mathcal{P}_{a,b}$, since the
coefficient of $x$ in the linear equation requirement
for $\mathbf{R}_x \in \mathcal{P}_{a,b}$ (for both alternative forms
in $\mathcal{P}_{a,b}$) is not
zero. For $a=0$, then $b \not =0$,
and  because the middle entry of $\mathbf{R}_x$ is
 $p \neq 0$,  there exists an  $x$ such that
a non-degenerate
$\mathbf{R}_x \not \in \mathcal{P}_{0,b}$. By the same reasoning,
 there exists an $x$ such that
a non-degenerate
$\mathbf{R}_x \not \in \mathcal{P}$.
This completes the proof of the lemma for the case $p \not =0$.

If $p=0$, we have $q \neq 0$.
For each tractable class $\mathcal{P}_{a,b}$ and $\mathcal{P}$,
 except for
$\mathcal{P}_{0,b}$, the function $\mathbf{R}_x$
still handles it for all but finitely many values of $x$.
 The exception is
$\mathcal{P}_{0,b}$, which has the normalized form $\mathcal{P}_{0,1}$.
We prove this case according to  $s \neq 0$ or $s=0$.

The matrix form $\mathbf{H}_{x_1x_2,~x_3x_4}$ (in index order
$GG,GR,RG,RR$) of the function $\mathbf{H}$ is
\[\left [
\begin{array}{cccc}  s^2+1 & sp & sp & sq \\ sp & p^2 & p^2 & pq  \\sp & p^2 & p^2 & pq   \\ sq & pq & pq & q^2
\end{array} \right ]=\left [
\begin{array}{cccc}  s^2+1 & 0 & 0 & sq \\ 0 & 0 & 0 & 0  \\0 & 0 & 0 & 0  \\ sq & 0 & 0 & q^2
\end{array} \right ].\]
Since $q \not = 0$,
the following binary function on $x_1, x_3$ is non-degenerate and realizable
\[\sum_{x_2,x_4 \in \{G, R\}}\mathbf{H}(x_1,x_2,x_3,x_4)=[s^2+1,sq,q^2].\]
For $s \neq 0$ (and $q \not = 0$),
it is easy to check that $[s^2+1,sq,q^2] \not \in
\mathcal{P}_{0,1}$.

Now we suppose $s=0$ (and $p=0$), the simpler construction does not work
and we use a slightly more
complicated construction. Suppose the binary function
$\mathbf{P}= \langle (\alpha, \beta, \gamma), \mathbf{F} \rangle
=
\left [ \begin{array}{ccc} a & b & c \\ b & e & d \\
c & d & f \end{array} \right ]$,
where $\alpha, \beta, \gamma$ are to be determined.
 We construct
\[\mathbf{Q}(x_1,x_2,x_3,x_4)=\left(\sum_{y_1,y_2 \in \{B, G, R\}}
\mathbf{F}(x_1,x_2,y_1)\mathbf{P}(y_1,y_2)\mathbf{F}(y_2,x_3,x_4)\right)^{*\rightarrow\{G,R\}}.\]
This is realizable by connecting three copies of $\mathbf{F}$
in a chain with the middle copy connected to a
unary $(\alpha, \beta, \gamma)$ on one edge,
and then connecting $=_{B,G}$ on four external edges.
The matrix form $\mathbf{Q}_{x_1x_2,~x_3x_4}$ (in index order
$GG,GR,RG,RR$) of the function $\mathbf{Q}$ is
\begin{eqnarray*}
\begin{bmatrix}
 s &1 & 0\\
 p & 0 & 0 \\
 p & 0 & 0  \\
q & 0 & 0
\end{bmatrix}
\begin{bmatrix} a & b & c \\ b & e & d \\ c & d & f
\end{bmatrix}
\begin{bmatrix}
 s &p & p & q\\ 1 & 0 & 0 & 0 \\ 0 & 0 & 0 & 0
\end{bmatrix}
=
\begin{bmatrix}
 as^2+2bs+e & asp+bp & asp+bp & asq+bq \\ asp+bp & ap^2 &ap^2 & apq \\
asp+bp & ap^2 & ap^2 & apq \\ asq+bq & apq & apq & aq^2
\end{bmatrix}
=
\begin{bmatrix}
e & 0 & 0 & bq \\ 0 & 0 &0& 0\\
0 & 0 &0& 0\\ bq & 0 & 0 & aq^2
\end{bmatrix}.
\end{eqnarray*}

Let
$\mathbf{S}=\sum_{x_2,x_4 \in \{G, R\}}\mathbf{Q}(x_1,x_2,x_3,x_4)
=[e,bq,aq^2]$, which is realizable.
We want to show that there exists $(\alpha, \beta, \gamma)$ such that
a non-degenerate $\mathbf{S}
\not \in \mathcal{P}_{0,1}$. This means a non-degenerate $\mathbf{S}$
satisfies
$bq \neq 0$ and ($e \neq 0$ or $aq^2 \neq 0$).
The violation of these requirements are
specified by polynomial equations on $(\alpha, \beta, \gamma)$, therefore
we only need to show there exists $(\alpha, \beta, \gamma)$
satisfying each condition separately.

By definition $e = P_{GG}= s \alpha + 1 \beta + 0 \gamma
= \beta$ is not the zero polynomial (in $\alpha, \beta, \gamma$).
Similarly, by the ``calculus'' from Section~\ref{section:calculus},
$a = P_{BB} = u \alpha + t \beta + r \gamma$
and $b= P_{BG} = t \alpha$.
$\mathbf{S}$ is non-degenerate iff $b^2-ae \neq 0$, since $q \not =0$.
If $t \neq 0$, then
$b^2-ae$, $bq$ and $e$ are all non-zero polynomials in $\alpha, \beta, \gamma$.
Hence, if $t \neq 0$, we can get a non-degenerate binary function not in
$\mathcal{P}_{0,1}$.

If  $t=0$  (and $p=s=0$), for the domain of $\mathbf{F}$,
  $\{G\}$ is separated from $\{B, R\}$. The validity of
Theorem~\ref{thm:ternary} for such cases follows from
the complexity dichotomy Theorem  \ref{thm:dich-sym-Boolean}.
\end{proof}

The two lemmas above solve the case $p \neq 0$ or $q \neq 0$. Now, we
consider $p=q=0$, and $\mathbf{F}=[u; t, r; s, 0, 0; 1, 0, 0, 0]$:
\begin{center}
\begin{tabular}{*{11}{c}c}
{ }     & { }   & { }   &$u$& { }     & { }   & { }  \\
{ }     & { }   &$t$& { } &$r$& { }     & { }  \\
{ }     &$s$& { } &  $0$& { }   &$0$& { }  \\
 $1$& { } &$0$& { } & $0$& { } &   $0$  \\
\end{tabular}
\end{center}

If $r=0$, the domain is separated, and this is handled as before.

Now we suppose $r \neq 0$.
\begin{lemma}\label{p=q=0-andr-not-0}
For $\mathbf{F}$ given in (\ref{bottom1000F})
where $p=q=0$ and $r \not =0$, (i.e., $\mathbf{F}$
is given in the table above with $r \not
=0$), the problem
Holant${^*}(\mathbf{F})$ is \#P-hard, unless
$s=t=0$ and the domain is separated (in which case
Theorem~\ref{thm:ternary} holds).
\end{lemma}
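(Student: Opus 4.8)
The plan is to work with the function
$\mathbf{F}=[u;t,r;s,0,0;1,0,0,0]$ with $r\neq 0$, and to argue that unless $s=t=0$ (in which case $B$ is separated from $\{G,R\}$ exactly when $r=p=q=0$ fails, so actually we must re-examine separation carefully — the claim is really that $s=t=0$ forces the $4$-uniform hypergraph to split), we can always construct in Holant$^*(\mathbf{F})$ a gadget whose signature is \#P-hard. First I would note that unary functions are free, so from $\mathbf{F}$ we can realize binary functions $\langle(\alpha,\beta,\gamma),\mathbf{F}\rangle$ with arbitrary $(\alpha,\beta,\gamma)$; using the ``calculus'' of Section~\ref{section:calculus} these have a matrix form whose entries are explicit linear forms in $\alpha,\beta,\gamma$. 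The strategy is the same as in the previous three lemmas of this subsection: build a symmetric signature of arity $\geq 3$ that does not lie in $\langle\mathcal{T}\rangle$, together with enough binary signatures to escape every tractable class of the Boolean dichotomy (Theorem~\ref{thm:dich-sym-Boolean}), and invoke \methpoly\ to conclude that a single parameter choice escapes all finitely many tractable forms at once.

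Concretely, I would first try the star/triangle gadgets: connect copies of $\mathbf{F}$ along edges, cap the remaining external edges with $=_{G,R}$ (which is available since we are inside the proof of Theorem~\ref{thm:F-with-=_GR}, i.e. Holant$^*(\{\mathbf{F},=_{G,R}\})$), and compute the restriction to $\{G,R\}$. Because $\mathbf{F}^{*\rightarrow\{G,R\}}=[1,0,0,0]$ is a matching-type signature on $\{G,R\}$, the entries $r,s,t,u$ governing how $B$ couples to $\{G,R\}$ are precisely what must be exploited: connecting the unary $(1,x,0)$ (i.e. fixing one input to the ``$B$-or-$G$'' mix) and then $=_{G,R}$ gives binary signatures on $\{G,R\}$ that are linear in $x$, and since $r\neq 0$ these linear families are non-constant, so for all but finitely many $x$ they avoid each fixed $\mathcal{P}_{a,b}$ and $\mathcal{P}$. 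The residual obstruction is the non-degeneracy requirement and the possibility that \emph{every} such binary signature is degenerate or that the construction produces a separated domain — this is exactly where $s$ and $t$ enter. I would do a short case analysis on whether $t\neq 0$, then on whether $s\neq 0$, using in each case a two-copy or three-copy chain gadget (analogous to the $\mathbf{H}$ and $\mathbf{Q}$ constructions above) whose $\{G,R\}$-restriction has a $2\times 2$ minor that is a non-zero polynomial in the free parameters; \methpoly\ then yields a non-degenerate binary signature outside all Boolean tractable forms, hence \#P-hardness.

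The main obstacle I expect is the leftover case $s=t=0$: there the coupling of $B$ to $\{G,R\}$ through arity-$3$ tuples with at least one $G$ or $R$ collapses, and one must check that the $3$-uniform hypergraph from Definition~\ref{def-domain-separated} genuinely separates $\{B\}$ from $\{G,R\}$ (the only surviving cross-terms are $F_{BBG}=t=0$, $F_{BGG}=s=0$, $F_{BGR}=p=0$, $F_{BRR}=q=0$, $F_{BBR}=r$ — so in fact $r\neq 0$ means $B$ and $R$ are \emph{not} separated, and the lemma statement must be read as: either \#P-hard, or $s=t=0$; and when $s=t=0$ the function still has $r\neq 0$ so $B\!-\!R$ is connected, and one falls back to an orthogonal normalization plus the earlier cases or a direct hardness argument). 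I would resolve this by observing that when $s=t=0$ the function $\mathbf{F}=[u;0,r;0,0,0;1,0,0,0]$ has $G$ separated from $\{B,R\}$, so Theorem~\ref{thm:dich-sym-Boolean} applied to the restriction to $\{B,R\}$ (a binary plus the ternary $[u,r,1,0]$-type signature) settles tractability versus \#P-hardness, matching the pattern of Lemma~\ref{lemma:separate}; if that restricted problem is \#P-hard we are done, and if it is tractable we verify it lands in one of the three forms of Theorem~\ref{thm:ternary}. The bookkeeping of which minors are identically zero under the constraints $p=q=0$, $s=t=0$ is the delicate part, but it is routine once the gadgets are fixed.
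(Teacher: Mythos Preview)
Your proposal has a genuine gap: the strategy of working in the $\{G,R\}$ subdomain via $=_{G,R}$ cannot succeed here. With $p=q=0$ and bottom row $[1,0,0,0]$, every binary $\langle(\alpha,\beta,\gamma),\mathbf{F}\rangle$ has $\{G,R\}$-restriction equal to $[\alpha s+\beta,\,0,\,0]$, which is always degenerate. Your claim that ``since $r\neq 0$ these linear families are non-constant'' is a non sequitur: $r=F_{BBR}$ simply does not appear in the $\{G,R\}$-restriction of any $\langle\mathbf{u},\mathbf{F}\rangle$. The same collapse kills the chain gadgets you allude to: the arity-$4$ function $\mathbf{H}$ from the preceding lemma has, when $p=q=0$, matrix form $\mathbf{H}_{x_1x_2,x_3x_4}$ of rank at most $1$ (only the $(GG,GG)$ entry survives), and the three-link chain $\mathbf{Q}$ is no better. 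You will never escape $\langle\mathcal{T}\rangle$ inside $\{G,R\}$.

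The paper's key move, which you are missing, is to \emph{switch subdomains}. Observe that $\langle(0,1,x),\mathbf{F}\rangle=[t+xr;\,s,0;\,1,0,0]$ has its entire $R$-row and $R$-column equal to zero; this is a rank-$2$ binary supported on $\{B,G\}$ (nondegenerate since $r\neq 0$ lets one avoid $t+xr=s^2$), and by the interpolation lemma one obtains $=_{B,G}$. On the $\{B,G\}$ subdomain one now has the ternary $\mathbf{F}^{*\rightarrow\{B,G\}}=[u,t,s,1]$ together with the one-parameter families $[t+xr,s,1]$ (when $s\neq 0$) or $[u+xr,t,0]$ (when $s=0$, $t\neq 0$); since $r\neq 0$ these genuinely vary, and \methpoly handles every $\mathcal{P}_{a,b}$ and $\mathcal{P}$. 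The remaining obstruction is that $[u,t,s,1]$ itself may be degenerate, forcing $u=s^3$, $t=s^2$; the paper then pivots a second time, using $\langle(1,-s,1),\mathbf{F}\rangle=[r;0,r;0,0,0]$ to interpolate $=_{B,R}$, after which $\mathbf{F}^{*\rightarrow\{B,R\}}=[s^3,r,0,0]$ is directly \#P-hard by Theorem~\ref{thm:dich-sym-Boolean}. Your last paragraph correctly identifies the $s=t=0$ separated case (where indeed $G$ splits off from $\{B,R\}$), but the generic argument requires this double subdomain switch, not anything routed through $\{G,R\}$.
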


\begin{proof}
 Consider
$\langle (0,1,x), \mathbf{F}\rangle =[t+xr;s,0;1,0,0]$, for any complex $x$.
This is
\begin{center}
\begin{tabular}{*{11}{c}c}
{ }     & { }   & { }   &$t+xr$& { }     & { }   & { }  \\
{ }     & { }   &$s$    & { } &$0$& { }     & { }  \\
{ }     &$1$& { } &  $0$& { }   &$0$& { }  \\
\end{tabular}
\end{center}
We can pick an $x$, and use it
to realize $=_{B,G}$ by interpolation. Then we can utilize $=_{B,G}$ to
reduce a \#P-hard problem
on the Boolean domain $\{B, G\}$ to Holant${^*}(\mathbf{F})$.

Over the domain $\{B, G\}$, we try to prove the following:
We construct a function not in $\langle \mathcal{T} \rangle$,
and
 for each tractable class $\mathcal{P}_{a,b}$ and
$\mathcal{P}$, we construct a binary function not in it.
We construct binary
functions first.


\begin{itemize}
\item $s \neq 0$

By choosing an $x$, we can realize a   non-degenerate
 binary function $[t+xr,s,1]$ in domain
$\{B,G\}$ using
$\langle (0,1,x), \mathbf{F}\rangle$ and
$=_{B,G}$. The rest of the proof is the
same:
 For each
tractable class $\mathcal{P}_{a,b}$ and $\mathcal{P}$,
 we find a suitable $x$ such that a non-degenerate $[t+xr,s,1]$
does not belong to $\mathcal{P}_{a,b}$ and
$\mathcal{P}$.

\item $s =0$

\subitem{$\diamond$}   $t=0$

The domain is separated.

\subitem{$\diamond$} $t \neq 0$

For any $x$, we can realize a non-degenerate
 binary function $[u+xr,t,0]$  in domain
$\{B,G\}$ using
$\langle (1,0,x), \mathbf{F}\rangle$  and $=_{B,G}$. For each
tractable class $\mathcal{P}_{a,b}$ and $\mathcal{P}$, there is some
$x$ such that $[u+xr,t,0]$ is non-degenerate
and not in the class.

\end{itemize}

%
%
%
%
%
%
%

Now we construct a suitable ternary function.
Obviously, we can realize the ternary function $[u,t,s,1]$ in domain
$\{B,G\}$, using $=_{B,G}$.
If it is not in $\langle \mathcal{T} \rangle$, we are done.

Suppose  we have $[u,t,s,1] \in \langle \mathcal{T} \rangle$.
A {\it symmetric} ternary signature $[u,t,s,1]$
being decomposable in $\langle \mathcal{T}\rangle$
can only be degenerate, of the form $(s, 1)^{\otimes 3}$.
That is, $\mathbf{F}=[s^3; s^2, r; s, 0, 0; 1, 0, 0, 0]$.
\begin{center}
\begin{tabular}{*{11}{c}c}
{ }     & { }   & { }   &$s^3$& { }     & { }   & { }  \\
{ }     & { }   &$s^2$& { } &$r$& { }     & { }  \\
{ }     &$s$& { } &  $0$& { }   &$0$& { }  \\
 $1$& { } &$0$& { } & $0$& { } &   $0$  \\
\end{tabular}
\end{center}

We construct $\langle (1,-s,1), \mathbf{F}
\rangle =[r;0,r;0,0,0]$, which is
\begin{center}
\begin{tabular}{*{11}{c}c}
{ }     & { }   & { }   &$r$& { }     & { }   & { }  \\
{ }     & { }   &$0$    & { } &$r$& { }     & { }  \\
{ }     &$0$& { } &  $0$& { }   &$0$& { }  \\
\end{tabular}
\end{center}

We can use this
to realize $=_{B,R}$
by interpolation, using the fact that
$r \not =0$.
Then on domain $\{B, R\}$,
the problem
 Holant$^*(\mathbf{F'})$ is \#P-hard
for $\mathbf{F'} = \mathbf{F}^{*\rightarrow\{B,R\}}=[s^3,r,0,0]$.

\end{proof}

\subsection{$\mathbf{F}^{*\rightarrow\{G,R\}}$ Is Degenerate
Of Rank 1 And Isotropic}\label{Degenerate-Rank1-Isotropic}


Suppose
$\mathbf{F}^{*\rightarrow\{G,R\}}$ is degenerate of rank 1, therefore
of the form
$(a, b)^{\otimes 3}$, however in this subsection
 we assume $(a, b)$ is isotropic.
The high level proof strategy is  similar to that of
subsection~\ref{bottom1000}, but  the execution is
considerably more complicated.
We  only need to prove the case when $\mathbf{F}^{*\rightarrow\{G,R\}}$
is  $[1,i,-1,-i]$.
The case with $[1, -i, -1, i]$ is the same, and follows formally
by taking the conjugation.
Let $\mathbf{F}=[u; t, r; s, p, q; 1, i, -1, -i]$, namely
\begin{equation}\label{bottom1000F-1in1ni}
\begin{tabular}{*{11}{c}c}
{ }     & { }   & { }   &$u$& { }     & { }   & { }  \\
{ }     & { }   &$t$& { } &$r$& { }     & { }  \\
{ }     &$s$& { } &  $p$& { }   &$q$& { }  \\
 $1$& { } &$i$& { } & $-1$& { } &   $-i$  \\
\end{tabular}
\end{equation}

Suppose $\mathbf{T}=\langle (\alpha,\beta,\gamma),\mathbf{F} \rangle=\alpha \mathbf{A}+\beta \mathbf{B} + \gamma \mathbf{C}$,
where $\mathbf{A} = \mathbf{F}^{1=B}$, $\mathbf{B} = \mathbf{F}^{1=G}$ and $
\mathbf{C}
= \mathbf{F}^{1=R}$. Construct
\[ \mathbf{H}(x_1,x_2,x_3,x_4)=
\left( \sum_{y_1,y_2 \in \{B, G, R\}}
\mathbf{F}(x_1,x_2,y_1) \mathbf{T}(y_1,y_2)  \mathbf{F}(y_2,x_3,x_4)
\right)^{*\rightarrow\{G,R\}}.\]
Because $\mathbf{F}$ is symmetric, $\mathbf{H}$ satisfies the condition in Fact \ref{fact-half-sym-cut2}, and we can use \methps  to prove it is not in $\langle \mathcal{T} \rangle$, by showing two decompositions are impossible.

Let
$\mathbf{S}=\left [
\begin{array}{ccc}  s &1 & i\\ p & i & -1 \\ p & i & -1  \\   q & -1 & -i \end{array} \right ]$,
indexed by $\{G, R\}^2 \times \{B, G, R\}$
in lexicographic order. Then
the arity 4 function $\mathbf{H}$ has a matrix form
 $\mathbf{H}_1=\mathbf{S} \mathbf{T} \mathbf{S}^{\tt T}$,
where the rows are indexed by $(x_1,x_2) \in \{G, R\}^2$
and the columns are  indexed by $(x_3,x_4) \in \{G, R\}^2$.
The other matrix form of $\mathbf{H}$ is $\mathbf{H}_2$ indexed by $(x_1,x_3)$ and $(x_2,x_4)$.
$\mathbf{H}$ has decomposition form $\mathbf{K}(x_1,x_2)\mathbf{L}(x_3,x_4)$ (resp. $\mathbf{K}(x_1,x_3)\mathbf{L}(x_2,x_4)$) iff $\mathbf{H}_1$ (resp. $\mathbf{H}_2$) has rank at most 1.

Let $\mathbf{P}=\left [
\begin{array}{cc}  s &1\\ p & i \\ p & i \\   q & -1 \end{array} \right ]$ and $\mathbf{Q}=\left (
\begin{array}{ccc}  1 &0  & 0\\0 & 1 & i \end{array} \right )$. Then, $\mathbf{S}=\mathbf{P} \mathbf{Q}$.
By associativity, we can multiply $\mathbf{Q}\mathbf{T} \mathbf{Q}^{\tt T}$
first in $\mathbf{H}_1=\mathbf{P} \mathbf{Q} \mathbf{T} \mathbf{Q}^{\tt T}
\mathbf{P}^{\tt T}$.
We have
\[\mathbf{Q} \mathbf{A} \mathbf{Q}^{\tt T}= \left (
\begin{array}{cc}  u & t+ir\\t+ir & s+2ip-q \end{array} \right ),
~~~
\mathbf{Q} \mathbf{B} \mathbf{Q}^{\tt T}= \left (
\begin{array}{cc}  t & s+ip\\s+ip & 0 \end{array} \right ),
~~~
\mathbf{Q} \mathbf{C} \mathbf{Q}^{\tt T}= \left (
\begin{array}{cc}  r & p+iq\\p+iq & 0 \end{array} \right ),\]
and
\[\mathbf{Q} \mathbf{T} \mathbf{Q}^{\tt T}= \left (
\begin{array}{cc} u  \alpha + t \beta + r \gamma  & (t+ir) \alpha
+(s+ip)  \beta +(p+iq) \gamma \\(t+ir)\alpha +(s+ip)\beta +(p+iq)  \gamma & (s+2ip-q)\alpha  \end{array} \right ).\]

\begin{lemma} \label{lem: four-nary-H-not-in-T}
If $p \neq is$ or $q \neq ip$, then
there exist some $\alpha,\beta,\gamma$,
such that $\mathbf{H} \not \in \langle \mathcal{T} \rangle$.
\end{lemma}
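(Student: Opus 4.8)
The goal is to find a choice of the unary $(\alpha,\beta,\gamma)$ for which the arity-$4$ function $\mathbf{H}$ fails to decompose into a tensor product of two binary functions. Since $\mathbf{F}$ is symmetric, $\mathbf{H}$ inherits the partial symmetry $\mathbf{H}(x_1,x_2,x_3,x_4)=\mathbf{H}(x_2,x_1,x_3,x_4)=\mathbf{H}(x_1,x_2,x_4,x_3)$, so by Fact~\ref{fact-half-sym-cut2} the only possible decompositions are $\mathbf{K}(x_1,x_2)\mathbf{L}(x_3,x_4)$ and $\mathbf{K}(x_1,x_3)\mathbf{L}(x_2,x_4)$, which are equivalent to $\mathrm{rank}(\mathbf{H}_1)\le 1$ and $\mathrm{rank}(\mathbf{H}_2)\le 1$ respectively. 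So it suffices to exhibit $(\alpha,\beta,\gamma)$ making both $\mathbf{H}_1$ and $\mathbf{H}_2$ have rank at least $2$.

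\textbf{Main steps.} First I would treat the easier decomposition $\mathbf{H}_1=\mathbf{P}(\mathbf{Q}\mathbf{T}\mathbf{Q}^{\tt T})\mathbf{P}^{\tt T}$. Because $\mathbf{P}$ is a fixed $4\times 2$ matrix and $\mathbf{Q}\mathbf{T}\mathbf{Q}^{\tt T}$ is a $2\times 2$ symmetric matrix whose entries are explicit linear forms in $(\alpha,\beta,\gamma)$, the rank of $\mathbf{H}_1$ is at most $\mathrm{rank}(\mathbf{P})=2$ automatically, and I want to show it is \emph{exactly} $2$ for generic parameters. The condition $\mathrm{rank}(\mathbf{H}_1)\le 1$ is the vanishing of all $2\times 2$ minors, which is a finite set of polynomial equations in $(\alpha,\beta,\gamma)$; using the polynomial argument (Fact~\ref{fact-nonzeropoly}) I only need to check that not \emph{all} of these polynomials are identically zero, i.e. that there is at least one parameter value making some $2\times 2$ minor nonzero. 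A convenient probe is $\alpha=0$: then $\mathbf{Q}\mathbf{T}\mathbf{Q}^{\tt T}$ has a zero in the lower-right corner but off-diagonal entry a linear form in $\beta,\gamma$, and one can select $\beta,\gamma$ so this off-diagonal entry is nonzero; combining with the structure of $\mathbf{P}$ (which has a column $(1,i,i,-1)^{\tt T}$) produces a nonvanishing minor, and the hypothesis $p\neq is$ or $q\neq ip$ enters precisely here to guarantee the relevant entries of $\mathbf{P}$ are not degenerate in a way that would force rank $1$. I would carry out the same kind of minor calculation for $\mathbf{H}_2$: here $\mathbf{H}_2$ is the matrix form of $\mathbf{H}$ with indices regrouped as $(x_1,x_3),(x_2,x_4)$, and again its rank-$\le 1$ locus is cut out by polynomials in $(\alpha,\beta,\gamma)$; I exhibit a parameter value where some $2\times 2$ minor of $\mathbf{H}_2$ is nonzero. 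The exact numerical minors are routine to compute from the explicit forms of $\mathbf{Q}\mathbf{A}\mathbf{Q}^{\tt T}$, $\mathbf{Q}\mathbf{B}\mathbf{Q}^{\tt T}$, $\mathbf{Q}\mathbf{C}\mathbf{Q}^{\tt T}$ given above.

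\textbf{Combining.} Having shown that each of ``$\mathrm{rank}(\mathbf{H}_1)\le 1$'' and ``$\mathrm{rank}(\mathbf{H}_2)\le 1$'' is the zero set of a nonzero polynomial in $(\alpha,\beta,\gamma)$, Fact~\ref{fact-nonzeropoly} gives a common $(\alpha^*,\beta^*,\gamma^*)$ outside the union of these two zero sets; for this choice $\mathbf{H}\notin\langle\mathcal{T}\rangle$ by Fact~\ref{fact-half-sym-cut2}, proving the lemma.

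\textbf{Expected main obstacle.} The delicate point is making the case distinction $p\neq is$ versus $q\neq ip$ interact cleanly with the probe parameter values: the two branches of the hypothesis control different minors, so I expect to need \emph{two} slightly different parameter choices (one handling the subcase $p\neq is$, one handling $q\neq ip$) and to verify in each subcase that \emph{both} $\mathbf{H}_1$ and $\mathbf{H}_2$ are simultaneously escaped, or—more cleanly—to find a single family of probes parametrized so that the polynomial nonvanishing follows uniformly from ``$p\neq is$ or $q\neq ip$''. Isolating which $2\times 2$ minor of $\mathbf{H}_2$ is the ``right'' one to track (the one whose leading term survives under the hypothesis) is where the bookkeeping is easiest to get wrong, and is the step I would write out most carefully.
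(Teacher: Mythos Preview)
Your overall strategy---reduce via Fact~\ref{fact-half-sym-cut2} to showing both $\mathbf{H}_1$ and $\mathbf{H}_2$ have rank $\ge 2$ for some $(\alpha,\beta,\gamma)$, then combine via the polynomial argument---is exactly the paper's, and your treatment of $\mathbf{H}_1$ matches: under $p\neq is$ the submatrix $\left(\begin{smallmatrix} s & 1\\ p & i\end{smallmatrix}\right)$ of $\mathbf{P}$ is nonsingular and $\det(\mathbf{Q}\mathbf{T}\mathbf{Q}^{\tt T})$ has $\beta^2$-coefficient $-(s+ip)^2\neq 0$.

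The gap is labeling the $\mathbf{H}_2$ step ``routine.'' The paper does \emph{not} exhibit a single $2\times 2$ minor of $\mathbf{H}_2$ that is a nonzero polynomial under the bare hypothesis $p\neq is$; it needs a two-stage argument. In Step~2 it picks, for fixed $\alpha,\gamma$, three values $c_1,c_2,\tfrac{c_1+c_2}{2}$ of $\beta$ at which $\mathbf{H}_1$ has rank~$2$, and supposes $\mathbf{H}_2$ has rank~$1$ at all three. Writing the first two as $\mathbf{u}\mathbf{u}^{\tt T}$ and $\mathbf{v}\mathbf{v}^{\tt T}$, the third is their average yet still rank~$\le 1$, forcing $\mathbf{u},\mathbf{v}$ dependent; pushed back through the full-rank block of $\mathbf{P}^{\otimes 2}$ (with $\alpha=1,\gamma=0$) this makes $\widetilde{\mathbf{A}},\widetilde{\mathbf{B}}$ dependent and hence $s+2ip-q=0$. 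Only under that extra constraint (Step~3, at the probe $(0,1,0)$) does the paper pin down a concrete minor of $\mathbf{H}_2$: the relation $\left(\begin{smallmatrix}s\\1\end{smallmatrix}\right)+2i\left(\begin{smallmatrix}p\\i\end{smallmatrix}\right)-\left(\begin{smallmatrix}q\\-1\end{smallmatrix}\right)=0$ among the columns of $\mathbf{P}^{\tt T}$ forces the relevant row of $\mathbf{H}_1$ to be $(k,l,l,k+2il)$, and the corresponding $2\times 2$ minor of $\mathbf{H}_2$ vanishes only if $p=is$, a contradiction. So which minor of $\mathbf{H}_2$ works genuinely depends on whether $s+2ip-q=0$, and the averaging-of-rank-one-matrices trick in Step~2 is the idea your outline is missing. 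A direct minor chase could in principle replace it, but would require a comparable case split rather than the one-line check you suggest.
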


\begin{proof}
The proofs under both conditions are the same. W.l.o.g. we assume $p \neq is$.
The proof is composed of three steps. We will use different
matrix or vector representations of $\mathbf{H}$.
The goal is to show that there are some $\alpha,\beta,\gamma$,
such that the two matrix forms of $\mathbf{H}$ both have rank at least two.

In the first step, we use the matrix form $\mathbf{H}_1=\mathbf{P} (\mathbf{Q} \mathbf{T} \mathbf{Q}^{\tt T}) \mathbf{P}^{\tt T}$  of $\mathbf{H}$, and show that for some $\alpha,\beta,\gamma$, this matrix has rank at least 2.

The submatrix $\left ( \begin{array}{cc}  s &1\\ p & i \end{array} \right )$ of $\mathbf{P}$ has full rank.
We only need to show that $\left ( \begin{array}{cc}  s &1\\ p & i \end{array} \right ) (\mathbf{Q} \mathbf{T} \mathbf{Q}^{\tt T})  \left ( \begin{array}{cc}  s &1\\ p & i \end{array} \right )^{\tt T}$, a $2 \times 2$ submatrix of $\mathbf{H}_1$, is of full rank. $\det ( \mathbf{Q} \mathbf{T} \mathbf{Q}^{\tt T} )$
is a polynomial whose coefficient of $\beta^2$ is the nonzero number
$-(s+ip)^2$.
For any fixed $\alpha$ and $\gamma$, there are 3 different values $c_1,c_2,
(c_1+c_2)/2$ (these may depend on $\alpha,\gamma$), such that when $\beta$
takes any one of
 these values, $\det ( \mathbf{Q} \mathbf{T} \mathbf{Q}^{\tt T} ) \neq 0$
and consequently $\mathbf{H}_1$ has rank at least 2.

In the second and third steps we mainly consider the rank of
$\mathbf{H}_2$. This is done in a two-step process.
Either we establish that $\mathbf{H}_2$ has rank least 2
for some setting $c_1,c_2,
(c_1+c_2)/2$ for which  $\mathbf{H}_1$ also has rank at least 2,
or we get some additional condition.
Then in the third step we establish the existence of the required
 $\alpha,\beta,\gamma$ for both $\mathbf{H}_1$ and  $\mathbf{H}_2$  under
the additional condition.

We start our second step by considering the matrix
form $\mathbf{H}_2=\mathbf{H}_{x_1x_3,~x_2x_4}$. If
at least one of the three matrices $\mathbf{H}_2$
given by $(\alpha,c_1,\gamma)$, $(\alpha,c_2,\gamma)$
and $(\alpha,(c_1+c_2)/2,\gamma)$ has rank at least two,
by \methps, $\mathbf{H} \not \in \langle \mathcal{T} \rangle$.

Now suppose all
three matrices $\mathbf{H}_2$ given by $(\alpha,c_1,\gamma)$,
$(\alpha,c_2,\gamma)$ and $(\alpha,(c_1+c_2)/2,\gamma)$ have rank at most 1.
By the conclusion $\det ( \mathbf{Q} \mathbf{T} \mathbf{Q}^{\tt T} ) \neq 0$ in the first step, $\mathbf{H}_2 \not = \mathbf{0}$ for all three settings,
 their ranks are exactly 1.

Let the matrices $\mathbf{H}_2$ given by
$(\alpha,c_1,\gamma)$ and $(\alpha,c_2,\gamma)$
be $\mathbf{u} \mathbf{u}^{\tt T}$
and $\mathbf{v} \mathbf{v}^{\tt T}$ for some
column vectors $\mathbf{u}$ and $\mathbf{v}$.
Then the matrix  $\mathbf{H}_2$ given by
 $(\alpha,(c_1+c_2)/2,\gamma)$ is
$(\mathbf{u} \mathbf{u}^{\tt T} +  \mathbf{v} \mathbf{v}^{\tt T})/2$.
If  $\mathbf{u}$ and $\mathbf{v}$ are linearly independent,
then $(\mathbf{u} \mathbf{u}^{\tt T} +  \mathbf{v} \mathbf{v}^{\tt T})/2$
has rank 2.
(It certainly has rank at most two, since its image as a linear map is contained in
the span of $\{\mathbf{u}, \mathbf{v}\}$.
By linear independence, there are $\mathbf{w}$ satisfying
$\mathbf{u}^{\tt T} \mathbf{w} =0$ but $\mathbf{v}^{\tt T} \mathbf{w} \neq 0$.
Thus the image contains $\mathbf{v}$,
and similarly it also contains $\mathbf{u}$.)
Hence $\mathbf{u}$ and $\mathbf{v}$ are linearly dependent.
It follows that the matrices  $\mathbf{u} \mathbf{u}^{\tt T}$
and $\mathbf{v} \mathbf{v}^{\tt T}$ are also linearly dependent.
This linear dependence remains the same when we write these two matrices as vectors.
%


We use the vector form $\mathbf{H}_3$ of $\mathbf{H}$
to show the consequence of this observation.
This form helps to explain getting rid of $\mathbf{P}$ and $\mathbf{P}^{\tt T}$.
Let $\widetilde{\mathbf{A}}$ denote
the column vector  form of $\mathbf{Q} \mathbf{A} \mathbf{Q}^{\tt T}$,
namely $\widetilde{\mathbf{A}} = (u, t+ir, t+ir, s+2ip-q)^{\tt T}$.
Similarly, let $\widetilde{\mathbf{B}} = (t,s+ip,s+ip,0)^{\tt T}$
and $\widetilde{\mathbf{C}} = (r,p+iq,p+iq,0)^{\tt T}$
be the column vector  forms of  $\mathbf{Q} \mathbf{B} \mathbf{Q}^{\tt T}$
and  $\mathbf{Q} \mathbf{C} \mathbf{Q}^{\tt T}$, respectively.
Then
$\mathbf{H}_3=\mathbf{P}^{\otimes 2} (\alpha \widetilde{\mathbf{A}}
+\beta \widetilde{\mathbf{B}} + \gamma \widetilde{\mathbf{C}})$,
which lists all entries of $\mathbf{H}$,
and therefore also  all entries of  $\mathbf{H}_2$, in some order.
Notice that
the submatrix $\left ( \begin{array}{cc}  s &1\\ p & i \end{array} \right )^{\otimes 2}$ of $\mathbf{P}^{\otimes 2}$ is of
full rank.
 Let $\alpha=1$ and $\gamma=0$, we get
 $\widetilde{\mathbf{A}}+c_1 \widetilde{\mathbf{B}}$ and
 $\widetilde{\mathbf{A}} +c_2 \widetilde{\mathbf{B}}$ are linearly dependent,
where $c_1 \not = c_2$.
It follows that
 $\widetilde{\mathbf{A}}$ and $\widetilde{\mathbf{B}}$ are linearly dependent.
Because the entry $s+ip$ in $\widetilde{\mathbf{B}}$ is nonzero,
$\widetilde{\mathbf{A}}$ is a multiple of $\widetilde{\mathbf{B}}$,
and  $s+2ip-q=0$ as the corresponding entry in $\widetilde{\mathbf{B}}$
is 0.  This is just $(s+ip)+i(p+iq)=0$. Hence we have $p+iq = i(s+ip) \neq 0$.



In the third step, we fix $\alpha=0,\beta=1,\gamma=0$.
Obviously, $\mathbf{H}_1$ has rank at least 2, since
$\det(\mathbf{Q} \mathbf{B} \mathbf{Q}^{\tt T}) = -(s+ip)^2
 \not =0$.
We consider $\mathbf{H}_2$.
Since the matrix
$\left ( \begin{array}{cc}  s &1\\ p & i \end{array} \right )$
has rank 2, and
$\begin{pmatrix} t \\ s+ip \end{pmatrix}$ is a nonzero vector,
we have
either $\left ( \begin{array}{cc}  s & 1 \end{array} \right ) \left ( \begin{array}{c}  t \\ s+ip \end{array} \right ) \neq 0$ or $\left ( \begin{array}{cc}  p & i \end{array} \right ) \left ( \begin{array}{c}  t \\ s+ip \end{array} \right ) \neq 0$.

Suppose the first is not zero.
Consider the $(GG, GR) \times (GG,GR)$
submatrix of $\mathbf{H}_2$, whose row index is by $x_1x_3$
and the column index is by $x_2 x_4$. They are
precisely the entries
in the first row $(GG,GG),(GG,GR), (GG,RG)$ and $(GG,RR)$
of $\mathbf{H}_1$. Recall that
$\mathbf{H}_1=
\mathbf{P} (\mathbf{Q} \mathbf{T} \mathbf{Q}^{\tt T}) \mathbf{P}^{\tt T}
= \mathbf{P} (\mathbf{Q} \mathbf{B} \mathbf{Q}^{\tt T}) \mathbf{P}^{\tt T}
$, after our choice $\alpha=0,\beta=1,\gamma=0$.
The first row of $\mathbf{P}$ is
$\left ( \begin{array}{cc}  s & 1 \end{array} \right )$.
Let
\[\left ( \begin{array}{cc}  a & b \end{array} \right )
= \left ( \begin{array}{cc}  s & 1 \end{array} \right )   \left (
\begin{array}{cc}  t & s+ip\\s+ip & 0 \end{array} \right ).\]
Then the first row of $\mathbf{H}_1$ is
\[\left ( \begin{array}{cc}  s & 1 \end{array} \right ) \mathbf{Q} \mathbf{B} \mathbf{Q}^{\tt T} \mathbf{P}^{\tt T}=
\left ( \begin{array}{cc}  s & 1 \end{array} \right )   \left (
\begin{array}{cc}  t & s+ip\\s+ip & 0 \end{array} \right )   \left (
\begin{array}{cccc}  s & p & p & q\\1 & i & i & -1 \end{array} \right )=  \left ( \begin{array}{cc}  a & b \end{array} \right )  \left (
\begin{array}{cccc}  s & p & p & q\\1 & i & i & -1 \end{array} \right ).\]
Because $s+2ip-q=0$, which we proved in the second step,  we have the linear
dependence
 $ \left ( \begin{array}{c}s\\1 \end{array} \right )+2i \left ( \begin{array}{c}p\\i \end{array} \right )- \left ( \begin{array}{c}q\\-1 \end{array} \right )=0$. Therefore the four entries
in the first row of $\mathbf{H}_1$  are $(k,l,l,k+2il)$,
where $k = as + b$ and $l = ap + bi$.
If the submatrix of $\mathbf{H}_2$ indexed by
$(GG, GR) \times (GG,GR)$
is not of full rank, then $l^2=k(k+2il)$,
which is $(l-ik)^2=0$.
Hence $l = ik$.
It follows that  $ap=ias$. Notice
that $a = \left ( \begin{array}{cc}  s & 1 \end{array} \right ) \left ( \begin{array}{c}  t \\ s+ip \end{array} \right ) \neq 0$. We get $p=is$, a contradiction.

If $\left ( \begin{array}{cc}  p & i \end{array} \right ) \left ( \begin{array}{c}  t \\ s+ip \end{array} \right ) \neq 0$, the proof is similar.
Consider the $(GG, GR) \times (RG,RR)$ submatrix of $\mathbf{H}_2$
indexed by $x_1x_3$ and $x_2 x_4$.
They are precisely the second row entries
$(GR,GG),(GR,GR), (GR,RG)$ and $(GR,RR)$  of $\mathbf{H}_1$. The rest of
the proof is the same as in the previous case.

\end{proof}

It is straightforward  that $[p = is \text{ and } q = ip]$ iff $[s+q =0
 \text{ and } s + 2ip -q = 0]$.  In the next three lemmas we will
complete the case stipulated in this Section~\ref{Degenerate-Rank1-Isotropic},
namely $\mathbf{F}^{*\rightarrow\{G,R\}}$ is degenerate
of rank 1 and isotropic,
  when the negation $[p \neq is \text{ or } q \neq ip]$
holds. We will show that Holant${^*}(\mathbf{F})$
is \#P-hard in this case.  The
method is to construct a suitable binary signature,
or to show directly the problem is \#P-hard.
After that we will deal with the case $[p = is \text{ and } q = ip]$.




\begin{lemma} \label{lem: some-good-binary-except-2-cases}
If $p \not = is \text{ or } q \not = ip$,
or equivalently,  if
$s+q \neq 0 \text{ or } s+2ip-q \neq 0$, then for any $(a,b) \neq (0,0)$, we can construct a nondegenerate symmetric binary function $\mathbf{W} \not \in \mathcal{P}_{a,b}$, and a nondegenerate symmetric binary function $\mathbf{W} \not \in \mathcal{P}$,
except in two cases where this simple construction does not work:
\begin{itemize}
\item{Case 1.} For $\mathcal{P}_{i,-2}$, when $s+q \neq 0$ and $s+2ip-q=0$;
\item{Case 2.} For $\mathcal{P}$,  when $s+q = 0$ and $s+2ip-q \neq 0$.
\end{itemize}
\end{lemma}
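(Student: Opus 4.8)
The plan is to use the most elementary gadget available: attach a free unary function $\mathbf{u}=(\alpha,\beta,\gamma)$ to $\mathbf{F}$ and then glue a copy of $=_{G,R}$ onto each of the two remaining dangling edges, which (since $=_{G,R}$ is available here) produces a symmetric binary function supported on $\{G,R\}$. Using the calculus of Section~\ref{section:calculus} this realizes, in symmetric notation,
\[
\langle \mathbf{u},\mathbf{F}\rangle^{*\rightarrow\{G,R\}}
 \;=\; \bigl[\, s\alpha+\beta+i\gamma,\ \ p\alpha+i\beta-\gamma,\ \ q\alpha-\beta-i\gamma \,\bigr].
\]
The first simplification I would record is that this family depends only on the two scalars $\alpha$ and $w:=\beta+i\gamma$ (since $i\beta-\gamma=iw$ and $-\beta-i\gamma=-w$): writing $\mathbf{W}_{\alpha,w}=[\,s\alpha+w,\ p\alpha+iw,\ q\alpha-w\,]$, the pair $(\alpha,w)$ ranges over all of $\mathbb{C}^2$. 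Setting $[f_0,f_1,f_2]=\mathbf{W}_{\alpha,w}$, I would then compute the three controlling quantities: $f_0+f_2=(s+q)\alpha$; the Fibonacci defect $a(f_0-f_2)+bf_1=[a(s-q)+bp]\,\alpha+(2a+bi)\,w$; and
\[
\det\mathbf{W}_{\alpha,w}=f_0f_2-f_1^2=\alpha\bigl[(sq-p^2)\alpha-(s+2ip-q)w\bigr].
\]
From the last identity, $\mathbf{W}_{\alpha,w}$ is non-degenerate for generic $(\alpha,w)$ unless $sq=p^2$ and $s+2ip-q=0$ hold simultaneously; and one checks that these two equations hold together exactly when $(s+ip)^2=0$, i.e.\ exactly when $p=is$ and $q=ip$ — which is excluded by the standing hypothesis. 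Hence, under that hypothesis, $\det\mathbf{W}_{\alpha,w}$ is a nonzero polynomial.

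Next I would invoke \methpoly(Fact~\ref{fact-nonzeropoly}): for a fixed $\mathcal{P}_{a,b}$ or $\mathcal{P}$, the set of $(\alpha,w)$ for which $\mathbf{W}_{\alpha,w}$ is degenerate or lies in that class is a finite union of algebraic sets, so it suffices to exhibit, for each defining condition separately, one $(\alpha,w)$ violating it. For $\mathcal{P}$ at arity $2$ the conditions are ``$f_0+f_2=0$'', ``$\mathbf{W}=\lambda[1,0,1]$'' (which forces $f_1=0$), and degeneracy; since $f_1=p\alpha+iw$ has nonzero $w$-coefficient and $\det\mathbf{W}_{\alpha,w}\not\equiv0$, the only possible obstruction is $f_0+f_2=(s+q)\alpha$, which is a nonzero linear form precisely when $s+q\neq 0$. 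For $\mathcal{P}_{a,b}$ the conditions are ``defect $=0$'', ``$\mathbf{W}=\lambda[2a,b,-2a]$'', and degeneracy; the middle one forces $f_0+f_2=0$ together with a ratio condition, and a short case split on whether $[2a,b,-2a]$ is itself degenerate (i.e.\ on whether $b=\pm 2ai$) shows it is always contained in a proper algebraic subset — either empty, or inside $\{\det\mathbf{W}_{\alpha,w}=0\}$, or a proper hypersurface $\{(bs-2ap)\alpha+(b-2ai)w=0\}$ — hence escapable. So the decisive question is whether the defect $[a(s-q)+bp]\alpha+(2a+bi)w$ is identically zero: it is iff $2a+bi=0$ (equivalently $(a,b)\propto(1,2i)$, i.e.\ $\mathcal{P}_{a,b}=\mathcal{P}_{i,-2}$) and, in that case, $a(s+2ip-q)=0$, i.e.\ $s+2ip-q=0$. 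Whenever the defect is not identically zero, a generic $(\alpha,w)$ produces a non-degenerate $\mathbf{W}\notin\mathcal{P}_{a,b}$.

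Collecting the two places where a controlling linear form collapses identically then yields exactly the two stated exceptions: $\mathcal{P}_{i,-2}$ with $s+2ip-q=0$ (Case~1; here the hypothesis forces $p\neq is$, hence $s+q=2(s+ip)\neq0$, consistent with the statement), and $\mathcal{P}$ with $s+q=0$ (Case~2; here the hypothesis forces $s+2ip-q=2(s+ip)\neq0$). In each of these every function produced by this construction satisfies the relevant recurrence, so the simple construction cannot by itself escape the class, which is what the statement asserts. The only step with real content is the non-degeneracy analysis — pinning down the factorization of $\det\mathbf{W}_{\alpha,w}$ and the equivalence $\{sq=p^2,\ s+2ip-q=0\}\Leftrightarrow\{p=is,\ q=ip\}$ — together with checking that for every $(a,b)$ outside the two exceptional configurations at least one controlling linear form is genuinely nonzero; the rest is routine bookkeeping of linear forms over $(\alpha,w)$.
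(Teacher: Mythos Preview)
Your proof is correct and follows essentially the same route as the paper: both attach a free unary to $\mathbf{F}$, restrict to $\{G,R\}$, and run \methpoly on the resulting family $[s+x,p+xi,q-x]$ of binary signatures (your two-parameter $(\alpha,w)$ family is just the homogenization of the paper's choice $\mathbf{u}=(1,x,0)$). The only cosmetic difference is that you handle the special branch $\lambda[2a,b,-2a]$ of $\mathcal{P}_{a,b}$ by an explicit hypersurface argument, whereas the paper dispatches it via the single observation $f_0+f_2=(s+q)\neq 0$; the two exceptional cases fall out identically in both arguments.
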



\begin{proof}

For any complex number $x$, we
can construct $\mathbf{W}=\langle F,(1,x,0) \rangle^{* \rightarrow \{G,R\}}= [s,p,q]+ x [1,i,-1]$.
We write
\[\mathbf{W}= [f_0, f_1, f_2] = [s+x, p+xi, q-x].\]
The determinant of the matrix form of the binary signature
 $\mathbf{W}$ is $\det ( \mathbf{W} ) =-(s+2ip -q)x+(sq-p^2)$.

There are 4 requirements related to
the conclusion about this symmetric binary function.
\begin{enumerate}
\item   Nondegenerate, that is, $-(s+2ip -q)x+(sq-p^2) \neq 0$.
\item  $f_0+f_1 \neq 0$, that is,  $s+q \neq 0$.
\item $a(f_0-f_2)+bf_1 \neq 0$, that is,
$\langle (a,b), ((s-q,p) +x (2,i))     \rangle \neq 0$.
\item  $f_0-f_2 \neq 0$, that is,  $2x+s-q \neq 0$.
\end{enumerate}

Each requirement is a polynomial in $x$.
The following conditions can guarantee
 each requirement polynomial in $x$ is not the  zero polynomial respectively.
\begin{enumerate}
\item   $s+q \neq 0 \text{ or } s+2ip-q \neq 0$.

If $s+2ip-q \neq 0$, $\det ( \mathbf{W} )$ is not zero obviously.
If $s+2ip-q =0$ but $s+q \neq 0$, we claim $sq-p^2 \neq 0$.
Assume $p^2 =sq$, then  $(s-q)^2 =-4p^2=-4sq$ which gives $(s+q)^2 =0$. A contradiction.

\item  $s+q  \neq 0$.

\item
\begin{itemize}
\item
Whenever $(a,b)$ is not a multiple of $(i,-2)$, it is always
a nonzero polynomial, since the coefficient of $x$ is $2a+ib$.

\item
$s+2ip -q \neq 0$.

Because $\det \begin{bmatrix} s-q  & p \\  2 & i  \end{bmatrix} =i(s+2ip -q) \neq 0$, for any $(a,b) \neq (0,0)$,
$\begin{bmatrix} a & b \end{bmatrix}
 \begin{bmatrix} s-q  & p \\  2 & i  \end{bmatrix}
\not = \mathbf{0}$ (the zero vector), so that there is some $x$ such that
$\begin{bmatrix} a & b \end{bmatrix} \begin{bmatrix} s-q  & p \\  2 & i  \end{bmatrix} \begin{bmatrix} 1 \\ x \end{bmatrix}
\not = 0$. Hence  $(a,b)$ is not orthogonal to $(s-q,p)+x(2,i)$.
Therefore, for all nonzero $(a,b)$,
the polynomial $\langle (a,b), ((s-q,p) +x (2,i))     \rangle \neq 0$.
\end{itemize}

\item  This is a not a zero polynomial.
\end{enumerate}

Recall the definitions of $\mathcal{P}_{a,b}$ and $\mathcal{P}$ in
(\ref{Fibonacci-P-ab-and-P}). For  $\mathcal{P}_{a,b}$, items 1, 2 and 3 are sufficient conditions, and for $\mathcal{P}$, items 1, 2 and 4 are sufficient.
We have 3 cases according to the values of $s+q$ and $s+2ip-q$.

\begin{enumerate}
\item  $s+q \neq 0$ and $s+2ip-q \neq 0$.

All four conditions are satisfied. For every set
$\mathcal{P}_{a,b}$ and $\mathcal{P}$, we have a proper function not in it.

\item  $s+q \neq 0$ and $s+2ip-q=0$.

All four conditions are satisfied, except for $\mathcal{P}_{i,-2}$ from item 3.

\item  $s+q = 0$ and $s+2ip-q \neq 0$.

All conditions are satisfied, except for $\mathcal{P}$ from item 2.
\end{enumerate}
\end{proof}

For the two exceptional cases of Lemma~\ref{lem: some-good-binary-except-2-cases}
(i.e., exactly one of  $s+q$ and $s+2ip-q$ is 0),
we can not get all the
desired binary functions by the simple construction $\langle \mathbf{F},(\alpha,\beta,\gamma) \rangle$. We go back to analyze the function $\mathbf{H}$ in Lemma \ref{lem: four-nary-H-not-in-T} and some other
more complicated constructions. The proof of
Lemma \ref{lem: four-nary-H-not-in-T} not only establishes
the conclusion   $\mathbf{H} \not \in \langle \mathcal{T} \rangle$.
It uses  \methpoly that we will combine
with additional requirements.



Let $Z_1=\frac{1}{2}
\left ( \begin{array}{cc} 1  & 1 \\ i & -i \end{array} \right )$
and $Z_2= \frac{1}{2}
\left ( \begin{array}{cc} 1  & 1 \\ -i & i \end{array} \right )$.
Then $\mathcal{P}_{i,-2}$
is composed of binary symmetric functions in $\langle Z_1 \mathcal{M} \rangle$.
More concretely,
nondegenerate signatures in  $\mathcal{P}_{i,-2}$ are precisely
functions of the form $Z_1^{\otimes 2} [a, b, 0]$, for $b \not =0$.

\begin{lemma}\label{lm:s+q=not0}
When $\mathbf{F}^{*\rightarrow\{G,R\}}
= [1,i,-1,-i]$
 and $s+q \neq 0$, $s+2ip-q=0$, the problem  Holant${^*}(\mathbf{F})$
 is \#P-hard.
\end{lemma}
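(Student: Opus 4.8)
\textbf{Proof plan for Lemma~\ref{lm:s+q=not0}.}
In this case the signature is $\mathbf{F}=[u;t,r;s,p,q;1,i,-1,-i]$ with the additional constraints $s+2ip-q=0$ and $s+q\neq0$. The latter two together force $sq-p^2\neq0$ (as shown inside the proof of Lemma~\ref{lem: some-good-binary-except-2-cases}: if $p^2=sq$ then $(s-q)^2=-4p^2=-4sq$, giving $(s+q)^2=0$). Also, since $p\neq is$ or $q\neq ip$ must hold here — indeed $s+q\neq0$ rules out $p=is,q=ip$ — we may invoke Lemma~\ref{lem: four-nary-H-not-in-T}: there exist $\alpha,\beta,\gamma$ such that the arity-$4$ function $\mathbf{H}\not\in\langle\mathcal{T}\rangle$. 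The plan is: (1) use the simple binary construction $\mathbf{W}_x=\langle\mathbf{F},(1,x,0)\rangle^{*\rightarrow\{G,R\}}=[s+x,p+xi,q-x]$ to escape \emph{every} tractable class $\mathcal{P}_{a,b}$ and $\mathcal{P}$ except the single exceptional class $\mathcal{P}_{i,-2}$ (this is exactly the content of Lemma~\ref{lem: some-good-binary-except-2-cases}, Case~1); (2) handle $\mathcal{P}_{i,-2}=\langle Z_1\mathcal{M}\rangle$ separately by producing, simultaneously with $\mathbf{H}\not\in\langle\mathcal{T}\rangle$, a binary function that escapes $\mathcal{P}_{i,-2}$; (3) conclude \#P-hardness from the Boolean Holant$^*$ dichotomy Theorem~\ref{thm:dich-asym-Boolean}, since then $\{\mathbf{H},\mathbf{W},\ldots\}$ together with $=_{G,R}$ (available in our setting) and freely-available unary functions cannot lie inside any of the four tractable families of that theorem.

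For step (2), the key observation is that the construction used in Lemma~\ref{lem: four-nary-H-not-in-T} has free parameters $\alpha,\beta,\gamma$, and we invoke \methpoly: the bad events ``$\mathbf{H}_1$ has rank $\le1$'', ``$\mathbf{H}_2$ has rank $\le1$'', and ``the auxiliary binary signature $\mathbf{S}$ obtained by summing out a pair of external $\{G,R\}$-edges of $\mathbf{H}$ lies in $\mathcal{P}_{i,-2}$'' are each cut out by polynomial equations in $(\alpha,\beta,\gamma)$. Thus it suffices to exhibit, for each of these three polynomials, one parameter choice making it nonzero; by Fact~\ref{fact-nonzeropoly} a common choice then works. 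The first two nonvanishing statements are precisely what the three-step argument of Lemma~\ref{lem: four-nary-H-not-in-T} establishes (using $\det(\mathbf{Q}\mathbf{T}\mathbf{Q}^{\tt T})\not\equiv0$, with $\beta^2$-coefficient $-(s+ip)^2\neq 0$ — note $s+ip\neq 0$ since otherwise $p=is$, hence $q=s+2ip=-s$, contradicting $s+q\neq0$). For the third, I would compute the binary contraction $\mathbf{S}=\sum_{x_2,x_4\in\{G,R\}}\mathbf{H}(x_1,x_2,x_3,x_4)$ as a function of $(\alpha,\beta,\gamma)$ — using the matrix identity $\mathbf{H}_1=\mathbf{P}(\mathbf{Q}\mathbf{T}\mathbf{Q}^{\tt T})\mathbf{P}^{\tt T}$ and the explicit $\mathbf{Q}\mathbf{A}\mathbf{Q}^{\tt T},\mathbf{Q}\mathbf{B}\mathbf{Q}^{\tt T},\mathbf{Q}\mathbf{C}\mathbf{Q}^{\tt T}$ displayed in the text — and then verify that the membership condition $\mathbf{S}\in\mathcal{P}_{i,-2}$, i.e. $\mathbf{S}=Z_1^{\otimes2}[a,b,0]$ with the implied quadratic relation among its three entries, is not an identity in $(\alpha,\beta,\gamma)$. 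If for some reason $\mathbf{S}$ turns out to be degenerate or identically in $\mathcal{P}_{i,-2}$ for all parameters, I would fall back to the slightly longer chain construction (as in Lemma~\ref{p=q=0-andr-not-0}'s proof and the $\mathbf{Q}$-gadget of Section~\ref{bottom1000}) inserting a unary $(\alpha,\beta,\gamma)$ in the middle of a length-three chain of copies of $\mathbf{F}$, which gives strictly more parameter freedom.

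The main obstacle I anticipate is step (2): showing that one can escape $\mathcal{P}_{i,-2}$ \emph{while keeping $\mathbf{H}\not\in\langle\mathcal{T}\rangle$}. The difficulty is that the special structure $s+2ip-q=0$ is exactly the degeneracy that was exploited in Lemma~\ref{lem: four-nary-H-not-in-T}'s second step to derive $p+iq=i(s+ip)\neq0$, so the extra regularity might conspire to keep all the auxiliary binary functions inside $\langle Z_1\mathcal{M}\rangle$. Resolving this requires a careful bookkeeping of which entries of $\mathbf{F}$ (namely $u,t,r$, unconstrained here) enter the auxiliary signatures: since $\mathbf{Q}\mathbf{A}\mathbf{Q}^{\tt T}$ contains $u$ and $t+ir$ as free entries, a judicious choice of $\alpha\neq0$ should break the $Z_1\mathcal{M}$-form. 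Once the polynomial-argument step goes through, the conclusion is immediate: we have realized (over Holant$^*(\mathbf{F})$, equivalently Holant$^*(\{\mathbf{F},=_{G,R}\})$ in the ambient setting of Theorem~\ref{thm:F-with-=_GR}) a function set not contained in any tractable class of Theorem~\ref{thm:dich-asym-Boolean}, hence Holant${^*}(\mathbf{F})$ is \#P-hard.
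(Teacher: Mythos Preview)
Your high-level strategy matches the paper's: use Lemma~\ref{lem: four-nary-H-not-in-T} for an indecomposable arity-$4$ function $\mathbf{H}$, use Lemma~\ref{lem: some-good-binary-except-2-cases} to escape all classes except $\langle Z_1\mathcal{M}\rangle$, and then find one more witness to escape that last class. But your proposed tactic for the last step diverges from the paper's, and your plan has a genuine gap.

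The paper does \emph{not} contract $\mathbf{H}$ to a binary $\mathbf{S}$. Instead it transforms $\widetilde{\mathbf{H}}=(Z_1^{-1})^{\otimes 4}\mathbf{H}$ and observes that the weight-$2$ value $\widetilde{\mathbf{H}}(0,1,0,1)=(s+q)^2(u\alpha+t\beta+r\gamma)$ comes out of the $(1,1)$ entry of $\mathbf{Q}\mathbf{T}\mathbf{Q}^{\tt T}$ directly (since row $(0,1)$ of $\mathbf{R}=(Z_1^{-1})^{\otimes2}\mathbf{P}$ is $(s+q,0)$). A nonzero weight-$2$ value, together with indecomposability, immediately forces $\widetilde{\mathbf{H}}\notin\langle\mathcal{M}\rangle$, i.e.\ $\mathbf{H}\notin\langle Z_1\mathcal{M}\rangle$. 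This is cleaner than contracting to a binary $\mathbf{S}$ and checking $\mathcal{P}_{i,-2}$ membership, and it pinpoints the precise obstruction: the polynomial $u\alpha+t\beta+r\gamma$ is nonzero iff at least one of $u,t,r$ is nonzero.

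The gap in your plan is the case $u=t=r=0$. There the entry $u\alpha+t\beta+r\gamma$ vanishes identically, and in fact (with $s+2ip-q=0$) one finds $\mathbf{Q}\mathbf{A}\mathbf{Q}^{\tt T}=0$ and $\mathbf{Q}\mathbf{T}\mathbf{Q}^{\tt T}$ is always off-diagonal, so every $\mathbf{H}$ you can build from this chain is a scalar multiple of a single fixed arity-$4$ function. Your suggested fallback of a ``longer chain'' does not break this: the paper instead performs a domain-separated holographic reduction by $\mathbf{M}=\mathrm{diag}(\sqrt2,Z_1^{-1})$, normalizes $\widetilde{\mathbf{F}}$ to a very explicit form with $b=\sqrt2(s+q)/8\neq0$, and then uses a \emph{triangular} gadget whose three edges are each the chain $\mathbf{L}_1,\langle\widetilde{\mathbf{F}},(0,1,0)\rangle,\mathbf{L}_1$. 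The resulting ternary $\mathbf{V}^{*\to\{G,R\}}$ is computed in closed form, and by connecting an appropriate unary one extracts a nondegenerate binary $\mathbf{W}$ with $(Z_1^{-1})^{\otimes2}\mathbf{W}\notin\langle\mathcal{M}\rangle$. You should be prepared for this case split; the triangular construction is not anticipated by your fallback.
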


\begin{proof}

Let $\mathbf{H}$ be the function  in Lemma \ref{lem: four-nary-H-not-in-T}
with the matrix form
$\mathbf{H}_1
= \mathbf{P} \mathbf{Q} \mathbf{T} \mathbf{Q}^{\tt T} \mathbf{P}^{\tt T}$,
with $\alpha,\beta,\gamma$ to be set as we wish.
By Lemma~\ref{lem: four-nary-H-not-in-T},
we already have some $(\alpha,\beta,\gamma)$, such that
$\mathbf{H} \not \in \langle \mathcal{T} \rangle$.
Under the condition $s + q \not =0$,
we  can further construct a non-degenerate
binary function not in each of
the tractable classes
$\langle H \mathcal{E} \rangle$, or $\langle Z \mathcal{E} \rangle$
for $Z = Z_1$ or $Z_2$,
or $\langle Z_2 \mathcal{M} \rangle$
by Lemma~\ref{lem: some-good-binary-except-2-cases}.
%
Note that when $s + q \not =0$, the only exception to the binary function
construction in Lemma~\ref{lem: some-good-binary-except-2-cases}
is $\mathcal{P}_{i,-2}$
 which corresponds to $\langle Z_1 \mathcal{M} \rangle$.
If we can prove for some $(\alpha,\beta,\gamma)$,
a nondegenerate $\mathbf{H} \not \in
\langle Z_1 \mathcal{M} \rangle$,
then  we will have proved \#P-hardness.

Consider $(Z_1^{-1})^{\otimes 4} \mathbf{H}$, where
 $Z_1^{-1}=\left ( \begin{array}{cc} 1 & -i \\1 & i \end{array} \right )$.
Let $\mathbf{R}=(Z_1^{-1})^{\otimes 2} \mathbf{P}= \left ( \begin{array}{cc} s-2ip-q & 4 \\ s+q & 0 \\ s+q & 0  \\ s+2ip-q & 0 \end{array} \right )
=\begin{pmatrix}
-4ip & 4\\
s+q & 0 \\
s+q & 0  \\
0 & 0
\end{pmatrix}$.
 Then a matrix form
of $(Z_1^{-1})^{\otimes 4} \mathbf{H}$
is  $\mathbf{R} \mathbf{Q} \mathbf{T} \mathbf{Q}^{\tt T} \mathbf{R}^{\tt T}$.
For $(Z_1^{-1})^{\otimes 4} \mathbf{H}$,
we only need to show
that for some $(\alpha,\beta,\gamma)$,
 $(Z_1^{-1})^{\otimes 4} \mathbf{H}
\not \in \langle  \mathcal{M} \rangle$.
This is equivalent to
$\mathbf{H} \not \in  \langle Z_1 \mathcal{M} \rangle$.
Denote $(Z_1^{-1})^{\otimes 4} \mathbf{H}$
by $\widetilde{\mathbf{H}}$.
The domain of $\mathbf{H}$ is $\{G,R\}^{4}$,
and we denote the domain of $\widetilde{\mathbf{H}}$
by $ \{0,1\}^{4}$.

We will show that if at least
one of $u,t,r$ is not zero, then $\widetilde{\mathbf{H}}
\not \in \langle \mathcal{M} \rangle$
for some $(\alpha,\beta,\gamma)$, by \methpoly.  Because $Z_1$ is an invertible matrix,
$\mathbf{H} \in \langle \mathcal{T} \rangle$ iff $\widetilde{\mathbf{H}} \in \langle \mathcal{T} \rangle$. By Lemma~\ref{lem: four-nary-H-not-in-T}, there is a $\widetilde{\mathbf{H}} \not \in \langle \mathcal{T} \rangle$. For this $\widetilde{\mathbf{H}}$,
there are two $2 \times 2$ submatrices of its
matrix forms $\widetilde{\mathbf{H}}_1$ and $\widetilde{\mathbf{H}}_2$
respectively, both are of full  rank.
Thus, the determinants of the two submatrices,
 as polynomials in $(\alpha, \beta, \gamma)$, are nonzero polynomials.
The requirements that their values are nonzero
are the first two conditions in this application of \methpoly.
The last condition is $\widetilde{\mathbf{H}} (0,1,0,1) \not =0$.

Which pairs of submatrices have rank 2 depend on
the specific values, $p, q, r$ etc, of $\mathbf{F}$.
We do not analyze the various cases explicitly.
We know there is always one pair that works. For any case which contains a particular set of
 values for which a particular pair works,
 we use this pair of determinants as the  first two conditions
in this application of the polynomial argument.

The value of $\widetilde{\mathbf{H}}$
 at the input $(0,1,0,1)$ of Hamming weight 2  is
$\left ( \begin{array}{cc} s+q & 0  \end{array} \right ) \mathbf{Q} \mathbf{T} \mathbf{Q}^{\tt T} \left ( \begin{array}{c} s+q \\  0  \end{array} \right )=(s+q)^2(u  \alpha + t \beta + r \gamma)$, which is not the zero polynomial.

By \methpoly, there is an $\widetilde{\mathbf{H}}$
that  satisfies all three polynomial conditions. By Fact \ref{fact-half-sym-cut1} and \ref{fact-half-sym-cut2}, this $\widetilde{\mathbf{H}}$ is indecomposable, because by  \methps every decomposition leads to one of the two special decompositions, that is, rank at most one for $\widetilde{\mathbf{H}}_1$ and $\widetilde{\mathbf{H}}_2$. $\widetilde{\mathbf{H}}$ is indecomposable  implies that if $\widetilde{\mathbf{H}} \in \langle \mathcal{M} \rangle$, then $\widetilde{\mathbf{H}} \in \mathcal{M}$. The third condition says $\widetilde{\mathbf{H}} \not \in \mathcal{M}$.

Now we focus on the case $u=t=r=0$. We apply a  \methsepahr  (see
Fact. \ref{fact-domain-sepa-HR})
by $\mathbf{M}=
\begin{pmatrix} \sqrt{2} & \mathbf{0} \\
               \mathbf{0} & Z_1^{-1} \end{pmatrix}
=  \left ( \begin{array}{ccc}  \sqrt{2} & 0 & 0  \\  0 & 1 & -i  \\  0 & 1 & i  \end{array} \right )$
to the bipartite form of the Holant problem
Holant$^*(=_2 \mid =_{G,R}, \mathbf{F})$. We remark that this
holographic reduction is
 only for the convenience in calculating the signature of
 the gadget to be constructed.

$\mathbf{M}^{\otimes 3} \mathbf{F}$ is given by

\begin{center}
\begin{tabular}{*{11}{c}c}
{ }     & { }   & { }   &$0$& { }     & { }   & { }  \\
{ }     & { }   &$0$& { } &$0$& { }     & { }  \\
{ }     &$-4\sqrt{2}ip$& { } &  $\sqrt{2}(s+q)$& { }   &$0$& { }  \\
 $8$& { } &$0$& { } & $0$& { } &   $0$  \\
\end{tabular}
\end{center}
This calculation can be done per each row in the table for
$\mathbf{F}$.
E.g., the third row in the table is $\mathbf{F}^{1 = B} = [s, p, q]$.
As a column vector it is $\begin{pmatrix} s & p & p & q \end{pmatrix}^{\tt T}$
and is transformed
to $\sqrt{2} (Z_1^{-1})^{\otimes 2}
  [s, p, q]
= \sqrt{2} [s -2ip - q, s+q, s +2ip - q] =
[-4\sqrt{2}ip, \sqrt{2}(s+q), 0]$.

After factoring out the constant 8,
we will write $\mathbf{M}^{\otimes 3} \mathbf{F}$
 as $\widetilde{\mathbf{F}}$:
\begin{center}
\begin{tabular}{*{11}{c}c}
{ }     & { }   & { }   &$0$& { }     & { }   & { }  \\
{ }     & { }   &$0$& { } &$0$& { }     & { }  \\
{ }     &$a$& { } &  $b$& { }   &$0$& { }  \\
 $1$& { } &$0$& { } & $0$& { } &   $0$  \\
\end{tabular}
\end{center}
where $a = -ip/\sqrt{2}$ and $b = \sqrt{2}(s+q)/8$.
Note crucially that $b \neq 0$.
Up to a constant factor,
the problem
Holant$^*(=_2 \mid =_{G,R}, \mathbf{F})$
becomes Holant$^*(\mathbf{L}_1 \mid \mathbf{L}_2,  \widetilde{\mathbf{F}})$,
where
$\mathbf{L}_1=\left ( \begin{array}{ccc}  1 & 0 & 0  \\  0 & 0 & 1 \\  0 & 1 & 0 \end{array} \right )$
and $\mathbf{L}_2=\left ( \begin{array}{ccc}  0 & 0 & 0  \\  0 & 0 & 1 \\  0 & 1 & 0 \end{array} \right )$.

Now we construct the following ternary triangular gadget
 with three
external dangling edges. We denote its signature  as $\mathbf{V}$.
Each of the three vertices incident to the dangling edges
 is assigned $\widetilde{\mathbf{F}}$, and each
of the three  edges of the triangle is
composed of a chain linked by
$\mathbf{L}_1$, $\langle \widetilde{\mathbf{F}},
 (0,1,0) \rangle$ and $\mathbf{L}_1$.
(A simpler triangular gadget does not work here.)
Calculation shows that the function $\mathbf{V}$ of this gadget restricted to $\{G,R\}$ is $\mathbf{V}^{*\rightarrow \{G,R\}}
=[3b^4+16a^3b^3, 8a^2b^4, 4ab^5, 2b^6]=2 b^3 \left ( \begin{array}{c}  2a  \\ b \end{array} \right )^{\otimes 3}+ 3 b^4 \left ( \begin{array}{c}  1  \\ 0 \end{array} \right )^{\otimes 3}$.
We remark that this restriction of $\mathbf{V}$
 in its domain set is only for the purpose of calculation
later on. While $\mathbf{V}$ is realizable in the right hand side
of Holant$^*(\mathbf{L}_1 \mid \mathbf{L}_2,  \widetilde{\mathbf{F}})$,
we do not claim its restriction $\mathbf{V}^{*\rightarrow \{G,R\}}$
is a realizable signature.

The realizability of $\mathbf{V}$ in
Holant$^*(\mathbf{L}_1 \mid \mathbf{L}_2,  \widetilde{\mathbf{F}})$
  means that in problem
Holant$^*(=_2  \mid =_{G,R}, \mathbf{F})$,
 we can realize ${(\mathbf{M}^{-1})}^{\otimes 3} \mathbf{V}$
in the right hand side. As in Lemma \ref{lem: some-good-binary-except-2-cases}, we want to get a nondegenerate symmetric
 binary function $\mathbf{W} \not \in \mathcal{P}_{i,-2}$.
For this purpose, we only need to show that
  $(Z_1^{-1})^{\otimes 2} \mathbf{W} \not \in \mathcal{M}$ and
it is nondegenerate.

The logical process is the following:
 start from ${(\mathbf{M}^{-1})}^{\otimes 3} \mathbf{V}$
in the right hand side of Holant$^*(=_2  \mid =_{G,R}, \mathbf{F})$, we
connect it with an arbitrary unary function $\mathbf{u}$,
and then restrict the input to $\{G,R\}$,
finally perform a holographic transformation by $Z_1^{-1}$.
At this stage we wish to obtain a nondegenerate signature not in
$\langle \mathcal{M} \rangle$ (being nondegenerate
and of arity 2 this is the same as
being not in $\mathcal{M}$),
by setting the unary function appropriately.
The restriction to $\{G,R\}$ is equivalent to
a transformation by $\mathbf{N}_1 = \begin{pmatrix}
0 & 1 & 0\\
0 & 0 & 1
\end{pmatrix}$.
Let $\mathbf{N}= \begin{pmatrix}
 0 & 0 & 0  \\  0 & 1 & 0 \\  0 & 0 & 1
\end{pmatrix}$, then note that
$\mathbf{N}_1 = \mathbf{N}_1 \mathbf{N}$,
and $\mathbf{M} \mathbf{N} = \mathbf{N} \mathbf{M}$.
\begin{eqnarray*}
(Z_1^{-1})^{\otimes 2} \mathbf{N}_1^{\otimes 2}
\langle {(\mathbf{M}^{-1})}^{\otimes 3} \mathbf{V}, \mathbf{u} \rangle
&=&
(Z_1^{-1})^{\otimes 2} \mathbf{N}_1^{\otimes 2}
\mathbf{N}^{\otimes 2}
\langle {(\mathbf{M}^{-1})}^{\otimes 3} \mathbf{V}, \mathbf{u} \rangle \\
&=&
(Z_1^{-1})^{\otimes 2} \mathbf{N}_1^{\otimes 2}
\mathbf{N}^{\otimes 2}
{(\mathbf{M}^{-1})}^{\otimes 2}
\langle \mathbf{V}, \mathbf{u'} \rangle \\
&=&
(Z_1^{-1})^{\otimes 2} \mathbf{N}_1^{\otimes 2}
{(\mathbf{M}^{-1})}^{\otimes 2}
\mathbf{N}^{\otimes 2}
\langle \mathbf{V}, \mathbf{u'} \rangle \\
&=&
(Z_1^{-1})^{\otimes 2} \mathbf{N}_1^{\otimes 2}
{(\mathbf{M}^{-1})}^{\otimes 2}
\langle \mathbf{N}^{\otimes 3} \mathbf{V}, \mathbf{u''} \rangle \\
&=&
(Z_1^{-1})^{\otimes 2}
\begin{pmatrix} \mathbf{0} & Z_1 \end{pmatrix}^{\otimes 2}
\langle \mathbf{N}^{\otimes 3} \mathbf{V}, \mathbf{u''} \rangle \\
&=&
\begin{pmatrix} 0 & 1 & 0 \\ 0 & 0 & 1 \end{pmatrix}^{\otimes 2}
\langle \mathbf{N}^{\otimes 3} \mathbf{V}, \mathbf{u''} \rangle \\
&=&
\langle \mathbf{N}_1^{\otimes 3} \mathbf{V}, \mathbf{u'''} \rangle \\
&=&
\langle \mathbf{V}^{*\rightarrow \{G,R\}}, \mathbf{u'''} \rangle
\end{eqnarray*}
where $\mathbf{u}, \mathbf{u'}, \mathbf{u''}, \mathbf{u'''}$ are
unary signatures, and $\mathbf{u'''}$ is on domain $\{G,R\}$,
and $\mathbf{u'''}$ can be arbitrary.


If $a=0$, take $\mathbf{u'''} =(1,1)$,
we get $\langle \mathbf{V}^{*\rightarrow \{G,R\}}, \mathbf{u'''} \rangle
= 2b^4 \begin{pmatrix} 0 \\ b \end{pmatrix}^{\otimes 2} + 3b^4
\begin{pmatrix} 1 \\ 0 \end{pmatrix}^{\otimes 2} \not \in \langle \mathcal{M} \rangle$.
If $a \neq 0$, take $\mathbf{u'''} =(1,0)$,
we get $\langle \mathbf{V}^{*\rightarrow \{G,R\}}, \mathbf{u'''} \rangle
= 4ab^3 \begin{pmatrix} 2a \\ b \end{pmatrix}^{\otimes 2}
+ 3b^4
\begin{pmatrix} 1 \\ 0 \end{pmatrix}^{\otimes 2}
 \not \in \langle \mathcal{M} \rangle$.
In either case we realized a nondegenerate symmetric
binary signature
$\mathbf{W} =
\mathbf{N}_1^{\otimes 2}
\langle {(\mathbf{M}^{-1})}^{\otimes 3} \mathbf{V}, \mathbf{u} \rangle$
in Holant$^*(\{=_{G,R}, \mathbf{F}\})$,
such that
 $(Z_1^{-1})^{\otimes 2}\mathbf{W} \not \in
\langle  \mathcal{M}  \rangle$,
thus $\mathbf{W} \not \in
\langle  Z_1  \mathcal{M}  \rangle$.


%
%
%
%

\end{proof}

In the next lemma we finish off Case 2
from Lemma~\ref{lem: some-good-binary-except-2-cases}.
We note that
$\mathcal{P}$ is composed of symmetric functions in $\langle Z_1 \mathcal{E} \rangle=\langle Z_2 \mathcal{E} \rangle$.
Note that for $\tau = \begin{pmatrix} 0 & 1 \\ 1 & 0 \end{pmatrix}$,
   $Z_1 = Z_2 \tau$ and $\tau \mathcal{E}
= \mathcal{E}$.

\begin{lemma}\label{lm:s+q=0}
When $\mathbf{F}^{*\rightarrow\{G,R\}}
= [1,i,-1,-i]$
 and $s+q = 0$, $s+2ip-q \neq 0$, the problem
 Holant${^*}(\mathbf{F})$ is \#P-hard.
\end{lemma}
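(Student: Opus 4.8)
\textbf{Proof proposal for Lemma~\ref{lm:s+q=0}.}
The plan is to follow the same high-level strategy as in Lemma~\ref{lm:s+q=0} with the roles of the two tractable families interchanged: under the present hypothesis $s+q=0$, $s+2ip-q\neq 0$, Lemma~\ref{lem: some-good-binary-except-2-cases} already supplies, for every $\mathcal{P}_{a,b}$ (including $\mathcal{P}_{i,-2}$) and hence for $\langle H\mathcal{E}\rangle$ and for $\langle Z\mathcal{M}\rangle$ ($Z\in\{Z_1,Z_2\}$), a non-degenerate symmetric binary function outside it, by the simple construction $\langle \mathbf{F},(1,x,0)\rangle^{*\rightarrow\{G,R\}}=[s+x,p+xi,q-x]$. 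The only gap is $\mathcal{P}$, equivalently $\langle Z_1\mathcal{E}\rangle=\langle Z_2\mathcal{E}\rangle$. So it suffices to construct one further non-degenerate symmetric binary function $\mathbf{W}$ with $\mathbf{W}\notin\langle Z_1\mathcal{E}\rangle$, i.e.\ with $(Z_1^{-1})^{\otimes 2}\mathbf{W}\notin\mathcal{E}$ (being non-degenerate of arity $2$, not in $\langle\mathcal{E}\rangle$ is the same as not in $\mathcal{E}$); together with the binary functions from Lemma~\ref{lem: some-good-binary-except-2-cases} and the ternary $\mathbf{H}\notin\langle\mathcal{T}\rangle$ from Lemma~\ref{lem: four-nary-H-not-in-T}, the Boolean-domain dichotomy Theorem~\ref{thm:dich-asym-Boolean} on $\{G,R\}$ then gives \#P-hardness.

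First I would revisit the arity-$4$ function $\mathbf{H}$ of Lemma~\ref{lem: four-nary-H-not-in-T} and push it through $(Z_1^{-1})^{\otimes 4}$, exactly as in the proof of Lemma~\ref{lm:s+q=0} but now with the matrix $\mathbf{R}=(Z_1^{-1})^{\otimes 2}\mathbf{P}$ recomputed under $s+q=0$ (so the middle two rows of $\mathbf{R}$ vanish rather than the last). Using the polynomial argument, the two full-rank $2\times 2$ sub-determinants of $\widetilde{\mathbf{H}}_1,\widetilde{\mathbf{H}}_2$ guarantee $\widetilde{\mathbf{H}}\notin\langle\mathcal{T}\rangle$ for generic $(\alpha,\beta,\gamma)$, hence indecomposable by \methps (Facts~\ref{fact-half-sym-cut1} and \ref{fact-half-sym-cut2}); I then add a third polynomial condition forcing $\widetilde{\mathbf{H}}$ to violate the defining support/parity condition of $\mathcal{E}$ (e.g.\ a value at a non-complementary pair of weights being nonzero), which will be a nonzero polynomial in $(\alpha,\beta,\gamma)$ unless at least one of $u,t,r$ is zero — the generic case is thus settled by \methpoly, giving $\widetilde{\mathbf{H}}\notin\langle\mathcal{E}\rangle$, i.e.\ $\mathbf{H}\notin\langle Z_1\mathcal{E}\rangle$. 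As in Lemma~\ref{lm:s+q=0}, one then takes a suitable binary restriction of $\mathbf{H}$ (connecting a unary and restricting to $\{G,R\}$) to extract a non-degenerate $\mathbf{W}\notin\langle Z_1\mathcal{E}\rangle$.

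The remaining, and I expect the hardest, case is $u=t=r=0$, where the simple constructions collapse. Here I would again apply a \methsepahr (Fact~\ref{fact-domain-sepa-HR}) under $\mathbf{M}=\mathrm{diag}(\sqrt{2},Z_1^{-1})$ to normalize $\mathbf{M}^{\otimes3}\mathbf{F}$ to a function of the separated shape with bottom row $[1,0,0,0]$ and third row proportional to $[a,b,0]$, where now $b=\sqrt2\,(s-2ip-q)/8\neq0$ precisely because $s+q=0$ and $s+2ip-q\neq0$ — note the crucial sign flip relative to Lemma~\ref{lm:s+q=0}. Then I would build the same triangular gadget $\mathbf{V}$ (three copies of $\widetilde{\mathbf{F}}$ with each triangle edge a chain $\mathbf{L}_1,\langle\widetilde{\mathbf{F}},(0,1,0)\rangle,\mathbf{L}_1$), compute $\mathbf{V}^{*\rightarrow\{G,R\}}=2b^3(2a,b)^{\otimes3}+3b^4(1,0)^{\otimes3}$, transport it back via $(\mathbf{M}^{-1})^{\otimes3}$ into Holant$^*(\{=_{G,R},\mathbf{F}\})$, contract with a free unary $\mathbf{u'''}$ on $\{G,R\}$ (taking $\mathbf{u'''}=(1,1)$ if $a=0$ and $(1,0)$ if $a\neq0$), and check that the resulting non-degenerate binary $\mathbf{W}$ has $(Z_1^{-1})^{\otimes2}\mathbf{W}\notin\langle\mathcal{E}\rangle$; since $\mathcal{E}$ on two Boolean variables consists of functions supported on a complementary pair, a sum $c_1(2a,b)^{\otimes2}+c_2(1,0)^{\otimes2}$ with $b\neq0$ and $c_1,c_2\neq0$ is easily seen not to have this form. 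The main obstacle is verifying, by honest computation, that the triangular-gadget signature $\mathbf{V}$ and its pullback indeed avoid $\langle Z_1\mathcal{E}\rangle$ here — the algebra differs from Lemma~\ref{lm:s+q=0} because the nonvanishing parameter is $s+2ip-q$ rather than $s+q$, so the commuting-matrix bookkeeping with $\mathbf{N},\mathbf{N}_1,\mathbf{M}$ must be redone, and one must confirm that no further exceptional sub-case ($a=0$ combined with degeneracies of $p,q$) slips through, in which event one falls back to a separated-domain argument via Lemma~\ref{lemma:separate}.
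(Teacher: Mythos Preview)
Your high-level strategy is right and matches the paper: Lemma~\ref{lem: some-good-binary-except-2-cases} handles every tractable class except $\mathcal{P}=\langle Z_1\mathcal{E}\rangle$, and for that one you analyze $\widetilde{\mathbf{H}}=(Z_1^{-1})^{\otimes 4}\mathbf{H}$, where under $s+q=0$ the middle two rows of $\mathbf{R}$ vanish. However, two concrete steps in your plan do not go through.

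\textbf{The case split after the polynomial argument is wrong.} Because the middle rows of $\mathbf{R}$ vanish, $\widetilde{\mathbf{H}}$ has at most four nonzero entries, sitting in a $2\times 2$ block $\mathbf{K}=\mathbf{L}\,\mathbf{Q}\mathbf{T}\mathbf{Q}^{\tt T}\mathbf{L}^{\tt T}$ with $\mathbf{L}=\begin{pmatrix}2s-2ip&4\\2s+2ip&0\end{pmatrix}$. One has $\widetilde{\mathbf{H}}\notin\mathcal{E}$ iff some row or column of $\mathbf{K}$ has two nonzero entries. The obstruction to arranging this is \emph{not} ``at least one of $u,t,r$ is zero''; it is linear dependence of the three first-row vectors $(u,t+ir)$, $(t,s+ip)$, $(r,-i(s+ip))$ of $\mathbf{Q}\mathbf{A}\mathbf{Q}^{\tt T},\mathbf{Q}\mathbf{B}\mathbf{Q}^{\tt T},\mathbf{Q}\mathbf{C}\mathbf{Q}^{\tt T}$. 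That dependence forces $r=-it$ and $u(s+ip)=2t^2$. The sub-case $t=0$ then gives $u=t=r=0$ and is easy (one can still make a row of $\mathbf{K}$ have two nonzero entries). The hard sub-case is $t\neq 0$ with $r=-it$, $u(s+ip)=2t^2$; your proposal never reaches it.

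\textbf{The shape of $\mathbf{M}^{\otimes 3}\mathbf{F}$ is miscomputed.} Under $\mathbf{M}=\operatorname{diag}(\sqrt{2},Z_1^{-1})$ the third row $[s,p,q]$ becomes $\sqrt{2}\,[s-2ip-q,\ s+q,\ s+2ip-q]$. With $s+q=0$ this is $\sqrt{2}\,[2(s-ip),\,0,\,2(s+ip)]$, i.e.\ shape $[c,0,b^2]$ with $b^2\propto s+ip\neq 0$; it is the \emph{middle} entry that vanishes, not the last. So the gadget formula $\mathbf{V}^{*\rightarrow\{G,R\}}=2b^3(2a,b)^{\otimes 3}+3b^4(1,0)^{\otimes 3}$ you import from the $s+q\neq 0$ lemma does not apply here. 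In the paper, for the genuinely hard sub-case $t\neq 0$, after the reduction one obtains a signature $\mathbf{G}$ with top $a^2$, second row $[0,ab]$, third row $[c,0,b^2]$, bottom $[1,0,0,0]$; the triangular gadget then uses edge chains $\mathbf{L}_1,\langle\mathbf{G},(0,0,1)\rangle,\mathbf{L}_1$ (not $(0,1,0)$), producing on $\{G,R\}$ a ternary signature proportional to $[-2,d,-d^2,d^3]$ with $d=a^3/b\neq 0$, whose \#P-hardness is established directly by a further holographic transformation---not by extracting a binary outside $\mathcal{P}$.
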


\begin{proof}
We still use the function
$\mathbf{H}$
whose matrix form is
$\mathbf{P} \mathbf{Q} \mathbf{T} \mathbf{Q}^{\tt T} \mathbf{P}^{\tt T}$
from Lemma \ref{lem: four-nary-H-not-in-T}.
We employ the same general approach as in Lemma~\ref{lm:s+q=not0}.
If we can prove for some $(\alpha,\beta,\gamma)$,
a non-degenerate $\mathbf{H} \not \in
\langle Z_1 \mathcal{E} \rangle$, then
we will have proved \#P-hardness.
The existence of $(\alpha,\beta,\gamma)$
such that $\mathbf{H} \not \in \langle \mathcal{T} \rangle$
was already proved by Lemma~\ref{lem: four-nary-H-not-in-T}.
 Note that by Lemma~\ref{lem: some-good-binary-except-2-cases}, we can construct a nondegenerate binary function not in
the other tractable classes $\langle H \mathcal{E} \rangle$, or $\langle Z_1 \mathcal{M} \rangle$, or $\langle Z_2 \mathcal{M} \rangle$.

Consider $\widetilde{\mathbf{H}}
= (Z_1^{-1})^{\otimes 4} \mathbf{H}$, where $Z_1^{-1}=\left ( \begin{array}{cc} 1 & -i \\1 & i \end{array} \right )$.
Let $\mathbf{R}=(Z_1^{-1})^{\otimes 2} \mathbf{P}= \left ( \begin{array}{cc} 2s-2ip & 4 \\ 0 & 0 \\ 0 & 0  \\ 2s+2ip & 0 \end{array} \right )$.
Then a matrix form
of $\widetilde{\mathbf{H}}$ is $\mathbf{R} \mathbf{Q} \mathbf{T} \mathbf{Q}^{\tt T} \mathbf{R}^{\tt T}$. We only need to show
that  for some $(\alpha,\beta,\gamma)$, a non-degenerate
$\widetilde{\mathbf{H}} \not \in \langle \mathcal{E} \rangle$.
This is equivalent to
$\mathbf{H} \not \in \langle Z_1 \mathcal{E} \rangle$.
We will in fact show that  $\widetilde{\mathbf{H}} \not \in
\langle \mathcal{T} \rangle  \cup
\mathcal{E}$ for some $(\alpha,\beta,\gamma)$;
being not in $\langle \mathcal{T} \rangle$ means that $\widetilde{\mathbf{H}}$
is indecomposable, and consequently
$\widetilde{\mathbf{H}} \not \in \langle \mathcal{E} \rangle$
is equivalent to
$\widetilde{\mathbf{H}} \not \in \mathcal{E}$. Utilizing Lemma \ref{lem: four-nary-H-not-in-T} as how it is used in \ref{lm:s+q=not0}, we only need to prove $\widetilde{\mathbf{H}} \not \in \mathcal{E}$.

Obviously, $\widetilde{\mathbf{H}}$ only has
at most 4 nonzero entries
by the form of $\mathbf{R}$.
They are indexed by $\{GG, RR\} \times \{GG, RR\}$, and we
list them as
\[\mathbf{K}=\left ( \begin{array}{cc} a & b \\b & c \end{array} \right )= \mathbf{L}  \mathbf{Q} \mathbf{T} \mathbf{Q}^{\tt T}  \mathbf{L}^{\tt T}  , \]
where $\mathbf{L}= \left ( \begin{array}{cc} 2s-2ip & 4 \\ 2s+2ip & 0 \end{array} \right )$.
The matrix form of $\widetilde{\mathbf{H}}$ with row index $x_1x_2$ and column index $x_3x_4$ is $\left ( \begin{array}{cccc} a & 0 & 0 & b \\0 & 0 & 0  & 0 \\0 & 0 & 0  & 0 \\b & 0 & 0  & c \end{array} \right )$.

Obviously, $\widetilde{\mathbf{H}} \not \in \mathcal{E}$ iff $\mathbf{K} \not \in \mathcal{E}$ iff one column or one row of $\mathbf{K}$ has two nonzero entries.

Recall that, with $s+q=0$,
\[\mathbf{Q} \mathbf{A} \mathbf{Q}^{\tt T}= \left (
\begin{array}{cc}  u & t+ir\\t+ir & 2(s+ip) \end{array} \right ),~~
\mathbf{Q} \mathbf{B} \mathbf{Q}^{\tt T}= \left (
\begin{array}{cc}  t & s+ip\\s+ip & 0 \end{array} \right ),~~
\mathbf{Q} \mathbf{C} \mathbf{Q}^{\tt T}= \left (
\begin{array}{cc}  r & -i(s+ip)\\ -i(s+ip) & 0 \end{array} \right ).\]
\[\mathbf{Q} \mathbf{T} \mathbf{Q}^{\tt T}= \alpha\mathbf{Q} \mathbf{A} \mathbf{Q}^{\tt T}+
\beta \mathbf{Q} \mathbf{B} \mathbf{Q}^{\tt T}+ \gamma
\mathbf{Q} \mathbf{C} \mathbf{Q}^{\tt T}.\]

Note that $s+2ip -q = 2(s+ip) \not =0$.
$\det \begin{pmatrix} 2s-2ip & 4 \\ 2s + 2ip & 0 \end{pmatrix} \not = 0$.

Assume the first rows $(u,t+ir)$, $(t, s+ip)$ and $(r,-i(s+ip))$ are linearly
 independent.
Then the first row of  $\mathbf{Q} \mathbf{T} \mathbf{Q}^{\tt T}$ can be any vector, so that we can pick $(\alpha, \beta, \gamma)$ such that this
row vector when multiplied  to the right by $\mathbf{L}^{\tt T}$
has two nonzero entries.
Consider the second row $\begin{pmatrix} 2s + 2ip & 0 \end{pmatrix}$
 of $\mathbf{L}$.
It follows that the second row of $\mathbf{K}$ has two  nonzero entries.

From now on, we have  $(u,t+ir)$, $(t, s+ip)$ and $(r,-i(s+ip))$ are linearly dependent.
Hence, $r=-it$ and $u(s+ip)=2t^2$.

If $t=0$, we have $r=0$, and $u=0$ since $s+ip \not = 0$.
Then for any $x,y$, $\left (
\begin{array}{cc}  0 & x\\x & y \end{array} \right )$ can be realized by $\mathbf{Q} \mathbf{T} \mathbf{Q}^{\tt T}$.  The first row of $\mathbf{K}$  is   $ (4x, 2(s-ip)x+4y) \mathbf{L}^{\tt T} $, which can be set to
any vector by setting $x$ and $y$, and in particular
we set $x$ and $y$ so that the  first row of $\mathbf{K}$
has two nonzero entries.

From now on, we have $t \neq 0$.
$\mathbf{F}$ becomes
\begin{center}
\begin{tabular}{*{11}{c}c}
{ }     & { }   & { }   &$2 \sqrt{2}u$& { }     & { }   & { }  \\
{ }     & { }   &$0$& { } &$4t$& { }     & { }  \\
{ }     &$2 \sqrt{2}(s-ip)$& { } &  $0$& { }   &$2 \sqrt{2}(s+ip)$& { }  \\
 $8$& { } &$0$& { } & $0$& { } &   $0$  \\
\end{tabular}
\end{center}
 after a \methsepahr $\mathbf{M}=\left ( \begin{array}{ccc}  \sqrt{2} & 0 & 0  \\  0 & 1 & -i  \\  0 & 1 & i  \end{array} \right )$.
Normalizing by the constant factor $1/8$, we have
$\mathbf{G}$, which has the form
\begin{center}
\begin{tabular}{*{11}{c}c}
{ }     & { }   & { }   &$a^2$& { }     & { }   & { }  \\
{ }     & { }   &$0$& { } &$ab$& { }     & { }  \\
{ }     &$c$& { } &  $0$& { }   &$b^2$& { }  \\
 $1$& { } &$0$& { } & $0$& { } &   $0$  \\
\end{tabular}
\end{center}
 where $a = (\sqrt{2} u)^{1/2}/2$, $b = t/(2a)$, and $c = \sqrt{2} (s-ip) /4$.
We have obtained $t, u
\not =0$,  so $a, b \not =0$ and are well-defined.
The fact that the three terms $\sqrt{2}u/4$, $t/2$ and $\sqrt{2}(s+ip)/4$
form a geometric progression $a^2$, $ab$ and $b^2$ follows from
the conditions we have proved.
We can verify that $b^2 =  \sqrt{2} (s+ip) /4$, and $c = - b/a$.
The problem becomes
Holant$^*(\mathbf{L}_1 \mid \mathbf{L}_2,  \mathbf{G})$,
where $\mathbf{L}_1$ and $\mathbf{L}_2$ are as before.

%

We use the triangular gadget again.
If we use the chain
$\mathbf{L}_1, \langle\mathbf{G},(0,0,1) \rangle, \mathbf{L}_1$ at each edge,
when restricted to $\{G, R\}$,
we get  the edge signature
$[- 2 b^9/a^3,   b^8,  -a^3 b^7,  a^6 b^6 ]$.
Up to a nonzero factor, this is $[ -2, d, -d^2, d^3]$,
 where $d = a^3/b \not = 0$.
This is nondegenerate.  We can realize
Holant$^*([0,1,0]|[-2, d, -d^2, d^3])$.
%
Under a holographic transformation,
Holant$^*([0,1,0]|[-2, d, -d^2, d^3])$
is equivalent to
\[ {\rm Holant}^*\left ([1,0,1] \left | \left ( \begin{array}{cc} 1 & 1 \\ i & -i
\end{array}  \right )^{\otimes 3}  [-2, d, -d^2, d^3] \right). \right . \]
Write $[-2, d, -d^2, d^3]
= - \begin{bmatrix} 1 \\ 0 \end{bmatrix}^{\otimes 3}
- \begin{bmatrix} 1 \\ -d \end{bmatrix}^{\otimes 3}$, then
\[\begin{bmatrix} 1 & 1 \\ i & -i \end{bmatrix}^{\otimes 3}
[-2, d, -d^2, d^3]
= - \begin{bmatrix} 1 \\ 1 \end{bmatrix}^{\otimes 3}
- \begin{bmatrix} 1-d \\ i + di \end{bmatrix}^{\otimes 3}.\]
This is not in any tractable classes of  Theorem \ref{thm:dich-sym-Boolean}.

\end{proof}

We summarize the previous few lemmas in this subsection so far:
\begin{corollary}\label{cor:summary-pnotis-or-qnotip}
When $\mathbf{F}^{*\rightarrow\{G,R\}}
= [1,i,-1,-i]$
 and $p \not = is$ or $q \neq ip$, the problem Holant$^*(\mathbf{F})$
 is \#P-hard.
\end{corollary}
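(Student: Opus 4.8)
The statement to be proved, Corollary~\ref{cor:summary-pnotis-or-qnotip}, is simply an aggregation of the three preceding lemmas, so the plan is essentially bookkeeping: split the negation $[p \neq is \text{ or } q \neq ip]$ — equivalently $[s+q \neq 0 \text{ or } s + 2ip - q \neq 0]$ — into the three mutually exhaustive sub-cases according to the joint values of $s+q$ and $s+2ip-q$, and invoke the appropriate lemma in each.

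First I would recall that, by Lemma~\ref{lem: some-good-binary-except-2-cases}, for \emph{every} tractable Boolean class $\mathcal{P}_{a,b}$ and $\mathcal{P}$ we can realize (by the simple construction $\langle \mathbf{F}, (1,x,0)\rangle^{*\rightarrow\{G,R\}}$, restricted and with $=_{G,R}$ on the dangling edges) a non-degenerate symmetric binary signature outside that class, \emph{except} possibly for $\mathcal{P}_{i,-2}$ when $s+q \neq 0$ and $s+2ip-q = 0$, and except possibly for $\mathcal{P}$ when $s+q = 0$ and $s+2ip-q \neq 0$. Combined with the observation (Section~\ref{section:calculus}, and as used in the domain-size-2 reductions) that a set of symmetric binary functions together with the ability to escape each Boolean tractable class yields \#P-hardness via Theorem~\ref{thm:dich-sym-Boolean} — more precisely, here we also need to escape $\langle \mathcal{T}\rangle$, which is where $\mathbf{H}$ from Lemma~\ref{lem: four-nary-H-not-in-T} enters — this handles the generic sub-case $s+q \neq 0$ and $s+2ip-q \neq 0$ outright: in that sub-case none of the exceptions of Lemma~\ref{lem: some-good-binary-except-2-cases} apply, and $p \neq is$ or $q \neq ip$ holds so Lemma~\ref{lem: four-nary-H-not-in-T} gives $\mathbf{H} \notin \langle \mathcal{T}\rangle$ for a suitable choice of $(\alpha,\beta,\gamma)$; hence $\mathrm{Holant}^*(\mathbf{F})$ is \#P-hard. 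The two remaining sub-cases, $s+q \neq 0, s+2ip-q = 0$ and $s+q = 0, s+2ip-q \neq 0$, are exactly Lemma~\ref{lm:s+q=not0} and Lemma~\ref{lm:s+q=0} respectively, each of which directly concludes \#P-hardness. Since the case $[s+q = 0 \text{ and } s+2ip-q = 0]$ is precisely the complement $[p = is \text{ and } q = ip]$, these three sub-cases cover the entire hypothesis of the corollary.

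The only subtlety in writing this up is to state cleanly why escaping all tractable classes (the four Boolean classes of Theorem~\ref{thm:dich-sym-Boolean} — equivalently $\langle H\mathcal{E}\rangle$, $\langle Z_1\mathcal{E}\rangle$, $\langle Z_i \mathcal{M}\rangle$, $i=1,2$ — together with $\langle \mathcal{T}\rangle$) suffices, namely that $\mathrm{Holant}^*(\{\mathbf{F},=_{G,R}\})$ simulates $\mathrm{Holant}^*$ over the Boolean subdomain $\{G,R\}$ via the gadget inserting $=_{G,R}$ into edges, so any Boolean-hard family of signatures realizable over $\{G,R\}$ from $\mathbf{F}$ forces \#P-hardness; and that the non-symmetric/higher-arity escape (for $\langle \mathcal{T}\rangle$) is covered by $\mathbf{H}$ together with Facts~\ref{fact-half-sym-cut1}–\ref{fact-half-sym-cut2} as in Lemma~\ref{lem: four-nary-H-not-in-T}. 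Since all of this machinery is already established in the three lemmas themselves, the proof of the corollary is a one-paragraph case analysis. There is no real obstacle here beyond correctly partitioning into the three sub-cases and checking that the exceptional classes of Lemma~\ref{lem: some-good-binary-except-2-cases} in each sub-case are precisely the ones handled by Lemmas~\ref{lm:s+q=not0} and~\ref{lm:s+q=0}; the genuinely hard work (the gadget constructions, the triangular gadget of Lemma~\ref{lm:s+q=not0}, the domain-separated holographic reductions, and the perfect-matching-type reduction) is entirely inside those lemmas.

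\begin{proof}
We have $p \neq is$ or $q \neq ip$, which is equivalent to $s + q \neq 0$ or $s + 2ip - q \neq 0$. We distinguish three sub-cases.

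\emph{Sub-case $s+q \neq 0$ and $s+2ip-q \neq 0$.} By Lemma~\ref{lem: some-good-binary-except-2-cases}, for every $(a,b)\neq(0,0)$ we can realize in $\mathrm{Holant}^*(\mathbf{F})$ (using $=_{G,R}$) a non-degenerate symmetric binary function not in $\mathcal{P}_{a,b}$, and likewise one not in $\mathcal{P}$; here neither of the two exceptional cases of that lemma occurs. Since $p \neq is$ or $q \neq ip$, Lemma~\ref{lem: four-nary-H-not-in-T} provides $(\alpha,\beta,\gamma)$ for which the arity-$4$ function $\mathbf{H}$ satisfies $\mathbf{H}\notin\langle\mathcal{T}\rangle$. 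Thus we escape $\langle\mathcal{T}\rangle$, every $\langle H\mathcal{E}\rangle$ and $\langle Z_i\mathcal{E}\rangle$ (described by the various $\mathcal{P}$ and $\mathcal{P}_{a,b}$, in particular $\mathcal{P}_{i,\pm 2i}$ is not among the exceptions here), and every $\langle Z_i\mathcal{M}\rangle$ (described by $\mathcal{P}_{1,\pm 2i}$). By the dichotomy Theorem~\ref{thm:dich-asym-Boolean} applied over the Boolean subdomain $\{G,R\}$, which is simulated in $\mathrm{Holant}^*(\{\mathbf{F},=_{G,R}\})$ by inserting $=_{G,R}$ into edges, $\mathrm{Holant}^*(\mathbf{F})$ is \#P-hard.

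\emph{Sub-case $s+q \neq 0$ and $s+2ip-q = 0$.} This is exactly the hypothesis of Lemma~\ref{lm:s+q=not0}, which gives that $\mathrm{Holant}^*(\mathbf{F})$ is \#P-hard.

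\emph{Sub-case $s+q = 0$ and $s+2ip-q \neq 0$.} This is exactly the hypothesis of Lemma~\ref{lm:s+q=0}, which gives that $\mathrm{Holant}^*(\mathbf{F})$ is \#P-hard.

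Finally, the remaining possibility $s+q = 0$ and $s+2ip-q = 0$ is equivalent to $p = is$ and $q = ip$, which is excluded by hypothesis. Hence in all cases $\mathrm{Holant}^*(\mathbf{F})$ is \#P-hard.
\end{proof}
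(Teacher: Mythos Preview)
Your proposal is correct and matches the paper's approach exactly: the corollary is stated in the paper as a summary (``We summarize the previous few lemmas in this subsection so far'') with no explicit proof, and your case split into the three sub-cases of $(s+q, s+2ip-q)$, invoking Lemma~\ref{lem: four-nary-H-not-in-T} together with Lemma~\ref{lem: some-good-binary-except-2-cases} in the generic case and Lemmas~\ref{lm:s+q=not0}, \ref{lm:s+q=0} in the two exceptional cases, is precisely the intended bookkeeping. One small notational slip: in your generic sub-case you write ``$\mathcal{P}_{i,\pm 2i}$'', which is not a class that appears in the paper; you presumably meant that the exceptional class $\mathcal{P}_{i,-2}$ (which equals $\langle Z_1\mathcal{M}\rangle$ restricted to symmetric binaries, since $(i,-2)$ is a scalar multiple of $(1,2i)$) is not among the exceptions in this sub-case, and likewise $\mathcal{P}$ (which is $\langle Z_1\mathcal{E}\rangle$) is not.
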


If $p = is$ and $q = ip$, the signature $\mathbf{F}$ is of the form

\begin{equation}\label{bottom1000F-1in1ni-sisns}
\begin{tabular}{*{11}{c}c}
{ }     & { }   & { }   &$u$& { }     & { }   & { }  \\
{ }     & { }   &$t$& { } &$r$& { }     & { }  \\
{ }     &$s$& { } &  $is$& { }   &$-s$& { }  \\
 $1$& { } &$i$& { } & $-1$& { } &   $-i$  \\
\end{tabular}
\end{equation}

If  $r=i t$, $\mathbf{F}$ is in the third form of Theorem \ref{thm:ternary},
since $\langle (0, 1, i), \mathbf{F} \rangle = \mathbf{0}$.
Hence Holant$^*(\mathbf{F})$ is tractable.

Now we suppose $r \not =i t$. We shall prove that the problem is
\#P-hard.
We first consider the gadget in Figure~\ref{triangle-in-sec-6.5}.
        \begin{figure}[hbtp]
        \begin{center}
                \includegraphics[width=3 in]{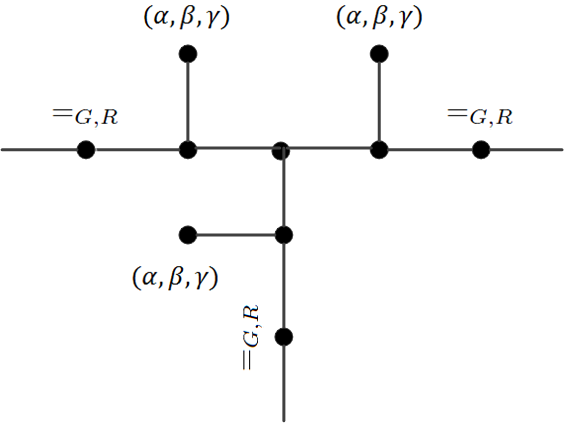}
        \caption{A ternary gadget.}
        \label{triangle-in-sec-6.5}
        \end{center}
\end{figure}
We choose a unary $\mathbf{u}=(\alpha, \beta, 0)$, such that
the matrix form of $\langle \mathbf{u}, \mathbf{F}
\rangle$ is $M =
\begin{bmatrix} w & x & y \\ x & z & iz \\ y & iz & -z \end{bmatrix}$,
where
$w= \alpha u + \beta t$, $x= \alpha t + \beta s$, $y= \alpha r + \beta  i s$
 and $z = \alpha  s +  \beta$.

We wish to compute the signature of this construction, namely the ternary
function $(M^{\otimes 3} \mathbf{F})^{* \rightarrow \{G,R\}}$ on domain
size two.
Consider the $2 \times 3$ matrix
$\begin{bmatrix} x & z & zi \\ y & zi & -z \end{bmatrix}$,
which we decompose to $M_1 M_2$, where
$M_1 = \begin{bmatrix} x & z & 0 \\ y & zi & 0 \end{bmatrix}$,
and
$M_2 =
\begin{bmatrix} 1 & 0 & 0 \\ 0 & 1 & i \\ 0 & 0 & 0 \end{bmatrix}$.
Then
$(M^{\otimes 3} \mathbf{F})^{* \rightarrow \{G,R\}}$ is
$ M_1^{\otimes 3} (M_2^{\otimes 3} \mathbf{F})$.

As $M_2$ has a separated domain form,
$(M_2^{\otimes 3} \mathbf{F})^{* \rightarrow \{G, R\}}$
is identically 0, since these values are combinations of values
from the bottom line $\mathbf{F}^{* \rightarrow \{G, R\}}$.
Formally,
\[(M_2^{\otimes 3} \mathbf{F})^{* \rightarrow \{G, R\}}
= \begin{bmatrix} 0 & 1 & i \\ 0 & 0 & 0 \end{bmatrix}^{\otimes 3}
\mathbf{F}
= \begin{bmatrix} 1 & i \\ 0 & 0 \end{bmatrix}^{\otimes 3}
\begin{bmatrix} 0 & 1 & 0 \\ 0 & 0 & 1 \end{bmatrix}^{\otimes 3}
\mathbf{F}
= \begin{bmatrix} 1 & i \\ 0 & 0 \end{bmatrix}^{\otimes 3}
\mathbf{F}^{* \rightarrow \{G, R\}} =
\begin{bmatrix} 1 & i \\ 0 & 0 \end{bmatrix}^{\otimes 3}
\begin{bmatrix} 1 \\ i \end{bmatrix}^{\otimes 3}
=  \mathbf{0}.\]

To compute the other values of $M_2^{\otimes 3} \mathbf{F}$,
we may set one input of $M_2^{\otimes 3} \mathbf{F}$ to $B$,
which by the form of $M_2$ is the same as
$M_2^{\otimes 2} (\mathbf{F}^{1=B})$.
This can be computed as a matrix product
$M_2 \begin{bmatrix} u & t & r \\ t & s & is \\ r & is & -s \end{bmatrix}
M_2^{\tt T}$, and the result is
$\begin{bmatrix} u & t+ir & 0 \\ t+ir & 0 & 0 \\ 0 & 0 & 0 \end{bmatrix}$.
Thus the signature $M_2^{\otimes 3} \mathbf{F}$ is
\begin{center}
                \begin{tabular}{*{11}{c}c}
                { }     & { }   & { }   &$u$& { }     & { }   & { }  \\
                { }     & { }   &$t+ir$& { } &$0$& { }     & { }  \\
                { }     &$0$& { } &  $0$& { }   &$0$& { }  \\
                 $0$& { } &$0$& { } & $0$& { } &   $0$  \\
                \end{tabular}
                \end{center}
which is
\[u \begin{bmatrix} 1 \\0 \\0 \end{bmatrix}^{\otimes 3}
+ (t+ir) \cdot \frac{1}{2}{\rm Sym} \left[
\begin{bmatrix} 1 \\0 \\0 \end{bmatrix}^{\otimes 2}
\otimes
\begin{bmatrix} 0 \\ 1 \\0 \end{bmatrix}
\right].
\]
(The symmetrization Sym has six terms.)

Now we apply $M_1$, and get
\[u \begin{bmatrix} x \\y \end{bmatrix}^{\otimes 3}
+ (t+ir) \cdot \frac{z}{2} {\rm Sym} \left[
\begin{bmatrix} x \\y  \end{bmatrix}^{\otimes 2}
\otimes
\begin{bmatrix}  1 \\i \end{bmatrix}
\right].
\]

If we can set $y=0$, with  $x \neq  0$ and $z \not = 0$, then
this signature has the form $[a, b, 0, 0]$, with $b \not =0$.
This defines a \#P-hard problem on domain size two by Theorem \ref{thm:dich-sym-Boolean}.
Similarly if we can set $x=0$, with  $y \neq  0$ and $z \not = 0$, then
this signature has the form $[0, 0, b, a]$, with $b \not =0$. This
also defines a \#P-hard problem by Theorem \ref{thm:dich-sym-Boolean}.

We will now show that if $s\not =0$, or if $s=0$ but $r \not = -i t$,
then we can indeed set
$x$, $y$ and $z$ accordingly, and we will have proved the \#P-hardness
of Holant$^*(\mathbf{F})$.

First suppose $s \not =0$.
Since we have $r \not = it$, either  $t \not =s^2$ or $r \not = i s^2$.
Set any $\alpha \not =0$. If $t \not =s^2$,
then set $\beta = - \alpha t/s$ we get $x= \alpha t + \beta s = 0$,
and $y = \alpha r + \beta i s =  \alpha (r - it) \not =0$,
and $z = \alpha s +  \beta =  \alpha ( s^2 - t)/s \not =0$.
Similarly if $r \not = i s^2$, then  set
$\beta =  \alpha i r/s$ we get $y= \alpha r + \beta i s =0$, and
$x = \alpha t + \beta s =  \alpha (t + ir) \not =0$,
and $z = \alpha s +  \beta =  \alpha (s^2 + ir)/s \not =0$.

Now suppose $s =0$. Then $x= \alpha t$, $y=\alpha  r$ and $z=\beta$.
We set $\beta = 1/(t+ir)$ (recall that we have $r \not = it$). Then
$(M^{\otimes 3} \mathbf{F})^{* \rightarrow \{G,R\}}$ is
$\alpha^2 \widehat{f}$,
where
\begin{equation}\label{f-hat-with-single-alpha}
\widehat{f} = \alpha u \begin{bmatrix} t \\r \end{bmatrix}^{\otimes 3}
+  \frac{1}{2} {\rm Sym} \left[
\begin{bmatrix} t \\r  \end{bmatrix}^{\otimes 2}
\otimes
\begin{bmatrix}  1 \\i \end{bmatrix}
\right]
\end{equation}

For $\alpha \not =0$, we can ignore the factor $\alpha^2$.
When $u=0$ we can show the signature
\[ f =
\frac{1}{2} {\rm Sym} \left[
\begin{bmatrix} t \\r  \end{bmatrix}^{\otimes 2}
\otimes
\begin{bmatrix}  1 \\i \end{bmatrix}
\right] = [3 t^2, t^2 i + 2 tr, 2tr i + r^2, 3 r^2 i]
\]
gives a \#P-hard Holant$^*$ problem as follows, by Theorem \ref{thm:dich-sym-Boolean}. First, $f$ is nondegenerate.
 After a holographic
reduction $\begin{bmatrix} t & 1 \\ r & i \end{bmatrix}^{-1}$
this signature is $\tilde{f} = \frac{1}{2} {\rm Sym} \left[
\begin{bmatrix} 1 \\0  \end{bmatrix}^{\otimes 2}
\otimes
\begin{bmatrix}  0 \\1 \end{bmatrix}
\right]$. $f$ is degenerate iff $\tilde{f}$ is,
and if this were the case, then  $\tilde{f}  =
\begin{bmatrix} a \\ b \end{bmatrix}^{\otimes 3}$, for some $a$ and $b$.
Taking $\langle \begin{bmatrix} 1& 0 \end{bmatrix}^{\otimes 2},
\tilde{f} \rangle$, we get  $a=0$, and similarly,
$\langle \begin{bmatrix} 0 & 1 \end{bmatrix}^{\otimes 2},
\tilde{f} \rangle$ gives us $b=0$, a contradiction.
Checking against the tractability criterion of  Theorem \ref{thm:dich-sym-Boolean},
we find that Holant$^*(f)$ is \#P-hard, unless $t^2 + r^2 =0$.
As $t+ir \not =0$, we get the only exceptional case $t - ir =0$.


Now we claim that the proof above also shows that for all $u$
not necessarily 0, the problem is \#P-hard, assuming $t - ir \not = 0$.
This is because the
conditions on degeneracy and on tractability are all expressed
in terms polynomial equations on the entries of the signature.
For any fixed $u$, if the conditions fail to be satisfied for
$\widehat{f}$ at $\alpha =0$
(which is the same as when $u=0$ in (\ref{f-hat-with-single-alpha}),
as has been shown),
then the conditions also fail to be satisfied for
some nonzero $\alpha$ sufficiently small.
This shows that for all $u$ the problem Holant$^*(\widehat{f})$,
for some nonzero $\alpha$, is \#P-hard.
But for any nonzero $\alpha$, the problem Holant$^*(\widehat{f})$
is equivalent to Holant$^*((M^{\otimes 3} \mathbf{F})^{* \rightarrow \{G,R\}})$.
Hence  Holant$^*(\mathbf{F})$ is \#P-hard.

Now we suppose
$s=0$, $t  = i r$ and $u\neq 0$. As $t + ir \not =0$ we have $t \not =0$.
We will use a slightly more complicated
gadget as depicted in Figure~\ref{figure:extended-three-star}, where the outer unary
function is $\mathbf{u}_1 = (1/t, 1, 0)$,
and the inner unary function is $\mathbf{u}_2 = (x, 1, 0)$.
        \begin{figure}[hbtp]
        \begin{center}
                \includegraphics[width=3 in]{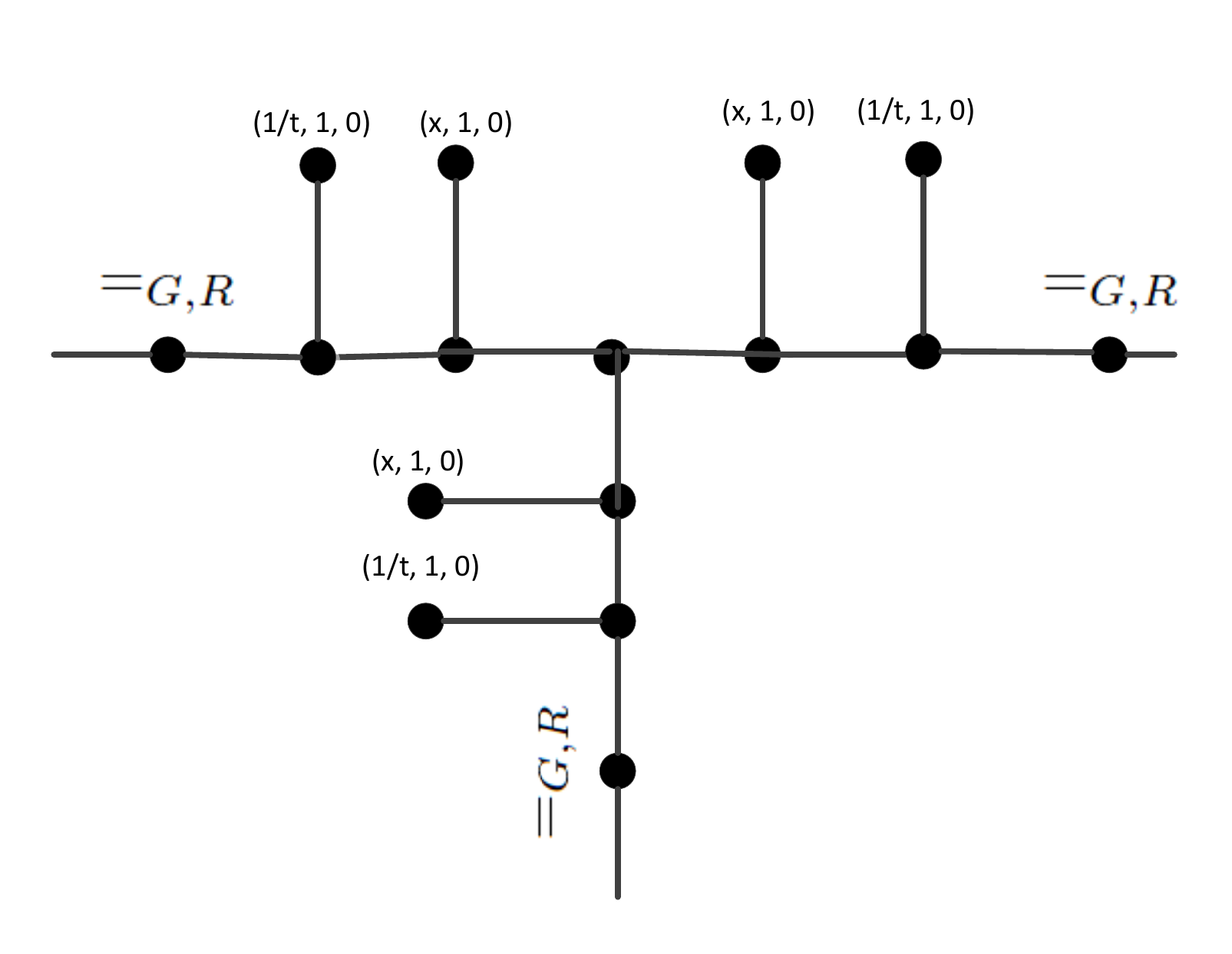}
        \caption{A ternary gadget.}
        \label{figure:extended-three-star}
        \end{center}
\end{figure}
We can calculate the binary function that is the linked chain
of $\langle \mathbf{u}_1, \mathbf{F}
\rangle$ and $\langle \mathbf{u}_2, \mathbf{F}
\rangle$, and in matrix form  it is
\[\begin{bmatrix} * & * & * \\
X & Z & -iZ \\
Y & -iZ & -Z \end{bmatrix}\]
where $X = x (u + 2t) + t$, $Y = - i X + 4ixt$ and $Z = xt$.

We can write $\begin{bmatrix} X & Z & -iZ \\
Y & -iZ & -Z \end{bmatrix}$ as $M_1 M_2$, where
\[ M_1 =
\begin{bmatrix}
X &   Z & 0\\
Y & -iZ & 0
\end{bmatrix}
\mbox{ and } M_2 =
\begin{bmatrix}
1 & 0 & 0\\
0 & 1 & -i \\
0 & 0 & 0
\end{bmatrix}
\]

It can be verified that  $(M_2)^{\otimes 3} \mathbf{F} $ is
\begin{center}
\begin{tabular}{*{11}{c}c}
{ }     & { }   & { }   &$u$& { }     & { }   & { }  \\
{ }     & { }   &$0$& { } &$0$& { }     & { }  \\
{ }     &$0$& { } &  $0$& { }   &$0$& { }  \\
 $8$& { } &$0$& { } & $0$& { } &   $0$  \\
\end{tabular}
\end{center}

So the signature of the gadget is
\[ (M_1)^{\otimes 3} [ (M_2)^{\otimes 3} \mathbf{F} ] = u \begin{bmatrix} X \\ Y \end{bmatrix}^{\otimes 3}  + 8 \begin{bmatrix} Z \\ -iZ \end{bmatrix}^{\otimes 3}.\]

Since $X = x (u + 2t) + t$, $Y = - i X + 4ixt$ and $Z = xt$,  we can
always either set
$X=0$ and $YZ \not =0$, or set $Y=0$ and $XZ \not =0$.
(When $u + 2t \not =0$, we can set $x = -t/(u + 2t) \not =0$.
When  $u + 2t  =0$, then $X = t \not =0$, and we set $x = 1/4$.)
This proves \#P-hardness given $u\neq 0$.

Finally, we suppose
$s=0$, $t  = i r$ and $u= 0$.
As $r \not = it$, we have $r, t \not =0$. $\mathbf{F}$  is
\begin{center}
\begin{tabular}{*{11}{c}c}
{ }     & { }   & { }   &$0$& { }     & { }   & { }  \\
{ }     & { }   &$ir$& { } &$r$& { }     & { }  \\
{ }     &$0$& { } &  $0$& { }   &$0$& { }  \\
$1$& { } &$i$& { } & $-1$& { } &   $-i$  \\
\end{tabular}
\end{center}

After a holographic reduction by
the matrix $T=\left ( \begin{array}{ccc}  \sqrt{2} & 0 & 0  \\  0 & 1 & -i  \\  0 & 1 & i  \end{array} \right )$, $\mathbf{F}$ becomes $\mathbf{H}=T^{\otimes 3} \mathbf{F}=$


\begin{center}
\begin{tabular}{*{11}{c}c}
{ }     & { }   & { }   &$0$& { }     & { }   & { }  \\
{ }     & { }   &$0$& { } &$4ri$& { }     & { }  \\
{ }     &$0$& { } &  $0$& { }   &$0$& { }  \\
$8$& { } &$0$& { } & $0$& { } &   $0$  \\
\end{tabular}
\end{center}

Our problem Holant$^*(\mathbf{F})$ can be restated as
Holant$(\{=_2\} \cup \mathcal{U}  |\{\mathbf{F}\} \cup \mathcal{U})$.
The left hand side $=_2$  becomes ${(T^{-1})}^{\tt T} I_3 T^{-1}$
which is a constant $1/2$ multiplied
by $(\neq_{B;G,R})$ (see equation (\ref{neq-BGR-def})).
Holant$^*(\mathbf{F})$ is holographic equivalent to Holant$(\{\neq_{B;G,R}\} \cup \mathcal{U}  |\{\mathbf{H}\} \cup \mathcal{U})$.

We can realize on the right hand side $\langle \mathbf{H},(x,y,z) \rangle $,
with a unary $(x,y,z)$,
 and   on the left  hand side $(\neq_{B;G,R})^{\otimes 3}  \mathbf{H}=$
\begin{center}
\begin{tabular}{*{11}{c}c}
{ }     & { }   & { }   &$0$& { }     & { }   & { }  \\
{ }     & { }   &$4ri$& { } &$0$& { }     & { }  \\
{ }     &$0$& { } &  $0$& { }   &$0$& { }  \\
$0$& { } &$0$& { } & $0$& { } &   $8$  \\
\end{tabular}
\end{center}

It is easy to see that
Holant$(\{(\neq_{B;G,R})^{\otimes 3}  \mathbf{H}\} \cup \mathcal{U}
\mid \{ \langle \mathbf{H},(x,y,z) \rangle \} \cup \mathcal{U} )$
can be simulated by, and therefore reducible to,
 Holant$(\{\neq_{G,R}\} \cup \mathcal{U}\mid \{\mathbf{H}\} \cup \mathcal{U})$.

Setting $x=0, y= \frac{1}{8}, z = \frac{1}{4 r i} $,
we have the binary function which is an equality on
$\{B, G\}$ and zero elsewhere,
$\langle \mathbf{H},(x,y,z) \rangle= (=_{B,G})$.
Then we can apply $(=_{B,G})$ to restrict
 $(\neq_{G,R})^{\otimes 3}  \mathbf{H}$ to the subdomain $\{B, G\}$.
Notice that
 $[(\neq_{G,R})^{\otimes 3}  \mathbf{H}]^{* \rightarrow \{B,G\}}$  is
$4ri [0, 1, 0, 0]$, where
$[0, 1, 0, 0]$ is the {\sc Perfect Matching} function of 3-regular graphs
on the domain $\{B,G\}$. Hence  we get a \#P-hard problem.

\subsection{$\mathbf{F}^{*\rightarrow\{G,R\}} = [0,0,0,0]$}\label{sec:0000}
We deal with the final case where $\mathbf{F}^{*\rightarrow\{G,R\}}$
is identically 0.
The signature $\mathbf{F}$ is of the form
\begin{center}
\begin{tabular}{*{11}{c}c}
{ }     & { }   & { }   &$u$& { }     & { }   & { }  \\
{ }     & { }   &$t$& { } &$r$& { }     & { }  \\
{ }     &$s$& { } &  $p$& { }   &$q$& { }  \\
 $0$& { } &$0$& { } & $0$& { } &   $0$  \\
\end{tabular}
\end{center}

Let $x$ and $y$ be such that $x^2 + y^2 = 1$,
then $H=\begin{bmatrix} x & y \\ y & -x \end{bmatrix}$ is an orthogonal
matrix. Note that   $y\neq \pm x i$.
We will use $H$ to normalize $(s, p, q)$. This transformation
happens in the domain $\{G, R\}$.
Formally, we perform a transformation in the whole domain
$\{B, G, R\}$ using the orthogonal matrix
$\hat{H} =
 \begin{bmatrix}1 & 0 & 0 \\ 0 & x & y \\ 0 &  y & -x \end{bmatrix}$.
Note that in $\hat{H}$ the domain $\{G, R\}$ is separated from
$\{B\}$.
$(\hat{H}^{\otimes 3} \mathbf{F})^{*\rightarrow\{G,R\}}$
is still [0,0,0,0].
This is because a value of $\hat{H}^{\otimes 3} \mathbf{F}$
under any assignment
that is restricted to $\{G, R\}$ only, after the transformation
$\hat{H}$, becomes a combination of values
of $\mathbf{F}^{*\rightarrow\{G,R\}}$ under a tensor
transformation of $H$, hence 0.
To compute the rest of the signature of $\hat{H}^{\otimes 3} \mathbf{F}$,
we may assign one input to $B$, and compute the binary
signature $(\hat{H}^{\otimes 3} \mathbf{F})^{1=B}$
on $\{B,G,R\}$. By the form of $\hat{H}$ this is the same
as $\hat{H}^{\otimes 2} (\mathbf{F}^{1=B})$.  The
matrix form is the matrix product
$\hat{H}  (\mathbf{F}^{1=B}) \hat{H}^{\tt T}$,
where the matrix form of $\mathbf{F}^{1=B}$ is
$\begin{bmatrix}u & t & r \\ t & s & p \\ r &  p & q \end{bmatrix}$.
Thus $\hat{H}^{\otimes 3} \mathbf{F}$ is

\begin{center}
\begin{tabular}{*{11}{c}c}
{ }     & { }   & { }   &$u'$& { }     & { }   & { }  \\
{ }     & { }   &$t'$& { } &$r'$& { }     & { }  \\
{ }     &$s'$& { } &  $p'$& { }   &$q'$& { }  \\
 $0$& { } &$0$& { } & $0$& { } &   $0$  \\
\end{tabular}
\end{center}
where
\begin{eqnarray*}
u' &=& u\\
t' &=& tx + r y\\
r' &=& -rx + ty\\
s' &=& s x^2 + 2p xy + q y^2\\
p' &=& -p x^2 + (s-q) xy + p y^2\\
q' &=& q x^2 - 2 p xy + s y^2
\end{eqnarray*}
We easily verify that $(q' - s' \pm 2 i p')
= (x \mp i y)^2 (q -s \mp 2 i p)$. Thus, for any given $x \not = \pm iy$,
we have
$q -s \mp 2 i p \not = 0$ iff $q' - s' \pm 2 i p' \not = 0$.

\begin{enumerate}
\item
Consider the case $s=p=q=0$. If $r= \pm i t$, then
$\mathbf{F}$ is in the third form of Theorem \ref{thm:ternary},
since the isotropic $(0, 1, \pm i)$ annihilates $\mathbf{F}$,
namely $\langle (0, 1, \pm i), \mathbf{F} \rangle = \mathbf{0}$.
If $r \not = \pm i t$, we can apply an orthogonal
transformation $\hat{H}$ with $rx = ty$,   such that $\mathbf{F}$ becomes
        \begin{center}
                \begin{tabular}{*{11}{c}c}
                { }     & { }   & { }   &$u$& { }     & { }   & { }  \\
                { }     & { }   &$t'$& { } &$0$& { }     & { }  \\
                { }     &$0$& { } &  $0$& { }   &$0$& { }  \\
                 $0$& { } &$0$& { } & $0$& { } &   $0$  \\
                \end{tabular}
                \end{center}
where $t' = tx + ry \not =0$.
                This gives a \#P-hard problem
on the domain $\{B, G\}$.

From now on not all $s, p, q = 0$.

\item Suppose $p = 0$. But either $s \not = 0$ or $q \not = 0$.
Suppose $s \not = 0$, by symmetry.
We get the binary $\mathbf{F}^{1 = G}$
\begin{center}
\begin{tabular}{*{11}{c}c}
{ }     & { }   & { }   &$t$& { }     & { }   & { }  \\
{ }     & { }   &$s$& { } &$0$& { }     & { }  \\
{ }     &$0$& { } &  $0$& { }   &$0$& { }  \\
\end{tabular}
\end{center}
which is effectively a domain two binary signature $[t, s, 0]$.
We can use this to interpolate $=_{B,G}$.
Then it becomes a solved case before, by the substitution of
$\{B,G\}$ for $\{G, R\}$. Note that
$\mathbf{F}^{* \rightarrow \{B,G\}}$ is not identically 0.

From now on $p \not = 0$.

\item
Suppose $q \not = s\pm 2 i p$.
We can set $y/x$ to be a solution to
$Y^2 + \frac{s-q}{p} Y -1 =0$, such that $Y \not = \pm i$.
Then  $\hat{H}^{\otimes 3} \mathbf{F}$ has $p'=0$.
Note that $q' \not = s' \pm 2 i p' = s'$,
thus at least one of $s'$ or $q' \not =0$.
Thus we have reduced this case to the previous case.

From now on $q = s\pm 2 i p$.

\item
Suppose $p \not = 0$, $q = s\pm 2 i p$, and either $s=0$ or $q=0$.
Then either $q = \pm 2 i p$ or $s = \mp 2 i p$.
By symmetry assume $q=0$.
We can get the binary $\mathbf{F}^{1 = R}$
\begin{center}
\begin{tabular}{*{11}{c}c}
{ }     & { }   & { }   &$r$& { }     & { }   & { }  \\
{ }     & { }   &$p$& { } &$0$& { }     & { }  \\
{ }     &$0$& { } &  $0$& { }   &$0$& { }  \\
\end{tabular}
\end{center}
which is effectively a domain two binary signature $[r, p, 0]$
on $\{B, G\}$.
We can use this to interpolate $=_{B,G}$.
Then it becomes a solved case before, by the substitution of
$\{B,G\}$ for $\{G, R\}$, since
$\mathbf{F}^{* \rightarrow \{B,G\}}$ is not identically 0.

From now on  both $s \not =0$ and $q \not = 0$.

\item
$s, p, q \not = 0$, $q = s\pm 2 i p$, but suppose $q \not = -s$.
Setting $y/x = \frac{2p}{s} \mp i$, we can verify that
this gives an orthogonal matrix $H$ such that $q' = 0$.
This reduces to the previous case.

\item
Finally we have $s, p, q \not = 0$ and $q = s\pm 2 i p = -s$.
Then $p = \mp si$ and $(s, p, q) = s (1, \mp i, -1)$.
We may normalize to $s=1$. We consider the case $(1, i, -1)$;
the case $(1, -i, -1)$ is symmetric.

If  $r=i t$, $\mathbf{F}$ is in the third form of Theorem \ref{thm:ternary},
since $\langle (0, 1, i), \mathbf{F} \rangle = \mathbf{0}$.
Now we suppose $r \not =i t$. We shall prove that the problem is
\#P-hard using the gadget in Figure~\ref{triangle-1}.
        \begin{figure}[hbtp]
        \begin{center}
                \includegraphics[width=3 in]{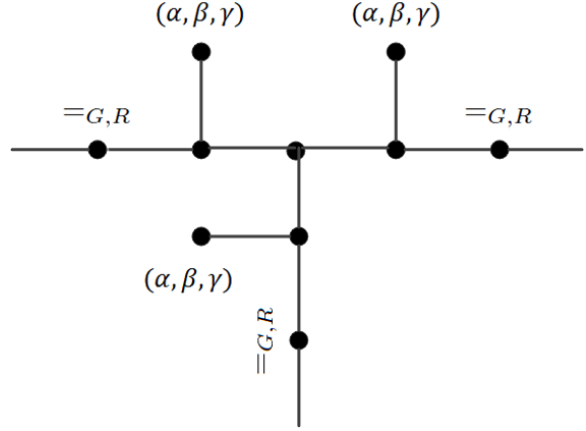}
        \caption{A ternary gadget.}
        \label{triangle-1}
        \end{center}
\end{figure}
The unary $\mathbf{u}=(\alpha, \beta, \gamma)$
with $\alpha=1, \gamma =0$ is chosen such that
the matrix form of $\langle \mathbf{u}, \mathbf{F}
\rangle$ is $M =
\begin{bmatrix} z & x & y \\ x & 1 & i \\ y & i & -1 \end{bmatrix}$,
where $x=t + \beta, y = r+ i \beta$.
Note that the ratio of $x, y$ can be arbitrary, by choosing $\beta$.

We wish to compute the signature of the gadget,
which is the domain two signature of the ternary
function $(M^{\otimes 3} \mathbf{F})^{* \rightarrow \{B,G\}}$.
We can decompose the $2 \times 3$ matrix
$\begin{bmatrix} x & 1 & i \\ y & i & -1 \end{bmatrix}$ as the product
 $M_1 M_2$, where
$M_1 = \begin{bmatrix} x & 1 & 0 \\ y & i & 0 \end{bmatrix}$,
and
$M_2 =
\begin{bmatrix} 1 & 0 & 0 \\ 0 & 1 & i \\ 0 & 0 & 0 \end{bmatrix}$.
Therefore we wish to compute
$ M_1^{\otimes 3} (M_2^{\otimes 3} \mathbf{F})$.

As $M_2$ has a separated domain form,
$(M_2^{\otimes 3} \mathbf{F})^{* \rightarrow \{G, R\}}$
is identically 0, since these values are combinations of values
from the bottom line $\mathbf{F}^{* \rightarrow \{G, R\}}$
by a tensor transformation.
To compute the other values of $M_2^{\otimes 3} \mathbf{F}$,
we may set one input of $M_2^{\otimes 3} \mathbf{F}$ to $B$,
which by the form of $M_2$ is the same as
$M_2^{\otimes 2} (\mathbf{F}^{1=B})$.
This can be computed  by a matrix product and the result is
$\begin{bmatrix} u & t+ir & 0 \\ t+ir & 0 & 0 \\ 0 & 0 & 0 \end{bmatrix}$.
Thus the signature $M_2^{\otimes 3} \mathbf{F}$ is
\begin{center}
                \begin{tabular}{*{11}{c}c}
                { }     & { }   & { }   &$u$& { }     & { }   & { }  \\
                { }     & { }   &$t+ir$& { } &$0$& { }     & { }  \\
                { }     &$0$& { } &  $0$& { }   &$0$& { }  \\
                 $0$& { } &$0$& { } & $0$& { } &   $0$  \\
                \end{tabular}
                \end{center}
which is
\[u \begin{bmatrix} 1 \\0 \\0 \end{bmatrix}^{\otimes 3}
+ (t+ir) \cdot \frac{1}{2}{\rm Sym} \left[
\begin{bmatrix} 1 \\0 \\0 \end{bmatrix}^{\otimes 2}
\otimes
\begin{bmatrix} 0 \\ 1 \\0 \end{bmatrix}
\right].
\]
(The symmetrization has six terms.)

Now we apply $M_1$, and get
\[u \begin{bmatrix} x \\y \end{bmatrix}^{\otimes 3}
+ (t+ir) \cdot \frac{1}{2} {\rm Sym} \left[
\begin{bmatrix} x \\y  \end{bmatrix}^{\otimes 2}
\otimes
\begin{bmatrix}  1 \\i \end{bmatrix}
\right].
\]

If we set $\beta = i r$, then $x = t+ir \not = 0$ and $y=0$.
This signature is $[(u+1) x^3, i x^3, 0, 0]$. Since  $x \not = 0$
this defines a \#P-hard problem on domain size two by Theorem \ref{thm:dich-sym-Boolean}.

\end{enumerate}


\bibliography{refs}

\end{document}